\documentclass[a4paper, reqno, 10pt]{amsart}
\usepackage[left=3cm, right=3cm, top=4cm, bottom=3cm]{geometry}

\usepackage{helvet}
\usepackage[T1]{fontenc}
\usepackage[utf8]{inputenc}
\usepackage{amsmath,amssymb,amsthm,mathtools}
\usepackage{graphicx}
\usepackage{float}
\usepackage{tikz}
\usetikzlibrary{arrows.meta,calc,decorations.pathreplacing}
\usepackage{microtype}
\numberwithin{equation}{section}
\usepackage{hyperref}
\hypersetup{colorlinks=true,linkcolor=blue,citecolor=blue,urlcolor=blue}

\newtheorem{theorem}{Theorem}[section]
\newtheorem{proposition}[theorem]{Proposition}
\newtheorem{lemma}[theorem]{Lemma}
\newtheorem{corollary}[theorem]{Corollary}
\theoremstyle{definition}
\newtheorem{definition}[theorem]{Definition}
\newtheorem{remark}[theorem]{Remark}

\newcommand{\N}{\mathbb{N}}
\newcommand{\R}{\mathbb{R}}
\newcommand{\C}{\mathbb{C}}

\newcommand{\Dom}{\operatorname{dom}}
\newcommand{\Ran}{\operatorname{ran}}
\newcommand{\Ker}{\operatorname{ker}}
\newcommand{\dd}{\mathrm{d}}
\newcommand{\pa}{\partial}
\newcommand{\la}{\lambda}
\newcommand{\kap}{\kappa}
\newcommand{\tr}{\operatorname{tr}}
\newcommand{\supp}{\operatorname{supp}}
\newcommand{\dist}{\operatorname{dist}}
\newcommand{\Lip}{\operatorname{Lip}}

\begin{document}

\title[Point interactions on unbounded domains]{The Dirichlet Laplacian with a point interaction\\ 
on unbounded Lipschitz domains}

\author[D.~Noja]{Diego Noja}
\address[D.~Noja]{Department of Mathematics and Applications, University of Milano Bicocca}\address{Via Roberto Cozzi 55, 20125 Milano, Italy}
\email{diego.noja@unimib.it}

\author[F.~Raso Stoia]{Francesco Raso Stoia}
\address[F.~Raso Stoia]{Independent researcher}
\email{f.rasostoiawork@gmail.com}

\date{\today}

\subjclass[2020]{Primary 35J10, 81Q10; Secondary 47B25, 35P05, 35B34, 81U24}

\keywords{Point interactions; self-adjoint extensions; Dirichlet Laplacian; spectral theory; threshold resonances; unbounded domains}


\begin{abstract}
We study one-centre point interactions for the Dirichlet Laplacian on
unbounded domains in dimensions two and three, with emphasis on exterior
domains and special Lipschitz domains. These operators are singular perturbations constructed as
self-adjoint extensions of the Dirichlet Laplacian restricted to functions
vanishing at the interaction centre, and their resolvents are given by an explicit Kre\u{\i}n formula with a single extension parameter $\alpha$. 
The negative spectrum is completely characterized by
a scalar equation and the critical coupling $\alpha$ separating binding from non-binding
is the threshold limit of the Weyl function appearing in the Kre\u{\i}n formula. 
We establish domain monotonicity of the
Weyl function, of the critical coupling, and of the unique negative eigenvalue when existing, and
we derive sharp near-boundary asymptotics of the critical coupling in
uniformly $C^{1,1}$ geometries. These estimates imply that, for every fixed
coupling, nonpositive spectrum disappears when the interaction centre
approaches the Dirichlet boundary. We also prove limiting absorption
principles and purely absolutely continuous positive spectrum for a point-interaction in exterior 
domains case and in classes of special Lipschitz domains. We also 
analyze in depth several threshold phenomena. We show that the critical coupling is
governed by the far-field behavior of the zero-energy Green function:
exterior domains give threshold resonances, domains contained in a
three-dimensional half-space give threshold eigenvalues, the half-plane gives
a $p$-wave resonance, and planar wedges exhibit types of threshold states that are aperture-dependent.
Finally, low-energy resolvent expansions are computed in the model cases and persistence or disappearance of eigenvalues at threshold are studied.
The present paper seems to be the first systematic work on the subject.
\end{abstract}

\maketitle

\begingroup
\small
\tableofcontents
\endgroup

\section*{Introduction}

Point interactions are among the best-known solvable models in mathematical physics. They are singular perturbations of the Laplacian, describing interactions concentrated around one or more points and not representable by ordinary Schr\"odinger operators of the form $-\Delta+V$. They arise both in classical and quantum models. At the classical level they appear in the interaction of continua and fields with point sources; see, for example, \cite{ACCS19, Andronov, NP98,NP99, MPS22, MP24} and references therein. In quantum mechanics they provide effective descriptions of particles subjected to short-range interactions. At low energy, or equivalently when the relevant wavelength is much larger than the range of the interaction, the microscopic shape of the potential becomes irrelevant and the effective model is a zero-range Hamiltonian. In this regime the scattering length is the only parameter that survives in the leading description. See the monograph \cite{AGHH}, whose bibliography covers the literature up to 2005, and the recent work \cite{DerezinskiPoint}. In dimensions $d\leq3$, point interactions are rigorously realized as self-adjoint extensions of the Laplacian restricted away from the interaction support, a finite or infinite set of points, and their resolvents are described by Kre\u{\i}n-type formulas. The mentioned references cover the whole-space theory in $\R^d$, $d=1,2,3$, which is classical.

The corresponding theory on proper subdomains of $\R^d$ is considerably less developed. Point interactions on bounded domains and the dependence of the principal eigenvalue on the interaction centre were studied in \cite{BlanchardFigariMantile,ExnerMantile2008, Posilicano2013, LotoMiche21}. For the approximation of point interactions on bounded domains see \cite{NojaScandone25} and references therein. The case of compact manifolds without boundary occurs in the study of quantum ergodicity and \v{S}eba billiards; see, for instance, \cite{KurlbergUebershaer14,KurlbergUebershaer23}. For hyperbolic manifolds of finite volume, see the influential works \cite{YCdV82,YCdV83}. On unbounded domains with boundary, the available results concern waveguides, treated in \cite[Chapter~5]{EK2015} and the few papers cited there, and the half-space; see \cite{ACSZ08,NojaRasoStoia25}. The unbounded setting appears nevertheless natural, since the effect of the singular perturbation depends both on the boundary and on the geometry of the domain at infinity.

The purpose of this paper is to begin and develop a systematic theory for point interactions on unbounded domains in dimensions two and three. Here we restrict to the case of a single centre. We focus on two complementary classes: exterior domains, which have compact boundary, and special Lipschitz domains, whose boundary is a global noncompact graph. The only one-dimensional counterpart is the half-line problem with a delta potential, usually called the Winter model, and it is not considered here; see, for example, \cite{DM:SemiclassicalResonanceAsymptoticDeltaHalfLine,Sacchetti23}. From the point of view of the classification of unbounded domains, exterior and special Lipschitz domains are quasi-conical domains, that is, they contain a sequence of disjoint balls with radii diverging to infinity. They are two large classes of importance in applications and represent the cases of bounded and unbounded boundary.

Let $\Omega\subset\R^d$, $d\in\{2,3\}$, be a domain in one of the above classes, let $A_D=-\Delta_D^\Omega$ be the Dirichlet Laplacian in $L^2(\Omega)$, and let $y\in\Omega$ be the interaction centre. We restrict $A_D$ to functions vanishing at $y$. This gives a closed symmetric operator $S_y$ with deficiency indices $(1,1)$ with self-adjoint extensions  parametrized by $\alpha\in \mathbb R^*$. The parameter $\alpha$ will be called the coupling constant, or briefly, coupling. The extension with $\alpha\in\R$ is denoted by $A_{\alpha,y}^\Omega$ in dimension three and by $A_{\alpha,y}^{(2),\Omega}$ in dimension two, while $\alpha=\infty$ corresponds to the unperturbed Dirichlet Laplacian. For $d>3$ the symmetric operator $S_y$ is essentially self-adjoint, so that non trivial point interactions do not exist (but see the already quoted \cite{DerezinskiPoint} for a study of the Laplacian with a point potential in higher dimension and out of the Hilbert space framework). In both dimensions $d=2,3$ the resolvent is a rank-one perturbation of the background Dirichlet resolvent and is determined by the Dirichlet Green function of the domain and a scalar function $M_y^\Omega(z)$, the Weyl function. A generic element of the operator domain of any non decoupled extension is the sum of a regular part in the domain of the Dirichlet Laplacian and a part singular at $y$. The singularity is Coulomb in dimension three and logarithmic in dimension two. Moreover, a precise relation exists between the value of the regular part at the singularity and the coefficient of the singular part. This boundary condition at the singularity, involving the parameter $\alpha$, is characteristic of the point interaction. 
After constructing and characterizing the family of self-adjoint extensions in both dimensions, we discuss the spectral properties that are the main subject of the paper. The spectral theory of the single-centre point interaction in $\R^d$, $d=2,3$, is elementary. By contrast, the presence of a boundary, already for one centre, requires a careful analysis and reveals several distinct and entirely new behaviors. The main results of the present analysis identify how the local boundary geometry and the geometry at infinity enter the spectral theory and can be synthetically collected as follows.

\begin{enumerate}
\item Singularities of the resolvent are defined by the denominator appearing in the Kre\u{\i}n formula, $\alpha-M_y^\Omega(z)$; see \eqref{eq:krein-resolvent-3d} and \eqref{eq:krein-resolvent-2d}. In particular the study of the negative spectrum is reduced exactly to the solution of the characteristic equation
\begin{equation*}
\alpha=M_y^\Omega(-\lambda),
\qquad \lambda>0.
\end{equation*}
There is at most one negative eigenvalue, and it is simple. For both exterior and special Lipschitz domains, the zero-energy limit ($\lambda\downarrow 0$) of the Weyl function exists and is finite. It therefore defines the critical coupling $\alpha_c^\Omega(y)$ separating binding from non-binding. In dimension two this entails a nontrivial cancellation between the logarithmic divergence of the free Green function and that of the harmonic correction; see Theorems~\ref{thm:critical-coupling-3d} and \ref{thm:critical-coupling-2d}.

\item Domain inclusion yields monotonicity of the Green function, the Weyl function, the critical coupling $\alpha_c^\Omega(y)$, and the unique negative eigenvalue; see Theorem~\ref{thm:domain_monotonicity_eigenvalue}. When the boundary regularity is $C^{1,1}$, uniformly in the special Lipschitz case, we also obtain the sharp asymptotics
\begin{align*}
\alpha_c^\Omega(y)
&=-\frac{1}{8\pi\,\dist(y,\partial\Omega)}+O(1),
\qquad \qquad \qquad \qquad d=3,\\
\alpha_c^{(2),\Omega}(y)
&=\frac{1}{2\pi}\log\!\bigl(2\dist(y,\partial\Omega)\bigr)
+O\!\bigl(\dist(y,\partial\Omega)\bigr),
\ \qquad d=2.
\end{align*}
These estimates imply that, for every fixed coupling and in both dimensions, the point interaction has no nonpositive eigenvalue when its centre is sufficiently close to the Dirichlet boundary and that the critical coupling can be any negative real number for $d=3$ and any real number for $d=2$; see Theorem~\ref{thm:near-boundary-alpha} and Corollary~\ref{cor:no_eigenvalue_near_boundary}. This is a sharp and quantitative example of the interaction between boundary and point interaction. We recall that in the boundary-less cases there is a negative eigenvalue if and only if $\alpha<0$ in $\mathbb R^3$ (critical coupling is zero) and for any $\alpha\in \mathbb R$ in $\mathbb R^2$ (critical coupling is $+\infty$).

\item A limiting absorption principle for the background Dirichlet Laplacian can be transferred, through the Kre\u{\i}n formula, to the point interaction. This proves that the positive spectrum is purely absolutely continuous for exterior domains and for special Lipschitz domains in several geometries, including for example domains conical at infinity; see Theorem~\ref{thm:abstract_transfer_ac}, Corollary~\ref{cor:exterior_transfer_ac}, and Theorem~\ref{thm:special_lipschitz_transfer_ac}. The strict positivity
$
\operatorname{Im}M_y^\Omega(\lambda+i0)>0
$ for $\lambda>0$ is the main point of the analysis.

\item At the critical coupling, the nature of the threshold of the continuous spectrum is governed by the far-field behavior of the zero-energy Green function. We show that three-dimensional exterior domains have a monopole threshold resonance, with leading decay $|x|^{-1}$, while planar exterior domains have an $s$-wave resonance; see Definition~\ref{def:critical-zero-energy-state}. By contrast, every three-dimensional Dirichlet domain contained in a half-space (or more generally in a domain asymptotic to a half-space at infinity) has a threshold eigenvalue, while the half-plane has a $p$-wave resonance. More generally, for a planar wedge of opening $\beta$, zero is a threshold eigenvalue if $\beta<\pi$, a $p$-wave resonance if $\beta=\pi$, and a threshold state of different type if $\beta>\pi$; see Theorems~\ref{thm:exterior_threshold_not_attained}, \ref{thm:exterior_threshold_not_attained_2d}, \ref{thm:threshold-halfspace-subdomain}, \ref{thm:threshold-asymptotic-halfspace} and \ref{thm:planar-wedge-threshold}. We also show that the singular threshold behavior is absent for $\alpha=\infty$ in the model geometries, and hence it is associated to the presence of the point singularity.

\item The low-energy singularity of the resolvent at the critical coupling reduces to the analysis of the Kre\u{\i}n denominator $\alpha-M_y^\Omega(z)$. We compute the resulting expansions for the half-space, half-plane, exterior sphere, exterior disk, and planar wedges. As an application, we study the possible continuation of the eigenvalue branch on the second Riemann sheet through the threshold (here we are inspired by the recent analysis in \cite{ChristiansenDatchevGriffin}, where different examples of this phenomenon are considered). In the two planar models, the exterior-disk eigenvalue disappears, in the sense that the eigenvalue branch does not persist as a resonance when continued to the second Riemann sheet through its $s$-wave threshold; on the contrary, the half-plane eigenvalue persists as a resonance when continued to the second Riemann sheet through its $p$-wave threshold. See Propositions~\ref{prop:rank-one-threshold-reduction}, \ref{prop:low-energy-three-dimensional-models}, \ref{prop:low-energy-two-dimensional-models}, and \ref{prop:persistence-disappearance-models}.
\end{enumerate}

These results isolate two different geometric mechanisms. The local geometry near the Dirichlet boundary controls the leading divergence of the critical coupling, whereas the geometry at infinity controls the nature of the critical threshold state and the singularity of the low-energy resolvent. Exterior balls, flat models, and planar wedges show that these mechanisms are independent: the same local boundary behavior may coexist with different threshold types at infinity. It seems to be noteworthy that several results be sharply quantitative.

 The results obtained in the present study also suggests generalizations in several directions. The most obvious are to consider finitely or infinitely many singularities, and singularities distributed on sub-manifolds; from the geometrical side it is natural to consider different domains, not limited to the quasi-conical class here studied, also including non-compact manifolds. The analysis given here is for Dirichlet boundary conditions; changing boundary conditions could significantly change some of the results, in particular as regards threshold states (see Remark \ref{nbc-halfspace}).
 Finally, we mention time dependent problems, here touched only superficially (see Remark \ref{scattering}), in particular related to the wave dynamics on domains in the presence of point sources (see for the case of the half-space, the resonance expansion given in \cite{NojaRasoStoia25}).  

The paper is organized as follows. Section~\ref{sec:preliminaries} recalls the geometric setting and the existence and properties of the Dirichlet resolvent kernel for the domains considered in the paper. Section~\ref{sec:construction} constructs the one-centre point interactions and derives their domains, actions and Kre\u{\i}n resolvent formulas; several complementary remarks are also given. Section~\ref{sec:spectral} develops the negative-spectrum theory, defines and characterizes the critical coupling, shows domain monotonicity, and discusses the absolute continuity of the point interaction on the positive axis, proving it for several important instances. Section~\ref{sec:model} treats explicit geometries with exact results and several general geometries, giving near-boundary asymptotics, the characterization of critical threshold states, and low-energy resolvent expansions with the analysis of persistence or disappearance of the eigenvalue in the model cases of half-plane and exterior of a disk. Appendix~\ref{app:auxiliary} contains the results concerning existence and properties of the Dirichlet Green kernel and the positivity of its harmonic correction used in the course of the paper.

\section{Preliminaries}\label{sec:preliminaries}

\subsection{Geometric setting}
We work throughout with one of the following classes of domains $\Omega\subset\R^d$:

\begin{itemize}
\item[(a)] a \emph{special Lipschitz domain},
\begin{equation*}
\Omega=\{(x',x_d)\in\R^{d-1}\times\R:x_d>\varphi(x')\},
\qquad
\varphi\in\Lip(\R^{d-1});
\end{equation*}

\item[(b)] an \emph{exterior Lipschitz domain},
\begin{equation*}
\Omega=\R^d\setminus K,
\end{equation*}
where $K\subset\R^d$ is a nonempty compact set with nonempty interior and Lipschitz boundary, and $\Omega$ is connected.
\end{itemize}

\begin{remark}
A special Lipschitz domain has a noncompact boundary given by one global Lipschitz graph, whereas an exterior Lipschitz domain has compact boundary. In particular, the two classes are disjoint.
\end{remark}

Whenever higher boundary regularity is required, we use the standard local graph definition; see, for instance, \cite[Chapter~7]{AdamsFournier}.

\begin{definition}
Let $k\in\N_0$ and $\alpha\in(0,1]$. A domain $\Omega\subset\R^d$ is of class $C^{k,\alpha}$ if, for every $x_0\in\partial\Omega$, there exist $r>0$, a rigid change of coordinates sending $x_0$ to the origin, and a function $\psi\in C^{k,\alpha}(B_r')$ such that, in the new coordinates $x=(x',x_d)\in\R^{d-1}\times\R$,
\begin{equation*}
\Omega\cap B_r=\{(x',x_d)\in B_r:x_d>\psi(x')\},
\end{equation*}
\begin{equation*}
\partial\Omega\cap B_r=\{(x',x_d)\in B_r:x_d=\psi(x')\},
\end{equation*}
where $B_r'\subset\R^{d-1}$ is the ball of radius $r$ centered at the origin.

For $\alpha=1$, the notation $C^{k,1}$ means that the derivatives of order $k$ are locally Lipschitz. In particular, a $C^{1,1}$-boundary is locally represented by $C^1$-graphs with Lipschitz gradient.
\end{definition}

\subsection{Free resolvent}
For $z\in\C\setminus[0,\infty)$, set
\begin{equation*}
\kappa(z):=\sqrt{-z},
\qquad
\Re\kappa(z)>0.
\end{equation*}
The free resolvent kernel of $-\Delta$ on $\R^d$ is denoted by $G_z^0(x,y)=G_z^0(x-y)$ and is given by
\begin{equation}\label{eq:free-resolvent-kernel}
G_z^0(x,y)=
\begin{cases}
\displaystyle \frac{1}{2\pi}K_0\bigl(\kappa(z)|x-y|\bigr), & d=2,\\[0.8em]
\displaystyle \frac{\mathrm e^{-\kappa(z)|x-y|}}{4\pi|x-y|}, & d=3.
\end{cases}
\end{equation}
Here $K_0$ is the modified Bessel function of the second kind, also called the Macdonald function, of order zero; see \cite[(10.25.3)]{DLMF}. Formula~\eqref{eq:free-resolvent-kernel} is standard; compare, for instance, \cite{AGHH} and \cite[Section~1.1]{DerezinskiPoint}.

\subsection{Dirichlet resolvent kernel}
Throughout the paper the $L^2$-inner product is linear in the first argument:
\begin{equation*}
(f,g)_{L^2(\Omega)}=\int_\Omega f(x)\overline{g(x)}\,\dd x.
\end{equation*}
For either class of domains, the Dirichlet Laplacian
\begin{equation*}
A_D:=-\Delta_D^\Omega
\end{equation*}
is the nonnegative self-adjoint operator associated with the closed quadratic form
\begin{equation*}
\mathfrak a_D[u,v]=\int_\Omega \nabla u\cdot\overline{\nabla v}\,\dd x,
\qquad
\Dom(\mathfrak a_D)=H_0^1(\Omega).
\end{equation*}
See, for instance, \cite[Chapter~7]{Kenig1994} and \cite[Section~2]{BehrndtRohlederStadler}. The following theorem collects the properties of its resolvent kernel that will be used below. Its proof is given in Appendix~\ref{app:dirichlet-kernel-proof}.

\begin{theorem}\label{thm:dirichlet-kernel}
Let $d\in\{2,3\}$, let $\Omega\subset\R^d$ be either an exterior Lipschitz domain or a special Lipschitz domain, and let $z\in\C\setminus[0,\infty)$. Then the following assertions hold.
\begin{enumerate}
\item[(i)] For every $y\in\Omega$ there exists a unique function $h_z^\Omega(\cdot,y)\in H^1(\Omega)$ such that
\begin{equation}\label{eq:dirichlet-correction-problem}
\begin{cases}
(-\Delta-z)h_z^\Omega(\cdot,y)=0 & \text{in }\Omega,\\
\tr h_z^\Omega(\cdot,y)=\tr G_z^0(\cdot,y) & \text{on }\partial\Omega.
\end{cases}
\end{equation}

\item[(ii)] The function
\begin{equation}\label{eq:dirichlet-kernel-domain}
G_z^\Omega(x,y):=G_z^0(x,y)-h_z^\Omega(x,y)
\end{equation}
satisfies, for each fixed $y\in\Omega$,
\begin{equation*}
(-\Delta_x-z)G_z^\Omega(x,y)=\delta_y
\end{equation*}
in the sense of distributions in $\Omega$, together with the Dirichlet condition $\tr_xG_z^\Omega(\cdot,y)=0$ on $\partial\Omega$.

\item[(iii)] The function $(x,y)\mapsto G_z^\Omega(x,y)$ is measurable on $\Omega\times\Omega$, smooth off the diagonal, and has the same diagonal singularity as the free kernel~\eqref{eq:free-resolvent-kernel}.

\item[(iv)] For every $f\in C_c^\infty(\Omega)$,
\begin{equation}\label{eq:dirichlet-kernel-representation}
(A_D-z)^{-1}f(x)=\int_\Omega G_z^\Omega(x,y)f(y)\,\dd y
\end{equation}
for almost every $x\in\Omega$. Thus $G_z^\Omega$ is the integral kernel of $(A_D-z)^{-1}$.

\item[(v)] The kernel satisfies
\begin{equation*}
G_z^\Omega(x,y)=G_z^\Omega(y,x),
\qquad
G_{\bar z}^\Omega(x,y)=\overline{G_z^\Omega(x,y)}.
\end{equation*}
\end{enumerate}
\end{theorem}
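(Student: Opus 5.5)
The plan is to build $h_z^\Omega(\cdot,y)$ variationally and then identify $G_z^\Omega$ with the resolvent kernel by duality. \textbf{Step (i).} Fix $y\in\Omega$ and $z\in\C\setminus[0,\infty)$. Choose a cutoff $\chi\in C^\infty(\R^d)$ with $\chi=1$ near $\partial\Omega$ and $\chi=0$ on a ball $B_{2r}(y)$, where $r<\dist(y,\partial\Omega)/2$. Then $g:=\chi G_z^0(\cdot,y)$ is smooth and exponentially decaying, since $\Re\kappa(z)>0$, so $g\in H^1(\Omega)$ with $\tr g=\tr G_z^0(\cdot,y)$ and $(-\Delta-z)g\in L^2(\Omega)$. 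Write $h=g-w$ with $w\in H_0^1(\Omega)$ solving
\[
\mathfrak a_D[w,v]-z(w,v)=\int_\Omega\nabla g\cdot\overline{\nabla v}\,\dd x-z(g,v)\quad\text{for all }v\in H^1_0(\Omega).
\]
The form $\mathfrak a_D-z$ is coercive up to a rotation: $|\mathfrak a_D[u,u]-z\|u\|^2|\ge c_z\|u\|_{H^1}^2$ for $z\notin[0,\infty)$. Lax--Milgram therefore gives existence of $w$. Uniqueness follows because a difference of two solutions lies in $H_0^1$ and solves $(A_D-z)u=0$, so $u=0$ since $z\notin\sigma(A_D)$. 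This part is insensitive to whether $\Omega$ is exterior or special Lipschitz; only the trace theorem on Lipschitz domains is used.

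\textbf{Steps (ii)--(iii).} Since $(-\Delta-z)G_z^0(\cdot,y)=\delta_y$ in $\R^d$ and $h$ is a weak solution of the homogeneous equation, (ii) is immediate. The Dirichlet condition holds by construction. Interior elliptic regularity (Weyl's lemma for $-\Delta-z$) gives that $h(\cdot,y)$ is real-analytic in $\Omega$, so $G_z^\Omega(\cdot,y)$ has exactly the free diagonal singularity. For joint smoothness and measurability in $(x,y)$, note that $y\mapsto g$ depends smoothly on $y$ in $H^1$, locally uniformly with a fixed cutoff on compact subsets of $\Omega$. The solution map of Lax--Milgram is bounded linear, so $y\mapsto h(\cdot,y)\in H^1$ is smooth. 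Combining this with interior estimates applied to $\partial_y^\beta h$, which solves the same homogeneous equation, gives joint smoothness of $h$ on $\Omega\times\Omega$, and hence of $G_z^\Omega$ off the diagonal.

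\textbf{Steps (iv)--(v).} For $f\in C_c^\infty(\Omega)$, set $u=(A_D-z)^{-1}f$ and $U(x)=\int G_z^\Omega(x,y)f(y)\,\dd y$. Then $U=G_z^0*f-\int h(\cdot,y)f(y)\,\dd y$. The first term lies in $H^1(\R^d)$. The second is a Bochner integral of $H^1$ functions, continuous in $y$ on the compact set $\supp f$. Hence $U\in H^1(\Omega)$, its trace vanishes, and $(-\Delta-z)U=f$ weakly. So $U-u\in H_0^1$ solves the homogeneous problem, which forces $U=u$. For (v), the conjugation identity follows from uniqueness in (i), because $\overline{G^0_z}=G^0_{\bar z}$. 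Symmetry follows from $((A_D-z)^{-1})^{*}=(A_D-\bar z)^{-1}$ together with (iv) and the conjugation identity. These give $G_z^\Omega(x,y)=G_z^\Omega(y,x)$ almost everywhere, and then everywhere off the diagonal by continuity.

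The main obstacle is the joint regularity in (iii) near the diagonal together with the uniform control in (iv), where the singular part must be kept out of the $H^1$ estimates. I will handle this by always treating the singularity via the explicit $G^0_z$ and placing only the smooth cutoff piece into Lax--Milgram.
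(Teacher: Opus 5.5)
Your proof is correct and has the same overall architecture as the paper's: free kernel minus a corrector solving a Dirichlet problem, interior regularity, identification with the resolvent by uniqueness, and symmetry via the adjoint. The existence step (i), however, is done differently.

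\textbf{Step (i).} The paper first verifies that $\tr G_z^0(\cdot,y)\in H^{1/2}(\partial\Omega)$. This is trivial for compact boundary; for a special Lipschitz domain it pulls back to the graph and uses exponential decay to get $g_y\in H^1(\R^{d-1})$. It then invokes solvability of the Dirichlet problem with $H^{1/2}$ data (Behrndt--Rohleder, Lemma~3.1 there), and reuses that Poisson operator in (iv) to pass a Bochner integral of boundary data through the solution map. You instead lift the datum explicitly by $\chi G_z^0(\cdot,y)\in H^1(\Omega)$ and solve for the $H_0^1$ remainder by Lax--Milgram, using the rotated coercivity of $\mathfrak a_D-z$. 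This is essentially the proof of the cited lemma, specialized to a datum with an obvious extension. It never has to describe $H^{1/2}$ on a noncompact graph and treats both domain classes identically.

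\textbf{Step (iii).} Your route also supplies something the paper leaves implicit. With one fixed cutoff for all $y$ in a compact set, $y\mapsto h_z^\Omega(\cdot,y)\in H^1(\Omega)$ is smooth. Together with interior estimates this gives the joint smoothness and measurability claimed in (iii); the paper only argues smoothness in $x$ for fixed $y$. It also provides the continuity your Bochner integral in (iv) needs.

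\textbf{Step (iv).} When you write this up, make the trace cancellation explicit. Use the same fixed cutoff, equal to $1$ near $\partial\Omega$ and $0$ near $\supp f$, to write $\tr(G_z^0*f)=\int\tr\bigl(\chi G_z^0(\cdot,y)\bigr)f(y)\,\dd y$. Since $G_z^0(\cdot,y)\notin H^1(\Omega)$, this is not otherwise literally a Bochner integral of $H^1$ functions.

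\textbf{Step (v).} Obtaining the conjugation identity from uniqueness in (i) and $\overline{G_z^0}=G_{\bar z}^0$ is a slightly more direct, pointwise version of the paper's operator-level argument.
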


\section{One-centre point interactions on exterior and special Lipschitz domains}\label{sec:construction}

We construct the point interaction at $y\in\Omega$ as a self-adjoint extension of the Dirichlet Laplacian restricted to functions vanishing at $y$ in dimension $2$ and $3$. The dimension-dependent part enters through the diagonal singularity of the Green function and the corresponding Weyl function in the Kre\u{\i}n formula. The construction is self-consistent but streamlined, avoiding repetitions of minor details and standard procedures.

\subsection{The restricted Dirichlet Laplacian}
Let $d\in\{2,3\}$, let $\Omega\subset\R^d$ be either a special Lipschitz domain or an exterior Lipschitz domain, and fix $y\in\Omega$. We write
\begin{equation*}
A_D=-\Delta_D^\Omega,
\qquad
\Dom(A_D)=\{u\in H_0^1(\Omega):\Delta u\in L^2(\Omega)\}.
\end{equation*}

\begin{proposition}\label{prop:evaluation}
The evaluation map
\begin{equation*}
\tau_y:\Dom(A_D)\longrightarrow\C,
\qquad
\tau_yu=u(y),
\end{equation*}
is well defined and continuous with respect to the graph norm
\begin{equation*}
\|u\|_{\mathrm{gr}(A_D)}^2:=\|u\|_{L^2(\Omega)}^2+\|A_Du\|_{L^2(\Omega)}^2.
\end{equation*}
Consequently,
\begin{equation*}
S_y:=A_D\big|_{\Ker\tau_y}
\end{equation*}
is densely defined, closed, and symmetric.
\end{proposition}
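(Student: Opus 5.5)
The plan is to combine interior elliptic regularity with Sobolev embedding: we show that $u(y)$ is controlled by $\|u\|_{H^2(B)}$ on a small ball around $y$. Since $y\in\Omega$ is fixed and $\Omega$ is open, we choose $r>0$ with $\overline{B_{2r}(y)}\subset\Omega$ and a cutoff $\chi\in C_c^\infty(B_{2r}(y))$ with $\chi\equiv1$ on $B_r(y)$.

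For $u\in\Dom(A_D)$ we set $v=\chi u$. Then $v\in H^1(\R^d)$ has compact support, and
\[
\Delta v=\chi\Delta u+2\nabla\chi\cdot\nabla u+u\Delta\chi\in L^2(\R^d),
\]
with $\|\Delta v\|_{L^2}\le C(\|\Delta u\|_{L^2}+\|u\|_{H^1})$. By the Fourier transform on $\R^d$ this gives $v\in H^2(\R^d)$, with $\|v\|_{H^2}\le C(\|v\|_{L^2}+\|\Delta v\|_{L^2})$. The $H^1$ norm of $u$ is in turn controlled by the graph norm, through the form identity
\[
\|\nabla u\|^2=\mathfrak a_D[u,u]=(A_Du,u)\le \|A_Du\|\,\|u\|.
\]
Hence $\|\chi u\|_{H^2(\R^d)}\le C\|u\|_{\mathrm{gr}(A_D)}$.

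Since $d\le3$, we have $H^2(\R^d)\hookrightarrow C_b(\R^d)$ because $2>d/2$. Therefore $u$ has a continuous representative near $y$, the value $u(y)=v(y)$ is well defined, and $|u(y)|\le C\|u\|_{\mathrm{gr}(A_D)}$. This is the continuity of $\tau_y$. Note that $\tau_y$ only sees $u$ near $y$, so no boundary regularity and no information about the unboundedness of $\Omega$ is needed.

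For the operator statements:
\begin{itemize}
\item Closedness: $\Ker\tau_y$ is closed in the graph norm by the continuity of $\tau_y$, and $A_D$ is closed, so the restriction $S_y$ is closed.
\item Symmetry: $S_y$ is a restriction of the self-adjoint operator $A_D$, hence symmetric once it is densely defined.
\item Density: it suffices that $C_c^\infty(\Omega\setminus\{y\})\subset\Ker\tau_y$ is dense in $L^2(\Omega)$, which holds because the point $\{y\}$ has measure zero.
\end{itemize}

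The main obstacle is the local $H^2$ estimate, and it is handled by the cutoff argument above. An alternative route to density is to use that $\tau_y$ is an unbounded functional on $L^2$; this holds because $u(y)$ cannot be controlled by $\|u\|_{L^2}$, as seen by taking approximate deltas. However, the $C_c^\infty(\Omega\setminus\{y\})$ argument is simpler and is the one I would use.
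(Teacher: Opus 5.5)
Your proposal is correct and follows essentially the paper's route: local $H^2$ control near $y$ combined with the embedding $H^2\hookrightarrow C^0$ for $d\le 3$, graph-norm closedness of $\Ker\tau_y$, and density of $C_c^\infty(\Omega\setminus\{y\})$ in $L^2(\Omega)$. The only difference is that you prove the interior $H^2$ estimate by hand, via the cutoff, the Fourier transform and the form bound $\|\nabla u\|^2\le\|A_Du\|\,\|u\|$, whereas the paper simply cites interior elliptic regularity.
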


\begin{proof}
Choose $r>0$ such that $\overline{B_{2r}(y)}\subset\Omega$. By interior elliptic regularity,
\begin{equation*}
\|u\|_{H^2(B_r(y))}
\leq C\bigl(\|\Delta u\|_{L^2(B_{2r}(y))}+\|u\|_{L^2(B_{2r}(y))}\bigr)
\leq C'\|u\|_{\mathrm{gr}(A_D)}
\end{equation*}
for every $u\in\Dom(A_D)$. Since $H^2(B_r(y))\hookrightarrow C^0(\overline{B_r(y)})$ for $d\leq3$,
\begin{equation*}
|u(y)|\leq C''\|u\|_{\mathrm{gr}(A_D)}.
\end{equation*}
Thus $\tau_y$ is graph-norm continuous. Its kernel is therefore graph-norm closed in $\Dom(A_D)$, so $S_y$ is closed and symmetric.

Finally, $C_c^\infty(\Omega\setminus\{y\})\subset\Dom(S_y)$ is dense in $L^2(\Omega)$. Indeed, if $\varphi\in C_c^\infty(\Omega)$, multiplying $\varphi$ by a smooth cut-off that vanishes on $B_\varepsilon(y)$ and equals one outside $B_{2\varepsilon}(y)$ gives an approximation in $L^2(\Omega)$. Hence $S_y$ is densely defined.
\end{proof}

For $z\in\rho(A_D)$ we set
\begin{equation*}
G_z^y:=G_z^\Omega(\cdot,y),
\end{equation*}
where $G_z^\Omega$ is the Dirichlet Green kernel introduced in
Theorem~\ref{thm:dirichlet-kernel}.

\begin{theorem}\label{thm:deficiency}
For every $z\in\rho(A_D)$,
\begin{equation*}
\Ran(S_y-z)=\{f\in L^2(\Omega):(f,G_{\bar z}^y)_{L^2(\Omega)}=0\}.
\end{equation*}
If $z\in\rho(A_D)\setminus\R$, then
\begin{equation*}
\Ker(S_y^*-z)=\operatorname{span}\{G_z^y\}.
\end{equation*}
In particular,
\begin{equation*}
n_+(S_y)=n_-(S_y)=1.
\end{equation*}
\end{theorem}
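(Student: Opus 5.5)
The plan is to identify the range of $S_y-z$ directly via the resolvent, then deduce the deficiency subspaces from $\Ker(S_y^*-z)=\Ran(S_y-\bar z)^\perp$.

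\emph{Step 1: the range.} Fix $z\in\rho(A_D)$. Since $A_D-z$ is a bijection from $\Dom(A_D)$ onto $L^2(\Omega)$, every $f\in L^2(\Omega)$ has a unique preimage $u=(A_D-z)^{-1}f$. Such a $u$ lies in $\Dom(S_y)$ exactly when $u(y)=0$. Hence
\begin{equation*}
\Ran(S_y-z)=\{f\in L^2(\Omega):\ \tau_y(A_D-z)^{-1}f=0\}.
\end{equation*}
It therefore suffices to prove the identity
\begin{equation*}
\bigl((A_D-z)^{-1}f\bigr)(y)=\int_\Omega G_z^\Omega(y,x)f(x)\,\dd x=(f,G_{\bar z}^y)_{L^2(\Omega)},\qquad f\in L^2(\Omega).
\end{equation*}

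\emph{Step 2: proving the identity.} First I need $G_z^y\in L^2(\Omega)$. Write $G_z^y=G_z^0(\cdot,y)-h_z^\Omega(\cdot,y)$. The free kernel is square integrable in $d=2,3$, since it has a logarithmic or $|x|^{-1}$ singularity and decays exponentially because $\Re\kappa>0$. The correction $h_z^\Omega(\cdot,y)$ lies in $H^1(\Omega)$ by Theorem~\ref{thm:dirichlet-kernel}(i); this covers $z\notin[0,\infty)$, which is all of $\rho(A_D)$ because $A_D\ge0$. The last equality in the identity then follows from the symmetries in Theorem~\ref{thm:dirichlet-kernel}(v): $G_z^\Omega(y,x)=G_z^\Omega(x,y)=\overline{G_{\bar z}^\Omega(x,y)}$.

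For the first equality, take $f\in C_c^\infty(\Omega)$. By Theorem~\ref{thm:dirichlet-kernel}(iv), $(A_D-z)^{-1}f(x)=\int G_z^\Omega(x,y')f(y')\,\dd y'$ for a.e.\ $x$. Both sides are continuous near $y$: the left side by interior regularity, the right side because the kernel has an integrable diagonal singularity and is smooth off the diagonal. So the formula holds at $x=y$.

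Now extend to all $f\in L^2(\Omega)$. The left side $f\mapsto \tau_y(A_D-z)^{-1}f$ is continuous on $L^2$, since $(A_D-z)^{-1}$ is bounded into the graph norm and $\tau_y$ is graph-norm continuous by Proposition~\ref{prop:evaluation}. The right side is continuous since $G_{\bar z}^y\in L^2$. Density of $C_c^\infty(\Omega)$ then concludes.

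The main obstacle is the continuity of the integral at $x=y$ in the $C_c^\infty$ step. I would handle it by splitting $G_z^\Omega=G_z^0-h_z^\Omega$. The free part is a convolution with an $L^1_{\mathrm{loc}}$ kernel, hence continuous. The correction $h_z^\Omega(\cdot,y')$ is smooth in both variables near $(y,y)$ by Theorem~\ref{thm:dirichlet-kernel}(iii).

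\emph{Step 3: deficiency subspaces.} Since $S_y$ is closed, $\Ran(S_y-\bar z)$ is closed for $z\in\rho(A_D)\setminus\R$. Step 1 applied at $\bar z$ gives $\Ran(S_y-\bar z)=\{G_z^y\}^\perp$. Hence
\begin{equation*}
\Ker(S_y^*-z)=\Ran(S_y-\bar z)^\perp=\operatorname{span}\{G_z^y\}.
\end{equation*}
It remains to check $G_z^y\neq0$. If it vanished, every $f$ would satisfy $((A_D-z)^{-1}f)(y)=0$, so $\tau_y\equiv0$ on $\Dom(A_D)$. This is false: a cutoff function equal to $1$ near $y$ and compactly supported in $\Omega$ lies in $\Dom(A_D)$. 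Therefore $\dim\Ker(S_y^*\mp i)=1$, giving $n_\pm(S_y)=1$.
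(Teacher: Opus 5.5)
Your proof is correct and follows essentially the paper's route: identify $\Ran(S_y-z)$ with the kernel of $f\mapsto((A_D-z)^{-1}f)(y)$, represent this functional by $G_{\bar z}^y$ via the kernel formula on $C_c^\infty(\Omega)$ and density, then take orthogonal complements. The paper gets $G_{\bar z}^y\in L^2(\Omega)$ automatically from the Riesz representation theorem, whereas you verify it directly and also justify two points the paper leaves implicit: the pointwise evaluation at $y$, and $G_z^y\neq0$.

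One small slip: $A_D\geq0$ only gives $\C\setminus[0,\infty)\subseteq\rho(A_D)$. The inclusion you actually need, so that Theorem~\ref{thm:dirichlet-kernel} covers every $z\in\rho(A_D)$, follows from $\sigma(A_D)=[0,+\infty)$.
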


\begin{proof}

By Proposition~\ref{prop:evaluation}, the map
\begin{equation*}
f\mapsto ((A_D-z)^{-1}f)(y)
\end{equation*}
is a bounded linear functional on $L^2(\Omega)$. Hence, by the Riesz
representation theorem, there exists a unique vector $g_{\bar z}^y\in L^2(\Omega)$
such that
\begin{equation*}
((A_D-z)^{-1}f)(y)=(f,g_{\bar z}^y)_{L^2(\Omega)},
\qquad f\in L^2(\Omega).
\end{equation*}
Theorem~\ref{thm:dirichlet-kernel} identifies this Riesz representative with
the Dirichlet Green function. Indeed, for $f\in C_c^\infty(\Omega)$,
\begin{equation*}
((A_D-z)^{-1}f)(y)
=\int_\Omega G_z^\Omega(y,x)f(x)\,\dd x
=(f,G_{\bar z}^\Omega(\cdot,y))_{L^2(\Omega)},
\end{equation*}
and the equality extends to all $f\in L^2(\Omega)$ by density. By uniqueness
of the Riesz representative,
\begin{equation*}
g_{\bar z}^y=G_{\bar z}^\Omega(\cdot,y).
\end{equation*}
Thus
\begin{equation}\label{eq:green-vector-reproducing}
((A_D-z)^{-1}f)(y)=(f,G_{\bar z}^y)_{L^2(\Omega)},
\qquad f\in L^2(\Omega).
\end{equation}

Let $z\in\rho(A_D)$. If $f=(S_y-z)u$ with $u\in\Dom(S_y)$, then $u=(A_D-z)^{-1}f$, and \eqref{eq:green-vector-reproducing} gives
\begin{equation*}
(f,G_{\bar z}^y)_{L^2(\Omega)}=u(y)=0.
\end{equation*}
Conversely, if this scalar product vanishes and $u=(A_D-z)^{-1}f$, then $u(y)=0$, hence $u\in\Dom(S_y)$ and $f=(S_y-z)u$. This proves the range characterization.

For nonreal $z$,
\begin{equation*}
\Ker(S_y^*-z)=\Ran(S_y-\bar z)^\perp=\operatorname{span}\{G_z^y\},
\end{equation*}
and the deficiency indices are therefore $(1,1)$.
\end{proof}

For $u,v\in L^2(\Omega)$, we use the rank-one notation
\begin{equation*}
|u\rangle\langle v|f:=(f,v)_{L^2(\Omega)}u.
\end{equation*}

\subsection{Dimension three}
Assume now that $d=3$. By \eqref{eq:dirichlet-kernel-domain},
\begin{equation*}
G_z^\Omega(x,y)=\frac{\mathrm e^{-\kappa(z)|x-y|}}{4\pi|x-y|}-h_z^\Omega(x,y),
\qquad
z\in\C\setminus[0,\infty).
\end{equation*}
Since
\begin{equation*}
\frac{\mathrm e^{-\kappa(z)|x-y|}}{4\pi|x-y|}
=
\frac{1}{4\pi|x-y|}-\frac{\kappa(z)}{4\pi}+O(|x-y|)
\end{equation*}
as $x\to y$, one has
\begin{equation}\label{AsymptG}
G_z^\Omega(x,y)=\frac{1}{4\pi|x-y|}+M_y^\Omega(z)+O(|x-y|),
\qquad
x\to y,
\end{equation}
where we have introduced the three-dimensional {\em Weyl function} as
\begin{equation}\label{eq:weyl-function-3d}
M_y^\Omega(z):=-\frac{\kappa(z)}{4\pi}-h_z^\Omega(y,y).
\end{equation}
\begin{lemma}\label{lem:weyl-identity-3d}
For the Weyl function defined in \eqref{eq:weyl-function-3d} and for $z,w\in\C\setminus[0,\infty)$, $z\neq w$,
\begin{equation}\label{eq:weyl-identity-3d}
M_y^\Omega(z)-M_y^\Omega(w)=(z-w)(G_z^y,G_{\bar w}^y)_{L^2(\Omega)}.
\end{equation}
In particular,
\begin{equation*}
M_y^\Omega(\bar z)=\overline{M_y^\Omega(z)}.
\end{equation*}
\end{lemma}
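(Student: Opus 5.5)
The plan is to combine the resolvent identity for $A_D$ with the reproducing property \eqref{eq:green-vector-reproducing} from the proof of Theorem~\ref{thm:deficiency}, and then to pass to the diagonal using the expansion \eqref{AsymptG}.

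\emph{Step 1: identity for the Green vectors.} First I will show that, for $z,w\in\C\setminus[0,\infty)$,
\begin{equation*}
G_z^y-G_w^y=(z-w)(A_D-w)^{-1}G_z^y
\end{equation*}
as elements of $L^2(\Omega)$. Take any $f\in L^2(\Omega)$ and pair it with both sides. By \eqref{eq:green-vector-reproducing} (applied with $\bar z$ and $\bar w$),
\begin{equation*}
(f,G_z^y)-(f,G_w^y)=\bigl((A_D-\bar z)^{-1}f-(A_D-\bar w)^{-1}f\bigr)(y).
\end{equation*}
The first resolvent identity rewrites the right-hand side as
\begin{equation*}
(\bar z-\bar w)\bigl((A_D-\bar w)^{-1}(A_D-\bar z)^{-1}f\bigr)(y).
\end{equation*}
Applying \eqref{eq:green-vector-reproducing} once more, this equals
\begin{equation*}
(\bar z-\bar w)\bigl((A_D-\bar z)^{-1}f,G_w^y\bigr)=(\bar z-\bar w)\bigl(f,(A_D-z)^{-1}G_w^y\bigr).
\end{equation*}
Since $f$ is arbitrary, this gives $G_z^y-G_w^y=(z-w)(A_D-z)^{-1}G_w^y$. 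By symmetry the same argument gives the version with $(A_D-w)^{-1}G_z^y$.

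\emph{Step 2: evaluation at $y$.} The difference $G_z^y-G_w^y$ lies in $\Dom(A_D)$, because it is the resolvent of an $L^2$ function. Its value at $y$ is therefore meaningful, and by \eqref{eq:green-vector-reproducing}
\begin{equation*}
(G_z^y-G_w^y)(y)=(z-w)\bigl((A_D-z)^{-1}G_w^y\bigr)(y)=(z-w)(G_w^y,G_{\bar z}^y).
\end{equation*}
To identify the left-hand side, I use \eqref{AsymptG} for both $z$ and $w$: the Coulomb singularities $\frac{1}{4\pi|x-y|}$ cancel. Since $G_z^y-G_w^y$ is continuous near $y$ (it lies in $H^2_{\rm loc}$, as in Proposition~\ref{prop:evaluation}), its value at $y$ is the limit as $x\to y$, namely $M_y^\Omega(z)-M_y^\Omega(w)$.

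This is the step I expect to need the most care: I have to check that the pointwise value of this $\Dom(A_D)$ element coincides with the limit of the explicit kernel. I will handle it through the continuity of $h_z^\Omega(\cdot,y)$ near $y$ and the smoothness of $e^{-\kappa r}/r-e^{-\kappa' r}/r$, which is continuous at $r=0$ with value $\kappa'-\kappa$, so that no ambiguity arises.

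\emph{Step 3: matching the form of the statement.} To reach the claimed $(G_z^y,G_{\bar w}^y)$, I interchange the roles of $z$ and $w$ in Step 2, or equivalently use the symmetry $(G_w^y,G_{\bar z}^y)=(G_z^y,G_{\bar w}^y)$. This symmetry follows from
\begin{equation*}
\int G_w G_z\,\dd x
\end{equation*}
together with Theorem~\ref{thm:dirichlet-kernel}(v), since $\overline{G_{\bar z}}=G_z$.

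\emph{Step 4: the conjugation property.} Finally, $M_y^\Omega(\bar z)=\overline{M_y^\Omega(z)}$ follows directly from \eqref{eq:weyl-function-3d}. Indeed $\kappa(\bar z)=\overline{\kappa(z)}$, and $h_{\bar z}^\Omega=\overline{h_z^\Omega}$, which comes from Theorem~\ref{thm:dirichlet-kernel}(v) combined with $G^0_{\bar z}=\overline{G^0_z}$.
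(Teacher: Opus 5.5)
Your proof is correct and follows essentially the same route as the paper: the resolvent identity $G_z^y-G_w^y=(z-w)(A_D-z)^{-1}G_w^y$, evaluation at $y$ via \eqref{eq:green-vector-reproducing} with the Coulomb singularities cancelling, the symmetry $(G_w^y,G_{\bar z}^y)=(G_z^y,G_{\bar w}^y)$, and conjugation from Theorem~\ref{thm:dirichlet-kernel}(v). You only add detail the paper leaves implicit: the duality derivation of the Green-vector resolvent identity, and the check that the value at $y$ of this element of $\Dom(A_D)$ equals the limit of the explicit kernel difference.
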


\begin{proof}
The resolvent identity gives
\begin{equation*}
G_z^y-G_w^y=(z-w)(A_D-z)^{-1}G_w^y.
\end{equation*}
The singular parts of $G_z^y$ and $G_w^y$ at $y$ coincide. Using \eqref{AsymptG} to take the regularized value at $y$ and exploiting \eqref{eq:green-vector-reproducing}, we obtain
\begin{equation*}
M_y^\Omega(z)-M_y^\Omega(w)=(z-w)(G_w^y,G_{\bar z}^y)_{L^2(\Omega)}.
\end{equation*}
The last scalar product equals $(G_z^y,G_{\bar w}^y)_{L^2(\Omega)}$, which proves \eqref{eq:weyl-identity-3d}. The conjugation property follows from that of the Green kernel.
\end{proof}

\begin{theorem}\label{thm:krein}
For every $\alpha\in\R\cup\{\infty\}$ there exists a self-adjoint extension $A_{\alpha,y}^\Omega$ of $S_y$, with $A_{\infty,y}^\Omega=A_D$, and every self-adjoint extension of $S_y$ is obtained in this way. If $z\in\C\setminus[0,\infty)$ and $\alpha-M_y^\Omega(z)\neq0$, then
\begin{equation}\label{eq:krein-resolvent-3d}
(A_{\alpha,y}^\Omega-z)^{-1}
=
(A_D-z)^{-1}
+
\frac{1}{\alpha-M_y^\Omega(z)}|G_z^y\rangle\langle G_{\bar z}^y|.
\end{equation}
For $\alpha\in\R$,
\begin{equation}\label{eq:operator-domain-3d}
\Dom(A_{\alpha,y}^\Omega)
=
\left\{u=u_z+qG_z^y:u_z\in\Dom(A_D),\ u_z(y)=(\alpha-M_y^\Omega(z))q\right\},
\end{equation}
and
\begin{equation*}
(A_{\alpha,y}^\Omega-z)u=(A_D-z)u_z.
\end{equation*}
\end{theorem}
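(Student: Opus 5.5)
Since Theorem~\ref{thm:deficiency} gives deficiency indices $(1,1)$, von Neumann's theory yields a one-parameter family of self-adjoint extensions of $S_y$, indexed by unitaries $\Ker(S_y^*-i)\to\Ker(S_y^*+i)$. The plan is to describe this family through the Kre\u{\i}n formula and the $\alpha$-parametrization, in four steps.

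\emph{Step 1: describe $S_y^*$.} Fix $z_0\in\C\setminus\R$. Combining $\Dom(S_y^*)=\Dom(\overline{S_y})\dotplus\Ker(S_y^*-z_0)\dotplus\Ker(S_y^*-\bar z_0)$ with $\Dom(A_D)=\Dom(S_y)\dotplus\operatorname{span}\{G_{z_0}^y\}$ gives
\begin{equation*}
\Dom(S_y^*)=\{u=u_z+qG_z^y:\ u_z\in\Dom(A_D),\ q\in\C\},\qquad (S_y^*-z)u=(A_D-z)u_z .
\end{equation*}
This holds for every $z\in\C\setminus[0,\infty)$. To move from $z_0$ to a general $z$, use that $G_z^y-G_w^y=(z-w)(A_D-z)^{-1}G_w^y\in\Dom(A_D)$. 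The representation is unique because $G_z^y\notin\Dom(A_D)$, which follows from its Coulomb singularity.

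\emph{Step 2: build a boundary triple.} Set $\Gamma_0u=q$ and $\Gamma_1u=\lim_{x\to y}\bigl(u(x)-\frac{q}{4\pi|x-y|}\bigr)$. By \eqref{AsymptG} this gives $\Gamma_1u=u_z(y)+M_y^\Omega(z)q$. Using Lemma~\ref{lem:weyl-identity-3d}, one checks that this value does not depend on $z$.

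The main obstacle is the abstract Green identity
\begin{equation*}
(S_y^*u,v)-(u,S_y^*v)=\Gamma_1u\,\overline{\Gamma_0v}-\Gamma_0u\,\overline{\Gamma_1v}.
\end{equation*}
To prove it, write $u=u_z+qG_z^y$ and $v=v_z+pG_z^y$ with a single nonreal $z$, and expand. The $\Dom(A_D)$–$\Dom(A_D)$ terms cancel by self-adjointness of $A_D$. The mixed terms are evaluated with \eqref{eq:green-vector-reproducing}, which gives $((A_D-z)u_z,G_{\bar z}^y)=u_z(y)$. The $G$–$G$ term is $(z-\bar z)\|G_z^y\|^2=M_y^\Omega(z)-M_y^\Omega(\bar z)$, by Lemma~\ref{lem:weyl-identity-3d} with $w=\bar z$. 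Collecting terms yields the identity. Surjectivity of $(\Gamma_0,\Gamma_1)$ onto $\C^2$ is immediate, since $q$ and $u_z(y)$ can be prescribed independently by Proposition~\ref{prop:evaluation}.

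\emph{Step 3: parametrize the extensions.} Standard boundary-triple theory says the self-adjoint extensions are the restrictions of $S_y^*$ to $\{\Gamma_1u=\alpha\Gamma_0u\}$ with $\alpha\in\R$, together with $\{\Gamma_0u=0\}$. The latter corresponds to $\alpha=\infty$ and gives $A_D$. This covers all self-adjoint extensions, because self-adjoint relations in $\C$ are exactly $\R\cup\{\infty\}$. The condition $\Gamma_1u=\alpha\Gamma_0u$ reads $u_z(y)=(\alpha-M_y^\Omega(z))q$, which is \eqref{eq:operator-domain-3d}, and the action is the one stated.

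\emph{Step 4: derive the resolvent formula.} For $f\in L^2(\Omega)$ and $\alpha-M_y^\Omega(z)\neq0$, make the ansatz $u=(A_D-z)^{-1}f+qG_z^y$. This element lies in the domain if and only if $((A_D-z)^{-1}f)(y)=(\alpha-M_y^\Omega(z))q$. By \eqref{eq:green-vector-reproducing} this means
\begin{equation*}
q=\frac{(f,G_{\bar z}^y)}{\alpha-M_y^\Omega(z)} ,
\end{equation*}
and this is \eqref{eq:krein-resolvent-3d}. On $\C\setminus\R$ the denominator does not vanish, because $\operatorname{Im}M_y^\Omega(z)=\operatorname{Im}z\,\|G_z^y\|^2\neq0$. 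For negative $z$ the formula holds wherever the denominator is nonzero, by the same computation.
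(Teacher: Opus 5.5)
Your proposal is correct, but it takes a different route from the paper. The paper never describes $S_y^*$. It defines the candidate resolvent $R_{\alpha,y}(z)$ by the Kre\u{\i}n formula and checks the resolvent identity and $R_{\alpha,y}(\bar z)=R_{\alpha,y}(z)^*$ via Lemma~\ref{lem:weyl-identity-3d}. It proves injectivity from $G_z^y\notin H^1_{\mathrm{loc}}(\Omega)$ and then applies the pseudoresolvent criterion. The domain formula is read off from $u=R_{\alpha,y}(z)f$. The claim that $\alpha\in\R\cup\{\infty\}$ exhausts all self-adjoint extensions is taken from the general theory by citation, using only the deficiency indices $(1,1)$.

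You instead build a boundary triple $(\C,\Gamma_0,\Gamma_1)$. Its $\gamma$-field is $c\mapsto cG_z^y$ and its Weyl function is exactly $M_y^\Omega$, since $\Gamma_1G_z^y=M_y^\Omega(z)$; this makes the paper's $\mathcal Q$-function remark precise. What your route buys:
\begin{itemize}
\item exhaustion becomes the elementary classification of self-adjoint relations in $\C$;
\item the extension property is automatic, since $\ker\Gamma_0\cap\ker\Gamma_1=\ker\tau_y$;
\item the boundary condition $\Gamma_1u=\alpha\Gamma_0u$ is intrinsic, so it is manifestly independent of $z$ and also valid at negative $z$ with $\alpha=M_y^\Omega(z)$.
\end{itemize}
The price is the explicit description of $\Dom(S_y^*)$, the Green identity computation, and reliance on the abstract boundary-triple theorem. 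The Green identity does check out as you describe; note that the mixed terms need \eqref{eq:green-vector-reproducing} at $\bar z$, i.e.\ $((A_D-\bar z)u_z,G_z^y)=u_z(y)$.

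Two small repairs are needed.

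In Step 1, the decomposition $\Dom(A_D)=\Dom(S_y)\dotplus\operatorname{span}\{G_{z_0}^y\}$ is false. As you yourself use for uniqueness, $G_{z_0}^y\notin\Dom(A_D)$. A correct complement is $\operatorname{span}\{G_{z_0}^y-G_{\bar z_0}^y\}$, whose value at $y$ is $2i\operatorname{Im}M_y^\Omega(z_0)\neq0$. Simpler still, $\Dom(S_y^*)=\Dom(A_D)\dotplus\Ker(S_y^*-z)$ for every $z\in\rho(A_D)$: take $u_z:=(A_D-z)^{-1}(S_y^*-z)u$. Moreover $\Ker(S_y^*-z)=\operatorname{span}\{G_z^y\}$ follows from the range characterization in Theorem~\ref{thm:deficiency}. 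This gives Step 1 for all $z$ at once, without von Neumann's formula.

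In Step 4, for negative $z$ your ansatz only gives surjectivity of $A_{\alpha,y}^\Omega-z$. Add injectivity: $(A_{\alpha,y}^\Omega-z)u=0$ forces $u_z=0$, hence $(\alpha-M_y^\Omega(z))q=0$ and $q=0$. Closedness then gives $z\in\rho(A_{\alpha,y}^\Omega)$.
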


\begin{proof}
Let $\alpha\in\R$. For nonreal $z$, define
\begin{equation*}
R_{\alpha,y}(z)
:=
(A_D-z)^{-1}
+
\frac{1}{\alpha-M_y^\Omega(z)}|G_z^y\rangle\langle G_{\bar z}^y|.
\end{equation*}
Taking $w=\bar z$ in \eqref{eq:weyl-identity-3d} gives
\begin{equation*}
\Im M_y^\Omega(z)=(\Im z)\|G_z^y\|_{L^2(\Omega)}^2,
\end{equation*}
so $\alpha-M_y^\Omega(z)\neq0$. A direct computation based on \eqref{eq:weyl-identity-3d} yields
\begin{equation*}
R_{\alpha,y}(z)-R_{\alpha,y}(w)
=(z-w)R_{\alpha,y}(z)R_{\alpha,y}(w),
\qquad
R_{\alpha,y}(\bar z)=R_{\alpha,y}(z)^*.
\end{equation*}
Moreover, $R_{\alpha,y}(z)$ is injective. Indeed, if $R_{\alpha,y}(z)f=0$, then
\begin{equation*}
(A_D-z)^{-1}f=-qG_z^y
\end{equation*}
for some $q\in\C$. The left-hand side belongs to $H_0^1(\Omega)$, whereas $G_z^y\notin H^1_{\mathrm{loc}}(\Omega)$ because of its Coulomb singularity. Thus $q=0$ and then $f=0$. The adjoint relation implies that the range is dense. The standard pseudoresolvent criterion therefore gives a self-adjoint operator $A_{\alpha,y}^\Omega$ with resolvent $R_{\alpha,y}$.

If $u=R_{\alpha,y}(z)f$, set
\begin{equation*}
u_z:=(A_D-z)^{-1}f,
\qquad
q:=\frac{(f,G_{\bar z}^y)_{L^2(\Omega)}}{\alpha-M_y^\Omega(z)}.
\end{equation*}
Then $u=u_z+qG_z^y$, and \eqref{eq:green-vector-reproducing} gives
\begin{equation*}
u_z(y)=(f,G_{\bar z}^y)_{L^2(\Omega)}=(\alpha-M_y^\Omega(z))q.
\end{equation*}
Conversely, if $u=u_z+qG_z^y$ satisfies this condition and $f=(A_D-z)u_z$, then $R_{\alpha,y}(z)f=u$. This proves \eqref{eq:operator-domain-3d} and the formula for the action.

The formula for real negative $z$ follows by analytic continuation, or equivalently by the resolvent identity, on every connected component of the set where $\alpha-M_y^\Omega(z)\neq0$.

Finally, let $u\in\Dom(S_y)$. Then the rank-one term in \eqref{eq:krein-resolvent-3d} vanishes on $(S_y-z)u$, so $A_{\alpha,y}^\Omega$ extends $S_y$. Since $S_y$ has deficiency indices $(1,1)$, the parameter $\alpha\in\R\cup\{\infty\}$ exhausts all self-adjoint extensions; see, for instance, \cite{2008OaM,BGP2008}. The case $\alpha=\infty$ gives $A_D$.

\end{proof}

\begin{remark}[Relation with Kre\u{\i}n $\mathcal Q$-functions]
In the formulation of boundary triples, define
\begin{equation*}
\gamma(z)c:=cG_z^y,
\qquad c\in\C.
\end{equation*}
Then $\gamma$ is the Kre\u{\i}n $\Gamma$-field associated with the restriction $S_y$ and the reference extension $A_D$. The identity
\begin{equation*}
M_y^\Omega(z)-M_y^\Omega(w)
=
(z-w)(G_z^y,G_{\bar w}^y)_{L^2(\Omega)}
\end{equation*}
shows that $M_y^\Omega$ is a Kre\u{\i}n $\mathcal Q$-function (see for example \cite[Definition~1.19]{BGP2008}). 
The denominator $\alpha-M_y^\Omega(z)$ used in the Kre\u{\i}n formula is the negative of $\mathcal Q(z)-\alpha$ and up to the irrelevant sign is itself a $\mathcal Q$-function. However, to avoid confusion we will call it throughout the paper the {\em Kre\u{\i}n denominator}.
The same interpretation and denomination applies in dimension two to the logarithmically renormalized Weyl function introduced below.
\end{remark}

\begin{remark}
The condition in \eqref{eq:operator-domain-3d} is independent of the auxiliary point $z$. Equivalently, every $u\in\Dom(A_{\alpha,y}^\Omega)$ has the local expansion
\begin{equation*}
u(x)=\frac{q}{4\pi|x-y|}+\alpha q+o(1),
\qquad
x\to y,
\end{equation*}
for a uniquely determined $q\in\C$.
\end{remark}
\begin{remark}
As it is typical for point interactions, the generic element of the domain is the sum of a singular part and a regular part. The coefficient of the singular part, usually called {\em charge} in this context, is related to the evaluation of the singularity of the regular part (see equation \eqref{eq:operator-domain-3d}).
\end{remark}
\begin{remark}
The simplest example is the half-space. The Weyl function is given in \eqref{eq:halfspace-weyl-3d}. After writing $z=\zeta^2$, $\Im\zeta>0$, its Kre\u{\i}n denominator agrees with $\Gamma_{\alpha,y}^D(\zeta)$ in \cite[Section~2]{NojaRasoStoia25}. A more complete analysis of the model is given in Subsection~\ref{sec:flat-models}.
\end{remark}

\subsection{Dimension two}
Assume now that $d=2$. By \eqref{eq:dirichlet-kernel-domain},
\begin{equation*}
G_z^\Omega(x,y)=\frac{1}{2\pi}K_0\bigl(\kappa(z)|x-y|\bigr)-h_z^\Omega(x,y).
\end{equation*}
By using \cite[(10.31.2)]{DLMF} one has the expansion
\begin{equation*}
K_0(\rho)=-\log\frac{\rho}{2}-\gamma_{\mathrm E}+o(1),
\qquad
\rho\downarrow0,
\end{equation*}
where $\gamma_{\mathrm E}$ is the Euler--Mascheroni constant. Hence one can write
\begin{equation*}
G_z^\Omega(x,y)=-\frac{1}{2\pi}\log|x-y|+M_y^\Omega(z)+o(1),
\qquad
x\to y,
\end{equation*}
with
\begin{equation}\label{eq:weyl-function-2d}
M_y^\Omega(z)
:=
-\frac{1}{2\pi}\left(\log\frac{\kappa(z)}{2}+\gamma_{\mathrm E}\right)-h_z^\Omega(y,y),
\end{equation}
where the principal branch of the logarithm is used. Given these premises, one can build two dimensional point interactions on domains.

\begin{lemma}\label{lem:weyl-identity-2d}
For $z,w\in\C\setminus[0,\infty)$, $z\neq w$,
\begin{equation}\label{eq:weyl-identity-2d}
M_y^\Omega(z)-M_y^\Omega(w)=(z-w)(G_z^y,G_{\bar w}^y)_{L^2(\Omega)}.
\end{equation}
In particular,
\begin{equation*}
M_y^\Omega(\bar z)=\overline{M_y^\Omega(z)},
\qquad
\Im M_y^\Omega(z)=(\Im z)\|G_z^y\|_{L^2(\Omega)}^2.
\end{equation*}
\end{lemma}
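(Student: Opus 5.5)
The plan is to follow the proof of Lemma~\ref{lem:weyl-identity-3d}. The only new point is that the regularized value at $y$ is now taken relative to the logarithmic singularity $-\frac{1}{2\pi}\log|x-y|$, not the Coulomb one.

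First, I use the resolvent identity for the Green vectors,
\begin{equation*}
G_z^y-G_w^y=(z-w)(A_D-z)^{-1}G_w^y ,
\end{equation*}
where $G_w^y\in L^2(\Omega)$ for $d=2$ because the logarithmic singularity is square integrable. This identity holds because the difference $G_z^y-G_w^y$ solves $(-\Delta-z)(G_z^y-G_w^y)=(z-w)G_w^y$ in the distributional sense on all of $\Omega$: the deltas at $y$ cancel by Theorem~\ref{thm:dirichlet-kernel}(ii). The difference also has zero trace. Alternatively, I can read the identity off directly from the kernel representation~\eqref{eq:dirichlet-kernel-representation} together with the operator resolvent identity, tested against $C_c^\infty$ functions.

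Second, I evaluate at $y$. The right-hand side lies in $\Dom(A_D)$, so by Proposition~\ref{prop:evaluation} its value at $y$ is defined, and by \eqref{eq:green-vector-reproducing} it equals $(z-w)(G_w^y,G_{\bar z}^y)_{L^2(\Omega)}$.

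For the left-hand side, I use the expansions $G_z^\Omega(x,y)=-\frac{1}{2\pi}\log|x-y|+M_y^\Omega(z)+o(1)$, and the same for $w$. The logarithms are $z$-independent and cancel, so the continuous function $G_z^y-G_w^y$ has limit $M_y^\Omega(z)-M_y^\Omega(w)$ at $y$. This is the step requiring care. One must know that the $o(1)$ remainders are genuine, i.e.\ that $h_z^\Omega(\cdot,y)$ is continuous at $y$, which holds since it is smooth in the interior, being a solution of $(-\Delta-z)h=0$. One must also know that $K_0(\kappa\rho)+\log\rho$ converges as $\rho\downarrow0$ to $-\log(\kappa/2)-\gamma_{\mathrm E}$ with the principal branch, which holds for complex $\kappa$ with $\Re\kappa>0$ by the series \cite[(10.31.2)]{DLMF}. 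Finally, the continuous representative of the $\Dom(A_D)$ function must agree with the pointwise limit; this holds because the function is continuous near $y$ by interior $H^2$ regularity.

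Third, I swap the inner product using Theorem~\ref{thm:dirichlet-kernel}(v). Since $G_{\bar z}^y=\overline{G_z^y}$ pointwise,
\begin{equation*}
(G_w^y,G_{\bar z}^y)=\int G_w^yG_z^y=(G_z^y,G_{\bar w}^y),
\end{equation*}
which gives \eqref{eq:weyl-identity-2d}.

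Fourth, I derive the consequences. The conjugation property $M_y^\Omega(\bar z)=\overline{M_y^\Omega(z)}$ follows from (v) together with $\kappa(\bar z)=\overline{\kappa(z)}$ and the principal logarithm commuting with conjugation off $(-\infty,0]$. Then I take $w=\bar z$ for nonreal $z$: the left side is $2i\Im M_y^\Omega(z)$, and the right side is $2i(\Im z)\|G_z^y\|^2$. This yields the imaginary-part formula, which also holds trivially for real $z<0$.
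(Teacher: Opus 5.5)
Your proposal is correct and follows the paper's own route: the paper reuses the three-dimensional argument, which combines the resolvent identity for the Green vectors, the regularized evaluation at $y$ via \eqref{eq:green-vector-reproducing}, and the symmetry swap of the inner product, noting only that the logarithmic singularities cancel in $G_z^y-G_w^y$. Your extra checks make explicit what the paper leaves implicit: continuity of $h_z^\Omega(\cdot,y)$ at $y$, the complex-$\kappa$ expansion of $K_0$, and identification of the continuous representative.
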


\begin{proof}
The proof of Lemma~\ref{lem:weyl-identity-3d} applies verbatim, because the logarithmic singularities of $G_z^y$ and $G_w^y$ cancel in their difference.
\end{proof}

\begin{theorem}\label{thm:krein2}
For every $\alpha\in\R\cup\{\infty\}$ there exists a self-adjoint extension $A_{\alpha,y}^{(2),\Omega}$ of $S_y$, with $A_{\infty,y}^{(2),\Omega}=A_D$, and every self-adjoint extension of $S_y$ is obtained in this way. If $z\in\C\setminus[0,\infty)$ and $\alpha-M_y^\Omega(z)\neq0$, then
\begin{equation}\label{eq:krein-resolvent-2d}
(A_{\alpha,y}^{(2),\Omega}-z)^{-1}
=
(A_D-z)^{-1}
+
\frac{1}{\alpha-M_y^\Omega(z)}|G_z^y\rangle\langle G_{\bar z}^y|.
\end{equation}
For $\alpha\in\R$,
\begin{equation}\label{eq:operator-domain-2d}
\Dom(A_{\alpha,y}^{(2),\Omega})
=
\left\{u=u_z+qG_z^y:u_z\in\Dom(A_D),\ u_z(y)=(\alpha-M_y^\Omega(z))q\right\},
\end{equation}
and
\begin{equation*}
(A_{\alpha,y}^{(2),\Omega}-z)u=(A_D-z)u_z.
\end{equation*}
\end{theorem}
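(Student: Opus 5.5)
The proof follows that of Theorem~\ref{thm:krein} step by step. The only dimension-dependent inputs are the Weyl identity and the injectivity argument. Fix $\alpha\in\R$ and, for nonreal $z$, define the candidate resolvent
\begin{equation*}
R^{(2)}_{\alpha,y}(z):=(A_D-z)^{-1}+\frac{1}{\alpha-M_y^\Omega(z)}|G_z^y\rangle\langle G_{\bar z}^y|,
\end{equation*}
with $M_y^\Omega$ the logarithmically renormalized Weyl function \eqref{eq:weyl-function-2d}.

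First, Lemma~\ref{lem:weyl-identity-2d} gives $\Im M_y^\Omega(z)=(\Im z)\|G_z^y\|^2_{L^2(\Omega)}$. Since $G_z^y\neq0$, the denominator $\alpha-M_y^\Omega(z)$ does not vanish for nonreal $z$. Next I verify the resolvent identity $R^{(2)}_{\alpha,y}(z)-R^{(2)}_{\alpha,y}(w)=(z-w)R^{(2)}_{\alpha,y}(z)R^{(2)}_{\alpha,y}(w)$. Expanding the product, I use $G_z^y-G_w^y=(z-w)(A_D-z)^{-1}G_w^y$, its adjoint counterpart, and \eqref{eq:weyl-identity-2d} to collapse the scalar $(G_w^y,G_{\bar z}^y)$ into $(M_y^\Omega(z)-M_y^\Omega(w))/(z-w)$. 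The algebra is identical to the three-dimensional case. The symmetry $R^{(2)}_{\alpha,y}(\bar z)=R^{(2)}_{\alpha,y}(z)^*$ follows from $M_y^\Omega(\bar z)=\overline{M_y^\Omega(z)}$ and the kernel symmetry in Theorem~\ref{thm:dirichlet-kernel}(v).

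The step I expect to need the most care is injectivity. Suppose $R^{(2)}_{\alpha,y}(z)f=0$. Then $(A_D-z)^{-1}f=-qG_z^y$ with $q\in\C$. The left-hand side lies in $H^1_0(\Omega)$ and is continuous near $y$ by Proposition~\ref{prop:evaluation}. In contrast, $G_z^y$ behaves like $-\frac{1}{2\pi}\log|x-y|$, so it is unbounded near $y$. In two dimensions the logarithm does belong to $H^1_{\mathrm{loc}}$ only in the borderline-failing sense: $\int|\nabla\log|x||^2=\int r^{-2}\,r\,\dd r$ diverges. To avoid borderline Sobolev questions I will argue via boundedness, which is cleaner: $G_z^y\notin L^\infty$ near $y$, whereas $(A_D-z)^{-1}f\in H^2(B_r(y))\subset C^0$. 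Hence $q=0$, so $(A_D-z)^{-1}f=0$ and $f=0$. Dense range then follows from the adjoint relation, and the pseudoresolvent criterion produces a self-adjoint operator $A^{(2),\Omega}_{\alpha,y}$ whose resolvent is $R^{(2)}_{\alpha,y}$.

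Finally, for the domain and action, take $u=R^{(2)}_{\alpha,y}(z)f$ and set $u_z=(A_D-z)^{-1}f$ and $q=(f,G_{\bar z}^y)/(\alpha-M_y^\Omega(z))$. The reproducing identity \eqref{eq:green-vector-reproducing} gives $u_z(y)=(\alpha-M_y^\Omega(z))q$. The converse direction is immediate, which yields \eqref{eq:operator-domain-2d} and $(A^{(2),\Omega}_{\alpha,y}-z)u=(A_D-z)u_z$. The formula extends to real negative $z$ with $\alpha\neq M_y^\Omega(z)$ by analytic continuation. For $u\in\Dom(S_y)$ the rank-one term annihilates $(S_y-z)u$ by Theorem~\ref{thm:deficiency}, so each operator extends $S_y$. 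Since the deficiency indices are $(1,1)$, the one-parameter family indexed by $\alpha\in\R\cup\{\infty\}$, with $\alpha=\infty$ giving $A_D$, exhausts all self-adjoint extensions. This last step is the standard boundary-triple classification used for Theorem~\ref{thm:krein}.
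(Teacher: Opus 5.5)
Your proposal is correct and follows the paper's proof, which simply transplants the argument of Theorem~\ref{thm:krein}, using the logarithmic Weyl function \eqref{eq:weyl-function-2d} and the identity \eqref{eq:weyl-identity-2d}. The only deviation is the injectivity step, where you compare the continuity of $(A_D-z)^{-1}f$ at $y$ with the unboundedness of $-\frac{1}{2\pi}\log|x-y|$, whereas the paper uses an $H^1_{\mathrm{loc}}$ argument; both work, and the divergence of $\int r^{-1}\,\dd r$ that you computed shows $G_z^y\notin H^1_{\mathrm{loc}}(\Omega)$, so the paper's version also applies verbatim in dimension two.
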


\begin{proof}
The proof is the same as that of Theorem~\ref{thm:krein}, with the logarithmic Weyl function \eqref{eq:weyl-function-2d} and identity \eqref{eq:weyl-identity-2d} in place of their three-dimensional counterparts.
\end{proof}

As in dimension three, \eqref{eq:operator-domain-2d} is independent of $z$. It is equivalent to the local expansion
\begin{equation*}
u(x)=-\frac{q}{2\pi}\log|x-y|+\alpha q+o(1),
\qquad
x\to y.
\end{equation*}

\begin{remark}[Reference length in dimension two]\label{rem:length-scale-2d}
The logarithmic singularity requires a reference length. For $\ell>0$,
\begin{equation*}
\frac{1}{2\pi}K_0\bigl(\kappa(z)|x-y|\bigr)
=
-\frac{1}{2\pi}\log\frac{|x-y|}{\ell}
-
\frac{1}{2\pi}\left(\log\frac{\kappa(z)\ell}{2}+\gamma_{\mathrm E}\right)
+o(1).
\end{equation*}
Accordingly, define
\begin{equation*}
M_y^{\Omega,\ell}(z)
:=
-\frac{1}{2\pi}\left(\log\frac{\kappa(z)\ell}{2}+\gamma_{\mathrm E}\right)-h_z^\Omega(y,y).
\end{equation*}
Then
\begin{equation*}
G_z^\Omega(x,y)
=
-\frac{1}{2\pi}\log\frac{|x-y|}{\ell}+M_y^{\Omega,\ell}(z)+o(1).
\end{equation*}
Throughout the paper we set $\ell=1$ in the chosen units and omit it from the notation. If the reference length is changed from $\ell$ to $\ell'$, then
\begin{equation*}
M_y^{\Omega,\ell'}(z)
=
M_y^{\Omega,\ell}(z)-\frac{1}{2\pi}\log\frac{\ell'}{\ell},
\qquad
\alpha_{\ell'}=\alpha_\ell-\frac{1}{2\pi}\log\frac{\ell'}{\ell}.
\end{equation*}
Thus the difference $\alpha_\ell-M_y^{\Omega,\ell}(z)$, and hence the resolvent and all spectral conclusions, are independent of the choice of $\ell$.
\end{remark}

The explicit half-plane model and its normalization are discussed in Subsection~\ref{sec:flat-models}.

\begin{remark}
Point interactions on domains may alternatively be constructed by quadratic forms, as in the whole-space theory; see \cite{Teta}. This approach is useful for variational questions, but it is not needed here and its development will be given elsewhere.
\end{remark}

\begin{remark}
As for the boundary-less case, the construction above is specific to dimensions $d=2,3$. In higher dimensions the same restriction procedure gives no non-trivial point interaction. More precisely, let $d\geq4$ and let $y\in\Omega$. If one defines the punctured symmetric operator by restricting the Dirichlet Laplacian to smooth functions supported away from $y$, and then takes its closure, the resulting symmetric operator is essentially self-adjoint. The proof is analogous to the standard whole-space case. Actually the boundary affects only the regular part of the Green function and therefore does not change the local singularity, responsible for the restriction to low dimensions (see, for instance, \cite{AGHH, DerezinskiPoint,BGP2005}).
\end{remark}

\section{Spectral theory in the one-centre case}\label{sec:spectral}

In this section we establish the basic spectral properties of the one-centre Dirichlet
point interaction. We first determine the essential and negative spectra, then prove
monotonicity with respect to the ambient domain, and finally we show under which conditions the limiting
absorption principle can be extended from the background Dirichlet Laplacian to the point-interaction
Hamiltonian. When this last property holds, the positive spectrum is absolutely continuous.
\vskip5pt
Whenever a statement applies to both dimensions, we write $A_{\alpha,y}$ for $A_{\alpha,y}^{\Omega}$ in dimension three and for $A_{\alpha,y}^{(2),\Omega}$ in dimension two.

\subsection{Essential spectrum}

\begin{proposition}\label{prop:essential-spectrum}
Let $\Omega\subset\R^d$, $d\in\{2,3\}$, be either an exterior Lipschitz
domain or a special Lipschitz domain, and let $y\in\Omega$. Then
\begin{equation*}
\sigma_{\mathrm{ess}}(A_{\alpha,y})
=
\sigma_{\mathrm{ess}}(A_D)
=
[0,+\infty).
\end{equation*}


\end{proposition}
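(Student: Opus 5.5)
The proof splits into two independent claims: the two essential spectra coincide, and the essential spectrum of $A_D$ is the half-line $[0,+\infty)$.

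\emph{Equality of the essential spectra.} By the Kre\u{\i}n formulas \eqref{eq:krein-resolvent-3d} and \eqref{eq:krein-resolvent-2d}, for any $z\in\C\setminus\R$ the resolvent difference
\begin{equation*}
(A_{\alpha,y}-z)^{-1}-(A_D-z)^{-1}=\frac{1}{\alpha-M_y^\Omega(z)}|G_z^y\rangle\langle G_{\bar z}^y|
\end{equation*}
is a rank-one operator. Such $z$ are admissible because $\Im M_y^\Omega(z)=(\Im z)\|G_z^y\|^2\neq0$. Weyl's theorem on compact resolvent perturbations then gives $\sigma_{\mathrm{ess}}(A_{\alpha,y})=\sigma_{\mathrm{ess}}(A_D)$. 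For $\alpha=\infty$ there is nothing to prove.

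\emph{Computation of $\sigma_{\mathrm{ess}}(A_D)$.} Since $A_D\geq0$, we have $\sigma(A_D)\subset[0,\infty)$, so it suffices to show that every $\lambda\geq0$ lies in $\sigma_{\mathrm{ess}}(A_D)$. I would do this with singular Weyl sequences built from the quasi-conical property. Both classes of domains contain balls $B_{R_n}(x_n)$ with $R_n\to\infty$, and these balls can be chosen pairwise disjoint.
\begin{itemize}
\item[(a)] In the exterior case, take $x_n\to\infty$ with $K\subset B_{\rho}(0)$ and $|x_n|>\rho+R_n$, passing to a subsequence so that the balls are disjoint.
\item[(b)] In the special Lipschitz case with Lipschitz constant $L$, take $x_n=(0,t_n)$ with $t_n\to\infty$. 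Then $\dist(x_n,\partial\Omega)\geq (t_n-\varphi(0))/\sqrt{1+L^2}$, which gives radii $R_n\to\infty$; spacing the $t_n$ geometrically gives disjoint balls.
\end{itemize}
Fix $\chi\in C_c^\infty(B_1(0))$ with $\|\chi\|_{L^2}=1$, take $k\in\R^d$ with $|k|^2=\lambda$, and set
\begin{equation*}
u_n(x)=R_n^{-d/2}\chi\bigl((x-x_n)/R_n\bigr)\mathrm e^{ik\cdot x}.
\end{equation*}
Each $u_n$ lies in $C_c^\infty(\Omega)\subset\Dom(A_D)$ and $\|u_n\|=1$. The $u_n$ have disjoint supports, so they are orthonormal and converge weakly to zero. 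A direct computation gives
\begin{equation*}
(-\Delta-\lambda)u_n=\mathrm e^{ik\cdot x}\bigl(-R_n^{-2}(\Delta\chi)_n-2iR_n^{-1}k\cdot(\nabla\chi)_n\bigr),
\end{equation*}
where $(\cdot)_n$ denotes the same $L^2$-normalized rescaling applied to $\Delta\chi$ and $\nabla\chi$. Hence
\begin{equation*}
\|(A_D-\lambda)u_n\|=O(R_n^{-1})\to0.
\end{equation*}
By Weyl's criterion, $\lambda\in\sigma_{\mathrm{ess}}(A_D)$.

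The only step needing care is the geometric one, namely producing the disjoint, divergent balls inside $\Omega$ in case (b). The Lipschitz graph bound above handles it.
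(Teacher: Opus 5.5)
Your proof is correct and is the paper's argument: a rank-one Kre\u{\i}n resolvent difference plus Weyl's theorem, with your explicit Weyl sequences on balls of diverging radius supplying the standard proof of the quasi-conical fact $\sigma_{\mathrm{ess}}(A_D)=[0,+\infty)$, which the paper simply cites from Edmunds--Evans. One small fix: in case (b) with $L=0$ the full-radius balls are nested, so take radii a fixed fraction of your distance bound, or drop disjointness altogether, since the $R_n^{-d/2}$ spreading already gives $u_n\rightharpoonup0$.
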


\begin{proof}
This follows immediately from the Kre\u{\i}n formulas \eqref{eq:krein-resolvent-3d} and \eqref{eq:krein-resolvent-2d}. For the resolvent difference one has
\begin{equation*}
(A_{\alpha,y}-z)^{-1}-(A_D-z)^{-1}
=
\frac{1}{\alpha-M_y^\Omega(z)}
|G_z^y\rangle\langle G_{\bar z}^y|,
\qquad
z\in\rho(A_D)\cap\rho(A_{\alpha,y}).
\end{equation*}
This operator is of rank one, hence compact.  Weyl's theorem then yields the equality of the essential spectra. For the exterior and special Lipschitz domains considered here, the background Dirichlet Laplacian satisfies
$\sigma_{\mathrm{ess}}(A_D)=[0,+\infty)$. This is a well-known fact valid for the more general class of quasi-conical domains (see \cite{EdmundsEvans2018}, Section 6.1). All this proof works identically in the two and three dimensional case.
\end{proof}
\begin{remark}
From the Kre\u{\i}n formula it also follows that at most a single negative eigenvalue exists. However, a more precise statement is proved in the next subsection (see Theorems \ref{thm:critical-coupling-3d} and \ref{thm:critical-coupling-2d}).
\end{remark}

\subsection{Negative spectrum and critical coupling in dimension three}

\begin{theorem}[Dimension three]\label{thm:critical-coupling-3d}
Let $\Omega\subset\R^3$ be either an exterior Lipschitz domain or a special
Lipschitz domain, and let $y\in\Omega$. For $\lambda>0$, set
\begin{equation}\label{eq:negative-weyl-3d}
M_\lambda:=M_y^\Omega(-\lambda)
=-\frac{\sqrt\lambda}{4\pi}-h_{-\lambda}^\Omega(y,y).
\end{equation}
Then the following assertions hold.
\begin{enumerate}
\item[(i)] For every $\lambda>0$, the following conditions are equivalent:
\begin{enumerate}
\item[(a)] $-\lambda\in\sigma_p(A_{\alpha,y}^{\Omega})$;
\item[(b)]
\begin{equation}\label{eq:eigenvalue-equation-3d}
\alpha-M_\lambda=0.
\end{equation}
Equivalently,
\begin{equation*}
\alpha+\frac{\sqrt\lambda}{4\pi}+h_{-\lambda}^\Omega(y,y)=0.
\end{equation*}
\end{enumerate}

\item[(ii)] If these conditions hold, then $-\lambda$ is simple and
\begin{equation*}
\Ker(A_{\alpha,y}^{\Omega}+\lambda)
=\operatorname{span}\{G_{-\lambda}^y\}.
\end{equation*}

\item[(iii)] The function $\lambda\mapsto M_\lambda$ is continuous and
strictly decreasing on $(0,+\infty)$. Moreover,
\begin{equation*}
\lim_{\lambda\to+\infty}M_\lambda=-\infty,
\qquad
M_0:=\lim_{\lambda\downarrow0}M_\lambda\in\R.
\end{equation*}
The finite limit
\begin{equation*}
h_0^\Omega(y,y):=
\lim_{\lambda\downarrow0}h_{-\lambda}^\Omega(y,y)
\end{equation*}
also exists and satisfies
\begin{equation*}
M_0=-h_0^\Omega(y,y).
\end{equation*}

\item[(iv)] Define
\begin{equation*}
\alpha_c^\Omega(y)
:=\sup\bigl\{\alpha\in\R:
\sigma_p(A_{\alpha,y}^{\Omega})\cap(-\infty,0)\neq\varnothing\bigr\}.
\end{equation*}
Then
\begin{equation*}
\alpha_c^\Omega(y)=M_0=-h_0^\Omega(y,y),
\end{equation*}
and
\begin{equation*}
\alpha<\alpha_c^\Omega(y)
\quad\Longleftrightarrow\quad
\sigma_p(A_{\alpha,y}^{\Omega})\cap(-\infty,0)\neq\varnothing.
\end{equation*}
Whenever it exists, the negative eigenvalue is unique and simple.
\end{enumerate}
\end{theorem}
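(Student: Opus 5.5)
The plan is to read everything off the Kre\u{\i}n formula \eqref{eq:krein-resolvent-3d} and the identity \eqref{eq:weyl-identity-3d}.

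\textbf{Parts (i) and (ii).} Fix $\lambda>0$, so $-\lambda\in\rho(A_D)$ because $A_D\ge0$.
\begin{itemize}
\item[(b)$\Rightarrow$(a)] If $\alpha=M_\lambda$, take $u=G_{-\lambda}^y$, i.e.\ $u_z=0$, $q=1$ in \eqref{eq:operator-domain-3d} with $z=-\lambda$. The domain condition $0=(\alpha-M_\lambda)\cdot1$ holds, and $(A_{\alpha,y}^\Omega+\lambda)u=(A_D+\lambda)0=0$. So $-\lambda$ is an eigenvalue.
\item[(a)$\Rightarrow$(b)] Let $u=u_z+qG_{-\lambda}^y$ be an eigenfunction. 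Then $(A_D+\lambda)u_z=0$, hence $u_z=0$. Since $u\neq0$ we need $q\neq0$. The domain condition then gives $\alpha-M_\lambda=0$.
\end{itemize}
The same computation shows every eigenfunction is a multiple of $G_{-\lambda}^y$. This gives simplicity and (ii).

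\textbf{Part (iii), monotonicity and behaviour at infinity.} For $z=-\lambda$ and $w=-\mu$ real, \eqref{eq:weyl-identity-3d} gives
\[
M_\lambda-M_\mu=(\mu-\lambda)\|G_{-\lambda}^y\|\,\|G_{-\mu}^y\|\cdot c,
\]
with $c=(G_{-\lambda}^y,G_{-\mu}^y)$ normalised. More directly, differentiate:
\[
\frac{\dd}{\dd\lambda}M_\lambda=-\|G_{-\lambda}^y\|_{L^2}^2<0.
\]
This follows by dividing the identity by $z-w$ and letting $w\to z$, using norm continuity of $z\mapsto G_z^y$, which the resolvent identity provides. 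Hence $M_\lambda$ is $C^1$ and strictly decreasing.

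For the limit $\lambda\to\infty$ I would use positivity of the harmonic correction. For real $z=-\lambda$ the boundary datum $G_{-\lambda}^0(\cdot,y)$ is positive, so $h_{-\lambda}^\Omega(y,y)\ge0$ by the maximum principle. This is the appendix positivity result; alternatively, domain monotonicity $G^\Omega_{-\lambda}\le G^0_{-\lambda}$ gives it. Then
\[
M_\lambda\le-\frac{\sqrt\lambda}{4\pi}\to-\infty.
\]

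\textbf{Part (iii), the threshold limit — the main obstacle.} Since $M_\lambda$ is decreasing, $\lim_{\lambda\downarrow0}M_\lambda$ exists in $(-\infty,+\infty]$. I need an upper bound uniform in $\lambda$, i.e.\ a lower bound on $h_{-\lambda}^\Omega(y,y)$.

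Since $h\ge0$, we have $M_\lambda\le0$ for all $\lambda$. So the limit is finite and $\le0$. The term $\sqrt\lambda/4\pi\to0$, hence $h_{-\lambda}^\Omega(y,y)=-M_\lambda-\sqrt\lambda/4\pi$ converges to $-M_0$. This gives the existence of $h_0^\Omega(y,y)$ and the relation $M_0=-h_0^\Omega(y,y)$.

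The delicate point is ensuring positivity of $h_{-\lambda}$ on unbounded Lipschitz domains. The $H^1$ solution must be identified with the minimal positive harmonic-type extension, and I will rely on the appendix for this. I would also identify $h_0$ with the zero-energy harmonic correction via monotone convergence, but that is not needed for the statement.

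\textbf{Part (iv).} By (i), a negative eigenvalue exists iff $\alpha\in\{M_\lambda:\lambda>0\}$. By (iii) this set is $(-\infty,M_0)$, since a continuous strictly decreasing function maps $(0,\infty)$ onto $(-\infty,M_0)$ with $M_0$ not attained. Hence the equivalence holds and the supremum is $M_0$. Uniqueness of the eigenvalue follows from strict monotonicity, because only one $\lambda$ solves $\alpha=M_\lambda$. Simplicity is (ii).
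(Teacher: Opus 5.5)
Your proof is correct and follows the paper's proof essentially step by step. For (i)--(ii) you use the Kre\u{\i}n/domain description at $z=-\lambda$. The bounds $M_\lambda\le-\sqrt\lambda/(4\pi)\le0$ come from Lemma~\ref{lem:sign-corrector}, whose $h^-\in H_0^1$ test-function argument applies directly to the $H^1$ solution, so no identification with a minimal positive solution is needed. The threshold limit then follows by monotone convergence as $\lambda\downarrow0$, and (iv) by the range argument.

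The only variation is how you get strict monotonicity. You use $\frac{\dd}{\dd\lambda}M_\lambda=-\|G_{-\lambda}^y\|_{L^2(\Omega)}^2<0$, whereas the paper relies on strict positivity of the Green functions. Your route is slightly more economical, since it only needs $G_{-\lambda}^y\neq0$. The garbled ``normalised $c$'' line plays no role and can simply be dropped.
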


\begin{proof}
Since $A_D\geq0$, one has $-\lambda\in\rho(A_D)$ for every
$\lambda>0$. By the Kre\u{\i}n formula \eqref{eq:krein-resolvent-3d},
$-\lambda$ is an eigenvalue of $A_{\alpha,y}^{\Omega}$ if and only if
\eqref{eq:eigenvalue-equation-3d} holds. The corresponding eigenspace is
spanned by $G_{-\lambda}^y$. This proves \textup{(i)} and \textup{(ii)}.

For $\mu>\lambda>0$, Lemma~\ref{lem:weyl-identity-3d} gives
\begin{equation*}
M_\mu-M_\lambda
=-(\mu-\lambda)
 (G_{-\mu}^y,G_{-\lambda}^y)_{L^2(\Omega)}<0,
\end{equation*}
because the negative-energy Dirichlet Green functions are strictly positive.
Thus $M_\lambda$ is strictly decreasing. Its continuity follows from the
analyticity of $M_y^\Omega$ on $( -\infty,0)$.

By Lemma~\ref{lem:sign-corrector},
\begin{equation*}
h_{-\lambda}^\Omega(y,y)\geq0,
\end{equation*}
and hence, by \eqref{eq:negative-weyl-3d},
\begin{equation*}
M_\lambda\leq-\frac{\sqrt\lambda}{4\pi}.
\end{equation*}
It follows that $M_\lambda\to-\infty$ as $\lambda\to+\infty$. On the
other hand, for any fixed $\lambda_0>0$ and $0<\lambda\leq\lambda_0$,
\begin{equation*}
M_{\lambda_0}\leq M_\lambda\leq0.
\end{equation*}
Therefore the finite limit $M_0$ exists. Formula
\eqref{eq:negative-weyl-3d} then yields
\begin{equation*}
h_{-\lambda}^\Omega(y,y)
=-M_\lambda-\frac{\sqrt\lambda}{4\pi}
\longrightarrow-M_0,
\end{equation*}
which proves \textup{(iii)}.

The range of the continuous strictly decreasing function
$\lambda\mapsto M_\lambda$ is $( -\infty,M_0)$. Hence
\eqref{eq:eigenvalue-equation-3d} has a solution $\lambda>0$ if and only if
$\alpha<M_0$. Assertion \textup{(iv)} follows.
\end{proof}

\subsection{Negative spectrum and critical coupling in dimension two}
The two dimensional case needs a separate analysis because of a more delicate behavior of the terms occurring in the Kre\u{\i}n formula when $\lambda\downarrow 0\ .$ See also subsequent Remark \ref{thresholdweyl}.
\begin{theorem}[Dimension two]\label{thm:critical-coupling-2d}
Let $\Omega\subset\R^2$ be either an exterior Lipschitz domain or a special
Lipschitz domain, and let $y\in\Omega$. For $\lambda>0$, set
\begin{equation}\label{eq:negative-weyl-2d}
M_\lambda^{(2)}:=M_y^\Omega(-\lambda)
=-\frac{1}{2\pi}
 \left(\log\frac{\sqrt\lambda}{2}+\gamma_{\mathrm E}\right)
-h_{-\lambda}^\Omega(y,y).
\end{equation}
Then the following assertions hold.
\begin{enumerate}
\item[(i)] For every $\lambda>0$, the following conditions are equivalent:
\begin{enumerate}
\item[(a)] $-\lambda\in\sigma_p(A_{\alpha,y}^{(2),\Omega})$;
\item[(b)]
\begin{equation}\label{eq:eigenvalue-equation-2d}
\alpha-M_\lambda^{(2)}=0.
\end{equation}
Equivalently,
\begin{equation*}
\alpha+
\frac{1}{2\pi}
 \left(\log\frac{\sqrt\lambda}{2}+\gamma_{\mathrm E}\right)
+h_{-\lambda}^\Omega(y,y)=0.
\end{equation*}
\end{enumerate}

\item[(ii)] If these conditions hold, then $-\lambda$ is simple and
\begin{equation*}
\Ker(A_{\alpha,y}^{(2),\Omega}+\lambda)
=\operatorname{span}\{G_{-\lambda}^y\}.
\end{equation*}

\item[(iii)] The function $\lambda\mapsto M_\lambda^{(2)}$ is continuous
and strictly decreasing on $(0,+\infty)$. Moreover,
\begin{equation*}
\lim_{\lambda\to+\infty}M_\lambda^{(2)}=-\infty,
\qquad
M_0^{(2)}:=\lim_{\lambda\downarrow0}M_\lambda^{(2)}\in\R.
\end{equation*}

\item[(iv)] Define
\begin{equation*}
\alpha_c^{(2),\Omega}(y)
:=\sup\bigl\{\alpha\in\R:
\sigma_p(A_{\alpha,y}^{(2),\Omega})\cap(-\infty,0)
\neq\varnothing\bigr\}.
\end{equation*}
Then
\begin{equation*}
\alpha_c^{(2),\Omega}(y)=M_0^{(2)},
\end{equation*}
and
\begin{equation*}
\alpha<\alpha_c^{(2),\Omega}(y)
\quad\Longleftrightarrow\quad
\sigma_p(A_{\alpha,y}^{(2),\Omega})\cap(-\infty,0)
\neq\varnothing.
\end{equation*}
Whenever it exists, the negative eigenvalue is unique and simple.
\end{enumerate}
\end{theorem}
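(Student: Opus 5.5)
The argument for (i), (ii) and monotonicity is the same as for Theorem~\ref{thm:critical-coupling-3d}; the only new point is the existence of the finite limit $M_0^{(2)}$. Since $A_D\ge0$, each $-\lambda<0$ lies in $\rho(A_D)$, and the Kre\u{\i}n formula \eqref{eq:krein-resolvent-2d} shows that $-\lambda$ is an eigenvalue exactly when the denominator $\alpha-M^{(2)}_\lambda$ vanishes. In that case the eigenspace is spanned by $G^y_{-\lambda}$, as in the three-dimensional proof, which gives (i) and (ii). For $\mu>\lambda>0$, Lemma~\ref{lem:weyl-identity-2d} yields
\begin{equation*}
M^{(2)}_\mu-M^{(2)}_\lambda=-(\mu-\lambda)(G^y_{-\mu},G^y_{-\lambda})_{L^2(\Omega)}<0,
\end{equation*}
by strict positivity of the negative-energy Dirichlet Green functions. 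Continuity follows from analyticity of $M_y^\Omega$ on $(-\infty,0)$.

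As $\lambda\to+\infty$, Lemma~\ref{lem:sign-corrector} gives $h^\Omega_{-\lambda}(y,y)\ge0$. Hence $M^{(2)}_\lambda\le-\frac1{2\pi}(\log\frac{\sqrt\lambda}2+\gamma_{\mathrm E})\to-\infty$.

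The main obstacle is the limit $\lambda\downarrow0$. Monotonicity already gives existence of $M_0^{(2)}\in(-\infty,+\infty]$, so only an upper bound on $M^{(2)}_\lambda$, uniform for small $\lambda$, is needed. The free term diverges like $-\frac1{4\pi}\log\lambda\to+\infty$, so the harmonic correction must contain a matching $-\frac{1}{4\pi}\log\lambda$ divergence. To avoid computing it, I would use domain monotonicity of the Dirichlet Green function, which holds by the maximum principle since $G^\Omega_{-\lambda}\ge0$ vanishes on $\partial\Omega$. If $\Omega\subset\Omega'$, the difference $G^{\Omega'}_{-\lambda}(\cdot,y)-G^\Omega_{-\lambda}(\cdot,y)$ solves $(-\Delta+\lambda)w=0$ in $\Omega$ with $w\ge0$ on $\partial\Omega$, and is in $H^1$. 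Therefore $G^{\Omega}\le G^{\Omega'}$, and taking regularized values at $y$ gives $M^{\Omega}_y(-\lambda)\le M^{\Omega'}_y(-\lambda)$.

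It then suffices to find a comparison domain $\Omega'\supset\Omega$ whose Weyl function has an explicit finite limit at zero energy. For a special Lipschitz domain, $\Omega$ lies in no half-plane in general. Instead, I would take $\Omega'$ to be the complement of a closed half-line lying in $\R^2\setminus\Omega$: for $x_1$ fixed, $\{(x_1,t):t\le\varphi(x_1)\}$ works. The slit plane is conformal to a half-plane via $\sqrt{\cdot}$, and its Dirichlet Green function is explicit at zero energy. For an exterior domain, $K$ has nonempty interior and so contains a small disk $D$; take $\Omega'=\R^2\setminus D$. Its Green function is explicit through the Bessel series with $K_n$, $I_n$, whose $n=0$ term gives $h_{-\lambda}(y,y)=-\frac1{2\pi}\log\sqrt\lambda+O(1)$ as $\lambda\to0$.

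In both cases I would show directly that $M^{\Omega'}_y(-\lambda)$ stays bounded as $\lambda\downarrow0$. For the disk exterior this means checking that the logarithms cancel in $K_0/I_0$ expansions. For the slit, a cheaper route is to compare further with the exterior of a small disk contained in a neighbourhood of the slit tip; this fails, however, since that disk exterior does not contain $\Omega$. For the slit I would therefore instead bound $G^{\Omega'}_{-\lambda}\le G^{\Omega'}_0$, which is finite because the slit plane is non-parabolic for Dirichlet conditions. This yields $M^{(2)}_\lambda\le C$, so $M^{(2)}_0\in\R$.

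Finally, (iv) follows as in the three-dimensional case. The range of $\lambda\mapsto M^{(2)}_\lambda$ on $(0,\infty)$ is $(-\infty,M^{(2)}_0)$, so an eigenvalue exists if and only if $\alpha<M^{(2)}_0$, and uniqueness follows from strict monotonicity.
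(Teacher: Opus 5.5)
Your proof is correct. For (i), (ii), the monotonicity in $\lambda$, the limit $\lambda\to+\infty$ and (iv) it is the paper's argument; the only difference is how you get the uniform upper bound as $\lambda\downarrow0$.

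The paper treats both classes at once. It observes that $\R^2\setminus\Omega$ always contains a closed disk $\overline{B_R(a)}$: for a special Lipschitz domain this holds because $\varphi(x_1)\geq\varphi(x_1^0)-L|x_1-x_1^0|$, so the subgraph contains a downward cone. It then chains $G^\Omega_{-\lambda}\le G^D_{-\lambda}\le G^D_0$ for $D=\R^2\setminus\overline{B_R(a)}$ and uses the Kelvin formula, which gives the explicit bound $M^{(2)}_\lambda\le\frac{1}{2\pi}\log\frac{\rho^2-R^2}{R}$ with $\rho=|y-a|$. Your remark that a disk-exterior comparison ``fails'' is true only for the slit plane, not for $\Omega$ itself, so the slit detour is not needed.

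The slit route is nevertheless valid. The map $\sqrt{\cdot}$ gives $G_0^{\Omega'}$ in closed form with a finite regular part at $y$. The inequality $G^\Omega_{-\lambda}(\cdot,y)\le G_0^{\Omega'}(\cdot,y)$ on $\Omega$ is best proved directly. Apply the maximum principle on truncations to $w=G_0^{\Omega'}-G^\Omega_{-\lambda}$, which satisfies $(-\Delta+\lambda)w=\lambda G_0^{\Omega'}\ge0$, $w\ge0$ on $\partial\Omega$, and $\liminf w\ge0$ at infinity. This avoids constructing $G^{\Omega'}_{-\lambda}$ on a domain with a two-sided boundary, which lies outside the Lipschitz framework of Theorem~\ref{thm:dirichlet-kernel} and Theorem~\ref{thm:domain_monotonicity_eigenvalue}.

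Your route only uses a half-line in the complement, so it would still work when the complement has empty interior. The paper's route gives uniformity across the two classes and an explicit constant.

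In the exterior case your Bessel computation also works. It is quickest if you discard the nonnegative modes $n\neq0$ to bound $h^D_{-\lambda}(y,y)$ from below, but the paper's step $G^D_{-\lambda}\le G^D_0$ avoids any expansion.
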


\begin{proof}
Since $A_D\geq0$, one has $-\lambda\in\rho(A_D)$ for every
$\lambda>0$. The Kre\u{\i}n formula \eqref{eq:krein-resolvent-2d} proves
\textup{(i)} and \textup{(ii)}. For $\mu>\lambda>0$,
Lemma~\ref{lem:weyl-identity-2d} gives
\begin{equation*}
M_\mu^{(2)}-M_\lambda^{(2)}
=-(\mu-\lambda)
 (G_{-\mu}^y,G_{-\lambda}^y)_{L^2(\Omega)}<0.
\end{equation*}
Thus $M_\lambda^{(2)}$ is strictly decreasing, and continuity follows from
the analyticity of the Weyl function on $( -\infty,0)$.

Lemma~\ref{lem:sign-corrector} and \eqref{eq:negative-weyl-2d} imply
\begin{equation*}
M_\lambda^{(2)}
\leq-\frac{1}{2\pi}
 \left(\log\frac{\sqrt\lambda}{2}+\gamma_{\mathrm E}\right),
\end{equation*}
so $M_\lambda^{(2)}\to-\infty$ as $\lambda\to+\infty$.

It remains to control the limit at zero. The complement of $\Omega$ contains
a closed disk $\overline{B_R(a)}$. Set
\begin{equation*}
D:=\R^2\setminus\overline{B_R(a)},
\qquad
\rho:=|y-a|>R,
\end{equation*}
and let
\begin{equation*}
y^*:=a+\frac{R^2}{\rho^2}(y-a).
\end{equation*}
Since $\Omega\subset D$, domain monotonicity and monotonicity with respect
to the spectral parameter give
\begin{equation*}
0<G_{-\lambda}^\Omega(x,y)
\leq G_{-\lambda}^D(x,y)
\leq G_0^D(x,y),
\qquad x\in\Omega,\quad x\neq y.
\end{equation*}
The zero-energy Dirichlet Green function of $D$ is
\begin{equation*}
G_0^D(x,y)
=\frac{1}{2\pi}
 \log\!\left(\frac{\rho\,|x-y^*|}{R\,|x-y|}\right).
\end{equation*}
Its regularized diagonal value is
\begin{equation*}
\lim_{x\to y}
 \left(G_0^D(x,y)+\frac{1}{2\pi}\log|x-y|\right)
=\frac{1}{2\pi}\log\frac{\rho^2-R^2}{R}.
\end{equation*}
Taking the regularized diagonal limit in the preceding comparison yields
\begin{equation*}
M_\lambda^{(2)}
\leq\frac{1}{2\pi}\log\frac{\rho^2-R^2}{R},
\qquad \lambda>0.
\end{equation*}
Since $M_\lambda^{(2)}$ increases as $\lambda\downarrow0$, this uniform
upper bound proves the existence and finiteness of $M_0^{(2)}$. This proves
\textup{(iii)}.

The range of $\lambda\mapsto M_\lambda^{(2)}$ is
$( -\infty,M_0^{(2)})$. Therefore
\eqref{eq:eigenvalue-equation-2d} has a solution $\lambda>0$ if and only if
$\alpha<M_0^{(2)}$, which proves \textup{(iv)}.
\end{proof}

\begin{remark}\label{thresholdweyl}
We stress that, unlike the three-dimensional harmonic correction,
$h_{-\lambda}^\Omega(y,y)$ does not have a finite limit as
$\lambda\downarrow0$. Instead, \eqref{eq:negative-weyl-2d} gives
\begin{equation*}
h_{-\lambda}^\Omega(y,y)
=-\frac{1}{2\pi}
 \left(\log\frac{\sqrt\lambda}{2}+\gamma_{\mathrm E}\right)
-M_0^{(2)}+o(1).
\end{equation*}
Thus the logarithmic divergence of the free term is cancelled by the harmonic
correction, and $M_0^{(2)}$ is the finite renormalized regular part of the Green
function on the diagonal at zero energy.
\end{remark}

\begin{remark}
The whole spaces $\R^3$ and $\R^2$ belong to neither of the two geometric
classes considered here. In dimension three,
\begin{equation*}
M_y^{\R^3}(-\lambda)=-\frac{\sqrt\lambda}{4\pi},
\qquad
\alpha_c^{\R^3}=0.
\end{equation*}
In dimension two,
\begin{equation*}
M_y^{\R^2}(-\lambda)
=-\frac{1}{2\pi}
 \left(\log\frac{\sqrt\lambda}{2}+\gamma_{\mathrm E}\right)
\longrightarrow+\infty
\qquad (\lambda\downarrow0).
\end{equation*}
Hence no finite critical coupling exists in $\R^2$; in the extended sense,
$\alpha_c^{(2),\R^2}=+\infty$, and a one-centre point interaction has one
negative eigenvalue for every $\alpha\in\R$.
\end{remark}

\subsection{Monotonicity with respect to domains}

The construction of Section~\ref{sec:construction} and the preceding 
spectral analysis extend without change to bounded Lipschitz domains. In that
case $0\in\rho(-\Delta_{\Omega,D})$, so the Weyl function has a finite
zero-energy limit. Bounded domains are included below because balls and disks
will be used as comparison domains in Section~\ref{sec:model}.

Within this subsection, to avoid annoying repetitions of statements, $A_{\alpha,y}^{\Omega}$ and
$\alpha_c^\Omega(y)$ denote the point-interaction operator and critical coupling without distinguish the dimension; so, in dimension two they stand for
$A_{\alpha,y}^{(2),\Omega}$ and $\alpha_c^{(2),\Omega}(y)$, respectively.

\begin{theorem}\label{thm:domain_monotonicity_eigenvalue}
Let $\Omega_1\subset\Omega_2\subset\R^d$, $d\in\{2,3\}$, and let
$y\in\Omega_1$. Assume that each $\Omega_j$ is either a bounded Lipschitz
domain, an exterior Lipschitz domain, or a special Lipschitz domain. Whenever it
exists, denote by $\lambda_1(\Omega,\alpha,y)<0$ the unique negative
eigenvalue. Then the following assertions hold.
\begin{enumerate}
\item[(i)] For every $\lambda>0$,
\begin{equation}\label{eq:green-monotonicity-domains}
0<G_{-\lambda}^{\Omega_1}(x,y)
\leq G_{-\lambda}^{\Omega_2}(x,y),
\qquad x\in\Omega_1,\quad x\neq y.
\end{equation}
Consequently,
\begin{equation*}
h_{-\lambda}^{\Omega_1}(y,y)
\geq h_{-\lambda}^{\Omega_2}(y,y),
\end{equation*}
and
\begin{equation}\label{eq:weyl-monotonicity-domains}
M_y^{\Omega_1}(-\lambda)
\leq M_y^{\Omega_2}(-\lambda),
\qquad \lambda>0.
\end{equation}

\item[(ii)] The critical coupling is monotone under inclusion:
\begin{equation*}
\alpha_c^{\Omega_1}(y)\leq\alpha_c^{\Omega_2}(y).
\end{equation*}

\item[(iii)] If $A_{\alpha,y}^{\Omega_1}$ has a negative eigenvalue, then
$A_{\alpha,y}^{\Omega_2}$ also has a negative eigenvalue, and
\begin{equation*}
\lambda_1(\Omega_2,\alpha,y)
\leq\lambda_1(\Omega_1,\alpha,y).
\end{equation*}
\end{enumerate}
\end{theorem}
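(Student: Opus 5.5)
The plan is to derive everything from a pointwise comparison of Green functions via the maximum principle. Then pass to the regularized diagonal values, and use the monotone structure of $\lambda\mapsto M_y^{\Omega}(-\lambda)$ from Theorems~\ref{thm:critical-coupling-3d} and \ref{thm:critical-coupling-2d}.

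For (i), fix $\lambda>0$ and set $w:=G_{-\lambda}^{\Omega_2}(\cdot,y)-G_{-\lambda}^{\Omega_1}(\cdot,y)=h_{-\lambda}^{\Omega_1}(\cdot,y)-h_{-\lambda}^{\Omega_2}(\cdot,y)$ on $\Omega_1$. The singularities at $y$ cancel, so $w$ solves $(-\Delta+\lambda)w=0$ in all of $\Omega_1$ and lies in $H^1(\Omega_1)$. On $\partial\Omega_1$ its trace equals $\tr G_{-\lambda}^{\Omega_2}(\cdot,y)\ge0$, using strict positivity of negative-energy Green functions (already invoked in the proof of Theorem~\ref{thm:critical-coupling-3d}, via Lemma~\ref{lem:sign-corrector}). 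The weak maximum principle for $-\Delta+\lambda$, $\lambda>0$, on $H^1$ functions then gives $w\ge0$: test the equation with $w^-$, which belongs to $H^1_0(\Omega_1)$ because the boundary trace of $w$ is nonnegative, to get $\int|\nabla w^-|^2+\lambda|w^-|^2=0$. No decay condition at infinity is needed beyond $w\in H^1$, and this holds since both correctors are in $H^1$ (Theorem~\ref{thm:dirichlet-kernel}(i), or the $H^1$-solvability on bounded domains). Strict positivity $G^{\Omega_1}_{-\lambda}>0$ follows from the same positivity fact. Evaluating $w$ at $x=y$, where it is continuous, gives $h^{\Omega_1}_{-\lambda}(y,y)\ge h^{\Omega_2}_{-\lambda}(y,y)$. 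Since the free parts of \eqref{eq:weyl-function-3d} and \eqref{eq:weyl-function-2d} do not depend on the domain, \eqref{eq:weyl-monotonicity-domains} follows.

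The main obstacle is the justification that $w^-\in H_0^1(\Omega_1)$ in the mixed cases, for instance a bounded $\Omega_1$ inside a special Lipschitz $\Omega_2$. There $\tr G^{\Omega_2}_{-\lambda}(\cdot,y)$ on $\partial\Omega_1$ must be handled near $y$-free boundary points only, since $y\in\Omega_1$ stays away from $\partial\Omega_1$. I would first argue with a cutoff: $G^{\Omega_2}_{-\lambda}(\cdot,y)$ is smooth and positive near $\partial\Omega_1\cap\Omega_2$ and has zero trace on $\partial\Omega_1\cap\partial\Omega_2$. Its trace on $\partial\Omega_1$ is therefore nonnegative, and the Lipschitz characterization of $H^1_0$ via vanishing trace applies.

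For (ii), let $\lambda\downarrow0$ in \eqref{eq:weyl-monotonicity-domains}. By Theorems~\ref{thm:critical-coupling-3d}(iv) and \ref{thm:critical-coupling-2d}(iv), or by $0\in\rho$ for bounded domains, the limits are the critical couplings, and the inequality passes to the limit. For (iii), suppose $-\mu=\lambda_1(\Omega_1,\alpha,y)$, so that $\alpha=M_y^{\Omega_1}(-\mu)$. Then $\alpha<\alpha_c^{\Omega_1}(y)\le\alpha_c^{\Omega_2}(y)$, and $\Omega_2$ has a negative eigenvalue $-\nu$ with $M_y^{\Omega_2}(-\nu)=\alpha$. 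Part (i) gives $M_y^{\Omega_2}(-\mu)\ge M_y^{\Omega_1}(-\mu)=\alpha=M_y^{\Omega_2}(-\nu)$. Since $\lambda\mapsto M_y^{\Omega_2}(-\lambda)$ is strictly decreasing, this yields $\mu\le\nu$, that is, $-\nu\le-\mu$.
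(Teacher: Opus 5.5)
Your proposal is correct and follows the paper's proof essentially step by step. It uses the same difference $w=h_{-\lambda}^{\Omega_1}(\cdot,y)-h_{-\lambda}^{\Omega_2}(\cdot,y)$ tested against $w^-\in H_0^1(\Omega_1)$, the regularized diagonal limit for the Weyl functions, the limit $\lambda\downarrow0$ for (ii), and strict monotonicity of $\lambda\mapsto M_y^{\Omega_2}(-\lambda)$ for (iii). One small citation slip: Lemma~\ref{lem:sign-corrector} gives $h_{-\lambda}^\Omega\geq0$, i.e.\ the upper bound $G_{-\lambda}^\Omega\leq G_{-\lambda}^0$, not positivity of $G_{-\lambda}^{\Omega_2}$; that positivity comes from positivity preservation of the Dirichlet resolvent plus the strong maximum principle, which the paper also takes as standard.
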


\begin{proof}
Fix $\lambda>0$, and set
\begin{equation*}
u:=G_{-\lambda}^{\Omega_2}(\cdot,y)
-G_{-\lambda}^{\Omega_1}(\cdot,y)
\qquad\text{in }\Omega_1.
\end{equation*}
The two Green functions have the same singularity at $y$. Hence
\begin{equation*}
u=h_{-\lambda}^{\Omega_1}(\cdot,y)
-h_{-\lambda}^{\Omega_2}(\cdot,y)\big|_{\Omega_1}
\in H^1(\Omega_1),
\end{equation*}
and
\begin{equation*}
(-\Delta+\lambda)u=0
\qquad\text{weakly in }\Omega_1.
\end{equation*}
Moreover,
\begin{equation*}
\operatorname{tr}u
=\operatorname{tr}G_{-\lambda}^{\Omega_2}(\cdot,y)\geq0
\qquad\text{on }\partial\Omega_1.
\end{equation*}
Thus $u^-\in H_0^1(\Omega_1)$. Testing the weak equation with $u^-$
gives
\begin{equation*}
0
=-\int_{\Omega_1}|\nabla u^-|^2\,dx
 -\lambda\int_{\Omega_1}|u^-|^2\,dx.
\end{equation*}
Therefore $u^-=0$, and \eqref{eq:green-monotonicity-domains} follows.
Taking the regularized diagonal limit proves
\eqref{eq:weyl-monotonicity-domains}. This establishes \textup{(i)}.

For each of the three domain classes in the statement,
\begin{equation*}
\alpha_c^\Omega(y)
=\lim_{\lambda\downarrow0}M_y^\Omega(-\lambda).
\end{equation*}
Letting $\lambda\downarrow0$ in
\eqref{eq:weyl-monotonicity-domains} proves \textup{(ii)}.

Assume that $A_{\alpha,y}^{\Omega_1}$ has a negative eigenvalue. Then
\begin{equation*}
\alpha<\alpha_c^{\Omega_1}(y)
\leq\alpha_c^{\Omega_2}(y),
\end{equation*}
so $A_{\alpha,y}^{\Omega_2}$ also has one. Let $\mu_j>0$ be determined by
\begin{equation*}
\alpha=M_y^{\Omega_j}(-\mu_j),
\qquad j=1,2.
\end{equation*}
Since $\mu\mapsto M_y^{\Omega_j}(-\mu)$ is strictly decreasing and
\eqref{eq:weyl-monotonicity-domains} holds for every $\mu>0$, one has
$\mu_2\geq\mu_1$. Consequently,
\begin{equation*}
\lambda_1(\Omega_2,\alpha,y)=-\mu_2
\leq-\mu_1=\lambda_1(\Omega_1,\alpha,y),
\end{equation*}
which proves \textup{(iii)}.
\end{proof}

\begin{remark}\label{rem:domain_monotonicity}
Enlarging a Dirichlet domain increases the critical coupling and lowers the
unique negative eigenvalue whenever it exists. This is the point-interaction
analogue of the usual domain monotonicity for the Dirichlet Laplacian.
\end{remark}

\subsection{Positive spectrum and absolute continuity}

For $s\in\R$, let
\begin{equation*}
L_s^2(\Omega)
:=L^2\bigl(\Omega,\langle x\rangle^{2s}\,dx\bigr),
\qquad
\langle x\rangle:=(1+|x|^2)^{1/2}.
\end{equation*}
The $L^2$-inner product extends continuously to the dual pairs
$L_s^2(\Omega)\times L_{-s}^2(\Omega)$ and
$L_{-s}^2(\Omega)\times L_s^2(\Omega)$. 

The next theorem gives sufficient conditions to pass the limiting absorption principle from the background
Dirichlet Laplacian to the one-centre point interaction.

\begin{theorem}
\label{thm:abstract_transfer_ac}
Let $\Omega\subset\R^d$, $d\in\{2,3\}$, and set
$A_D:=-\Delta^{\Omega}_{D}$. Fix $y\in\Omega$, $\alpha\in\R$, and
$s>1/2$. Set
\begin{equation*}
G_z^y:=G_z^\Omega(\cdot,y),
\qquad
M_y(z):=M_y^\Omega(z).
\end{equation*}
Assume that the limits
\begin{equation*}
R_D^\pm(\lambda)
:=\lim_{\varepsilon\downarrow0}
 (A_D-\lambda\mp i\varepsilon)^{-1}
\end{equation*}
exist in $\mathcal B(L_s^2(\Omega),L_{-s}^2(\Omega))$, locally uniformly
for $\lambda\in(0,+\infty)$. Assume also that
\begin{equation*}
G_{\lambda\pm i0}^y
:=\lim_{\varepsilon\downarrow0}G_{\lambda\pm i\varepsilon}^y
\quad\text{in }L_{-s}^2(\Omega),
\end{equation*}
and
\begin{equation*}
M_y(\lambda\pm i0)
:=\lim_{\varepsilon\downarrow0}M_y(\lambda\pm i\varepsilon)
\end{equation*}
exist locally uniformly for $\lambda\in(0,+\infty)$. Finally, assume that,
for every compact interval $I\Subset(0,+\infty)$,
\begin{equation}\label{eq:scalar-nonvanishing-positive-axis}
\inf_{\lambda\in I}
\left|\alpha-M_y(\lambda+i0)\right|>0.
\end{equation}
Then the limits
\begin{equation*}
R_{\alpha,y}^\pm(\lambda)
:=\lim_{\varepsilon\downarrow0}
 (A_{\alpha,y}-\lambda\mp i\varepsilon)^{-1}
\end{equation*}
exist in $\mathcal B(L_s^2(\Omega),L_{-s}^2(\Omega))$, locally uniformly
for $\lambda\in(0,+\infty)$, and satisfy
\begin{equation}\label{eq:boundary-krein-formula}
R_{\alpha,y}^\pm(\lambda)
=R_D^\pm(\lambda)
+\frac{1}{\alpha-M_y(\lambda\pm i0)}
 (\,\cdot\,,G_{\lambda\mp i0}^y)
 G_{\lambda\pm i0}^y.
\end{equation}
Consequently, the spectrum of $A_{\alpha,y}$ in $(0,+\infty)$ is purely
absolutely continuous. In particular,
\begin{equation*}
\sigma_p(A_{\alpha,y})\cap(0,+\infty)=\varnothing,
\qquad
\sigma_{\mathrm{sc}}(A_{\alpha,y})\cap(0,+\infty)=\varnothing.
\end{equation*}
\end{theorem}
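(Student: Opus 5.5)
The approach is to pass to the boundary values $\varepsilon\downarrow0$ term by term in the Kre\u{\i}n formulas \eqref{eq:krein-resolvent-3d} and \eqref{eq:krein-resolvent-2d}, and then to invoke the standard criterion: a locally uniform limiting absorption principle in weighted spaces rules out singular spectrum. For $z=\lambda\pm i\varepsilon$ with $\varepsilon>0$ we have
\begin{equation*}
(A_{\alpha,y}-z)^{-1}=(A_D-z)^{-1}+\frac{1}{\alpha-M_y(z)}(\,\cdot\,,G_{\bar z}^y)G_z^y .
\end{equation*}
By hypothesis the first term converges to $R_D^\pm(\lambda)$ in $\mathcal B(L_s^2,L_{-s}^2)$, locally uniformly in $\lambda$. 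The rank-one term is a product of three factors. The vector $G_z^y$ converges in $L_{-s}^2$ to $G_{\lambda\pm i0}^y$. The functional $f\mapsto(f,G_{\bar z}^y)$ converges in the dual norm of $L_s^2$ to $(f,G_{\lambda\mp i0}^y)$, since $\bar z=\lambda\mp i\varepsilon$ and the pairing $L_s^2\times L_{-s}^2$ is continuous. The scalar $M_y(z)$ converges to $M_y(\lambda\pm i0)$.

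The operator norm of $(\cdot,v)u$ from $L_s^2$ to $L_{-s}^2$ is bounded by $\|u\|_{L^2_{-s}}\|v\|_{L^2_{-s}}$, so the rank-one operators converge in norm. The only remaining point is the scalar prefactor. Here we use \eqref{eq:scalar-nonvanishing-positive-axis} for the $+$ sign; for the $-$ sign we use the conjugation symmetry $M_y(\lambda-i0)=\overline{M_y(\lambda+i0)}$ from Lemma~\ref{lem:weyl-identity-3d} and Lemma~\ref{lem:weyl-identity-2d}, noting that $\alpha$ is real. Together with local uniform convergence of $M_y$, this yields a uniform lower bound for $|\alpha-M_y(\lambda\pm i\varepsilon)|$ on $I\times(0,\varepsilon_0]$ for some $\varepsilon_0>0$. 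Hence the reciprocals converge locally uniformly, and \eqref{eq:boundary-krein-formula} follows.

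For absolute continuity, take $f\in L_s^2(\Omega)$, which is dense in $L^2(\Omega)$, and a compact interval $I\Subset(0,\infty)$. Stone's formula gives that the spectral measure $\mu_f$ of $A_{\alpha,y}$ satisfies
\begin{equation*}
\frac{1}{\pi}\Im\bigl((A_{\alpha,y}-\lambda-i\varepsilon)^{-1}f,f\bigr)\;\longrightarrow\;\frac{1}{2\pi i}\bigl((R_{\alpha,y}^+(\lambda)-R_{\alpha,y}^-(\lambda))f,f\bigr)
\end{equation*}
uniformly on $I$, and the left side is bounded there by $C_I\|f\|_{L_s^2}^2$. A measure whose Poisson-type smoothings are uniformly bounded on $I$ is absolutely continuous on the interior of $I$, with bounded density. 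So for each such $f$, $\mu_f\restriction(0,\infty)$ is absolutely continuous. Density of $L_s^2$ in $L^2$ and closedness of $\mathcal H_{ac}$ then give $E((0,\infty))L^2\subset\mathcal H_{ac}(A_{\alpha,y})$. In particular, there are no embedded eigenvalues or singular continuous spectrum in $(0,\infty)$.

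The main obstacle is the uniformity of the denominator bound, since we need it on a full strip $I\times(0,\varepsilon_0]$ and not only at $\varepsilon=0$. I would handle it by combining the locally uniform convergence of $M_y$ with \eqref{eq:scalar-nonvanishing-positive-axis}, choosing $\varepsilon_0$ so that $|M_y(\lambda\pm i\varepsilon)-M_y(\lambda\pm i0)|<\tfrac12\inf_I|\alpha-M_y(\lambda+i0)|$ on $I$. Everything else is routine bookkeeping of norms in the weighted dual pair.
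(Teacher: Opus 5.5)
Your proposal is correct and follows essentially the same route as the paper's proof. Both arguments bound the Kre\u{\i}n denominator uniformly on a strip $I\times(0,\varepsilon_0]$ using local uniform convergence of $M_y$ and conjugation symmetry, pass to the limit in the rank-one term via the bound $\|(\cdot,v)u\|\le\|u\|_{L^2_{-s}}\|v\|_{L^2_{-s}}$, and conclude absolute continuity from Stone's formula together with density of $L^2_s(\Omega)$. Your Poisson-smoothing justification of the Stone-formula step makes explicit what the paper leaves implicit.
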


\begin{proof}
Fix a compact interval $I\Subset(0,+\infty)$, and set
\begin{equation*}
c_I:=\inf_{\lambda\in I}
\left|\alpha-M_y(\lambda+i0)\right|>0.
\end{equation*}
By local uniform convergence of the Weyl function, for all sufficiently small
$\varepsilon>0$,
\begin{equation*}
\inf_{\lambda\in I}
\left|\alpha-M_y(\lambda+i\varepsilon)\right|
\geq\frac{c_I}{2}.
\end{equation*}
The analogous estimate for the lower sign follows from
$M_y(\overline z)=\overline{M_y(z)}$ and $\alpha\in\R$.

For $g,h\in L_{-s}^2(\Omega)$, the rank-one operator
$f\mapsto(f,g)h$ belongs to
$\mathcal B(L_s^2(\Omega),L_{-s}^2(\Omega))$, and
\begin{equation*}
\bigl\|(\,\cdot\,,g)h\bigr\|_{
 \mathcal B(L_s^2,L_{-s}^2)}
\leq\|g\|_{L_{-s}^2}\,\|h\|_{L_{-s}^2}.
\end{equation*}
Thus convergence of $g$ and $h$ in $L_{-s}^2(\Omega)$ implies
operator-norm convergence of the corresponding rank-one operators. Passing to
the limit in the Kre\u{\i}n formula therefore gives
\eqref{eq:boundary-krein-formula}, locally uniformly on $I$. Since $I$ is
arbitrary, the limiting absorption principle holds on $(0,+\infty)$.

Let $E_{\alpha,y}$ be the spectral measure of $A_{\alpha,y}$, let
$J\Subset(0,+\infty)$, and let $f\in L_s^2(\Omega)$. Stone's formula,
together with the locally continuous boundary values in
\eqref{eq:boundary-krein-formula}, shows that the spectral measure
\begin{equation*}
B\longmapsto
(E_{\alpha,y}(B)f,f)_{L^2(\Omega)}
\end{equation*}
is absolutely continuous on $J$. Hence
$E_{\alpha,y}(J)L_s^2(\Omega)$ lies in the absolutely continuous subspace.
Since $L_s^2(\Omega)$ is dense in $L^2(\Omega)$, the same is true of
$E_{\alpha,y}(J)L^2(\Omega)$. As $J\Subset(0,+\infty)$ is arbitrary,
the spectrum in $(0,+\infty)$ is purely absolutely continuous.
\end{proof}

\begin{remark}
Theorem~\ref{thm:abstract_transfer_ac} makes no assertion at the threshold
zero; threshold eigenvalues and resonances are treated in Section~\ref{sec:model}.
\end{remark}

In the following, the word {\em outgoing} is used in the standard stationary scattering sense. If $\lambda>0$, the boundary value $(A_D-\lambda-i0)^{-1}$ is obtained as the limit of the resolvent from the physical half-plane and gives the outgoing solution of the Helmholtz equation. In an exterior Euclidean end this means that, outside a compact set, the solution satisfies the Sommerfeld radiation condition
\begin{equation*}
\partial_r u-i\sqrt\lambda\,u=o(r^{-(d-1)/2}),
\qquad r=|x|\to+\infty,
\end{equation*}
in the usual angular sense. Equivalently one has
\begin{equation*}
u(r\omega)
=
r^{-(d-1)/2}e^{i\sqrt\lambda r}
\bigl(a^+(\omega)+o(1)\bigr),
\qquad r\to+\infty,
\end{equation*}
in $L^2$ of the angular variable $\omega$. This is the standard obstacle scattering normalization of the outgoing
amplitude (see, for instance~\cite[Sections~1.3--1.4]{Ramm2017}). The coefficient $a^+$ is the outgoing far-field amplitude. In an asymptotically conical end (a prototype special Lipschitz domain that will be studied next), the same notation denotes the corresponding coefficient in the generalized outgoing expansion associated with the limiting absorption principle and the generalized Fourier transform.
With this normalization, the outgoing amplitude satisfies the standard flux identity: the imaginary radial flux through large spheres equals $\sqrt\lambda$ times the squared norm of the outgoing amplitude. 

\begin{lemma}\label{lem:positive-imaginary-weyl}
Assume that, for every $\lambda>0$, the outgoing Green function
\begin{equation*}
u_\lambda^+:=G_{\lambda+i0}^\Omega(\cdot,y)
\end{equation*}
exists and admits an outgoing amplitude $a_{\lambda,y}^+$. Assume that the
flux identity
\begin{equation}\label{eq:radiation-flux-green-vector}
\lim_{R\to+\infty}
\operatorname{Im}
\int_{\Omega\cap\partial B_R}
\overline{u_\lambda^+}\,\partial_r u_\lambda^+\,d\sigma
=\sqrt\lambda\,\|a_{\lambda,y}^+\|^2
\end{equation}
holds for every $\lambda>0$, and that the Rellich uniqueness
theorem holds. Assume also that $\lambda\mapsto M_y^\Omega(\lambda+i0)$ is
locally continuous on $(0,+\infty)$. Then
\begin{equation}\label{eq:imaginary-weyl-radiation-amplitude}
\operatorname{Im}M_y^\Omega(\lambda+i0)
=\sqrt\lambda\,\|a_{\lambda,y}^+\|^2>0.
\end{equation}
Consequently, \eqref{eq:scalar-nonvanishing-positive-axis} holds for every
$\alpha\in\R$.
\end{lemma}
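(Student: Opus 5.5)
The strategy is a Green's-identity computation on the truncated domain $\Omega_R:=\Omega\cap B_R\setminus\overline{B_\delta(y)}$. The aim is to express $\operatorname{Im}M_y^\Omega(\lambda+i0)$ as a flux at infinity and then read off its positivity from Rellich's theorem.

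Write $u=u_\lambda^+$. Since $(-\Delta-\lambda)u=0$ in $\Omega\setminus\{y\}$ with $\lambda$ real and $u=0$ on $\partial\Omega$, Green's second identity applied to $u$ and $\overline u$ on $\Omega_R$ gives
\begin{equation*}
\operatorname{Im}\int_{\Omega\cap\partial B_R}\overline u\,\partial_r u\,d\sigma
=\operatorname{Im}\int_{\partial B_\delta(y)}\overline u\,\partial_\nu u\,d\sigma ,
\end{equation*}
with $\nu$ pointing away from $y$ on the small sphere. The boundary piece $\partial\Omega\cap B_R$ contributes nothing because of the Dirichlet condition. To justify the identity with only $H^1_{\mathrm{loc}}$-regularity up to a Lipschitz boundary, I would argue at $\varepsilon>0$ and then pass to the limit, using the local uniform convergence of $G_{\lambda+i\varepsilon}^y$. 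Alternatively, one can use the weak formulation with a cutoff, since $u$ lies in $H^1$ near $\partial\Omega\cap B_R$ with zero trace.

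Next I evaluate the small-sphere term. For $\varepsilon>0$ the expansion \eqref{AsymptG} gives
\begin{equation*}
u=\frac{1}{4\pi|x-y|}+M_y^\Omega(\lambda+i0)+o(1)\quad (d=3),
\end{equation*}
and its gradient analogue. This is obtained by passing $\varepsilon\downarrow0$ in the harmonic correction, which solves a regular Helmholtz problem near $y$; interior elliptic estimates then give local $C^1$ convergence near $y$. The case $d=2$ is analogous with $-\frac{1}{2\pi}\log|x-y|$ in place of the Coulomb term. Substituting, the real singular part contributes no imaginary term. The cross term is
\begin{equation*}
\operatorname{Im}\bigl(\overline{M}\cdot(-1/(4\pi\delta^2))\cdot4\pi\delta^2\bigr)=\operatorname{Im}M ,
\end{equation*}
and the remaining terms are $o(1)$ as $\delta\to0$. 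Hence the right-hand side tends to $\operatorname{Im}M_y^\Omega(\lambda+i0)$; in $d=2$ the factor $2\pi\delta\cdot\frac{1}{2\pi\delta}$ gives the same result.

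Letting $R\to\infty$ and invoking \eqref{eq:radiation-flux-green-vector} yields \eqref{eq:imaginary-weyl-radiation-amplitude} as an identity. It remains to prove strict positivity. Suppose instead that $a_{\lambda,y}^+=0$. By the assumed Rellich uniqueness theorem, $u$ vanishes outside a large ball. By unique continuation for the Helmholtz equation in the connected open set $\Omega\setminus\{y\}$, it then vanishes identically, which contradicts its nonzero singular part at $y$. Finally, $\operatorname{Im}(\alpha-M_y(\lambda+i0))=-\operatorname{Im}M_y(\lambda+i0)<0$ is continuous in $\lambda$, so it is bounded away from zero on compact intervals $I$. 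This gives \eqref{eq:scalar-nonvanishing-positive-axis}.

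The main obstacle is the justification of the sign and orientation bookkeeping together with the limiting process near $y$. The key points are that the regular part of $G^y_{\lambda+i0}$ is $C^1$ near $y$, and that the regularized value at $\varepsilon=0$ coincides with $M_y(\lambda+i0)$. I would settle both by writing the local expansion for $\varepsilon>0$ and using uniform interior estimates in a fixed annulus around $y$, which in turn follow from the $L^2_{-s}$ convergence.
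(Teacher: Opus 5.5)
Your proposal is correct and follows essentially the same route as the paper: Green's second identity for $u$ and $\overline u$ on the truncated domain, extraction of $\operatorname{Im}M_y^\Omega(\lambda+i0)$ from the small sphere around $y$ (your normal pointing away from $y$ gives the same sign as the paper's inward-normal convention), the flux identity as $R\to\infty$, and Rellich uniqueness plus unique continuation for strict positivity. Your remarks on justifying the identity near $\partial\Omega$ and on the local $C^1$ convergence of the regular part near $y$ add detail that the paper leaves implicit.
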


\begin{proof}
Fix $\lambda>0$. For $R>|y|$ and
$0<\varepsilon<\operatorname{dist}(y,\partial\Omega)$, set
\begin{equation*}
\Omega_{R,\varepsilon}
:=(\Omega\cap B_R)\setminus\overline{B_\varepsilon(y)}.
\end{equation*}
Since \(u_\lambda^+\) solves \((-\Delta-\lambda)u_\lambda^+=0\) in
\(\Omega_{R,\varepsilon}\) and \(\lambda\) is real, Green's second identity applied
to \(u_\lambda^+\) and \(\overline{u_\lambda^+}\) gives
\begin{equation*}
0=
\int_{\partial\Omega_{R,\varepsilon}}
\left(
\overline{u_\lambda^+}\,\partial_\nu u_\lambda^+
-
u_\lambda^+\,\partial_\nu\overline{u_\lambda^+}
\right)d\sigma
=
2i\,\operatorname{Im}
\int_{\partial\Omega_{R,\varepsilon}}
\overline{u_\lambda^+}\,\partial_\nu u_\lambda^+\,d\sigma .
\end{equation*}
The contribution from $\partial\Omega$ vanishes because
$u_\lambda^+$ satisfies the Dirichlet boundary condition. Near $y$,
\begin{equation*}
u_\lambda^+(x)
=
\begin{cases}
\displaystyle
\frac{1}{4\pi|x-y|}+M_y^\Omega(\lambda+i0)+o(1),&d=3,\\[1.2ex]
\displaystyle
-\frac{1}{2\pi}\log|x-y|+M_y^\Omega(\lambda+i0)+o(1),&d=2.
\end{cases}
\end{equation*}
On $\partial B_\varepsilon(y)$, the outer normal to
$\Omega_{R,\varepsilon}$ points towards $y$, and
\begin{equation*}
\partial_\nu u_\lambda^+(x)
=
\begin{cases}
\displaystyle
\frac{1}{4\pi\varepsilon^2}+O(1),&d=3,\\[1.2ex]
\displaystyle
\frac{1}{2\pi\varepsilon}+O(1),&d=2.
\end{cases}
\end{equation*}
Consequently,
\begin{equation*}
\lim_{\varepsilon\downarrow0}
\operatorname{Im}
\int_{\partial B_\varepsilon(y)}
\overline{u_\lambda^+}\,\partial_\nu u_\lambda^+\,d\sigma
=-\operatorname{Im}M_y^\Omega(\lambda+i0).
\end{equation*}
Letting $\varepsilon\downarrow0$ and then $R\to+\infty$, we obtain
\begin{equation*}
\operatorname{Im}M_y^\Omega(\lambda+i0)
=
\lim_{R\to+\infty}
\operatorname{Im}
\int_{\Omega\cap\partial B_R}
\overline{u_\lambda^+}\,\partial_r u_\lambda^+\,d\sigma.
\end{equation*}
Now, using ~\eqref{eq:radiation-flux-green-vector} we obtain 
\eqref{eq:imaginary-weyl-radiation-amplitude}.
If $a_{\lambda,y}^+=0$, Rellich uniqueness implies that
$u_\lambda^+$ vanishes in the scattering end. Unique continuation then gives
$u_\lambda^+=0$ in $\Omega\setminus\{y\}$, contradicting its Coulomb or
logarithmic singularity at $y$. Thus $a_{\lambda,y}^+\neq0$.

Finally, $M_y^\Omega(\lambda+i0)$ is locally continuous in $\lambda$.
Therefore, for every compact $I\Subset(0,+\infty)$,
\begin{equation*}
\inf_{\lambda\in I}
\left|\alpha-M_y^\Omega(\lambda+i0)\right|
\geq
\min_{\lambda\in I}
\operatorname{Im}M_y^\Omega(\lambda+i0)>0.
\end{equation*}
\end{proof}
In the previous Lemma, the hypotheses on the domain are left unspecified, so that the thesis is conditioned to the existence of the outgoing Green function, validity of flux identity ~\eqref{eq:radiation-flux-green-vector} and Rellich uniqueness theorem in the domain $\Omega$. These flat hypotheses have to be verified for the specific domains of interest, that will be done in the following subsections.
\begin{remark}
The quantity $\pi^{-1}\operatorname{Im}M_y^\Omega(\lambda+i0)$
may be interpreted as a measure of the strength with which
a point source at $y$ couples to the continuum at energy
$\lambda$.
\end{remark}
\subsection{Exterior domains}

For exterior domains, the hypotheses of
Lemma~\ref{thm:abstract_transfer_ac} follow from classical obstacle
scattering.

\begin{corollary}\label{cor:exterior_transfer_ac}
Let $\Omega\subset\R^d$, $d\in\{2,3\}$, be an exterior domain with
$C^{1,1}$ boundary. Then, for every $y\in\Omega$ and every
$\alpha\in\R$, the corresponding one-centre Dirichlet point interaction
$A_{\alpha,y}$ satisfies
\begin{equation*}
\sigma_{\mathrm{ac}}(A_{\alpha,y})\cap(0,+\infty)=(0,+\infty),
\end{equation*}
and
\begin{equation*}
\sigma_p(A_{\alpha,y})\cap(0,+\infty)
=
\sigma_{\mathrm{sc}}(A_{\alpha,y})\cap(0,+\infty)
=\varnothing.
\end{equation*}
\end{corollary}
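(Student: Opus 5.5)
The plan is to verify the hypotheses of Theorem~\ref{thm:abstract_transfer_ac} for $\Omega=\R^d\setminus K$ with $C^{1,1}$ boundary, using classical obstacle scattering for the background Dirichlet Laplacian. Lemma~\ref{lem:positive-imaginary-weyl} will supply the nonvanishing condition \eqref{eq:scalar-nonvanishing-positive-axis}.

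First, I would invoke the classical limiting absorption principle for the exterior Dirichlet problem (Vainberg, Eidus, Agmon; see \cite{Ramm2017}). For $s>1/2$ and $\lambda>0$, the limits $R_D^\pm(\lambda)$ exist in $\mathcal B(L^2_s(\Omega),L^2_{-s}(\Omega))$, locally uniformly on $(0,\infty)$. The $C^{1,1}$ regularity gives $\Dom(A_D)\subset H^2$ near $\partial K$, so the standard compactness and unique continuation arguments apply. The outgoing limit $(A_D-\lambda-i0)^{-1}f$ is the unique solution satisfying the Sommerfeld condition, by Rellich's uniqueness theorem, which holds since $\partial\Omega$ is compact and $\Omega$ is connected.

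Second, I would handle $G^y_{\lambda\pm i\varepsilon}$ and $M_y(\lambda\pm i\varepsilon)$. I would choose a cutoff $\chi\in C_c^\infty(\Omega)$ equal to $1$ near $y$ and write
\begin{equation*}
G_z^y=\chi G_z^0(\cdot,y)+(A_D-z)^{-1}\bigl(\Delta\chi\, G^0_z(\cdot,y)+2\nabla\chi\cdot\nabla G_z^0(\cdot,y)\bigr)+(1-\chi)\text{-correction}.
\end{equation*}
More cleanly, $G_z^y=\chi G_z^0(\cdot,y)+(A_D-z)^{-1}g_z$, where $g_z:=[\Delta,\chi]G_z^0(\cdot,y)$ is compactly supported and depends continuously on $z$ up to the real axis in $L^2$. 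The LAP then gives convergence in $L^2_{-s}$, locally uniformly in $\lambda$. In this decomposition the regular part $(A_D-z)^{-1}g_z$ vanishes near $y$ only after subtracting the free term. The cleaner route to $M_y$ is to note that $h_z^\Omega(\cdot,y)$ solves a Helmholtz problem with boundary data $G^0_z(\cdot,y)$, which is smooth on the compact boundary and continuous in $z$ up to $(0,\infty)$. Standard exterior boundary value theory, which is LAP combined with interior elliptic regularity near $y$ (away from $\partial\Omega$), gives local uniform convergence of $h_z^\Omega(y,y)$. This yields the limits $M_y(\lambda\pm i0)$, continuous in $\lambda$.

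Third, and this is the main obstacle, I would check the hypotheses of Lemma~\ref{lem:positive-imaginary-weyl}. Outside a large ball $B_{R_0}\supset K$, the outgoing function $u^+_\lambda$ is a radiating Helmholtz solution in a full Euclidean exterior. Its expansion in spherical (or cylindrical) harmonics with Hankel functions gives the far-field amplitude $a^+_{\lambda,y}\in L^2(S^{d-1})$ and the flux identity \eqref{eq:radiation-flux-green-vector} by direct asymptotics of $h^{(1)}_\ell$, respectively $H^{(1)}_\ell$. Rellich's lemma, that vanishing far field implies vanishing in $|x|>R_0$, is classical. Unique continuation through the connected set $\Omega\setminus\{y\}$ then completes the argument of the Lemma. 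It follows that $\operatorname{Im}M_y(\lambda+i0)>0$, hence \eqref{eq:scalar-nonvanishing-positive-axis} holds for every real $\alpha$.

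Finally, Theorem~\ref{thm:abstract_transfer_ac} gives that the spectrum in $(0,\infty)$ is purely absolutely continuous, with no point or singular continuous part. Since $\sigma_{\mathrm{ess}}(A_{\alpha,y})=[0,\infty)$ by Proposition~\ref{prop:essential-spectrum} and there is no singular spectrum there, $(0,\infty)\subset\sigma_{\mathrm{ac}}$. This proves the equality.
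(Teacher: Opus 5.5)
Your proposal is correct and follows essentially the same route as the paper. Both arguments use the classical exterior limiting absorption principle with Rellich uniqueness, the cutoff decomposition $G_z^y=\chi G_z^0(\cdot,y)+(A_D-z)^{-1}[\Delta,\chi]G_z^0(\cdot,y)$ (the paper's $-R_D(z)f_z$) plus local elliptic regularity to get $M_y(\lambda\pm i0)$, Lemma~\ref{lem:positive-imaginary-weyl} for $\operatorname{Im}M_y(\lambda+i0)>0$, Theorem~\ref{thm:abstract_transfer_ac}, and Proposition~\ref{prop:essential-spectrum} to fill $(0,+\infty)$. The differences are minor: you derive the far-field amplitude, the flux identity and Rellich's lemma from an explicit Hankel expansion outside a ball containing $K$ rather than citing obstacle scattering. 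Also, your remark that $(A_D-z)^{-1}g_z$ ``vanishes near $y$'' should read that it equals $-h_z^\Omega(\cdot,y)$ there, and this alone already yields the Weyl-function limit without the separate boundary-value route.
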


\begin{proof}
For the background Dirichlet Laplacian, classical obstacle scattering gives
the limiting absorption principle, the outgoing far-field expansion, and Rellich
uniqueness on $(0,+\infty)$; see, for instance,
\cite{Jones1953,LaxPhillips1971,Leis1986,Yafaev2000}.

The boundary values of the Green vector follow from the background limiting
absorption principle by a cut-off argument. Choose
$\chi\in C_c^\infty(\Omega)$ with $\chi=1$ near $y$, and let
$\Phi_z$ be the free resolvent kernel with pole at $y$. Then
\begin{equation*}
G_z^\Omega(\cdot,y)
=\chi\Phi_z(\cdot-y)-R_D(z)f_z,
\qquad
f_z:=(-\Delta-z)\bigl(\chi\Phi_z(\cdot-y)\bigr)-\delta_y.
\end{equation*}
The function $f_z$ is smooth and compactly supported away from $y$.
The limiting absorption principle therefore yields the boundary values of the
Green vectors in $L_{-s}^2(\Omega)$, while local elliptic regularity yields the
boundary values of their regular parts at $y$, and hence of the Weyl function.
For $R$ larger than the support of $\chi$, the compactly supported term
$\chi\Phi_{\lambda+i0}(\cdot-y)$ gives no contribution to the flux through
$\Omega\cap\partial B_R$. Thus the outgoing amplitude and the outgoing flux
of $G_{\lambda+i0}^\Omega(\cdot,y)$ are those of the outgoing resolvent applied
to the compactly supported source $-f_{\lambda+i0}$. Classical obstacle scattering
provides the corresponding far-field expansion, flux identity, and Rellich
uniqueness. Hence Lemma~\ref{lem:positive-imaginary-weyl} gives the
non-vanishing condition. Theorem~\ref{thm:abstract_transfer_ac} applies.

Finally, the background Dirichlet Laplacian has spectrum $[0,+\infty)$, and
Proposition~\ref{prop:essential-spectrum} preserves its essential spectrum. Hence
the absolutely continuous positive spectrum fills $(0,+\infty)$.
\end{proof}

\subsection{Special Lipschitz domains}

For a general special Lipschitz domain, a complete scattering theory is
not available. We therefore restrict to the subclasses covered
by the infinite-obstacle scattering results, following especially \cite{Constantin1981, Yafaev2000}, and give references to further special cases treated in the literature in the remarks.
In particular, the following Theorem adapts the geometric hypotheses of
\cite[Chapter~17, Theorem~17.1 and Proposition~17.2]{Yafaev2000} in the
present setting. The essential hypothesis is the so called illumination condition given in equation \eqref{eq:illumination-condition} (see also subsequent Remark \ref{rem:meaning_illumination_cone}).
\begin{theorem}
\label{thm:special_lipschitz_transfer_ac}
Let $d\in\{2,3\}$, and let $\Omega\subset\R^d$ be a $C^2$ special
Lipschitz domain. Assume that there exists $r_0>0$ such that
\begin{equation}\label{eq:illumination-condition}
x\cdot\nu(x)\leq0,
\qquad x\in\partial\Omega,\quad |x|\geq r_0,
\end{equation}
where $\nu(x)$ is the outer unit normal. For $r>r_0$, set
\begin{equation*}
G_r:=\{\omega\in\mathbb S^{d-1}:r\omega\in\Omega\},
\qquad
G:=\bigcup_{r>r_0}G_r,
\end{equation*}
and define the limit cone
$
K:=\{r\omega:r>0,\ \omega\in G\}.
$
Assume that $K$ has nonempty interior and that, for some
$\gamma\in(0,1)$,
\begin{equation}\label{eq:quantitative-asymptotic-cone}
\operatorname{dist}(x,\partial K)=O(|x|^\gamma),
\qquad x\in\partial\Omega,\quad |x|\to+\infty.
\end{equation}
Then, for every $y\in\Omega$ and every $\alpha\in\R$, the corresponding
one-centre Dirichlet point interaction $A_{\alpha,y}$ satisfies
\begin{equation*}
\sigma_{\mathrm{ac}}(A_{\alpha,y})\cap(0,+\infty)=(0,+\infty),
\end{equation*}
and
\begin{equation*}
\sigma_p(A_{\alpha,y})\cap(0,+\infty)
=
\sigma_{\mathrm{sc}}(A_{\alpha,y})\cap(0,+\infty)
=\varnothing.
\end{equation*}
\end{theorem}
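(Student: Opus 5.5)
The plan is to reduce the statement to Theorem~\ref{thm:abstract_transfer_ac}, with the non-vanishing condition \eqref{eq:scalar-nonvanishing-positive-axis} supplied by Lemma~\ref{lem:positive-imaginary-weyl}, in the same way as in Corollary~\ref{cor:exterior_transfer_ac}.

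\emph{Background scattering theory.} Under the illumination condition \eqref{eq:illumination-condition} and the quantitative asymptotic-cone hypothesis \eqref{eq:quantitative-asymptotic-cone}, I will invoke the infinite-obstacle results of \cite[Chapter~17, Theorem~17.1 and Proposition~17.2]{Yafaev2000} (see also \cite{Constantin1981}). These provide three things on $(0,+\infty)$ for $A_D$: the limiting absorption principle in $\mathcal B(L_s^2,L_{-s}^2)$, $s>1/2$, locally uniformly in $\lambda$; the absence of positive eigenvalues; and an outgoing asymptotic expansion of $R_D^+(\lambda)f$ for compactly supported $f$. This expansion takes the form
\begin{equation*}
r^{-(d-1)/2}e^{i\sqrt\lambda r}\bigl(a^+(\omega)+o(1)\bigr),
\end{equation*}
with $a^+\in L^2$ of the limit cone cross-section $G$. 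It comes with the radiation-flux identity and with Rellich-type uniqueness: an outgoing solution with vanishing amplitude vanishes near infinity in $K$. The role of the illumination condition is to let the Morawetz/Mourre-type multiplier $x\cdot\nabla$ produce boundary terms of the correct sign. The $C^2$ regularity justifies the integrations by parts up to the boundary.

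\emph{Green vector and Weyl function.} I then transfer these facts to the Green vector by the cut-off argument of Corollary~\ref{cor:exterior_transfer_ac}. Writing
\begin{equation*}
G_z^\Omega(\cdot,y)=\chi\Phi_z(\cdot-y)-R_D(z)f_z,
\end{equation*}
the source $f_z$ is smooth, compactly supported away from $y$, and depends analytically on $z$ up to the real axis. The limiting absorption principle for $A_D$ therefore gives the limits $G_{\lambda\pm i0}^y$ in $L_{-s}^2$, locally uniformly in $\lambda$. For the Weyl function, interior elliptic regularity on a ball around $y$ upgrades $L^2_{-s}$-convergence of $R_D(z)f_z$ to local $C^0$-convergence, so $M_y(\lambda\pm i0)$ exists and is continuous. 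The flux and amplitude of $G_{\lambda+i0}^y$ at large $R$ coincide with those of $-R_D^+(\lambda)f_{\lambda+i0}$, so the hypotheses of Lemma~\ref{lem:positive-imaginary-weyl} hold. The Green identity argument is performed in $\Omega\cap B_R$, where the Dirichlet boundary contributes nothing. The conclusion is $\operatorname{Im}M_y(\lambda+i0)>0$, whence \eqref{eq:scalar-nonvanishing-positive-axis} holds for every real $\alpha$.

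\emph{Conclusion.} Theorem~\ref{thm:abstract_transfer_ac} then gives purely absolutely continuous spectrum on $(0,+\infty)$. Since $\sigma_{\mathrm{ess}}=[0,+\infty)$ by Proposition~\ref{prop:essential-spectrum}, and there is neither point nor singular continuous spectrum there, the absolutely continuous spectrum fills $(0,+\infty)$.

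\emph{Main obstacle.} I expect the hardest step to be verifying the flux identity \eqref{eq:radiation-flux-green-vector} and Rellich uniqueness in the conical end. Here the spheres $\Omega\cap\partial B_R$ are not full spheres, and their angular sections $G_R$ only approach $G$ at the rate allowed by \eqref{eq:quantitative-asymptotic-cone}. My first approach is to use Yafaev's generalized Fourier transform, which expresses the amplitude through eigenfunctions of the Dirichlet Laplace--Beltrami operator on $G$. I would then show that the mismatch region $G_R\triangle G$ contributes $o(1)$ to the flux, using $\gamma<1$ together with a priori decay of $\partial_r u-i\sqrt\lambda u$. If the flux identity cannot be extracted directly, the fallback is to obtain positivity of $\operatorname{Im}M_y(\lambda+i0)$ by a different route. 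Stone's formula gives $\operatorname{Im}M_y(\lambda+i0)=\pi\,\frac{d}{d\lambda}(E_D(\lambda)\delta_y,\delta_y)$ in a regularized sense, and I would then argue from the absence of positive eigenvalues and unique continuation that this density vanishes only if $G_{\lambda+i0}^y$ is real and decaying, which is impossible.
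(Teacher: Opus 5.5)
Your proposal matches the paper's proof step for step. Both use Yafaev's Chapter~17 results for $A_D$ and the cut-off decomposition $G_z^\Omega(\cdot,y)=\chi\Phi_z(\cdot-y)-R_D(z)f_z$ to transfer the limiting absorption principle, far-field expansion, flux identity and Rellich uniqueness to the Green vector; local elliptic regularity then gives $M_y(\lambda\pm i0)$, and Lemma~\ref{lem:positive-imaginary-weyl} and Theorem~\ref{thm:abstract_transfer_ac} finish the argument. The paper disposes of your ``main obstacle'' simply by citing Yafaev's results for compactly supported sources, and your closing step (essential spectrum $[0,+\infty)$ plus purely absolutely continuous spectrum on $(0,+\infty)$) is a valid, slightly more direct version of its appeal to the generalized Fourier representation.
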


\begin{proof}
The infinite-obstacle scattering theory in
\cite[Chapter~17, Theorem~17.1 and Proposition~17.2]{Yafaev2000}
gives, for the background Dirichlet Laplacian, the limiting absorption
principle, the outgoing asymptotic expansion of outgoing solutions, and the
corresponding Rellich uniqueness theorem.

We first verify that these results apply to the Green vector with pole at
$y$. The argument is similar to the one given in Corollary \ref{cor:exterior_transfer_ac} for exterior domains.
Let $\chi\in C_c^\infty(\Omega)$ be equal to one in a neighbourhood of
$y$, and let $\Phi_z$ be the free resolvent kernel of $-\Delta-z$ in
$\R^d$, with pole at the origin. For $z\in\C\setminus[0,+\infty)$ one has
\begin{equation*}
G_z^\Omega(\cdot,y)
=\chi\Phi_z(\cdot-y)-R_D(z)f_z,
\end{equation*}
where $R_D(z)=(A_D-z)^{-1}$ and
\begin{equation*}
f_z:=(-\Delta-z)\bigl(\chi\Phi_z(\cdot-y)\bigr)-\delta_y.
\end{equation*}
The distribution $f_z$ is in fact a smooth compactly supported function,
supported away from $y$. Hence the limiting absorption principle gives, for
every $s>1/2$, the boundary values
\begin{equation*}
G_{\lambda\pm i0}^\Omega(\cdot,y)
\in L_{-s}^2(\Omega),
\qquad \lambda>0,
\end{equation*}
locally uniformly in $\lambda$. Since the first term
$\chi\Phi_{\lambda+i0}(\cdot-y)$ is compactly supported, it vanishes near
infinity and gives no contribution to the far-field coefficient or to the flux
through large spheres. Therefore the outgoing asymptotic, the outgoing
amplitude, and the outgoing flux of $G_{\lambda+i0}^\Omega(\cdot,y)$ are those
of $-R_D(\lambda+i0)f_{\lambda+i0}$.

The results quoted from \cite{Yafaev2000} apply precisely to such compactly supported
sources. They give the outgoing expansion, the far-field amplitude, the flux
identity
\begin{equation*}
\lim_{R\to+\infty}
\operatorname{Im}
\int_{\Omega\cap\partial B_R}
\overline{G_{\lambda+i0}^\Omega(x,y)}\,\partial_rG_{\lambda+i0}^\Omega(x,y)
\,d\sigma(x)
=
\sqrt\lambda\,\|a_{\lambda,y}^+\|^2,
\end{equation*}
and the corresponding Rellich uniqueness theorem. This verifies the scattering
hypotheses of Lemma~\ref{lem:positive-imaginary-weyl} for the Green vector.

It remains only to identify the boundary value of the Weyl function.
The preceding convergence of the Green vectors, together with local elliptic
regularity near the pole and the diagonal expansion defining $M_y^\Omega(z)$,
yields locally continuous boundary values $M_y^\Omega(\lambda\pm i0)$ on
$(0,+\infty)$. Therefore all hypotheses of
Lemma~\ref{lem:positive-imaginary-weyl} are satisfied, and
\begin{equation*}
\operatorname{Im}M_y^\Omega(\lambda+i0)>0,
\qquad \lambda>0.
\end{equation*}
In particular, for real $\alpha$, the Kre\u{\i}n denominator
$\alpha-M_y^\Omega(\lambda+i0)$ does not vanish on the positive axis, and the
non-vanishing condition in Theorem~\ref{thm:abstract_transfer_ac} holds on every
compact subinterval of $(0,+\infty)$.

Theorem~\ref{thm:abstract_transfer_ac} now gives absence of positive eigenvalues
and singular continuous spectrum for $A_{\alpha,y}$. Finally, the generalized
Fourier representation of the background operator gives positive absolutely
continuous spectrum equal to $(0,+\infty)$, and
Proposition~\ref{prop:essential-spectrum} preserves the essential spectrum. Hence
\begin{equation*}
\sigma_{\mathrm{ac}}(A_{\alpha,y})\cap(0,+\infty)=(0,+\infty).
\end{equation*}
\end{proof}

\begin{remark}
\label{rem:meaning_illumination_cone}
Condition \eqref{eq:illumination-condition} means that the continuation of a ray
from the origin through a sufficiently distant boundary point remains in
$\overline\Omega$. The cone $K$ collects the directions along which the
domain extends to infinity. The assumption that $K$ has nonempty interior
excludes degenerate limit cones, while
\eqref{eq:quantitative-asymptotic-cone} requires $\partial\Omega$ to approach
$\partial K$ with sublinear error.
\end{remark}

\begin{remark}
\label{rem:constantin-three-dimensional}
In dimension $d=3$, the quantitative assumption
\eqref{eq:quantitative-asymptotic-cone} is not needed. Constantin proves the
limiting absorption principle, constructs the generalized Fourier transform, and
establishes completeness under the illumination condition
\eqref{eq:illumination-condition} alone; see \cite{Constantin1981} and
\cite[Chapter~17, Section~17.3]{Yafaev2000}. Constantin states the result for
$d\geq3$; in \cite{Yafaev2000} it is noted that this dimensional restriction is probably not
essential, but we are not aware of a complete proof.
\end{remark}

\begin{remark}
\label{rem:geometry-classes-incomparable}
The special Lipschitz condition and the conditions discussed in Constantin or Yafaev papers are
not fully comparable. The former requires the boundary to be a global graph in one fixed
direction, whereas the latter admit domains that are not special Lipschitz, such as
exteriors of paraboloidal obstacles. Conversely, a special Lipschitz domain need
not satisfy the illumination condition. The thin domains studied by Minskii
\cite{Minskii81} form another class and are not used here.
\end{remark}

We explicitly mention two useful subclasses  of special Lipschitz domains where Theorem ~\ref{thm:special_lipschitz_transfer_ac} applies.
\begin{corollary}[Domains conical at infinity]
\label{cor:conical_at_infinity_ac}
Let $d\in\{2,3\}$, and let $\Omega\subset\R^d$ be a $C^2$ special
Lipschitz domain such that, for some $R_0>0$,
\begin{equation*}
\Omega\cap\{|x|>R_0\}
=\Gamma\cap\{|x|>R_0\},
\end{equation*}
where
\begin{equation*}
\Gamma:=\{r\omega:r>0,\ \omega\in\Sigma\}
\end{equation*}
is an open cone with nonempty Lipschitz spherical section
$\Sigma\subset\mathbb S^{d-1}$. Then the conclusion of
Theorem~\ref{thm:special_lipschitz_transfer_ac} holds.
\end{corollary}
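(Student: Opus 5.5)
The plan is to reduce Corollary~\ref{cor:conical_at_infinity_ac} to Theorem~\ref{thm:special_lipschitz_transfer_ac}. Since $\Omega$ is already assumed to be a $C^2$ special Lipschitz domain, it only remains to check three geometric hypotheses: the illumination condition \eqref{eq:illumination-condition}, the nonempty interior of the limit cone $K$, and the quantitative approach \eqref{eq:quantitative-asymptotic-cone}. Once these hold, the spectral conclusions follow verbatim from the theorem.

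\emph{Illumination.} Fix $r_0>R_0$. Each boundary point $x$ with $|x|\geq r_0$ lies on $\partial\Gamma$, since $\Omega$ and $\Gamma$ coincide outside $B_{R_0}$. The boundary of a cone is invariant under dilations, so the radial vector field $x$ is tangent to $\partial\Gamma$ wherever $\partial\Gamma$ has a tangent plane. The $C^2$ regularity of $\partial\Omega$ guarantees a normal at every such point, and therefore $x\cdot\nu(x)=0\leq0$.

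To make the tangency precise, I would argue as follows. Write $\partial\Omega\cap\{|x|>R_0\}$ as $\{r\omega:\ r>R_0,\ \omega\in\partial\Sigma\}$. The curve $t\mapsto tx$, for $t$ near $1$, stays in $\partial\Omega$. Differentiating at $t=1$ shows that $x$ lies in the tangent space, and hence $x\perp\nu(x)$.

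\emph{Limit cone.} For $r>r_0$ the set $G_r=\{\omega:r\omega\in\Gamma\}$ equals $\Sigma$, so $G=\Sigma$ and $K=\Gamma$. Since $\Sigma$ is a nonempty open subset of the sphere, $K$ is a nonempty open cone, so its interior is nonempty.

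\emph{Quantitative approach.} For $x\in\partial\Omega$ with $|x|>R_0$ we have $x\in\partial\Gamma=\partial K$, so $\operatorname{dist}(x,\partial K)=0=O(|x|^\gamma)$ for any $\gamma\in(0,1)$.

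The main point needing care is the boundary structure at infinity. A subtlety is that the set identity $\Omega\cap\{|x|>R_0\}=\Gamma\cap\{|x|>R_0\}$ must imply $\partial\Omega\cap\{|x|>R_0\}=\partial\Gamma\cap\{|x|>R_0\}$; this holds because $\{|x|>R_0\}$ is open. One should also note that the $C^2$ assumption forces $\partial\Sigma$ to be $C^2$ in the sphere, so no cone-vertex or corner issues arise away from the origin. In addition, a special Lipschitz domain equal to a cone at infinity automatically has a cone of nonempty interior.

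With all hypotheses of Theorem~\ref{thm:special_lipschitz_transfer_ac} verified, its conclusion applies for every $y\in\Omega$ and every $\alpha\in\R$.
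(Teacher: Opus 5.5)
Your proposal is correct and takes the same route as the paper. Both reduce the corollary to Theorem~\ref{thm:special_lipschitz_transfer_ac} in three steps: the radial field is tangent to $\partial\Gamma$, so $x\cdot\nu(x)=0$ outside $B_{R_0}$; the limit cone is $K=\Gamma$; and $\operatorname{dist}(x,\partial K)=0$ on $\partial\Omega$ outside a compact set. The extra details you give, namely $\partial\Omega\cap\{|x|>R_0\}=\partial\Gamma\cap\{|x|>R_0\}$ and the dilation-curve argument for tangency, are sound and only make explicit what the paper states in one line.
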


\begin{proof}
On $\partial\Gamma$, the radial direction is tangent to the boundary, so
$x\cdot\nu(x)=0$. Moreover, the limit cone is $K=\Gamma$, and
$\operatorname{dist}(x,\partial K)=0$ on the boundary outside a compact set.
Thus the hypotheses of Theorem~\ref{thm:special_lipschitz_transfer_ac} are
satisfied. We also mention that an early treatment of domains conical at infinity is in \cite{Jones1953}.
\end{proof}

\begin{corollary}[Asymptotically half-space and half-plane domains]
\label{cor:asymptotically_flat_ac}
Let $d\in\{2,3\}$, and let
\begin{equation*}
\Omega_\varphi
:=\{(x',x_d)\in\R^{d-1}\times\R:x_d>\varphi(x')\},
\end{equation*}
where $\varphi\in C^2(\R^{d-1})\cap\operatorname{Lip}(\R^{d-1})$.
Assume that, for some $\gamma\in(0,1)$,
\begin{equation}\label{eq:asymptotically-flat-graph}
|\varphi(x')|=O(\langle x'\rangle^\gamma),
\qquad |x'|\to+\infty,
\end{equation}
and that
\begin{equation}\label{eq:graph-illumination}
x'\cdot\nabla\varphi(x')-\varphi(x')\leq0
\qquad\text{for all sufficiently large }|x'|.
\end{equation}
Then the conclusion of Theorem~\ref{thm:special_lipschitz_transfer_ac} holds.
\end{corollary}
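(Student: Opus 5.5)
The plan is to reduce the statement to Theorem~\ref{thm:special_lipschitz_transfer_ac}. First we check that $\Omega_\varphi$ is a $C^2$ special Lipschitz domain, which holds by assumption. Then we verify the illumination condition \eqref{eq:illumination-condition}, identify the limit cone $K$ as the upper half-space $\{x_d>0\}$, and check the quantitative estimate \eqref{eq:quantitative-asymptotic-cone}.

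\emph{Illumination.} A boundary point has the form $x=(x',\varphi(x'))$. The outer unit normal to $\Omega_\varphi=\{x_d>\varphi\}$ points downward:
\begin{equation*}
\nu(x)=\frac{(\nabla\varphi(x'),-1)}{\sqrt{1+|\nabla\varphi(x')|^2}}.
\end{equation*}
Hence
\begin{equation*}
x\cdot\nu(x)=\frac{x'\cdot\nabla\varphi(x')-\varphi(x')}{\sqrt{1+|\nabla\varphi(x')|^2}},
\end{equation*}
and \eqref{eq:graph-illumination} gives $x\cdot\nu\le0$ for all large $|x'|$. We still need the condition for large $|x|$, not only for large $|x'|$. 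On $\partial\Omega$ we have $|\varphi(x')|\le L|x'|+C$ because $\varphi$ is Lipschitz, so $|x|\to\infty$ forces $|x'|\to\infty$. This gives \eqref{eq:illumination-condition} for $|x|\ge r_0$ with $r_0$ large.

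\emph{Limit cone.} By \eqref{eq:asymptotically-flat-graph}, for $\omega=(\omega',\omega_d)$ with $\omega_d>0$ we have $r\omega_d>\varphi(r\omega')$ once $r$ is large, since $\varphi(r\omega')=O(r^\gamma)$ and $\gamma<1$. So $\omega\in G$. Conversely, if $\omega_d<0$ then $r\omega\notin\Omega$ for all large $r$. I expect the main subtlety here: $G$ is defined as a union over all $r>r_0$, not as a limit, so directions with $\omega_d<0$ might still appear in $G$ from moderately sized $r$. I will try two fixes. The first is to enlarge $r_0$ until $G$ is contained in the closed upper hemisphere up to directions whose cone is absorbed in the closure. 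The second is to read $K$ in the sense of \cite{Yafaev2000}, as the asymptotic cone, i.e.\ the set of directions $\omega$ with $r\omega\in\Omega$ for arbitrarily large $r$. Either way $K$ is the half-space $\{x_d>0\}$ (up to boundary directions), which has nonempty interior.

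\emph{Quantitative cone condition.} Here $\partial K=\{x_d=0\}$, so for boundary points
\begin{equation*}
\dist(x,\partial K)=|\varphi(x')|=O(\langle x'\rangle^\gamma)=O(|x|^\gamma),
\end{equation*}
using $|x'|\le|x|$. This is exactly \eqref{eq:quantitative-asymptotic-cone}.

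With all three conditions verified, Theorem~\ref{thm:special_lipschitz_transfer_ac} applies and yields the conclusion.
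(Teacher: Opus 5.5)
Your proposal is correct and follows the paper's own proof: the same outer-normal computation turns \eqref{eq:graph-illumination} into \eqref{eq:illumination-condition}, the sublinear growth of $\varphi$ identifies $K=\R^d_+$, and $\dist(x,\partial K)=|\varphi(x')|=O(|x|^\gamma)$ on $\partial\Omega_\varphi$. Your conversion of large $|x'|$ into large $|x|$ via the Lipschitz bound is a detail the paper leaves implicit.

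The union-over-$r$ issue you flag is not addressed by the paper's one-line identification of $K$. Your first fix works, and the reason is the illumination hypothesis itself. Suppose \eqref{eq:graph-illumination} holds for $|x'|\ge s_0$, and write $x'=s\theta$ with $|\theta|=1$. Then $\frac{d}{ds}\bigl(\varphi(s\theta)/s\bigr)=s^{-2}\bigl(x'\cdot\nabla\varphi(x')-\varphi(x')\bigr)\le0$ for $s\ge s_0$. Since this ratio tends to $0$ by \eqref{eq:asymptotically-flat-graph}, it follows that $\varphi\ge0$ for $|x'|\ge s_0$. Hence $\Omega_\varphi\cap\{|x|>r_0\}\subset\{x_d>0\}$ once $r_0$ is large, so $G$ is exactly the open upper hemisphere. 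No reinterpretation of $K$ is needed.
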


\begin{proof}
At a boundary point $x=(x',\varphi(x'))$, the outer unit normal is
\begin{equation*}
\nu(x)
=\frac{(\nabla\varphi(x'),-1)}
 {\sqrt{1+|\nabla\varphi(x')|^2}},
\end{equation*}
and therefore
\begin{equation*}
x\cdot\nu(x)
=\frac{x'\cdot\nabla\varphi(x')-\varphi(x')}
 {\sqrt{1+|\nabla\varphi(x')|^2}}.
\end{equation*}
Thus \eqref{eq:graph-illumination} is precisely the illumination condition.
Moreover, \eqref{eq:asymptotically-flat-graph} implies
$\varphi(x')=o(|x'|)$, so the limit cone is $\R_+^d$. On
$\partial\Omega_\varphi$,
\begin{equation*}
\operatorname{dist}(x,\partial\R_+^d)
=|\varphi(x')|=O(|x|^\gamma),
\qquad |x|\to+\infty.
\end{equation*}
Theorem~\ref{thm:special_lipschitz_transfer_ac} applies.
\end{proof}

\begin{remark}\label{rem:flat-positive-spectrum}
The exact half-space and half-plane correspond to $\varphi\equiv0$.
Corollary~\ref{cor:asymptotically_flat_ac} also includes compactly supported perturbations of half-space or half-plane, but compact support is not required and the result is much more general.
\end{remark}

\begin{remark}
Non localized deformations such as boundaries described by a periodic graph need not satisfy the
illumination condition and require different arguments; compare \cite{Ducomet95} and \cite{Ramm2017}.
\end{remark}
\begin{remark}[Scattering]\label{scattering}
It is natural at this point to formulate the scattering problem for a point interaction on a domain. Indeed, on a fixed
domain, by the Kre\u{\i}n formula the resolvent difference
\begin{equation*}
(A_{\alpha,y}-z)^{-1}-(A_D-z)^{-1}
\end{equation*}
has rank one, so trace-class. Hence the Kato-Rosenblum theorem gives existence and
completeness of the wave operators
\begin{equation*}
W_\pm(A_{\alpha,y},A_D)
:=
s\text{-}\lim_{t\to\pm\infty}
e^{itA_{\alpha,y}}e^{-itA_D}P_{\mathrm{ac}}(A_D).
\end{equation*}

As regards the absolutely continuous subspace, in the one-centre case there is a simple negative eigenvalue to be projected away precisely when
$\alpha<\alpha_c(y)$, while no negative eigenvalue is present for
$\alpha\geq\alpha_c(y)$. At the critical coupling we will see in the next section that a zero-energy state may be a
resonance or, in the geometries where it is square-integrable, a threshold
eigenvalue; in the latter case it is again excluded from the absolutely continuous subspace.

We finally notice that, if a complete geometric scattering theory is available for the background
Dirichlet Laplacian $A_D$ relative to a model operator $A_{\rm mod}$, with
identification $J$, then the chain rule for wave operators gives
\begin{equation*}
W_\pm(A_{\alpha,y},A_{\rm mod};J)
=
W_\pm(A_{\alpha,y},A_D)\,
W_\pm(A_D,A_{\rm mod};J)
\end{equation*}
on the corresponding absolutely continuous subspaces.  For exterior domains this reduces to classical obstacle scattering; for
asymptotically conical domains it fits the generalized Fourier transform and
wave operator framework of \cite{Constantin1981}. Thus the point
interaction adds at most one bound state and one rank-one scattering channel
to the background geometric scattering.
We do not further pursue this topic, because time-dependent problems are not the focus of this work.
\end{remark}

\section{Explicit geometries and threshold behavior}\label{sec:model}
This last section is dedicated to a detailed analysis of the behavior of a point interaction when the coupling is critical. It turns out that an elaborate and interesting picture arises, depending on the geometry of the domain.
We begin with four explicit model geometries: half-space, half-plane and plane sectors, exterior of a ball and exterior of a disk. We then derive the universal
near-boundary asymptotics of the critical coupling and finally we study in detail the nature of the state at the threshold of the continuum when the coupling is critical, $\alpha=\alpha_c^{\Omega}\ .$
\begin{definition}[Critical zero-energy states]
\label{def:critical-zero-energy-state}
Let $d\in\{2,3\}$, let $y\in\Omega$, and assume that the finite critical
coupling and the zero-energy Dirichlet Green function are defined. We write
\begin{equation*}
\alpha_c(y)
:=
\begin{cases}
\alpha_c^{(2),\Omega}(y),&d=2,\\
\alpha_c^\Omega(y),&d=3,
\end{cases}
\qquad
\psi_0:=G_0^\Omega(\cdot,y).
\end{equation*}
We call $\psi_0$ the \emph{critical zero-energy state}. It is harmonic in
$\Omega\setminus\{y\}$, has zero Dirichlet trace on $\partial\Omega$, and
satisfies the local point-interaction condition at the centre $y$ with
coupling $\alpha_c(y)$. When $\psi_0\notin L^2(\Omega)$, this last
condition is meant only locally at $y$, not as membership in the operator
domain. Namely, this means that, as $x\to y$,
\begin{equation*}
\psi_0(x)
=
\begin{cases}
\displaystyle -\frac{1}{2\pi}\log|x-y|+\alpha_c(y)+o(1),&d=2,\\[1.2ex]
\displaystyle \frac{1}{4\pi|x-y|}+\alpha_c(y)+o(1),&d=3.
\end{cases}
\end{equation*}

Set $s_d:=(4-d)/2$. If $\psi_0\in L^2(\Omega)$, then zero is a
zero-energy eigenvalue; when zero also belongs to the essential spectrum of
the corresponding critical Hamiltonian (which is the case we will face), it is a \emph{threshold eigenvalue}. If instead
\begin{equation*}
\psi_0\in
\bigcap_{s>s_d}L_{-s}^2(\Omega)\setminus L^2(\Omega),
\end{equation*}
we call $\psi_0$ a \emph{zero-energy threshold resonance}.

As we will see, in dimension three, the exterior-domain states considered below have at threshold the monopole behavior
\begin{equation*}
\psi_0(x)=\frac{c}{|x|}+O(|x|^{-2}),
\qquad c\neq0, \qquad |x|\to\infty,
\end{equation*}
In dimension two, whenever the end admits the multipole expansion
\begin{equation*}
\psi_0(x)
=
b+\frac{a\cdot x}{|x|^2}+O(|x|^{-2}),
\ \qquad\qquad |x|\to\infty,
\end{equation*}
we use the following terminology: the resonance is of $s$-wave type if
$b\neq0$, and of $p$-wave type if $b=0$ and $a\neq0$. For ends with
a different asymptotic geometry, for example wedges, we retain the general term threshold
resonance and specify the actual far-field decay.
\end{definition}

\begin{remark}
The weighted space
three-dimensional formulation is classical in low energy analysis of the resolvent
\cite{JK79}. The $s$- and $p$-wave terminology in dimension two follows
\cite{CorneanMichelangeliYajima2019}.
\end{remark}

\subsection{Flat model geometries}\label{sec:flat-models}

The Green kernels in the half-space and half-plane follow from the method of
images. In the planar formulas, $K_\nu$ denotes the principal modified Bessel
function of the second kind, also called the Macdonald function. It is the
standard solution of the modified Bessel equation in \cite[(10.25.1),
(10.25.3)]{DLMF}; the small-argument expansion used below is
\cite[(10.31.2)]{DLMF}.

\subsubsection{The three-dimensional half-space}

Let
\begin{equation*}
\Omega=\R^3_+
:=\{x=(x',x_3)\in\R^2\times\R:x_3>0\},
\qquad
 y=(y',y_3)\in\R^3_+,
\end{equation*}
and set $y^*:=(y',-y_3)$. The corresponding point interaction was studied
in detail in \cite{NojaRasoStoia25}.
For $z\in\C\setminus[0,\infty)$, the method of images gives
\begin{equation}\label{eq:halfspace-weyl-3d}
M_y^{\R^3_+}(z)
=
-\frac{\kappa(z)}{4\pi}
-\frac{\mathrm e^{-2y_3\kappa(z)}}{8\pi y_3}.
\end{equation}

\begin{proposition}
\label{prop:halfspace-dirichlet}
For the one-centre Dirichlet point interaction in $\R^3_+$,
\begin{equation*}
\alpha_c^{\R^3_+}(y)=-\frac{1}{8\pi y_3}.
\end{equation*}
If $\alpha<\alpha_c^{\R^3_+}(y)$, there is a unique negative
eigenvalue
\begin{equation*}
-\lambda_\alpha=-a_\alpha^2,
\end{equation*}
where $a_\alpha>0$ is the unique solution of
\begin{equation*}
\alpha+\frac{a}{4\pi}+\frac{e^{-2ay_3}}{8\pi y_3}=0.
\end{equation*}
Moreover,
\begin{equation*}
-16\pi^2\alpha^2< -\lambda_\alpha
\leq -\left(4\pi\alpha+\frac{1}{2y_3}\right)^2.
\end{equation*}
The associated eigenspace is spanned by
\begin{equation*}
G_{-\lambda_\alpha}^{\R^3_+}(x,y)
=
\frac{e^{-\sqrt{\lambda_\alpha}|x-y|}}{4\pi|x-y|}
-
\frac{e^{-\sqrt{\lambda_\alpha}|x-y^*|}}{4\pi|x-y^*|}.
\end{equation*}
At $\alpha=\alpha_c^{\R^3_+}(y)$, zero is a threshold eigenvalue,
with eigenfunction
\begin{equation*}
G_0^{\R^3_+}(x,y)
=
\frac{1}{4\pi|x-y|}-\frac{1}{4\pi|x-y^*|}.
\end{equation*}
\end{proposition}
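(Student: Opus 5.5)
Everything follows from the explicit image kernel and Theorem~\ref{thm:critical-coupling-3d}. First I would verify \eqref{eq:halfspace-weyl-3d}. By the method of images, $G_z^{\R^3_+}(x,y)=G_z^0(x-y)-G_z^0(x-y^*)$, so $h_z(x,y)=\mathrm e^{-\kappa|x-y^*|}/(4\pi|x-y^*|)$ with $|y-y^*|=2y_3$. This function is $H^1(\R^3_+)$, solves the Helmholtz equation, and has the right trace on $x_3=0$. Uniqueness in Theorem~\ref{thm:dirichlet-kernel}(i) then identifies it with $h_z^{\R^3_+}$, and evaluating at $x=y$ gives \eqref{eq:halfspace-weyl-3d}.

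Next, set $a=\sqrt\lambda$, so that $M_\lambda=-a/(4\pi)-\mathrm e^{-2ay_3}/(8\pi y_3)$. Letting $a\downarrow0$ gives $M_0=-1/(8\pi y_3)$, and Theorem~\ref{thm:critical-coupling-3d}(iv) yields $\alpha_c$. Parts (i)--(ii) of that theorem give the eigenvalue equation $\alpha+a/(4\pi)+\mathrm e^{-2ay_3}/(8\pi y_3)=0$, uniqueness of the root $a_\alpha$ (by strict monotonicity of $M_\lambda$), and the eigenfunction $G^{\R^3_+}_{-\lambda_\alpha}(\cdot,y)$ as the explicit image difference.

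For the eigenvalue bounds, $\alpha<\alpha_c<0$, and I would use $0<\mathrm e^{-2ay_3}\le 1$.
\begin{itemize}
\item Since the exponential term is positive, $a/(4\pi)<-\alpha$, so $a<-4\pi\alpha$ and $-\lambda_\alpha>-16\pi^2\alpha^2$.
\item Since the exponential term is at most $1/(8\pi y_3)$, $a/(4\pi)\ge-\alpha-1/(8\pi y_3)$, i.e.\ $a\ge-4\pi\alpha-1/(2y_3)>0$ because $\alpha<-1/(8\pi y_3)$. Squaring the positive quantities gives $-\lambda_\alpha\le-(4\pi\alpha+1/(2y_3))^2$.
\end{itemize}

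For the threshold statement, $\psi_0=G_0(\cdot,y)=\frac{1}{4\pi}(|x-y|^{-1}-|x-y^*|^{-1})$. It is harmonic off $y$, vanishes on $x_3=0$, and its expansion at $y$ is $\frac{1}{4\pi|x-y|}-\frac{1}{8\pi y_3}+O(|x-y|)$, which is the boundary condition with $\alpha=\alpha_c$. This is the main step. I need $\psi_0\in L^2(\R^3_+)$: it is $L^2$ near $y$, and at infinity it behaves as a dipole, $|\psi_0(x)|\lesssim y_3 x_3/|x|^3\le C|x|^{-2}$, and $|x|^{-4}$ is integrable at infinity in $\R^3$.

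I then need to show $\psi_0\in\Dom(A^{\R^3_+}_{\alpha_c,y})$. Take any $z<0$, $q=1$ and $u_z=\psi_0-G_z^y$. Then $u_z$ is $L^2$, has no singularity at $y$ (the singular parts cancel), vanishes on the boundary, is smooth, and $\Delta u_z=-zG_z^y\in L^2$. Hence $u_z\in\Dom(A_D)$, with $u_z(y)=\alpha_c-M_y(z)$, which is exactly \eqref{eq:operator-domain-3d}. The action is $(A_{\alpha_c}-z)\psi_0=(A_D-z)u_z=-z\psi_0$, so $A_{\alpha_c}\psi_0=0$. Since $0\in\sigma_{\mathrm{ess}}$ by Proposition~\ref{prop:essential-spectrum}, zero is a threshold eigenvalue.
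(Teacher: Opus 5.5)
Your proposal is correct and follows essentially the same route as the paper: the image formula for the Weyl function, reduction to Theorem~\ref{thm:critical-coupling-3d} for the critical coupling and the root $a_\alpha$, the bounds from $0<\mathrm e^{-2ay_3}\le 1$, the dipole decay of $G_0^{\R^3_+}(\cdot,y)$ for square integrability, and the domain check via $u_z=\psi_0-G_z^y$ with $(A_D-z)u_z=-z\psi_0$. There is one harmless sign slip, since in fact $\Delta u_z=zG_z^y$, i.e.\ $-\Delta u_z=-zG_z^y$. Also, to place $u_z$ in $H_0^1(\R^3_+)$ you should note explicitly that $\nabla u_z\in L^2$, which follows from the $O(|x|^{-3})$ decay of $\nabla\psi_0$ and the exponential decay of $\nabla G_z^y$.
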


\begin{proof}
For $a>0$, the image formula gives
\begin{equation*}
G_{-a^2}^{\R^3_+}(x,y)
=
\frac{e^{-a|x-y|}}{4\pi|x-y|}
-
\frac{e^{-a|x-y^*|}}{4\pi|x-y^*|}.
\end{equation*}
Taking $z=-a^2$ in \eqref{eq:halfspace-weyl-3d} gives the regularized diagonal value. The formula for the critical coupling and the characteristic equation follow immediately. Uniqueness of $a_\alpha$ follows from
Theorem~\ref{thm:critical-coupling-3d}. At the root,
\begin{equation*}
\frac{a_\alpha}{4\pi}
=-\alpha-\frac{e^{-2a_\alpha y_3}}{8\pi y_3}.
\end{equation*}
Using $0<e^{-2a_\alpha y_3}<1$ gives
\begin{equation*}
-4\pi\alpha-\frac{1}{2y_3}
\leq a_\alpha<-4\pi\alpha,
\end{equation*}
which is equivalent to the stated eigenvalue bounds.

At zero energy, expansion at infinity gives
\begin{equation*}
G_0^{\R^3_+}(x,y)
=
\frac{y_3x_3}{2\pi|x|^3}+O(|x|^{-3}),
\qquad |x|\to\infty.
\end{equation*}
Thus the monopole term cancels and
$G_0^{\R^3_+}(\cdot,y)\in L^2(\R^3_+)$. To verify the
operator-domain condition, fix $z<0$ and set
$u_z:=G_0^{\R^3_+}(\cdot,y)-G_z^{\R^3_+}(\cdot,y)$. Then
\begin{equation*}
 u_z=(A_D-z)^{-1}\bigl(-zG_0^{\R^3_+}(\cdot,y)\bigr)
 \in\operatorname{dom}(A_D),
\end{equation*}
and the diagonal expansions give
\begin{equation*}
 u_z(y)=\alpha_c^{\R^3_+}(y)-M_y^{\R^3_+}(z).
\end{equation*}
The domain characterization~\eqref{eq:operator-domain-3d}
therefore implies
\begin{equation*}
G_0^{\R^3_+}(\cdot,y)
\in\operatorname{dom}
\bigl(A_{\alpha_c^{\R^3_+}(y),y}^{\R^3_+}\bigr),
\qquad
A_{\alpha_c^{\R^3_+}(y),y}^{\R^3_+}
G_0^{\R^3_+}(\cdot,y)=0.
\end{equation*}
Since the essential spectrum starts at zero, this is a threshold eigenvalue.
\end{proof}

\subsubsection{The two-dimensional half-plane}

Let
\begin{equation*}
\mathbb H:=\R^2_+
=\{x=(x_1,x_2)\in\R^2:x_2>0\},
\qquad
 y=(y_1,y_2)\in\mathbb H,
\end{equation*}
and set $\bar y:=(y_1,-y_2)$.
For $z\in\C\setminus[0,\infty)$, the image formula yields
\begin{equation}\label{eq:halfplane-weyl-2d}
M_y^{\mathbb H}(z)
=
-\frac{1}{2\pi}
\left(\log\frac{\kappa(z)}{2}+\gamma_{\mathrm E}\right)
-\frac{1}{2\pi}K_0\bigl(2\kappa(z)y_2\bigr).
\end{equation}

\begin{proposition}
\label{prop:halfplane-dirichlet}
For the one-centre Dirichlet point interaction in $\mathbb H$,
\begin{equation*}
\alpha_c^{(2),\mathbb H}(y)=\frac{1}{2\pi}\log(2y_2).
\end{equation*}
If $\alpha<\alpha_c^{(2),\mathbb H}(y)$, there is a unique negative
eigenvalue
\begin{equation*}
-\lambda_\alpha=-a_\alpha^2,
\end{equation*}
where $a_\alpha>0$ is the unique solution of
\begin{equation*}
\alpha+\frac{\gamma_{\mathrm E}+\log(a/2)}{2\pi}
+\frac{1}{2\pi}K_0(2ay_2)=0.
\end{equation*}
Its eigenspace is spanned by $G_{-a_\alpha^2}^{\mathbb H}(\cdot,y)$.
At $\alpha=\alpha_c^{(2),\mathbb H}(y)$, the critical state is a
$p$-wave threshold resonance; in particular, zero is not an eigenvalue.
\end{proposition}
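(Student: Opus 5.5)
The plan mirrors the proof of Proposition~\ref{prop:halfspace-dirichlet}, with the logarithmic renormalization replacing the Coulomb one.

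\textbf{Negative spectrum and critical coupling.} First, the image formula gives, for $a>0$,
\begin{equation*}
G_{-a^2}^{\mathbb H}(x,y)=\frac{1}{2\pi}K_0(a|x-y|)-\frac{1}{2\pi}K_0(a|x-\bar y|).
\end{equation*}
Its regularized diagonal value is \eqref{eq:halfplane-weyl-2d} with $\kappa=a$. Hence the eigenvalue equation \eqref{eq:eigenvalue-equation-2d} is exactly the stated scalar equation. Uniqueness of the root, simplicity of the eigenvalue and the form of the eigenspace follow from Theorem~\ref{thm:critical-coupling-2d}.

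For the critical coupling, I will compute $M_0^{(2)}$ using the small-argument expansion $K_0(\rho)=-\log(\rho/2)-\gamma_{\mathrm E}+o(1)$. This gives
\begin{equation*}
-\frac{1}{2\pi}\Bigl(\log\frac a2+\gamma_{\mathrm E}\Bigr)-\frac{1}{2\pi}\Bigl(-\log(ay_2)-\gamma_{\mathrm E}\Bigr)+o(1)=\frac{1}{2\pi}\log(2y_2)+o(1).
\end{equation*}
Both the $\log a$ terms and the $\gamma_{\mathrm E}$ terms cancel as $a\downarrow0$. Theorem~\ref{thm:critical-coupling-2d}(iv) then gives $\alpha_c^{(2),\mathbb H}(y)=\frac{1}{2\pi}\log(2y_2)$.

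\textbf{Critical state.} At threshold, $\psi_0=G_0^{\mathbb H}(\cdot,y)=\frac{1}{2\pi}\log\frac{|x-\bar y|}{|x-y|}$.
\begin{itemize}
\item \emph{Local condition.} Near $y$ we have $\psi_0=-\frac1{2\pi}\log|x-y|+\frac1{2\pi}\log(2y_2)+o(1)$. Since $|y-\bar y|=2y_2$, this is the local condition with coupling $\alpha_c$.
\item \emph{Far field.} As $|x|\to\infty$, expand $\log|x-w|=\log|x|-\frac{x\cdot w}{|x|^2}+O(|x|^{-2})$. This gives
\begin{equation*}
\psi_0(x)=\frac{1}{2\pi}\,\frac{x\cdot(y-\bar y)}{|x|^2}+O(|x|^{-2})=\frac{y_2}{\pi}\,\frac{x_2}{|x|^2}+O(|x|^{-2}).
\end{equation*}
So $b=0$ and $a=(0,y_2/\pi)\neq0$, which is the $p$-wave form in Definition~\ref{def:critical-zero-energy-state}.
\end{itemize}

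\textbf{Membership in the weighted spaces.} The expansion gives $|\psi_0|\sim c\,x_2/|x|^2$, which is of order $|x|^{-1}$ on sectors. In polar coordinates, $\int^\infty r^{-2}\langle r\rangle^{-2s}r\,dr$ converges if and only if $s>0$. Hence $\psi_0\in L^2_{-s}$ for every $s>s_2=1$, and also for every $s>0$. The integral diverges logarithmically at $s=0$, so $\psi_0\notin L^2(\mathbb H)$. To make this rigorous, I will bound $|\psi_0|\ge c\,r^{-1}$ on the sector $\{\pi/4<\theta<3\pi/4,\ r\geq R\}$ for some $R>0$. The local integrability near $y$ is clear, since the singularity is logarithmic. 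Thus $\psi_0$ is a threshold resonance and not an eigenfunction.

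\textbf{Zero is not an eigenvalue.} The critical state alone is not enough here. Any zero-energy eigenfunction $u$ of the critical operator would be $L^2$, harmonic in $\mathbb H\setminus\{y\}$, with Dirichlet data and the point condition. Two cases arise.
\begin{itemize}
\item If its charge $q$ is zero, then $u\in\Dom(A_D)$ and $A_Du=0$. This forces $u=0$, because $A_D$ has no eigenvalues; it has purely absolutely continuous spectrum by Fourier/sine transform.
\item If $q\neq0$, consider $u-q\psi_0$. It is harmonic in $\mathbb H$ (the removable singularity is cancelled) and vanishes on $\partial\mathbb H$. It is also in $L^2+L^2_{-s}$ for small $s$, so it grows slowly.
\end{itemize}

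\textbf{Main obstacle.} The delicate step is the Liouville argument in the case $q\neq0$. After odd reflection across $\partial\mathbb H$, the function $u-q\psi_0$ becomes harmonic on $\mathbb R^2$ and lies in $L^2_{-s}$ for every $s>0$. I will use the mean value property over discs to show it is sublinear, and hence constant; being odd, the constant is zero. This gives $u=q\psi_0\notin L^2$, a contradiction. Therefore zero is not an eigenvalue.
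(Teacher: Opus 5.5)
Your proposal is correct and follows the paper's proof for everything the paper argues explicitly: the image formula, the cancellation of the $\log a$ and $\gamma_{\mathrm E}$ terms through the small-argument expansion of $K_0$, the reduction of uniqueness, simplicity and the eigenspace to Theorem~\ref{thm:critical-coupling-2d}, and the local and dipole far-field expansions of $G_0^{\mathbb H}(\cdot,y)$ giving membership in $\bigcap_{s>0}L^2_{-s}(\mathbb H)\setminus L^2(\mathbb H)$. The one genuine addition is your odd-reflection/Liouville step. The paper reads off ``zero is not an eigenvalue'' directly from $\psi_0\notin L^2(\mathbb H)$, implicitly assuming that every zero-energy state is a multiple of $\psi_0$; your argument (writing $u-q\psi_0=u_z+q(G_z^y-\psi_0)$, removing the singularity, reflecting, and using the weighted mean-value bound) actually proves that the kernel of the critical operator is trivial, and so closes a step the paper leaves implicit.
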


\begin{proof}
For $a>0$, the image method gives
\begin{equation*}
G_{-a^2}^{\mathbb H}(x,y)
=
\frac{1}{2\pi}K_0(a|x-y|)
-
\frac{1}{2\pi}K_0(a|x-\bar y|).
\end{equation*}
Taking $z=-a^2$ in \eqref{eq:halfplane-weyl-2d} gives the regularized diagonal value.
The small-argument expansion
\begin{equation*}
K_0(t)=-\log\frac{t}{2}-\gamma_{\mathrm E}
+O\bigl(t^2|\log t|\bigr),
\qquad t\downarrow0,
\end{equation*}
follows from \cite[(10.31.2)]{DLMF}. It gives the stated critical
coupling. The remaining assertions about the negative eigenvalue follow from
Theorem~\ref{thm:critical-coupling-2d}.

At zero energy,
\begin{equation*}
G_0^{\mathbb H}(x,y)
=
\frac{1}{2\pi}\log\frac{|x-\bar y|}{|x-y|}.
\end{equation*}
As $x\to y$ one has
\begin{equation*}
|x-\bar y|=2y_2+o(1),
\end{equation*}
and therefore
\begin{equation*}
G_0^{\mathbb H}(x,y)
=
-\frac{1}{2\pi}\log|x-y|
+
\frac{1}{2\pi}\log(2y_2)
+
o(1).
\end{equation*}
The coefficient of the logarithmic singularity is $q=1$, and the regular constant is
\begin{equation*}
\frac{1}{2\pi}\log(2y_2)
=
\alpha_c^{(2),\mathbb H}(y).
\end{equation*}
Hence $G_0^{\mathbb H}(\cdot,y)$ satisfies the local point-interaction condition at the centre with the critical coupling.
Expanding the logarithms at infinity gives
\begin{equation*}
G_0^{\mathbb H}(x,y)
=
\frac{y_2}{\pi}\frac{x_2}{|x|^2}+O(|x|^{-2}),
\qquad |x|\to\infty.
\end{equation*}
The leading dipole coefficient is nonzero. Hence
\begin{equation*}
G_0^{\mathbb H}(\cdot,y)
\in
\bigcap_{s>0}L^2_{-s}(\mathbb H)\setminus L^2(\mathbb H),
\qquad
G_0^{\mathbb H}(\cdot,y)\in\bigcap_{p>2}L^p(\mathbb H),
\end{equation*}
which is precisely the $p$-wave alternative in
Definition~\ref{def:critical-zero-energy-state}.
\end{proof}

\subsection{Planar sector models}\label{subsec:planar-sector-models}

We next consider the planar wedge, or sector. This is a classical separable
geometry in mathematical physics. It is less elementary than the half-plane,
because the vertex produces a conical singularity and the opening angle enters
all threshold exponents. Nevertheless, the model is still explicit and provides
a useful bridge between flat and conical geometries. For general treatment of conical geometries, see for example \cite{Cessenat96}, Section 5.3.

For $0<\beta<2\pi$, set
\begin{equation*}
\mathcal S_\beta
:=
\{re^{i\theta}:r>0,\ 0<\theta<\beta\}\subset\C,
\qquad
q:=\frac{\pi}{\beta}.
\end{equation*}
After a rigid motion, $\mathcal S_\beta$ is a special Lipschitz domain. The
unperturbed Dirichlet Laplacian on $\mathcal S_\beta$ is always understood as
the Friedrichs realization of $-\Delta$ initially defined on
$C_c^\infty(\mathcal S_\beta)$. Equivalently, it is the self-adjoint operator
associated with the closed quadratic form
\begin{equation*}
\mathfrak a_\beta[u]
=
\int_{\mathcal S_\beta}|\nabla u|^2\,\dd x,
\qquad
\operatorname{dom}\mathfrak a_\beta=H_0^1(\mathcal S_\beta).
\end{equation*}
Thus the Dirichlet condition is imposed on the two sides of the sector, and no
additional boundary condition is imposed at the vertex. This is
important when $\beta>\pi$, where the sector is nonconvex (see \cite{Posilicano2013}).

The unperturbed operator has no discrete spectrum. Indeed, the angular functions
\begin{equation*}
e_n(\theta):=\sqrt{\frac{2}{\beta}}\sin(nq\theta),
\qquad n\in\N,
\end{equation*}
diagonalize the Dirichlet angular operator on $(0,\beta)$, and the corresponding
radial operators are
\begin{equation*}
-
\frac{\dd^2}{\dd r^2}
-
\frac1r\frac{\dd}{\dd r}
+
\frac{n^2q^2}{r^2}
\end{equation*}
in $L^2((0,+\infty),r\,\dd r)$. The Hankel transform of order $nq$,
\begin{equation*}
({\mathcal H}_{nq}f)(\xi)
:=
\int_0^\infty J_{nq}(r\xi)f(r)r\,\dd r,
\end{equation*}
where $J_\nu$ is the Bessel function of the first kind, diagonalizes this
radial operator as multiplication by $\xi^2$. Hence
\begin{equation*}
\sigma(-\Delta_{\mathcal S_\beta,D})
=
\sigma_{\mathrm{ac}}(-\Delta_{\mathcal S_\beta,D})
=[0,+\infty),
\qquad
\sigma_{\mathrm{pp}}(-\Delta_{\mathcal S_\beta,D})
=
\sigma_{\mathrm{sc}}(-\Delta_{\mathcal S_\beta,D})
=\varnothing.
\end{equation*}
For conical domains the purely absolutely continuous spectrum has been established in \cite{Jones1953}; the modern scattering
framework for domains with asymptotic cones is developed in
\cite{Constantin1981,Yafaev2000}. See also \cite{Cessenat96, Ramm2017}.\\
Now suppose that the wedge carries a point interaction at $y$.
Let $y=\rho e^{i\phi}\in\mathcal S_\beta$, with $\rho>0$ and
$0<\phi<\beta$. Then,
\begin{equation*}
F(z):=z^q,
\qquad 0<\arg z<\beta,
\end{equation*}
maps $\mathcal S_\beta$ conformally onto the upper half-plane. By conformal
invariance of the two-dimensional Dirichlet Green function one has
\begin{equation}
\label{eq:sector-zero-green-exact-model}
G_0^{\mathcal S_\beta}(z,y)
=
\frac{1}{2\pi}
\log
\left|
\frac{z^q-\overline{y^q}}
     {z^q-y^q}
\right|.
\end{equation}
Near $z=y$,
\begin{equation*}
z^q-y^q=qy^{q-1}(z-y)+o(|z-y|),
\end{equation*}
whereas
\begin{equation*}
|y^q-\overline{y^q}|
=2\rho^q\sin(q\phi).
\end{equation*}
Therefore
\begin{equation*}
G_0^{\mathcal S_\beta}(z,y)
=
-
\frac{1}{2\pi}\log|z-y|
+
\frac{1}{2\pi}
\log\left(
\frac{2\rho\sin(q\phi)}{q}
\right)
+o(1),
\qquad z\to y.
\end{equation*}
Consequently,
\begin{equation}
\label{eq:sector-critical-coupling-exact-model}
\alpha_c^{(2),\mathcal S_\beta}(y)
=
\frac{1}{2\pi}
\log\left(
\frac{2\rho\sin(q\phi)}{q}
\right)
=
\frac{1}{2\pi}
\log\left(
\frac{2\beta}{\pi}\rho\sin\frac{\pi\phi}{\beta}
\right).
\end{equation}
For $\beta=\pi$, this gives the half-plane value
$\alpha_c^{(2),\mathbb H}(y)=(2\pi)^{-1}\log(2y_2)$.

The negative-energy Green function is obtained by separation of variables. For
$a>0$, put
\begin{equation*}
r_<:=\min\{r,\rho\},
\qquad
r_>:=\max\{r,\rho\}.
\end{equation*}
Then
\begin{equation}
\label{eq:sector-negative-green-expansion}
G_{-a^2}^{\mathcal S_\beta}(r,\theta;\rho,\phi)
=
\frac{2}{\beta}
\sum_{n=1}^{\infty}
\sin(nq\theta)\sin(nq\phi)
I_{nq}(ar_<)K_{nq}(ar_>),
\end{equation}
where $I_\nu$ and $K_\nu$ are the modified Bessel functions of first and
second kind. 

For fixed angular mode $n$, the two independent radial solutions are
$I_{nq}(ar)$ and $K_{nq}(ar)$. The first one is regular at the vertex, whereas
the second one is singular there. More precisely, using the standard small-argument
asymptotics of the modified Bessel functions, see
\cite[(10.30.1)--(10.30.2)]{DLMF},
\begin{equation*}
I_{nq}(ar)\sim C r^{nq},
\qquad
K_{nq}(ar)\sim C r^{-nq},
\qquad r\downarrow0,
\end{equation*}
with $nq>0$. 
At infinity, $K_{nq}$ is selected by its exponential decay; see
\cite[(10.40.1)--(10.40.2)]{DLMF}. This explains the structure
$I_{nq}(ar_<)K_{nq}(ar_>)$ in the Green function expansion. For fixed $a>0$,
the series converges locally uniformly away from the pole, so the local regular
part at $y$ is obtained term by term.

We summarize with the following Proposition.

\begin{proposition}
\label{prop:sector-negative-spectrum}
Let $\mathcal S_\beta$ and $y=\rho e^{i\phi}$ be as above. Then the critical
coupling is given by \eqref{eq:sector-critical-coupling-exact-model}. If
$\alpha<\alpha_c^{(2),\mathcal S_\beta}(y)$, the operator
$A_{\alpha,y}^{(2),\mathcal S_\beta}$ has exactly one negative eigenvalue
$-a_\alpha^2$, where $a_\alpha>0$ is the unique solution of
\begin{equation*}
\alpha=M_y^{\mathcal S_\beta}(-a_\alpha^2).
\end{equation*}
The corresponding eigenspace is spanned by
$G_{-a_\alpha^2}^{\mathcal S_\beta}(\cdot,y)$. If
$\alpha\geq\alpha_c^{(2),\mathcal S_\beta}(y)$, then
$A_{\alpha,y}^{(2),\mathcal S_\beta}$ has no negative eigenvalue.
\end{proposition}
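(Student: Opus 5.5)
Since $\mathcal S_\beta$ is, after a rigid motion, a special Lipschitz domain, Proposition~\ref{prop:sector-negative-spectrum} is essentially an application of Theorem~\ref{thm:critical-coupling-2d}. The only genuine input is the identification of the abstract threshold limit $M_0^{(2)}$ with the explicit value \eqref{eq:sector-critical-coupling-exact-model}. Once that is done, the characteristic equation, the simplicity of the eigenvalue, the eigenspace $\operatorname{span}\{G_{-a_\alpha^2}^{\mathcal S_\beta}(\cdot,y)\}$, and the absence of negative eigenvalues for $\alpha\ge\alpha_c$ all follow from parts (i), (ii) and (iv) of that theorem with $\lambda=a_\alpha^2$.

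\emph{Step 1 (monotone convergence of Green functions).} I will show that $G_{-a^2}^{\mathcal S_\beta}(x,y)$ increases to $G_0^{\mathcal S_\beta}(x,y)$ as $a\downarrow0$, pointwise off the pole. Monotonicity in the spectral parameter follows from the maximum-principle argument of Theorem~\ref{thm:domain_monotonicity_eigenvalue}(i). For $0<a<b$, the difference $u=G_{-a^2}-G_{-b^2}$ lies in $H^1$, has zero trace, and satisfies $(-\Delta+b^2)u=(b^2-a^2)G_{-a^2}\ge0$; testing with $u^-$ gives $u\ge0$. The limit is a nonnegative function that is harmonic off $y$, has the log singularity at $y$, and vanishes on $\partial\mathcal S_\beta$. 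It is dominated by the function \eqref{eq:sector-zero-green-exact-model}, by the same comparison applied at zero energy, which is legitimate because $G_0^{\mathcal S_\beta}$ is a positive supersolution. To identify the limit with the minimal (Green) function \eqref{eq:sector-zero-green-exact-model}, I will use the series \eqref{eq:sector-negative-green-expansion}. As $a\downarrow0$, each term satisfies $I_{\nu}(ar_<)K_\nu(ar_>)\to\frac{1}{2\nu}(r_</r_>)^{\nu}$ with $\nu=nq$. The resulting series $\frac{1}{\pi}\sum_n\frac1n\sin(nq\theta)\sin(nq\phi)(r_</r_>)^{nq}$ is exactly the Fourier expansion of $\frac{1}{2\pi}\log|(z^q-\overline{y^q})/(z^q-y^q)|$, using $\log|1-w|=-\operatorname{Re}\sum w^n/n$. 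Dominated convergence (the terms are monotone in $a$ and summable for $r\neq\rho$) justifies passing to the limit.

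\emph{Step 2 (regularized diagonal values).} The key point is that the difference $G_{-a^2}^{\mathcal S_\beta}(\cdot,y)-\frac{1}{2\pi}K_0(a|\cdot-y|)=-h_{-a^2}$ is regular near $y$. Combining this with Step~1 and the expansion of $K_0$, I expect
\[
M^{(2)}_{a^2}=\lim_{x\to y}\Bigl(G_{-a^2}^{\mathcal S_\beta}(x,y)+\tfrac{1}{2\pi}\log|x-y|\Bigr)\longrightarrow\lim_{x\to y}\Bigl(G_0^{\mathcal S_\beta}(x,y)+\tfrac{1}{2\pi}\log|x-y|\Bigr),
\]
and the right-hand side was computed above as $\frac{1}{2\pi}\log(2\rho\sin(q\phi)/q)$.

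\emph{Main obstacle.} Exchanging the limits $a\downarrow0$ and $x\to y$ is the hard part. I would handle it as follows. The function $w_a:=G_0-G_{-a^2}\ge0$ solves $-\Delta w_a=-a^2G_{-a^2}$, a bounded right-hand side near $y$ after the mild log singularity is accounted for, and it is uniformly $L^\infty_{\rm loc}$-small away from $y$ by Step~1. Interior elliptic estimates on a fixed ball around $y$ then give $w_a(y)\to0$, using that $a^2\|G_{-a^2}\|_{L^1(B)}\to0$ and that $w_a$ is continuous at $y$. Alternatively, monotonicity of $M^{(2)}_\lambda$ together with the upper bound $M^{(2)}_\lambda\le$ (regularized value of $G_0$) gives one inequality directly, and the lower bound comes from evaluating $w_a$ on a small circle around $y$ and applying the mean-value-type estimate.
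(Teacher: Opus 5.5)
Your proposal is correct and follows the paper's route. The paper also reduces everything to Theorem~\ref{thm:critical-coupling-2d} and reads off the critical coupling from the local expansion of the explicit $G_0^{\mathcal S_\beta}(\cdot,y)$, but it simply takes for granted the identity $\lim_{a\downarrow0}M_y^{\mathcal S_\beta}(-a^2)=\lim_{x\to y}\bigl(G_0^{\mathcal S_\beta}(x,y)+\frac{1}{2\pi}\log|x-y|\bigr)$, which your Steps 1--2 actually justify. In that justification, note that $-\Delta w_a=+a^2G_{-a^2}\geq0$, so the superharmonic mean-value inequality only gives the trivial direction, and in two dimensions an $L^1$ bound on the source does not control $w_a(y)$; use instead $w_a(y)=\frac{1}{2\pi\varepsilon}\int_{\partial B_\varepsilon(y)}w_a\,\dd\sigma+\frac{a^2}{2\pi}\int_{B_\varepsilon(y)}\log\frac{\varepsilon}{|x-y|}\,G_{-a^2}(x,y)\,\dd x$, where the first term tends to $0$ by your Step~1 (Dini on the circle) and the second is $O(a^2)$ because $0<G_{-a^2}\leq G_0$.
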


\begin{proof}
The formula for the critical coupling follows from the local expansion of
$G_0^{\mathcal S_\beta}(\cdot,y)$ obtained above. The remaining assertions are
direct consequences of the general two-dimensional one-centre criterion,
Theorem~\ref{thm:critical-coupling-2d}. Indeed, the negative spectrum is governed by
\begin{equation*}
\alpha=M_y^{\mathcal S_\beta}(-a^2),
\qquad a>0,
\end{equation*}
where $a^2\mapsto M_y^{\mathcal S_\beta}(-a^2)$ is continuous, strictly
decreasing, tends to $\alpha_c^{(2),\mathcal S_\beta}(y)$ as $a\downarrow0$,
and tends to $-\infty$ as $a\to+\infty$. Thus there is exactly one negative
eigenvalue precisely for $\alpha<\alpha_c^{(2),\mathcal S_\beta}(y)$, and its
eigenspace is generated by $G_{-a_\alpha^2}^{\mathcal S_\beta}(\cdot,y)$.
\end{proof}

\subsection{Exterior ball models}

We next consider the exterior of a ball in dimensions three and two.  We write $P_n$ for the Legendre
polynomial of degree $n$, normalized by $P_n(1)=1$; see
\cite[Table~18.6.1]{DLMF}. Each addition formula and asymptotic relation used
below is cited at the point where it enters the argument.

The zero-energy Kelvin geometry common to the two models is illustrated in
Figure~\ref{fig:kelvin-exterior-ball}.

\begin{figure}[H]
\centering
\resizebox{0.52\linewidth}{!}{%
\begin{tikzpicture}[
  x=1.15cm,y=1.15cm,
  >=Latex,
  every node/.style={font=\small},
  point/.style={circle,fill=black,inner sep=1.55pt},
  dim/.style={<->,thin}
]
  \def\Rval{1.45}
  \def\rhoval{3.55}
  \pgfmathsetmacro{\ystarval}{\Rval*\Rval/\rhoval}

  \fill[gray!18] (0,0) circle (\Rval);
  \draw[thick] (0,0) circle (\Rval);
  \node[fill=gray!18,inner sep=1pt] at (0,0.72) {$\overline{B_R}$};
  \node[anchor=west,fill=white,inner sep=1.5pt] at (1.72,1.42)
    {$\Omega=\mathbb{R}^d\setminus\overline{B_R}$};

  \draw[thin] (-1.82,0) -- (4.08,0);
  \node[point] (O) at (0,0) {};
  \node[anchor=north east] at (-0.08,-0.08) {$O$};
  \node[point] (Ys) at (\ystarval,0) {};
  \node[anchor=south] at (\ystarval,0.14) {$y^*$};
  \node[point] (Y) at (\rhoval,0) {};
  \node[anchor=south west] at (\rhoval+0.02,0.08) {$y$};

  \draw[->] (0,0) -- node[pos=0.58,left,fill=gray!18,inner sep=1pt] {$R$} (135:\Rval);
  \draw[-{Latex[length=2.3mm]},densely dashed]
    (Y.north west) .. controls (3.08,1.03) and (1.48,1.03) .. (Ys.north east);
  \node[fill=white,inner sep=1.5pt] at (2.15,1.05) {Kelvin inversion};

  \draw[dim] (0,-0.54) -- (\ystarval,-0.54);
  \node[fill=white,inner sep=1pt] at (0.5*\ystarval,-0.54) {$R^2/\rho$};
  \draw[dim] (0,-0.98) -- (\rhoval,-0.98);
  \node[fill=white,inner sep=1pt] at (0.5*\rhoval,-0.98) {$\rho$};
  \node[fill=white,inner sep=1pt] at (1.78,-1.39)
    {$|Oy|\,|Oy^*|=R^2$};
\end{tikzpicture}
}
\caption{Planar cross-section of the Kelvin image construction for the exterior of a ball. The pole $y$ and its image $y^*$ lie on the same radial line and satisfy $|Oy|\,|Oy^*|=R^2$.}
\label{fig:kelvin-exterior-ball}
\end{figure}

\subsubsection{The exterior of a sphere}

Let
\begin{equation*}
\Omega_R:=\{x\in\R^3:|x|>R\},
\qquad
y=(0,0,\rho),
\qquad
\rho>R.
\end{equation*}
The choice of the positive $x_3$-axis is immaterial by rotational invariance.
Write $x=(r,\theta,\varphi)$ in spherical coordinates, where $\theta$ is
the angle between $x$ and $y$, and set
\begin{equation*}
r_<:=\min\{r,\rho\},
\qquad
r_>:=\max\{r,\rho\}.
\end{equation*}
We use the following normalization of the modified spherical Bessel
functions:
\begin{equation*}
i_n(t):=\sqrt{\frac{\pi}{2t}}\,I_{n+\frac12}(t),
\qquad
k_n(t):=\sqrt{\frac{2}{\pi t}}\,K_{n+\frac12}(t).
\end{equation*}
The first definition agrees with \cite[(10.47.7)]{DLMF}, whereas the second
is $2/\pi$ times the normalization in \cite[(10.47.9)]{DLMF}. This choice
removes an otherwise recurring factor $2/\pi$ from the Yukawa addition
formula.

\begin{proposition}
\label{prop:exterior-sphere}
For the one-centre Dirichlet point interaction in $\Omega_R$,
\begin{equation*}
\alpha_c^{\Omega_R}(y)=-\frac{R}{4\pi(\rho^2-R^2)}.
\end{equation*}
If $\alpha<\alpha_c^{\Omega_R}(y)$, there is a unique negative eigenvalue
\begin{equation*}
-\lambda_\alpha=-a_\alpha^2,
\end{equation*}
where $a_\alpha>0$ is the unique solution of
\begin{equation*}
\alpha+\frac{a}{4\pi}
+
\frac{a}{4\pi}
\sum_{n=0}^{\infty}(2n+1)
\frac{i_n(aR)}{k_n(aR)}k_n(a\rho)^2
=0.
\end{equation*}
Its eigenspace is spanned by $G_{-a_\alpha^2}^{\Omega_R}(\cdot,y)$.
At $\alpha=\alpha_c^{\Omega_R}(y)$, the critical state is a
three-dimensional monopole threshold resonance; in particular, zero is not an
eigenvalue.
\end{proposition}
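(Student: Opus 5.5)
The plan follows the pattern of Propositions~\ref{prop:halfspace-dirichlet} and \ref{prop:halfplane-dirichlet}. I would write the negative-energy Green function by partial waves, pass to zero energy with the Kelvin image, and then read off the critical coupling and the far-field behaviour.

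\emph{Negative-energy Green function.} For $z=-a^2$, I would expand the free Yukawa kernel with the addition formula:
\begin{equation*}
\frac{e^{-a|x-y|}}{4\pi|x-y|}=\frac{a}{4\pi}\sum_{n\ge0}(2n+1)\,i_n(ar_<)k_n(ar_>)P_n(\cos\theta).
\end{equation*}
The factor $2/\pi$ is absorbed by the chosen normalization of $k_n$. The corrector $h_{-a^2}^{\Omega_R}(\cdot,y)$ must be decaying, so it is a combination of $k_n(ar)P_n(\cos\theta)$. Matching the boundary trace at $r=R$, where $r_<=R$ and $r_>=\rho$, gives
\begin{equation*}
h_{-a^2}^{\Omega_R}(x,y)=\frac{a}{4\pi}\sum_n(2n+1)\frac{i_n(aR)}{k_n(aR)}k_n(a\rho)k_n(ar)P_n(\cos\theta).
\end{equation*}
This function lies in $H^1$, so by uniqueness in Theorem~\ref{thm:dirichlet-kernel}(i) it is the corrector. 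Evaluating at $x=y$ and inserting into \eqref{eq:weyl-function-3d} gives $M_y^{\Omega_R}(-a^2)$. The characteristic equation then follows from Theorem~\ref{thm:critical-coupling-3d}(i), and uniqueness of $a_\alpha$ and the eigenspace from Theorem~\ref{thm:critical-coupling-3d}(ii)--(iv). Convergence of the series at $r=\rho>R$ comes from the large-order asymptotics $i_n(t)k_n(t)\sim 1/((2n+1)t)$ together with the decay of $(R/\rho)^{2n}$; this justifies evaluating term by term.

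\emph{Critical coupling.} Rather than take $a\downarrow0$ in the series, I would use the Kelvin image
\begin{equation*}
G_0^{\Omega_R}(x,y)=\frac1{4\pi|x-y|}-\frac{R}{\rho}\frac{1}{4\pi|x-y^*|},\qquad y^*=\frac{R^2}{\rho^2}y.
\end{equation*}
Its regularized diagonal value is $-\frac{R}{4\pi\rho(\rho-R^2/\rho)}=-\frac{R}{4\pi(\rho^2-R^2)}$. To identify this with $M_0=\lim_{a\downarrow0}M_y^{\Omega_R}(-a^2)$, I would show that $h_{-a^2}(y,y)\to h_0(y,y)$. This can be done termwise, using the small-argument asymptotics
\begin{equation*}
a\,i_n(aR)k_n(a\rho)^2/k_n(aR)\to R^{2n+1}/\big((2n+1)\rho^{2n+2}\big),
\end{equation*}
with dominated convergence justified by uniform bounds of the form $C(R/\rho)^{2n}$. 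The resulting sum $\sum R^{2n+1}\rho^{-2n-2}/(4\pi)$ equals $\frac{R}{4\pi(\rho^2-R^2)}$, which confirms the value. Alternatively, I would use monotonicity of $G_{-\lambda}$ in $\lambda$ and monotone convergence to $G_0$.

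\emph{Threshold.} At infinity the image formula gives
\begin{equation*}
G_0^{\Omega_R}(x,y)=\frac{1-R/\rho}{4\pi|x|}+O(|x|^{-2}),\qquad 1-R/\rho\neq0.
\end{equation*}
So $G_0^{\Omega_R}\notin L^2$, while $G_0^{\Omega_R}\in L^2_{-s}$ for every $s>1/2=s_3$, because $\int^\infty r^{-2}r^{-2s}r^2\,dr<\infty$ exactly when $s>1/2$. This is the monopole resonance of Definition~\ref{def:critical-zero-energy-state}. To see that zero is not an eigenvalue, suppose a nonzero $L^2$ eigenfunction $u$ exists at $\alpha_c$. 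If its charge $q$ vanished, then $u\in\Dom(A_D)$ would be a zero eigenfunction of $A_D$, which is impossible. Otherwise $u-qG_0^{\Omega_R}$ is harmonic in $\Omega_R$, regular at $y$, and vanishes on the sphere. Liouville-type uniqueness for the exterior Dirichlet problem then forces $u=qG_0^{\Omega_R}+c\,(1-R/|x|)$. Using the boundary condition at $y$ (the constant must be $0$ because the regular value already matches $\alpha_c$) together with $L^2$ decay gives a contradiction.

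The main obstacle I expect is the uniform control of the Bessel ratios in $a$ and $n$, which is needed to justify the zero-energy limit termwise. The fallback is the monotone comparison with $G_0$.
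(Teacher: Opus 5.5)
Your proposal is correct and follows essentially the paper's route: the Yukawa addition theorem with mode-by-mode matching against the decaying solutions $k_n(ar)$ at $z=-a^2$, the Kelvin image $G_0^{\Omega_R}$ to obtain the critical coupling, and the tail $(1-R/\rho)/(4\pi|x|)$ for the monopole classification. Two of your steps make explicit points the paper leaves implicit. Your termwise zero-energy limit shows that the Kelvin regular value equals $\lim_{a\downarrow0}M_y^{\Omega_R}(-a^2)$; the domination you worried about holds uniformly in $a$, since $t\mapsto t^\nu K_\nu(t)$ is decreasing and $I_\nu(t)K_\nu(t)\le 1/(2\nu)$ for $\nu=n+\tfrac12$. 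Your charge-plus-Liouville argument shows that $G_0^{\Omega_R}\notin L^2$ really excludes a zero eigenvalue.
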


\begin{proof}
Because the pole lies on the symmetry axis, the harmonic correction is expanded
in zonal spherical harmonics,
\begin{equation*}
h(r,\theta)=\sum_{n=0}^{\infty}c_n(r)P_n(\cos\theta).
\end{equation*}
Substitution into $( -\Delta+a^2)h=0$ gives, for each $n$,
\begin{equation*}
r^2c_n''(r)+2rc_n'(r)-\bigl(a^2r^2+n(n+1)\bigr)c_n(r)=0.
\end{equation*}
Its independent solutions are $i_n(ar)$ and $k_n(ar)$. The large-argument
asymptotics \cite[(10.40.1)--(10.40.2)]{DLMF}, together with the definitions
above, show that only $k_n(ar)$ decays as $r\to\infty$.

With our normalization, the Yukawa addition theorem obtained from
\cite[(10.60.3)]{DLMF} reads
\begin{equation*}
\frac{e^{-a|x-y|}}{4\pi|x-y|}
=
\frac{a}{4\pi}
\sum_{n=0}^{\infty}(2n+1)P_n(\cos\theta)
 i_n(ar_<)k_n(ar_>),
\qquad a>0.
\end{equation*}
For fixed $a>0$ and $R<\rho$, the series converges locally uniformly for
$r>R$, away from the pole. Hence the boundary matching at $r=R$ and
the evaluation of the regular part at $x=y$ are justified term by term.
For $R<\rho$, matching the trace of each angular mode at $r=R$
therefore fixes the coefficient of the decaying solution and gives
\begin{equation*}
h_{-a^2}^{\Omega_R}(x,y)
=
\frac{a}{4\pi}
\sum_{n=0}^{\infty}(2n+1)P_n(\cos\theta)
\frac{i_n(aR)}{k_n(aR)}k_n(ar)k_n(a\rho).
\end{equation*}
Therefore
\begin{equation*}
G_{-a^2}^{\Omega_R}(x,y)
=
\frac{e^{-a|x-y|}}{4\pi|x-y|}
-
\frac{a}{4\pi}
\sum_{n=0}^{\infty}(2n+1)P_n(\cos\theta)
\frac{i_n(aR)}{k_n(aR)}k_n(ar)k_n(a\rho).
\end{equation*}
Since the correction is regular at the pole and $P_n(1)=1$
\cite[Table~18.6.1]{DLMF},
\begin{equation*}
h_{-a^2}^{\Omega_R}(y,y)
=
\frac{a}{4\pi}
\sum_{n=0}^{\infty}(2n+1)
\frac{i_n(aR)}{k_n(aR)}k_n(a\rho)^2.
\end{equation*}
The characteristic equation in the statement now follows from
$\alpha=M_y^{\Omega_R}(-a^2)$, and uniqueness follows from
Theorem~\ref{thm:critical-coupling-3d}.

At zero energy, set
\begin{equation*}
y^*:=\frac{R^2}{\rho^2}y.
\end{equation*}
For $|x|=R$, one has $|x-y|=(\rho/R)|x-y^*|$. Kelvin inversion therefore
gives
\begin{equation*}
G_0^{\Omega_R}(x,y)
=
\frac{1}{4\pi|x-y|}
-
\frac{R}{\rho}\frac{1}{4\pi|x-y^*|}.
\end{equation*}
Since $|y-y^*|=(\rho^2-R^2)/\rho$, its expansion at the pole is
\begin{equation*}
G_0^{\Omega_R}(x,y)
=
\frac{1}{4\pi|x-y|}
-
\frac{R}{4\pi(\rho^2-R^2)}
+O(|x-y|),
\end{equation*}
which gives the stated critical coupling. At infinity,
\begin{equation*}
G_0^{\Omega_R}(x,y)
=
\frac{1-R/\rho}{4\pi|x|}+O(|x|^{-2}).
\end{equation*}
The coefficient is nonzero; hence
\begin{equation*}
G_0^{\Omega_R}(\cdot,y)
\in
\bigcap_{s>1/2}L^2_{-s}(\Omega_R)\setminus L^2(\Omega_R).
\end{equation*}
Thus the critical state is a monopole threshold resonance.
\end{proof}

\subsubsection{The exterior of a disk}

Let
\begin{equation*}
\Omega_R:=\{x\in\R^2:|x|>R\},
\qquad
y=(\rho,\phi),
\qquad
\rho>R.
\end{equation*}
Write $x=(r,\theta)$ in polar coordinates and set
\begin{equation*}
r_<:=\min\{r,\rho\},
\qquad
r_>:=\max\{r,\rho\}.
\end{equation*}

\begin{proposition}
\label{prop:exterior-disk}
For the one-centre Dirichlet point interaction in $\Omega_R$,
\begin{equation*}
\alpha_c^{(2),\Omega_R}(y)
=
\frac{1}{2\pi}\log\frac{\rho^2-R^2}{R}.
\end{equation*}
If $\alpha<\alpha_c^{(2),\Omega_R}(y)$, there is a unique negative
eigenvalue
\begin{equation*}
-\lambda_\alpha=-a_\alpha^2,
\end{equation*}
where $a_\alpha>0$ is the unique solution of
\begin{equation*}
\alpha+\frac{\gamma_{\mathrm E}+\log(a/2)}{2\pi}
+
\frac{1}{2\pi}
\sum_{n\in\mathbb Z}
\frac{I_{|n|}(aR)}{K_{|n|}(aR)}K_{|n|}(a\rho)^2
=0.
\end{equation*}
Its eigenspace is spanned by $G_{-a_\alpha^2}^{\Omega_R}(\cdot,y)$.
At $\alpha=\alpha_c^{(2),\Omega_R}(y)$, the critical state is an
$s$-wave threshold resonance; in particular, zero is not an eigenvalue.
\end{proposition}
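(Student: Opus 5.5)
We follow the scheme of the exterior-sphere proof (Proposition~\ref{prop:exterior-sphere}), with the three-dimensional objects replaced by their planar counterparts.

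\emph{Step 1: the negative-energy harmonic correction.} Fix $a>0$. We expand $h_{-a^2}^{\Omega_R}(\cdot,y)$ in Fourier modes $e^{in(\theta-\phi)}$, $n\in\mathbb Z$. Each radial coefficient solves the modified Bessel equation of order $|n|$. The condition $h\in H^1(\Omega_R)$ selects the exponentially decaying solution $K_{|n|}(ar)$. The Graf addition theorem gives
\begin{equation*}
K_0(a|x-y|)=\sum_{n\in\mathbb Z}e^{in(\theta-\phi)}I_{|n|}(ar_<)K_{|n|}(ar_>),
\end{equation*}
which converges locally uniformly away from the pole. We match the trace mode by mode at $r=R<\rho$. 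This yields
\begin{equation*}
h_{-a^2}^{\Omega_R}(x,y)=\frac{1}{2\pi}\sum_{n\in\mathbb Z}e^{in(\theta-\phi)}\frac{I_{|n|}(aR)}{K_{|n|}(aR)}K_{|n|}(ar)K_{|n|}(a\rho).
\end{equation*}
Evaluating at $x=y$ term by term is justified by uniform convergence near $r=\rho$, since $I_n(aR)K_n(a\rho)/K_n(aR)\sim (R/\rho)^{n}$-type geometric decay. Inserting the diagonal value into \eqref{eq:negative-weyl-2d} gives the characteristic equation $\alpha=M_y^{\Omega_R}(-a^2)$ in the stated form. Existence and uniqueness of $a_\alpha$ for $\alpha<\alpha_c$, and the eigenspace, then follow from Theorem~\ref{thm:critical-coupling-2d}.

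\emph{Step 2: the critical coupling.} Theorem~\ref{thm:critical-coupling-2d} gives $\alpha_c=\lim_{a\downarrow0}M_y^{\Omega_R}(-a^2)$. Summing the small-$a$ asymptotics of the Bessel series is delicate because of the logarithmic cancellation in the $n=0$ mode, so we avoid it. Instead we identify $\alpha_c$ with the regularized diagonal value of the Kelvin Green function
\begin{equation*}
G_0^{\Omega_R}(x,y)=\frac{1}{2\pi}\log\frac{\rho|x-y^*|}{R|x-y|},\qquad y^*=\frac{R^2}{\rho^2}y,
\end{equation*}
whose regularized value is $\frac{1}{2\pi}\log\frac{\rho^2-R^2}{R}$. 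This value is already an upper bound for $M_\lambda^{(2)}$ by the proof of Theorem~\ref{thm:critical-coupling-2d}.

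For the matching lower bound, compare with the bounded annuli $D_L=\{R<|x|<L\}\subset\Omega_R$ via Theorem~\ref{thm:domain_monotonicity_eigenvalue}. For each fixed $L$ this gives $M_\lambda^{(2)}\ge M_y^{D_L}(-\lambda)$. Since $0\in\rho$ for bounded domains, letting $\lambda\downarrow0$ gives $\alpha_c^{\Omega_R}\ge$ the regularized value of $G_0^{D_L}$. Finally, let $L\to\infty$: $G_0^{D_L}$ increases to $G_0^{\Omega_R}$, and the convergence of regular parts follows from the maximum principle, since the difference is harmonic and bounded by $\sup_{|x|=L}G_0^{\Omega_R}\to0$. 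This squeeze is the main obstacle, and it is where the planar logarithmic cancellation is handled. As a fallback, one can instead use the explicit small-$a$ limit of the $n=0$ term, $I_0(aR)K_0(a\rho)^2/K_0(aR)$, combined with the $a\log a$ expansions.

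\emph{Step 3: the threshold state.} The critical state is $\psi_0=G_0^{\Omega_R}(\cdot,y)$. By Step~2 it satisfies the local condition at $y$ with coupling $\alpha_c$. As $|x|\to\infty$, $|x-y^*|/|x-y|\to1$, so
\begin{equation*}
\psi_0(x)=\frac{1}{2\pi}\log\frac{\rho}{R}+\frac{a\cdot x}{|x|^2}+O(|x|^{-2}),
\end{equation*}
with $b=\frac{1}{2\pi}\log(\rho/R)\neq0$. Thus $\psi_0\notin L^2(\Omega_R)$. On the other hand, $\psi_0\in L^2_{-s}$ for all $s>1=s_2$, since $\int\langle x\rangle^{-2s}\,dx<\infty$. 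This is the $s$-wave case of Definition~\ref{def:critical-zero-energy-state}.

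Zero is not an eigenvalue. Any zero-energy eigenfunction of the critical operator would, by the domain description \eqref{eq:operator-domain-2d}, be a multiple of $\psi_0$ plus a Dirichlet-harmonic $L^2$ function. Dirichlet-harmonic $L^2$ functions on $\Omega_R$ vanish, and $\psi_0\notin L^2$, so no such eigenfunction exists.
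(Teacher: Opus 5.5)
Steps 1 and 3 of your proposal follow the paper's proof. Step 2 takes a genuinely different route, but its lower bound rests on a false claim.

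\textbf{The gap in Step 2.} The paper computes $\lim_{a\downarrow0}M_y^{\Omega_R}(-a^2)$ directly from the Bessel series:
\begin{itemize}
\item the $n=0$ mode gives $L_a+\log R-2\log\rho+o(1)$;
\item each mode $n\neq0$ tends to $p^{2n}/(2n)$, with a summable majorant $Cp^{2n}/n$;
\item the modes $n\neq0$ therefore sum to $-\log(1-p^2)$, and the divergent terms cancel.
\end{itemize}
You instead squeeze between the Kelvin function and annuli $D_L$. There you assert that
\[
0\le G_0^{\Omega_R}(\cdot,y)-G_0^{D_L}(\cdot,y)\le\sup_{|x|=L}G_0^{\Omega_R}(x,y)\to0 .
\]
This is false. $G_0^{\Omega_R}(x,y)\to\frac{1}{2\pi}\log\frac{\rho}{R}>0$ as $|x|\to\infty$; this is precisely the constant $b\neq0$ you use in Step 3. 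A constant comparison from the maximum principle therefore only bounds the difference of regular parts by roughly $\frac{1}{2\pi}\log\frac{\rho}{R}$. The squeeze does not close, and this is exactly the planar effect you say the squeeze handles.

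\textbf{How to repair it.} Set $v_L:=G_0^{\Omega_R}(\cdot,y)-G_0^{D_L}(\cdot,y)$. It is harmonic in $D_L$, since the logarithmic singularities cancel. It vanishes on $|x|=R$. On $|x|=L$ it is at most $C_L:=\max_{|x|=L}G_0^{\Omega_R}(x,y)$, which stays bounded in $L$. Compare $v_L$ with $C_L$ times the harmonic measure $\log(|x|/R)/\log(L/R)$ of the outer circle. This gives
\[
0\le v_L(y)\le C_L\,\frac{\log(\rho/R)}{\log(L/R)}\longrightarrow0 .
\]
The $1/\log L$ decay is the genuinely two-dimensional mechanism.

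With this barrier your squeeze proves $\alpha_c=\frac{1}{2\pi}\log\frac{\rho^2-R^2}{R}$ without any Bessel asymptotics. It also shows, for general planar exterior domains, that the critical coupling equals the regularized diagonal value of the bounded zero-energy Green function. The paper uses that fact implicitly in its threshold theorems. The paper's series computation has a different advantage: it exhibits the cancellation mode by mode, and that is reused in the low-energy expansion of $D_y(\kappa)$.

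Your fallback, as stated, is incomplete. The modes $n\neq0$ do not vanish as $a\downarrow0$; together they contribute $-\log(1-p^2)$, which is needed for the correct constant.

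\textbf{A smaller point in Step 3.} Suppose $A_{\alpha_c,y}u=0$ with charge $q\neq0$. Then $u-q\psi_0$ is not in $L^2$, so you cannot call it a Dirichlet-harmonic $L^2$ function. It is harmonic in $\Omega_R$, has zero trace, and is bounded at infinity, because an $L^2$ harmonic function is bounded there by the mean value property. It vanishes by uniqueness of bounded solutions of the planar exterior Dirichlet problem, or equivalently by the Fourier-mode expansion. Only after that does $u=q\psi_0\in L^2$ force $q=0$.
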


\begin{proof}
Expand the harmonic correction in angular Fourier modes,
\begin{equation*}
h(r,\theta)=\sum_{n\in\mathbb Z}c_n(r)e^{in(\theta-\phi)}.
\end{equation*}
The equation $( -\Delta+a^2)h=0$ reduces to
\begin{equation*}
r^2c_n''(r)+rc_n'(r)-\bigl(a^2r^2+n^2\bigr)c_n(r)=0.
\end{equation*}
Its independent radial solutions are $I_{|n|}(ar)$ and
$K_{|n|}(ar)$. By \cite[(10.40.1)--(10.40.2)]{DLMF}, the exterior decay
condition selects $K_{|n|}(ar)$.

Neumann's addition theorem \cite[(10.44.3)]{DLMF}, specialized to order zero,
gives
\begin{equation*}
K_0(a|x-y|)
=
\sum_{n\in\mathbb Z}e^{in(\theta-\phi)}
I_{|n|}(ar_<)K_{|n|}(ar_>),
\qquad a>0.
\end{equation*}
For fixed $a>0$ and $R<\rho$, the series converges locally uniformly for
$r>R$, away from the pole. Thus the Fourier modes may be matched on
$\partial\Omega_R$, and the regular part may be evaluated at $x=y$ term
by term.
Matching each Fourier coefficient at $r=R<\rho$ gives
\begin{equation*}
h_{-a^2}^{\Omega_R}(x,y)
=
\frac{1}{2\pi}
\sum_{n\in\mathbb Z}e^{in(\theta-\phi)}
\frac{I_{|n|}(aR)}{K_{|n|}(aR)}
K_{|n|}(ar)K_{|n|}(a\rho).
\end{equation*}
Thus
\begin{equation*}
G_{-a^2}^{\Omega_R}(x,y)
=
\frac{1}{2\pi}K_0(a|x-y|)
-
\frac{1}{2\pi}
\sum_{n\in\mathbb Z}e^{in(\theta-\phi)}
\frac{I_{|n|}(aR)}{K_{|n|}(aR)}
K_{|n|}(ar)K_{|n|}(a\rho),
\end{equation*}
and hence
\begin{equation*}
h_{-a^2}^{\Omega_R}(y,y)
=
\frac{1}{2\pi}
\sum_{n\in\mathbb Z}
\frac{I_{|n|}(aR)}{K_{|n|}(aR)}K_{|n|}(a\rho)^2.
\end{equation*}
This gives the characteristic equation in the statement.

We show explicitly the cancellation that produces the finite threshold value.
Set
\begin{equation*}
p:=\frac{R}{\rho}\in(0,1),
\qquad
L_a:=-\log(a/2)-\gamma_{\mathrm E}.
\end{equation*}
For fixed $t>0$, the power series \cite[(10.25.2)]{DLMF} and
\cite[(10.31.2)]{DLMF} give
\begin{equation*}
I_0(at)=1+O(a^2),
\qquad
K_0(at)=L_a-\log t+O\bigl(a^2|\log a|\bigr).
\end{equation*}
Consequently, the zero angular mode satisfies
\begin{equation*}
\frac{I_0(aR)}{K_0(aR)}K_0(a\rho)^2
=
L_a+\log R-2\log\rho+o(1).
\end{equation*}
For every fixed $n\geq1$, the limiting forms
\cite[(10.30.1)--(10.30.2)]{DLMF} yield
\begin{equation*}
I_n(t)\sim\frac{t^n}{2^n n!},
\qquad
K_n(t)\sim2^{n-1}(n-1)!\,t^{-n},
\qquad t\downarrow0,
\end{equation*}
so that
\begin{equation*}
\frac{I_n(aR)}{K_n(aR)}K_n(a\rho)^2
\longrightarrow
\frac{p^{2n}}{2n}.
\end{equation*}
The large-order asymptotic forms \cite[(10.41.1)--(10.41.2)]{DLMF}
yield, uniformly for $a$ in a sufficiently small bounded interval, a summable
majorant of the form $Cp^{2n}/n$. Dominated convergence
therefore gives
\begin{equation*}
\sum_{n\in\mathbb Z\setminus\{0\}}
\frac{I_{|n|}(aR)}{K_{|n|}(aR)}K_{|n|}(a\rho)^2
\longrightarrow
\sum_{n=1}^{\infty}\frac{p^{2n}}{n}
=-\log(1-p^2).
\end{equation*}
It follows that
\begin{equation*}
h_{-a^2}^{\Omega_R}(y,y)
=
\frac{1}{2\pi}
\bigl(L_a+\log R-2\log\rho-\log(1-p^2)\bigr)+o(1).
\end{equation*}
Since
\begin{equation*}
M_y^{\Omega_R}(-a^2)
=\frac{L_a}{2\pi}-h_{-a^2}^{\Omega_R}(y,y),
\end{equation*}
the divergent terms cancel and
\begin{equation*}
\lim_{a\downarrow0}M_y^{\Omega_R}(-a^2)
=
\frac{1}{2\pi}\log\frac{\rho^2-R^2}{R}.
\end{equation*}

The zero-energy Green function gives the threshold state directly. With
\begin{equation*}
y^*:=\frac{R^2}{\rho^2}y,
\end{equation*}
one has $|x-y|=(\rho/R)|x-y^*|$ for $|x|=R$, and hence
\begin{equation*}
G_0^{\Omega_R}(x,y)
=
\frac{1}{2\pi}\log\frac{\rho|x-y^*|}{R|x-y|}.
\end{equation*}
Its expansion at the pole is
\begin{equation*}
G_0^{\Omega_R}(x,y)
=
-\frac{1}{2\pi}\log|x-y|
+
\frac{1}{2\pi}\log\frac{\rho^2-R^2}{R}
+O(|x-y|),
\end{equation*}
and its expansion at infinity is
\begin{equation*}
G_0^{\Omega_R}(x,y)
=
b_y+\frac{d_y\cdot x}{|x|^2}+O(|x|^{-2}),
\qquad |x|\to\infty,
\end{equation*}
where
\begin{equation*}
b_y:=\frac{1}{2\pi}\log\frac{\rho}{R}>0,
\qquad
d_y:=\frac{y-y^*}{2\pi}.
\end{equation*}
Therefore
\begin{equation*}
G_0^{\Omega_R}(\cdot,y)
\in
\bigcap_{s>1}L^2_{-s}(\Omega_R)\setminus L^2(\Omega_R),
\end{equation*}
and the nonzero constant term identifies an $s$-wave threshold resonance.
\end{proof}

\begin{table}[t!]
\centering
\small
\begin{tabular}{c|c}
model & critical coupling \\
\hline\\
$\R^3_+$ &
$-\dfrac{1}{8\pi y_3}$\\[1.2ex]
$\R^2_+$ &
$\dfrac{1}{2\pi}\log(2y_2)$\\[1.2ex]
$\mathcal S_\beta$, $y=\rho e^{i\phi}$ &
$\dfrac{1}{2\pi}\log\!\left(\dfrac{2\rho\sin(q\phi)}{q}\right)$\\[1.2ex]
$\R^3\setminus\overline{B_R}$, $\rho=|y|$ &
$-\dfrac{R}{4\pi(\rho^2-R^2)}$\\[1.2ex]
$\R^2\setminus\overline{B_R}$, $\rho=|y|$ &
$\dfrac{1}{2\pi}\log\!\left(\dfrac{\rho^2-R^2}{R}\right)$
\end{tabular}
\vskip5pt
\caption{Critical coupling in the exact models. In the sector row, $q=\pi/\beta$.}
\label{tab:exact-model-negative-spectrum}
\end{table}
The Table 1 summarizes the critical couplings for the various exact models treated.
\subsection{Near-boundary behavior of the critical coupling}

The explicit Green functions of balls and disks, combined with the domain
monotonicity of Theorem~\ref{thm:domain_monotonicity_eigenvalue}, yield sharp
near-boundary estimates for the critical coupling.

\begin{lemma}
\label{lem:critical_ball_disk}
Let $\rho>0$.
\begin{enumerate}
\item[(i)] If $B_\rho\subset\R^3$, $y\in B_\rho\setminus\{0\}$,
$r:=|y|$, and $d_\rho:=\rho-r$, then
\begin{equation*}
G_0^{B_\rho}(x,y)
=
\frac{1}{4\pi|x-y|}
-
\frac{\rho}{|y|}\frac{1}{4\pi|x-y^*|},
\qquad
y^*:=\frac{\rho^2}{|y|^2}y,
\end{equation*}
and
\begin{equation*}
\alpha_c^{B_\rho}(y)
=
-\frac{\rho}{4\pi(\rho^2-r^2)}
=
-\frac{\rho}{4\pi d_\rho(2\rho-d_\rho)}.
\end{equation*}

\item[(ii)] If $D_\rho\subset\R^2$, $y\in D_\rho\setminus\{0\}$,
$r:=|y|$, and $d_\rho:=\rho-r$, then
\begin{equation*}
G_0^{D_\rho}(x,y)
=
-\frac{1}{2\pi}\log|x-y|
+
\frac{1}{2\pi}\log\!\left(\frac{|y|}{\rho}|x-y^*|\right),
\qquad
y^*:=\frac{\rho^2}{|y|^2}y,
\end{equation*}
and
\begin{equation*}
\alpha_c^{(2),D_\rho}(y)
=
\frac{1}{2\pi}\log\!\left(\frac{\rho^2-r^2}{\rho}\right)
=
\frac{1}{2\pi}\log\!\left(\frac{d_\rho(2\rho-d_\rho)}{\rho}\right).
\end{equation*}
\end{enumerate}
The formulas at $y=0$ are obtained by continuity.
\end{lemma}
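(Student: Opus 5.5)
The plan is to verify the Kelvin-image formulas directly and then read off the regularized diagonal value. For bounded Lipschitz domains $0\in\rho(-\Delta_{\Omega,D})$, so by the remark preceding Theorem~\ref{thm:domain_monotonicity_eigenvalue} the critical coupling is the zero-energy limit $M_y^\Omega(0)$. This limit equals the constant term in the expansion of $G_0^\Omega(\cdot,y)$ at $y$, namely the coefficient after $\frac{1}{4\pi|x-y|}$ in $d=3$, or after $-\frac{1}{2\pi}\log|x-y|$ in $d=2$.

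To identify it, I would first justify that $G_{-\lambda}^\Omega(\cdot,y)\to G_0^\Omega(\cdot,y)$ as $\lambda\downarrow0$, together with the regular parts at $y$. This follows from the resolvent identity and the boundedness of $(A_D)^{-1}$, exactly as in the verification step of Proposition~\ref{prop:halfspace-dirichlet}. Alternatively, one notes that $h_{-\lambda}^\Omega(\cdot,y)$ solves a Dirichlet problem whose boundary data and equation depend continuously on $\lambda$ up to $\lambda=0$, and then uses interior regularity.

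Next I would check the explicit formulas.

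In $d=3$ with $y^*=\rho^2y/|y|^2$, the elementary identity $|x-y^*|=(\rho/|y|)|x-y|$ holds for $|x|=\rho$. I would prove it by expanding both squares and using $|x|=\rho$ and $|y^*|=\rho^2/|y|$. It shows that the proposed $G_0^{B_\rho}$ vanishes on $\partial B_\rho$. The correction $\frac{\rho}{4\pi|y||x-y^*|}$ is harmonic in $B_\rho$ because $y^*\notin\overline{B_\rho}$. Uniqueness for the Dirichlet problem, i.e.\ Theorem~\ref{thm:dirichlet-kernel}(i) adapted to the bounded case at $z=0$, then identifies this as the Green function.

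In $d=2$, the same boundary identity shows that
\begin{equation*}
\log\bigl(|y||x-y^*|/\rho\bigr)=\log|x-y|
\end{equation*}
on $|x|=\rho$. Hence the stated expression vanishes on the boundary, and its correction term is harmonic in $D_\rho$.

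Evaluating the regular part at $x=y$ uses
\begin{equation*}
|y-y^*|=\frac{\rho^2-r^2}{r}.
\end{equation*}
In $d=3$ this gives
\begin{equation*}
-\frac{\rho}{r}\cdot\frac{r}{4\pi(\rho^2-r^2)}=-\frac{\rho}{4\pi(\rho^2-r^2)}.
\end{equation*}
In $d=2$ it gives
\begin{equation*}
\frac{1}{2\pi}\log\Bigl(\frac{r}{\rho}\cdot\frac{\rho^2-r^2}{r}\Bigr)=\frac{1}{2\pi}\log\frac{\rho^2-r^2}{\rho}.
\end{equation*}
The rewriting $\rho^2-r^2=d_\rho(2\rho-d_\rho)$ is immediate. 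At $y=0$ the image point escapes to infinity. The correction becomes the constant $\frac{1}{4\pi\rho}$ (resp.\ $\frac{1}{2\pi}\log\rho$), and the limits $r\to0$ of both formulas agree with these constants, which justifies the continuity claim.

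The main point requiring care is the identification of $\alpha_c^{B_\rho}$ with the zero-energy regularized value. This needs the convergence of $h_{-\lambda}^{B_\rho}(y,y)$ in $d=3$, and of the renormalized combination in $d=2$, where the $\log\sqrt\lambda$ divergences must cancel as in Remark~\ref{thresholdweyl}. I would handle it by writing
\begin{equation*}
G_{-\lambda}^y-G_0^y=-\lambda(A_D+\lambda)^{-1}G_0^y.
\end{equation*}
Since $G_0^y\in L^2(B_\rho)$, the right-hand side tends to $0$ in $\Dom(A_D)$ graph norm. By Proposition~\ref{prop:evaluation} its value at $y$ tends to $0$, so $M_y(-\lambda)\to$ the constant term of $G_0^y$ at $y$.
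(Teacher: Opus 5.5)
Your proposal is correct and follows the paper's route. The paper simply cites the classical Kelvin-image formulas and takes the regular part at $x=y$, which is what you do after verifying $|x-y^*|=(\rho/|y|)\,|x-y|$ on the boundary and the harmonicity of the image term. Your resolvent-identity step, $G_{-\lambda}^y-G_0^y=-\lambda(A_D+\lambda)^{-1}G_0^y\to0$ in graph norm, makes explicit something the paper leaves implicit: the identification of $\alpha_c$ with the zero-energy regular part, including the logarithmic cancellation in $d=2$. Your side remark about continuity of the boundary data in $\lambda$ would not work verbatim in $d=2$, since $K_0(\sqrt\lambda\,|x-y|)$ diverges, but your argument does not need it.
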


\begin{proof}
These are the classical Kelvin-image formulas; see, for example,
\cite{Kellogg1929}. Taking the regular part at \(x=y\) in the above
decompositions gives the stated critical couplings.
\end{proof}

The comparison configuration used in the next theorem is shown in
Figure~\ref{fig:tangent-ball-comparison}.

\begin{figure}[!t]
\centering
\resizebox{0.52\linewidth}{!}{%
\begin{tikzpicture}[
  x=1.10cm,y=1.10cm,
  >=Latex,
  every node/.style={font=\small},
  point/.style={circle,fill=black,inner sep=1.55pt}
]
  \def\rad{1.35}
  \def\dy{0.62}

  \fill[gray!7]
    (-2.70,0.5832) -- (-2.70,2.95) -- (2.70,2.95) -- (2.70,0.5832)
    plot[smooth,domain=2.70:-2.70,samples=100] (\x,{0.08*\x*\x}) -- cycle;
  \draw[very thick,smooth,domain=-2.70:2.70,samples=120]
    plot (\x,{0.08*\x*\x});
  \node[anchor=east,fill=gray!7,inner sep=1pt] at (2.48,2.55) {$\Omega$};
  \node[anchor=west,fill=white,inner sep=1pt] at (2.10,0.48) {$\partial\Omega$};

  \fill[gray!20] (0,\rad) circle (\rad);
  \draw[thick] (0,\rad) circle (\rad);
  \fill[white] (0,-\rad) circle (\rad);
  \draw[thick,dashed] (0,-\rad) circle (\rad);

  \draw[densely dashed,gray!70] (0,-2.82) -- (0,2.82);
  \node[point] (X) at (0,0) {};
  \node[anchor=north west] at (0.10,-0.10) {$x_y$};
  \node[point] (Y) at (0,\dy) {};
  \node[anchor=west,fill=gray!20,inner sep=1pt] at (0.15,\dy+0.02) {$y$};

  \node[anchor=west,fill=gray!20,inner sep=1.2pt] at (0.52,1.82)
    {$B^{\mathrm{int}}_{\rho_*}$};
  \node[anchor=west,fill=white,inner sep=1.2pt] at (0.52,-1.82)
    {$B^{\mathrm{ext}}_{\rho_*}$};

  \draw[<->] (0.52,0.05) -- (0.52,\dy-0.05);
  \node[anchor=west,fill=white,inner sep=1pt] at (0.60,0.5*\dy) {$d(y)$};

  \draw[<->] (-0.58,0.06) -- (-0.58,\rad-0.06);
  \node[anchor=east,fill=gray!20,inner sep=1pt] at (-0.67,0.5*\rad) {$\rho_*$};
  \draw[<->] (-0.58,-0.06) -- (-0.58,-\rad+0.06);
  \node[anchor=east,fill=white,inner sep=1pt] at (-0.67,-0.5*\rad) {$\rho_*$};
\end{tikzpicture}
}
\caption{Planar cross-section of the tangent-ball comparison at a nearest boundary point $x_y$. The interior and exterior balls have radius $\rho_*$, while $d(y)=\operatorname{dist}(y,\partial\Omega)$.}
\label{fig:tangent-ball-comparison}
\end{figure}

\begin{theorem}[Near-boundary estimates]
\label{thm:near-boundary-alpha}
Let $d\in\{2,3\}$, and let $\Omega\subset\R^d$ be either an
exterior $C^{1,1}$-domain or a special Lipschitz domain whose defining function is of class $C^{1,1}$
with globally Lipschitz gradient. Then there exists $\rho_*>0$
such that $\Omega$ satisfies the uniform interior and exterior ball condition
with radius $\rho_*$. Set
\begin{equation*}
d(y):=\operatorname{dist}(y,\partial\Omega).
\end{equation*}
For $0<d(y)<\rho_*$, the following estimates hold.

If $n=3$, then
\begin{equation}
\label{eq:near-boundary-estimate-3d}
-\frac{\rho_*}{4\pi d(y)(2\rho_*-d(y))}
\leq
\alpha_c^\Omega(y)
\leq
-\frac{\rho_*}{4\pi d(y)(2\rho_*+d(y))},
\end{equation}
and consequently
\begin{equation*}
\alpha_c^\Omega(y)
=
-\frac{1}{8\pi d(y)}+O(1)
\qquad\text{as }d(y)\downarrow0.
\end{equation*}

If $n=2$, then
\begin{equation}
\label{eq:near-boundary-estimate-2d}
\frac{1}{2\pi}\log\!\left(
\frac{d(y)(2\rho_*-d(y))}{\rho_*}
\right)
\leq
\alpha_c^{(2),\Omega}(y)
\leq
\frac{1}{2\pi}\log\!\left(
\frac{d(y)(2\rho_*+d(y))}{\rho_*}
\right),
\end{equation}
and consequently
\begin{equation*}
\alpha_c^{(2),\Omega}(y)
=
\frac{1}{2\pi}\log(2d(y))+O(d(y))
=
\frac{1}{2\pi}\log d(y)+O(1)
\qquad\text{as }d(y)\downarrow0.
\end{equation*}
\end{theorem}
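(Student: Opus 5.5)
The plan is a two-sided comparison using the domain monotonicity of Theorem~\ref{thm:domain_monotonicity_eigenvalue}\,(ii). For the lower bound we compare with an interior tangent ball contained in $\Omega$; for the upper bound we compare with the exterior of an exterior tangent ball, which contains $\Omega$. The explicit couplings of Lemma~\ref{lem:critical_ball_disk} and Propositions~\ref{prop:exterior-sphere}, \ref{prop:exterior-disk} then give \eqref{eq:near-boundary-estimate-3d} and \eqref{eq:near-boundary-estimate-2d}.

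The first step is the geometric input: a uniform interior and exterior ball condition with some radius $\rho_*>0$. For an exterior $C^{1,1}$ domain this is classical, since the boundary is compact and $C^{1,1}$. For a special Lipschitz domain whose graph $\varphi$ has globally Lipschitz gradient, with constant $L$, the paraboloids $x_d=\varphi(x_0')+\nabla\varphi(x_0')\cdot(x'-x_0')\pm \tfrac L2|x'-x_0'|^2$ enclose the boundary near every point. One then fits balls of a uniform radius (depending on $L$ and $\|\nabla\varphi\|_\infty$) inside these paraboloids, on either side. I expect this to be the most technical step, though it is standard; I would state it with a brief justification or a reference.

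Next, fix $y$ with $0<d(y)<\rho_*$ and choose a nearest boundary point $x_y$, so that $y-x_y$ is normal to $\partial\Omega$ at $x_y$. Let $\nu$ be the inner unit normal there. The interior ball $B^{\mathrm{int}}:=B_{\rho_*}(x_y+\rho_*\nu)$ lies in $\Omega$ and contains $y$, with $\operatorname{dist}(y,\partial B^{\mathrm{int}})=d(y)$ because $y$ lies on the diameter through $x_y$. The exterior ball $B^{\mathrm{ext}}:=B_{\rho_*}(x_y-\rho_*\nu)$ lies in $\R^d\setminus\Omega$, so $\Omega\subset\R^d\setminus\overline{B^{\mathrm{ext}}}$. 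Here the distance from $y$ to the centre of $B^{\mathrm{ext}}$ is $\rho_*+d(y)$.

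Since the theorem on domain monotonicity allows bounded, exterior and special Lipschitz domains, it gives
\[
\alpha_c^{B^{\mathrm{int}}}(y)\le\alpha_c^\Omega(y)\le\alpha_c^{\R^d\setminus\overline{B^{\mathrm{ext}}}}(y).
\]
In $d=3$, Lemma~\ref{lem:critical_ball_disk}(i) with $d_\rho=d(y)$ gives the left side $-\rho_*/(4\pi d(y)(2\rho_*-d(y)))$. Proposition~\ref{prop:exterior-sphere} with $\rho=\rho_*+d(y)$ gives $\rho^2-\rho_*^2=d(y)(2\rho_*+d(y))$, hence the right side $-\rho_*/(4\pi d(y)(2\rho_*+d(y)))$.

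In $d=2$, Lemma~\ref{lem:critical_ball_disk}(ii) and Proposition~\ref{prop:exterior-disk} give the two logarithmic bounds in the same way; translation invariance lets us move the centres to the origin.

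Finally, the asymptotics follow by expansion. We have $\frac{\rho_*}{d(2\rho_*\mp d)}=\frac{1}{2d}\bigl(1\pm\frac{d}{2\rho_*}+O(d^2)\bigr)$, so both bounds equal $-\frac{1}{8\pi d}+O(1)$. Similarly $\log\frac{d(2\rho_*\mp d)}{\rho_*}=\log(2d)+O(d)$.
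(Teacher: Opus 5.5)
Your proof is correct and follows the paper's argument. You sandwich $\Omega$ between the interior tangent ball and the complement of the exterior tangent ball at a nearest boundary point, apply domain monotonicity of the critical coupling, and read off the explicit Kelvin-image values with $|y-c_{\mathrm{int}}|=\rho_*-d(y)$ and $|y-c_{\mathrm{ext}}|=\rho_*+d(y)$. The only minor difference is the justification of the uniform ball condition in the graph case: you use the global paraboloid sandwich coming from the Lipschitz bound on $\nabla\varphi$, where the paper appeals to a uniform tubular neighbourhood.
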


\begin{proof}
For an exterior \(C^{1,1}\)-domain, the uniform interior and exterior ball
condition follows from the standard tubular-neighbourhood property and
compactness of the boundary. More explicitly, there exists \(\rho_*>0\) such
that the normal map
\begin{equation*}
\Phi:\partial\Omega\times(-\rho_*,\rho_*)\to\R^d,
\qquad
\Phi(x,t)=x+t n(x),
\end{equation*}
where \(n(x)\) is the inward unit normal, is one-to-one onto a neighbourhood
of \(\partial\Omega\). In particular, if \(d(y)<\rho_*\), then the nearest
boundary point \(x_y\) is unique and
\begin{equation*}
y=x_y+d(y)n(x_y).
\end{equation*}
For a global \(C^{1,1}\)-graph, the same conclusion follows from the global
Lipschitz bound on the defining function and the uniform Lipschitz bound on
its gradient, which give a uniform tubular neighbourhood and hence uniform
interior and exterior tangent balls.

Fix $y\in\Omega$ with $0<\ell:=d(y)<\rho_*$. Let $x_y\in\partial\Omega$ be the nearest boundary point and write $n_y:=n(x_y)$. Then
\begin{equation*}
y=x_y+\ell n_y.
\end{equation*}
Let
\begin{equation*}
c_{\mathrm{int}}:=x_y+\rho_*n_y,
\qquad
c_{\mathrm{ext}}:=x_y-\rho_*n_y.
\end{equation*}
The tangent balls
\begin{equation*}
B_{\rho_*}(c_{\mathrm{int}})\subset\Omega,
\qquad
B_{\rho_*}(c_{\mathrm{ext}})\subset\Omega^c
\end{equation*}
give the inclusions
\begin{equation*}
B_{\rho_*}(c_{\mathrm{int}})
\subset
\Omega
\subset
\R^d\setminus\overline{B_{\rho_*}(c_{\mathrm{ext}})}.
\end{equation*}
Furthermore,
\begin{equation*}
|y-c_{\mathrm{int}}|=\rho_*-\ell,
\qquad
|y-c_{\mathrm{ext}}|=\rho_*+\ell.
\end{equation*}

Assume first that $n=3$. Domain monotonicity, Lemma~\ref{lem:critical_ball_disk}, and the exterior-ball
formulas yield
\begin{align*}
-\frac{\rho_*}
 {4\pi\bigl(\rho_*^2-(\rho_*-\ell)^2\bigr)}
&\leq \alpha_c^\Omega(y)\leq
-\frac{\rho_*}
 {4\pi\bigl((\rho_*+\ell)^2-\rho_*^2\bigr)}.
\end{align*}
After expanding the two differences of squares, this is exactly
\eqref{eq:near-boundary-estimate-3d}. The identities
\begin{equation*}
\frac{\rho_*}{\ell(2\rho_*-\ell)}
=
\frac{1}{2\ell}+\frac{1}{2(2\rho_*-\ell)},
\qquad
\frac{\rho_*}{\ell(2\rho_*+\ell)}
=
\frac{1}{2\ell}-\frac{1}{2(2\rho_*+\ell)}
\end{equation*}
then give the stated three-dimensional asymptotic.

If $n=2$, the same comparison, now using Lemma~\ref{lem:critical_ball_disk} and the exterior-disk
formulas, gives
\begin{align*}
\frac{1}{2\pi}
\log\!\left(
\frac{\rho_*^2-(\rho_*-\ell)^2}{\rho_*}
\right)
\leq \alpha_c^{(2),\Omega}(y)
\leq
\frac{1}{2\pi}
\log\!\left(
\frac{(\rho_*+\ell)^2-\rho_*^2}{\rho_*}
\right),
\end{align*}
which is \eqref{eq:near-boundary-estimate-2d}. Factoring
$2\ell$ in both logarithms gives
\begin{equation*}
\alpha_c^{(2),\Omega}(y)
-
\frac{1}{2\pi}\log(2\ell)
=
O(\ell),
\end{equation*}
and completes the proof.
\end{proof}

\begin{corollary}
\label{cor:no_eigenvalue_near_boundary}
Under the assumptions of Theorem~\ref{thm:near-boundary-alpha}, fix
$\alpha\in\R$. There exists $d_\alpha>0$ such that, whenever
$d(y)<d_\alpha$, the corresponding
one-centre point-interaction Hamiltonian has no nonpositive eigenvalue.
\end{corollary}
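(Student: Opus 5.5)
The plan is to split the claim into negative eigenvalues and a possible zero eigenvalue, and to use the near-boundary upper bounds of Theorem~\ref{thm:near-boundary-alpha} for both.

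\emph{Negative eigenvalues.} By Theorems~\ref{thm:critical-coupling-3d} and \ref{thm:critical-coupling-2d}, a negative eigenvalue exists if and only if $\alpha<\alpha_c(y)$. It therefore suffices to show that $\alpha_c(y)\le\alpha$ once $d(y)$ is small. In dimension three, the upper bound in \eqref{eq:near-boundary-estimate-3d} gives
\begin{equation*}
\alpha_c^\Omega(y)\le-\frac{\rho_*}{4\pi d(y)(2\rho_*+d(y))}\le-\frac{1}{12\pi d(y)}
\end{equation*}
for $d(y)<\rho_*$, and the right-hand side tends to $-\infty$. In dimension two, the upper bound in \eqref{eq:near-boundary-estimate-2d} is at most $\frac{1}{2\pi}\log(3d(y))$, which also tends to $-\infty$. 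Choosing $d_\alpha\le\rho_*$ so that these bounds are strictly below $\alpha$ gives $\alpha_c(y)<\alpha$. Hence the equation $\alpha=M_y^\Omega(-\lambda)$ has no root $\lambda>0$, since the range of $\lambda\mapsto M_y^\Omega(-\lambda)$ is $(-\infty,\alpha_c(y))$.

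\emph{Zero eigenvalue.} This is the only delicate point. Suppose $u\in\Dom(A_{\alpha,y})$ satisfies $A_{\alpha,y}u=0$, with $u=u_z+qG_z^y$ for a fixed $z=-\mu<0$. The plan is to use the Kre\u{\i}n structure together with the uniqueness of harmonic Dirichlet functions decaying in $L^2$.

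First, the relation $(A_{\alpha,y}-z)u=(A_D-z)u_z$ gives
\begin{equation*}
(A_D+\mu)u_z=\mu u.
\end{equation*}
So $u$ is harmonic in $\Omega\setminus\{y\}$, has zero Dirichlet trace, and carries the singularity $qG_0$ at $y$.

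If $q=0$, then $u\in\Dom(A_D)$ and $A_Du=0$. Since $A_D$ has no zero eigenvalue on these quasi-conical domains (its spectrum is purely absolutely continuous there, or equivalently the Dirichlet form is positive on $H_0^1$ and its kernel is trivial), we get $u=0$.

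If $q\ne0$, normalize $q=1$. Then $u-G_{-\mu}^y=u_z\in\Dom(A_D)$, and
\begin{equation*}
u_z=\mu(A_D+\mu)^{-1}u.
\end{equation*}
Evaluating at $y$ with \eqref{eq:green-vector-reproducing} and the domain condition gives
\begin{equation*}
\alpha-M_y^\Omega(-\mu)=u_z(y)=\mu(u,G_{-\mu}^y).
\end{equation*}
Writing $u=G_{-\mu}^y+u_z$ and letting $\mu\downarrow0$ is one route to identify $\alpha$ with $\alpha_c(y)$.

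A cleaner route is positivity. We have
\begin{equation*}
\mu(u,G_{-\mu}^y)=\mu\|G_{-\mu}^y\|^2+\mu(u_z,G_{-\mu}^y)=\mu\|G_{-\mu}^y\|^2+\mu\,((A_D+\mu)^{-1}u_z)(y),
\end{equation*}
and $u_z$ is itself determined by $u$. Rather than untangle this, I would argue as follows. $u$ and $G_0^y$ are both harmonic Dirichlet functions with the same singularity, so their difference $w$ is harmonic in $\Omega$ with zero trace. Using that $u\in L^2$, one shows $w$ has $\nabla w\in L^2$ at infinity, which forces $w\equiv0$, so $u=G_0^y$. Then the regular part of $u$ at $y$ equals both $\alpha$ and $\alpha_c(y)$, so $\alpha=\alpha_c(y)$. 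This contradicts $\alpha>\alpha_c(y)$, which we arranged in the first step.

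The main obstacle is this last uniqueness argument. It needs a Liouville-type statement that a harmonic function with zero Dirichlet data in $\Omega$ and controlled growth must vanish. For a general special Lipschitz domain this is nontrivial, and I expect it to require the Phragm\'en--Lindel\"of or energy-type arguments used later for threshold states. Because this step may fail as sketched, the fallback is to restrict to negative eigenvalues and to absorb the zero eigenvalue only via the strict inequality $\alpha>\alpha_c(y)$, using the threshold analysis of the following subsections.
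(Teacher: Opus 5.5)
Your treatment of the negative eigenvalues is correct and coincides with the paper's proof. The upper bounds in Theorem~\ref{thm:near-boundary-alpha} give $\alpha_c(y)<\alpha$ once $d(y)$ is small, and the range of $\lambda\mapsto M_y^\Omega(-\lambda)$ is $(-\infty,\alpha_c(y))$.

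The gap is the zero eigenvalue. The statement (``nonpositive'') requires it, and your proposal leaves it unproved.
\begin{itemize}
\item You drop the $\mu\downarrow0$ route after one line.
\item The Liouville route you prefer is not sound as stated. A harmonic function vanishing on $\partial\Omega$ with $\nabla w\in L^2$ near infinity need not vanish: take $w=1-R/|x|$ outside a ball in $\R^3$. Moreover, $G_0^y\notin L^2(\Omega)$ in general, so $u\in L^2(\Omega)$ gives you no decay of $w=u-G_0^y$ to rule this out.
\item The fallback does not close the gap either. The threshold subsections only treat particular geometries, and only at $\alpha=\alpha_c(y)$. They never show that zero can be an eigenvalue \emph{only} at the critical coupling, which is exactly what you need.
\end{itemize}

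The route you abandoned works in two lines. With $q=1$ (the charge does not depend on the auxiliary point $z$), you correctly obtained $\alpha-M_y^\Omega(-\mu)=\mu(u,G_{-\mu}^y)$ for every $\mu>0$. Apply the Weyl identity \eqref{eq:weyl-identity-3d} (resp.\ \eqref{eq:weyl-identity-2d}) with $z=-\mu/2$ and $w=-\mu$. Combine it with the resolvent identity $G_{-\mu/2}^y-G_{-\mu}^y=\frac{\mu}{2}(A_D+\mu/2)^{-1}G_{-\mu}^y$ and with $A_D\geq0$. This gives
\[
\frac{\mu}{2}\|G_{-\mu}^y\|^2\leq M_y^\Omega(-\mu/2)-M_y^\Omega(-\mu)\leq\alpha_c(y)-M_y^\Omega(-\mu)\longrightarrow0,
\]
so that $|\mu(u,G_{-\mu}^y)|\leq\|u\|\,\mu^{1/2}\bigl(\mu\|G_{-\mu}^y\|^2\bigr)^{1/2}\to0$. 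Letting $\mu\downarrow0$ yields $\alpha=\alpha_c(y)$, contradicting $\alpha>\alpha_c(y)$. Hence $q=0$, and your first case gives $u=0$.

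This is the justification behind the paper's one-line claim that the characteristic equation also excludes a zero-energy eigenstate: zero can be an eigenvalue only when $\alpha=\alpha_c(y)$.
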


\begin{proof}
Theorem~\ref{thm:near-boundary-alpha} gives
$\alpha_c^\Omega(y)\to-\infty$ in dimension three and
$\alpha_c^{(2),\Omega}(y)\to-\infty$ in dimension two as
$y$ approaches the boundary. Thus $\alpha$ is strictly larger than the
critical coupling for all sufficiently small boundary distances. The 
characteristic equation then excludes both a negative eigenvalue and a zero-energy
eigenstate.
\end{proof}

\begin{remark}
The leading asymptotics are exactly those of the flat models:
\begin{equation*}
\alpha_c^\Omega(y)
\sim
-\frac{1}{8\pi\operatorname{dist}(y,\partial\Omega)}
\qquad(n=3),
\end{equation*}
and
\begin{equation*}
\alpha_c^{(2),\Omega}(y)
=
\frac{1}{2\pi}\log\!\bigl(2\operatorname{dist}(y,\partial\Omega)\bigr)
+o(1)
\qquad(n=2).
\end{equation*}
Thus the first-order behavior is local and independent of the global geometry.
\end{remark}

\begin{remark}
Puri and Vainberg \cite{PuriVainberg2020} study critical couplings for regular,
compactly supported potentials in exterior elliptic problems. Their analysis is
based on a Birman--Schwinger operator, whereas the present one-centre problem is
reduced to the characteristic equation
\begin{equation*}
\alpha-M_y^\Omega(-\lambda)=0.
\end{equation*}
The contrast is particularly clear near a flat Dirichlet boundary. In the regular
potential problem the critical parameter may remain bounded in dimensions
$d\geq3$; see \cite[Theorem~3.1]{PuriVainberg2020}. For a point interaction,
Theorem~\ref{thm:near-boundary-alpha} shows instead that the critical coupling
tends to $-\infty$ in both dimensions considered here. This reflects the genuinely
singular nature of the extension-theoretic perturbation.
\end{remark}

\subsection{Critical threshold states}\label{subsec:critical-threshold-states}

At the critical coupling, the zero-energy state is the Green function
$G_0^\Omega(\cdot,y)$, and its far-field behavior determines whether zero is an
eigenvalue or a resonance. Actually, a variety of behavior appears already in the model domains. Exterior domains produce a monopole resonance in
three dimensions and an $s$-wave resonance in two dimensions. A
three-dimensional special Lipschitz domain contained in a half-space has instead
a threshold eigenvalue. Planar wedges exhibit the wider range of behaviors allowed
by conical ends.

\subsubsection{Exterior domains}

To state the following Theorem we need a premise. Let $\Omega=\R^3\setminus\overline K$, where
$K\Subset\R^3$ is a bounded $C^{1,1}$-domain. Let $p_K$ be the capacitary potential of $K$, namely the unique harmonic function in $\Omega$ such that
\begin{equation*}
\Delta p_K=0\quad\text{in }\Omega,
\qquad
p_K=1\quad\text{on }\partial\Omega,
\qquad
p_K(x)\longrightarrow0\quad\text{as }|x|\to\infty.
\end{equation*}
Such unique solution exists, for example as a consequence of the classical Perron method.
Set $\omega_\infty:=1-p_K$. Equivalently, $\omega_\infty$ is the unique solution of
\begin{equation*}
\Delta\omega_\infty=0\quad\text{in }\Omega,
\qquad
\omega_\infty=0\quad\text{on }\partial\Omega,
\qquad
\omega_\infty(x)\longrightarrow1\quad\text{as }|x|\to\infty.
\end{equation*}

\begin{theorem}
\label{thm:exterior_threshold_not_attained}
Let $\Omega=\R^3\setminus\overline K$, where
$K\Subset\R^3$ is a bounded $C^{1,1}$-domain, and let
$y\in\Omega$. Then
\begin{equation}
\label{eq:exterior-threshold-tail-3d}
G_0^\Omega(x,y)
=
\frac{\omega_\infty(y)}{4\pi|x|}+O(|x|^{-2})
\qquad\text{as }|x|\to\infty,
\end{equation}
with $0<\omega_\infty(y)<1$. Consequently,
\begin{equation*}
G_0^\Omega(\cdot,y)
\in
\bigcap_{s>1/2}L^2_{-s}(\Omega)\setminus L^2(\Omega),
\end{equation*}
and the critical state is a monopole threshold resonance. In particular,
\begin{equation*}
0\notin\sigma_p\bigl(A_{\alpha_c^\Omega(y),y}^\Omega\bigr).
\end{equation*}
\end{theorem}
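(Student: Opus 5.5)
The plan is to realize $G_0^\Omega(\cdot,y)$ explicitly as the monotone limit of $G_{-\lambda}^\Omega(\cdot,y)$ as $\lambda\downarrow0$, to read off its far field from a Green representation formula, and then to recognize the coefficient as $\omega_\infty(y)$.

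\emph{Existence of $G_0^\Omega$.} By Theorem~\ref{thm:domain_monotonicity_eigenvalue}, and the monotonicity in the spectral parameter used in the proof of Theorem~\ref{thm:critical-coupling-2d}, $G_{-\lambda}^\Omega(x,y)$ increases as $\lambda\downarrow0$ and is bounded by $1/(4\pi|x-y|)$. Hence it converges to
\begin{equation*}
G_0^\Omega(x,y)=\frac{1}{4\pi|x-y|}-h_0^\Omega(x,y),
\end{equation*}
where $h_0^\Omega(\cdot,y)$ is harmonic in $\Omega$, equals $1/(4\pi|x-y|)$ on $\partial\Omega$, and tends to $0$ at infinity. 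The convergence of $h_{-\lambda}^\Omega$ to $h_0^\Omega$ is locally uniform by Harnack's principle. The decay of the limit follows from $0\le h_0^\Omega\le C/|x|$, which comes from the maximum principle applied against $C/|x|$ outside a large ball. This identifies $h_0^\Omega(y,y)$ with the quantity in Theorem~\ref{thm:critical-coupling-3d}(iii), so that $G_0^\Omega$ is indeed the critical state of Definition~\ref{def:critical-zero-energy-state}.

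\emph{Far-field expansion.} Choose $R_1$ with $K\cup\{y\}\subset B_{R_1}$. Outside $B_{R_1}$ the function $h_0^\Omega(\cdot,y)$ is harmonic and vanishes at infinity, so its exterior spherical-harmonic expansion gives $h_0^\Omega(x,y)=c/|x|+O(|x|^{-2})$. The same holds for $1/(4\pi|x-y|)=1/(4\pi|x|)+O(|x|^{-2})$. Thus $G_0^\Omega(x,y)=m(y)/|x|+O(|x|^{-2})$ for some constant $m(y)$.

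\emph{Identification of $m(y)$.} Apply Green's second identity to $G_0^\Omega(\cdot,y)$ and $\omega_\infty$ on $\Omega\cap B_R$, letting $R\to\infty$. The contribution of $\partial\Omega$ vanishes because both functions vanish there; the $C^{1,1}$ regularity makes the normal derivatives bounded, which justifies the boundary integrals. The pole contributes $\omega_\infty(y)$. On the large sphere, $\omega_\infty=1-\mathrm{cap}/|x|+O(|x|^{-2})$, and the surviving flux term is $\int_{\partial B_R}\omega_\infty\,\partial_r G_0^\Omega\,d\sigma\to-4\pi m(y)$. The cross terms of order $R^{-1}$ cancel between the two integrands. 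This yields $4\pi m(y)=\omega_\infty(y)$. The bounds $0<\omega_\infty(y)<1$ follow from the strong maximum principle: $p_K$ is harmonic and nonconstant, equal to $1$ on $\partial\Omega$ and to $0$ at infinity, hence $0<p_K<1$ in the connected set $\Omega$. The main care is needed in this flux bookkeeping, in particular in confirming that the mixed $1/R$ terms cancel exactly.

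\emph{Membership in weighted spaces.} Since $|G_0^\Omega|\sim c/|x|$ with $c>0$ at infinity, we have $\int_{|x|>R}|x|^{-2}\,dx=\infty$, so $G_0^\Omega(\cdot,y)\notin L^2(\Omega)$. On the other hand, $\int|x|^{-2}\langle x\rangle^{-2s}\,dx<\infty$ exactly when $s>1/2$, and the local singularity $|x-y|^{-1}$ is square integrable. Hence $G_0^\Omega(\cdot,y)\in\bigcap_{s>1/2}L^2_{-s}(\Omega)\setminus L^2(\Omega)$, which is a monopole threshold resonance in the sense of Definition~\ref{def:critical-zero-energy-state}.

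\emph{Absence of a zero eigenvalue.} Suppose $u\in\Dom(A^\Omega_{\alpha_c,y})$ satisfies $A^\Omega_{\alpha_c,y}u=0$. Then $u$ is harmonic off $y$, has Dirichlet trace, and is in $L^2$. Moreover $u-qG_0^\Omega(\cdot,y)$ is harmonic in all of $\Omega$ (its singularity at $y$ is removable) and vanishes on $\partial\Omega$. It also decays in the $L^2$-averaged sense, which via mean-value estimates gives $u-qG_0^\Omega(\cdot,y)\to0$ at infinity, so the maximum principle forces $u=qG_0^\Omega(\cdot,y)$. Since $u\in L^2$ while $G_0^\Omega(\cdot,y)\notin L^2$, we get $q=0$ and therefore $u=0$.
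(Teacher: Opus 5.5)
Your proof is correct and follows essentially the paper's route: a $|x|^{-1}$ far-field expansion of a function harmonic near infinity (you expand $h_0^\Omega$ in exterior spherical harmonics where the paper Kelvin-transforms $G_0^\Omega$ itself), identification of the coefficient through Green's second identity with $\omega_\infty$, and the maximum principle for $0<\omega_\infty<1$. Your explicit argument that any zero-energy solution in the operator domain is a multiple of $G_0^\Omega(\cdot,y)$ fills in a step the paper leaves implicit. Your worry about exact cancellation of the $R^{-1}$ terms is unnecessary, since those terms vanish as $R\to\infty$ anyway; in fact, the flux over $\partial B_R$ is independent of $R$.
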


\begin{proof}
Set $\psi_0:=G_0^\Omega(\cdot,y)$. Since $\Omega\subset\R^3$, domain
monotonicity gives
\begin{equation*}
0<\psi_0(x)
\leq
G_0^{\R^3}(x,y)
=
\frac{1}{4\pi|x-y|},
\qquad x\in\Omega\setminus\{y\}.
\end{equation*}
In particular, $\psi_0(x)=O(|x|^{-1})$. Choose $R_0$ so that
$\overline K\cup\{y\}\subset B_{R_0}$. For $|x|>R_0$, $\psi_0$ is
harmonic. 
Since \(\psi_0(x)=O(|x|^{-1})\), the Kelvin transform
\[
\widetilde\psi_0(\xi)
:=
|\xi|^{-1}\psi_0\!\left(\frac{\xi}{|\xi|^2}\right)
\]
is bounded in a punctured neighbourhood of the origin. It is harmonic there by
the Kelvin transform, hence the isolated singularity at \(0\) is removable. Thus
\(\widetilde\psi_0\) is harmonic, and therefore real analytic, near \(0\). Writing
\[
\widetilde\psi_0(\xi)
=
A(y)+b(y)\cdot \xi+O(|\xi|^2),
\qquad \xi\to0,
\]
and substituting \(\xi=x/|x|^2=\omega/r\), \(r=|x|\), \(\omega=x/|x|\), gives
\[
\psi_0(r\omega)
=
\frac{A(y)}{r}
+
\frac{b(y)\cdot\omega}{r^2}
+
O(r^{-3}).
\]
Equivalently, with \(A(y)=c(y)/(4\pi)\),
\[
\psi_0(x)
=
\frac{c(y)}{4\pi |x|}
+
O(|x|^{-2}),
\qquad
\partial_r\psi_0(x)
=
-\frac{c(y)}{4\pi |x|^2}
+
O(|x|^{-3}).
\]
as $|x|\to\infty$, for some $c(y)\in\R$. The same argument applied to
$1-\omega_\infty$ gives
\begin{equation*}
\omega_\infty(x)=1+O(|x|^{-1}),
\qquad
\partial_r\omega_\infty(x)=O(|x|^{-2}).
\end{equation*}

We identify $c(y)$ by Green's second identity. Let
\begin{equation*}
D_{R,\varepsilon}
:=
\Omega\cap B_R\setminus\overline{B_\varepsilon(y)},
\qquad R>R_0.
\end{equation*}
Since both functions vanish on $\partial\Omega$,
\begin{equation*}
\int_{\partial D_{R,\varepsilon}}
\bigl(
\psi_0\partial_n\omega_\infty
-
\omega_\infty\partial_n\psi_0
\bigr)\,d\sigma
=0.
\end{equation*}
On $\partial B_\varepsilon(y)$, the outward normal of
$D_{R,\varepsilon}$ is directed towards $y$, and
\begin{equation*}
\psi_0(x)=\frac{1}{4\pi\varepsilon}+O(1),
\qquad
\partial_n\psi_0(x)=\frac{1}{4\pi\varepsilon^2}+O(1).
\end{equation*}
Since $\omega_\infty(x)=\omega_\infty(y)+O(\varepsilon)$, the inner
boundary integral converges to $-\omega_\infty(y)$. Consequently,
\begin{equation*}
\int_{\partial B_R}
\bigl(
\psi_0\partial_r\omega_\infty
-
\omega_\infty\partial_r\psi_0
\bigr)\,d\sigma
=
\omega_\infty(y).
\end{equation*}
The first term on the left is $O(R^{-1})$, while the second equals
$c(y)+O(R^{-1})$. Letting $R\to\infty$ yields
\begin{equation*}
c(y)=\omega_\infty(y).
\end{equation*}
The maximum principle applied to $\omega_\infty$ and
$1-\omega_\infty$, followed by the strong maximum principle, gives
$0<\omega_\infty<1$ in $\Omega$. This proves
\eqref{eq:exterior-threshold-tail-3d} with a nonzero coefficient.

The local Coulomb singularity belongs to $L^2_{\mathrm{loc}}$. At infinity,
the leading term has size $|x|^{-1}$, and hence
\begin{equation*}
\int_{|x|>R}
\langle x\rangle^{-2s}|G_0^\Omega(x,y)|^2\,dx
<\infty
\quad\Longleftrightarrow\quad
s>\frac12.
\end{equation*}
Thus the critical state is a monopole threshold resonance and cannot be a zero
eigenfunction.
\end{proof}

\begin{theorem}
\label{thm:exterior_threshold_not_attained_2d}
Let $\Omega=\R^2\setminus\overline K$, where
$K\Subset\R^2$ is a bounded $C^{1,1}$-domain, and let
$y\in\Omega$. There exist $c_\Omega(y)>0$ and
$a_\Omega(y)\in\R^2$ such that
\begin{equation}
\label{eq:exterior-threshold-tail-2d}
G_0^\Omega(x,y)
=
c_\Omega(y)
+
\frac{a_\Omega(y)\cdot x}{|x|^2}
+
O(|x|^{-2})
\qquad\text{as }|x|\to\infty.
\end{equation}
Consequently,
\begin{equation*}
G_0^\Omega(\cdot,y)
\in
\bigcap_{s>1}L^2_{-s}(\Omega)\setminus L^2(\Omega),
\end{equation*}
and the critical state is an $s$-wave threshold resonance. In particular,
\begin{equation*}
0\notin
\sigma_p\bigl(A_{\alpha_c^{(2),\Omega}(y),y}^{(2),\Omega}\bigr).
\end{equation*}
\end{theorem}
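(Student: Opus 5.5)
The plan is to mimic the three-dimensional argument of Theorem~\ref{thm:exterior_threshold_not_attained}, with two changes: the planar Kelvin transform has no weight, and the role of $\omega_\infty$ is played by a harmonic function with logarithmic growth. Set $\psi_0:=G_0^\Omega(\cdot,y)$. The first task is to make $\psi_0$ concrete. I would take it as the increasing limit of $G_{-\lambda}^\Omega(\cdot,y)$ as $\lambda\downarrow0$. Theorem~\ref{thm:domain_monotonicity_eigenvalue} and the bound used in the proof of Theorem~\ref{thm:critical-coupling-2d} give
\begin{equation*}
0<\psi_0\leq G_0^{D}(\cdot,y),\qquad D=\R^2\setminus\overline{B_R(a)}\supset\Omega .
\end{equation*}
The right-hand side is bounded at infinity by the explicit formula. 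Hence $\psi_0$ is positive, harmonic in $\Omega\setminus\{y\}$, vanishes on $\partial\Omega$, and is bounded for $|x|>R_0$, where $R_0$ is chosen so that $\overline K\cup\{y\}\subset B_{R_0}$.

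Next comes the far-field expansion. Consider $\widetilde\psi_0(\xi):=\psi_0(\xi/|\xi|^2)$. It is harmonic and bounded in a punctured disk around $0$, so the singularity is removable. Taylor expanding,
\begin{equation*}
\widetilde\psi_0(\xi)=c+b\cdot\xi+O(|\xi|^2),
\end{equation*}
and substituting $\xi=x/|x|^2$ gives \eqref{eq:exterior-threshold-tail-2d} with $c_\Omega(y)=c$ and $a_\Omega(y)=b$. The same expansion also controls the gradient: $\partial_r\psi_0=O(|x|^{-2})$.

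It remains to show $c>0$, which I expect to be the main step. I would use Green's second identity on $D_{R,\varepsilon}:=(\Omega\cap B_R)\setminus\overline{B_\varepsilon(y)}$, pairing $\psi_0$ with the planar analogue $w$ of $\omega_\infty$. This $w$ is harmonic in $\Omega$, vanishes on $\partial\Omega$, and satisfies $w(x)=\log|x|+\gamma_K+O(|x|^{-1})$, where $\gamma_K$ is a constant (Robin's constant). The existence of $w$ follows from the Green function with pole at infinity, or from a Kelvin transform of the Dirichlet Green function of the inverted bounded domain. By the maximum principle, $w>0$ in $\Omega$.

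Now compute the boundary terms.
\begin{itemize}
\item On $\partial\Omega$ both functions vanish, so there is no contribution.
\item On $\partial B_\varepsilon(y)$, the logarithmic singularity of $\psi_0$ contributes $-w(y)$ in the limit $\varepsilon\downarrow0$.
\item On $\partial B_R$, the expansions give $\int\psi_0\,\partial_r w\,d\sigma\to2\pi c$, while $\int w\,\partial_r\psi_0\,d\sigma=O(R^{-1}\log R)\to0$.
\end{itemize}
Therefore $2\pi c=w(y)>0$, so $c_\Omega(y)=w(y)/(2\pi)>0$. The explicit disk case gives $b_y=(2\pi)^{-1}\log(\rho/R)$, which is consistent with this formula.

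Finally, the weighted integrability. The logarithmic singularity at $y$ is in $L^2_{\mathrm{loc}}$. Near infinity, $\psi_0\to c\neq0$, and
\begin{equation*}
\int_{|x|>R_0}\langle x\rangle^{-2s}\,dx<\infty\iff s>1 .
\end{equation*}
This gives membership in $\bigcap_{s>1}L^2_{-s}(\Omega)$, while $\psi_0\notin L^2(\Omega)$ because $c\neq0$. The nonzero constant term makes the state an $s$-wave resonance in the sense of Definition~\ref{def:critical-zero-energy-state}.

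To see that $0$ is not an eigenvalue, suppose $u\in L^2$ solves the critical equation at energy zero. Then $u$ is a multiple of $\psi_0$ plus a harmonic function in $H_0^1$-type spaces, and such a function vanishes. Since $\psi_0\notin L^2$, $0\notin\sigma_p$.
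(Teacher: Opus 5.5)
Your proof is correct, but for the key point $c_\Omega(y)>0$ it takes a different route from the paper.

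\textbf{The paper's route.} It translates so that $0\in K^\circ$ and inverts the whole domain by $\mathcal I(x)=x/|x|^2$ onto a bounded $C^{1,1}$ domain $\widetilde\Omega\ni0$. Conformal invariance then gives $G_0^\Omega(x,y)=G_0^{\widetilde\Omega}(\mathcal I(x),\mathcal I(y))$. The far-field expansion is just the Taylor expansion at the interior point $0$, and $c_\Omega(y)=G_0^{\widetilde\Omega}(0,\mathcal I(y))>0$ follows at once from positivity of a bounded-domain Green function.

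\textbf{Your route.} You invert only near infinity, so no normalization $0\in K$ is needed, and you build $\psi_0$ as the monotone limit of $G_{-\lambda}^\Omega(\cdot,y)$. Positivity comes from Green's second identity against the Green function with pole at infinity, exactly mirroring the three-dimensional proof. This gives the explicit formula $c_\Omega(y)=w(y)/(2\pi)$, which agrees with the disk value $b_y$.

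The two constants coincide. The natural construction of $w$ is $w=2\pi\,G_0^{\widetilde\Omega}(\mathcal I(\cdot),0)$, so your Green-identity computation is in effect a proof of the symmetry $G_0^{\widetilde\Omega}(\mathcal I(y),0)=G_0^{\widetilde\Omega}(0,\mathcal I(y))$.

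The paper's route is shorter and needs only conformal invariance plus uniqueness of the bounded Green function. Yours has three advantages:
\begin{itemize}
\item it gives the $s$-wave coefficient a potential-theoretic meaning, namely $(2\pi)^{-1}$ times the Green function of $\Omega$ with pole at infinity evaluated at $y$, the planar counterpart of $\omega_\infty(y)$;
\item it ties $\psi_0$ directly to $\alpha_c=\lim_{\lambda\downarrow0}M_\lambda^{(2)}$;
\item it treats $d=2$ and $d=3$ uniformly.
\end{itemize}

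Two steps need one more line each.

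\textbf{The Dirichlet condition for $\psi_0$.} The single-disk bound $\psi_0\le G_0^D(\cdot,y)$ gives boundedness at infinity but not the boundary condition for the monotone limit. For that, compare with $G_0^{\R^2\setminus\overline{B_r(c)}}(\cdot,y)$ for an exterior tangent disk $B_r(c)\subset K$ at each boundary point; such disks exist because $K$ is $C^{1,1}$. The $\partial B_\varepsilon(y)$ term also uses that $\psi_0+\frac{1}{2\pi}\log|\cdot-y|$ is harmonic near $y$. This holds because $v_\lambda:=G_{-\lambda}^\Omega(\cdot,y)+\frac{1}{2\pi}\log|\cdot-y|$ satisfies $-\Delta v_\lambda=-\lambda G_{-\lambda}^\Omega(\cdot,y)\to0$ locally.

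\textbf{The final step.} When $q\neq0$, $u-q\psi_0$ is not in $H_0^1(\Omega)$ (it is not even in $L^2$), so your sentence as written does not apply. What works is the following.
\begin{itemize}
\item The domain formula gives $-\Delta u=q\delta_y$ in $\Omega$, so $u$ is harmonic near infinity.
\item Since $u\in L^2$, the mean-value property on the balls $B_{|x|/2}(x)$ gives $u(x)\to0$.
\item Hence $v:=u-q\psi_0$ is harmonic in $\Omega$, vanishes on $\partial\Omega$, and is bounded.
\item Comparing $\pm v$ with $\varepsilon w$ and letting $\varepsilon\downarrow0$ gives $v\equiv0$.
\item Letting $|x|\to\infty$ yields $q\,c_\Omega(y)=0$, so $q=0$ and $u\in\Ker A_D=\{0\}$.
\end{itemize}
The paper only asserts this last conclusion, so with this repair your argument is the more complete one.
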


\begin{proof}

After a harmless translation of coordinates, we may assume that
\(0\in K^\circ\). Let
\[
\mathcal I(x):=\frac{x}{|x|^2}.
\]
Since \(K\) contains a neighbourhood of the origin and is bounded, the set
\[
\widetilde\Omega:=\mathcal I(\Omega)\cup\{0\}
\]
is a bounded \(C^{1,1}\)-domain, and \(0\) is an interior point of
\(\widetilde\Omega\).

The map \(\mathcal I\) is anti-conformal (conformal up to a reverse of the orientation) in
\(\mathbb R^2\setminus\{0\}\). By conformal invariance of the two-dimensional
Dirichlet Green function, with the present logarithmic normalization,
\[
G_0^\Omega(x,y)
=
G_0^{\widetilde\Omega}\bigl(\mathcal I(x),\mathcal I(y)\bigr).
\]
Indeed, the right-hand side has the same logarithmic singularity at \(x=y\),
vanishes on \(\partial\Omega\), and is bounded at infinity; uniqueness gives
the identity.

Since \(\mathcal I(y)\neq0\), the function
$
X\longmapsto G_0^{\widetilde\Omega}\bigl(X,\mathcal I(y)\bigr)
$
is harmonic in a neighbourhood of \(X=0\). Hence it is real analytic there, and
\begin{equation*}
G_0^{\widetilde\Omega}\bigl(X,\mathcal I(y)\bigr)
=
c_\Omega(y)+a_\Omega(y)\cdot X+O(|X|^2),
\qquad X\to0,
\end{equation*}
where
\begin{equation*}
c_\Omega(y):=
G_0^{\widetilde\Omega}\bigl(0,\mathcal I(y)\bigr),
\qquad
a_\Omega(y):=
\nabla_XG_0^{\widetilde\Omega}\bigl(0,\mathcal I(y)\bigr).
\end{equation*}

Since \(0\) is an interior point of \(\widetilde\Omega\) and
\(\mathcal I(y)\neq0\), positivity of the Dirichlet Green function and the
strong maximum principle give
\[
c_\Omega(y)=G_0^{\widetilde\Omega}\bigl(0,\mathcal I(y)\bigr)>0.
\]
The logarithmic singularity at $y$ is locally square-integrable. On the other
hand, the positive constant limit at infinity excludes square integrability, and
\begin{equation*}
\int_{|x|>R}
\langle x\rangle^{-2s}|G_0^\Omega(x,y)|^2\,dx
<\infty
\quad\Longleftrightarrow\quad
s>1.
\end{equation*}
Thus the critical state is an $s$-wave threshold resonance.
\end{proof}

\subsubsection{Three-dimensional domains contained in a half-space and generalizations}
\label{subsec:threshold-halfspace-subdomains}

\begin{theorem}
\label{thm:threshold-halfspace-subdomain}
Let $\Omega\subset\R^3_+$ be a special Lipschitz domain, and let
$y\in\Omega$. Then
\begin{equation*}
G_0^\Omega(\cdot,y)\in L^2(\Omega),
\end{equation*}
and
\begin{equation*}
\ker A_{\alpha_c^\Omega(y),y}^\Omega
=
\operatorname{span}\{G_0^\Omega(\cdot,y)\}.
\end{equation*}
Consequently, zero belongs to the essential spectrum and it is a simple threshold eigenvalue.
\end{theorem}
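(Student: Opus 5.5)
The plan is to get square integrability from domain monotonicity against the half-space, and then to repeat the operator-domain argument used for $\R^3_+$ in Proposition~\ref{prop:halfspace-dirichlet}.

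\emph{Step 1: square integrability.} Since $\Omega\subset\R^3_+$ and both domains lie in the classes covered by Theorem~\ref{thm:domain_monotonicity_eigenvalue}, part (i) gives
\begin{equation*}
0<G_{-\lambda}^\Omega(x,y)\le G_{-\lambda}^{\R^3_+}(x,y)
\end{equation*}
for every $\lambda>0$. Letting $\lambda\downarrow0$ should yield
\begin{equation*}
0<G_0^\Omega(x,y)\le G_0^{\R^3_+}(x,y)=\frac{1}{4\pi|x-y|}-\frac{1}{4\pi|x-y^*|}.
\end{equation*}
To justify this limit, $G_{-\lambda}^\Omega(\cdot,y)$ increases monotonically as $\lambda\downarrow0$, by the same maximum-principle argument applied in $\lambda$. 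I define $G_0^\Omega(\cdot,y)$ as the monotone limit. It is harmonic off $y$ by Harnack and local elliptic estimates, and it has zero trace because it is dominated near the boundary by $\lambda$-independent barriers. Its regularized diagonal value is the limit of $M_y^\Omega(-\lambda)$, namely $\alpha_c^\Omega(y)$ from Theorem~\ref{thm:critical-coupling-3d}. The half-space bound decays like $y_3x_3|x|^{-3}\le C|x|^{-2}$ at infinity, which is square integrable in $\R^3$ outside a ball. Near $y$ the Coulomb singularity lies in $L^2_{\mathrm{loc}}$. Hence $\psi_0:=G_0^\Omega(\cdot,y)\in L^2(\Omega)$.

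\emph{Step 2: membership in the operator domain.} Fix $z<0$ and set $u_z:=\psi_0-G_z^y$. The goal is
\begin{equation*}
u_z=(A_D-z)^{-1}(-z\psi_0)\in\Dom(A_D).
\end{equation*}
Put $w:=(A_D-z)^{-1}(-z\psi_0)$. Then $v:=u_z-w$ satisfies $(-\Delta-z)v=0$ in $\Omega$, because the deltas cancel and $-\Delta\psi_0=\delta_y$. It has zero trace, is locally $H^1$ near $y$, and belongs to $L^2$. Showing $v=0$ is the main obstacle. We only know $v\in L^2$ with Dirichlet data, not $v\in H_0^1(\Omega)$. I would first try to obtain this directly from the approximation. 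For $\lambda>0$, the resolvent identity gives exactly
\begin{equation*}
G_{-\lambda}^y-G_z^y=(A_D-z)^{-1}\bigl((-\lambda-z)G_{-\lambda}^y\bigr).
\end{equation*}
Since $G_{-\lambda}^y\uparrow\psi_0$ with domination by $\psi_0\in L^2$, dominated convergence gives $G_{-\lambda}^y\to\psi_0$ in $L^2$. Boundedness of the resolvent then passes the identity to the limit, which gives $u_z=w$ and avoids any uniqueness question.

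\emph{Step 3: eigenvalue and simplicity.} Evaluating at $y$ gives
\begin{equation*}
u_z(y)=\alpha_c^\Omega(y)-M_y^\Omega(z),
\end{equation*}
by the diagonal expansions, with charge $q=1$. The characterization \eqref{eq:operator-domain-3d} then shows that $\psi_0$ lies in the domain of the critical operator, with
\begin{equation*}
(A_{\alpha_c,y}-z)\psi_0=(A_D-z)u_z=-z\psi_0,
\end{equation*}
so $A_{\alpha_c,y}\psi_0=0$. For simplicity, take any $u=u_z+qG_z^y$ in the kernel. Then $(A_D-z)u_z=-zu$, so $u_z=-z(A_D-z)^{-1}u$. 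If $q=0$, then $u\in\Dom(A_D)$ with $A_Du=0$, and since $0\notin\sigma_p(A_D)$ (as $A_D\ge0$ with no zero-energy $H_0^1$ harmonic $L^2$ functions), $u=0$. If $q\neq0$, then $u-q\psi_0$ is a kernel element with zero charge, so it vanishes by the same argument. Finally, zero lies in the essential spectrum by Proposition~\ref{prop:essential-spectrum}, so zero is a simple threshold eigenvalue.
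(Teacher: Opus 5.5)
Your proposal is correct and follows the same route as the paper. You compare with the half-space kernel, use monotonicity in $\lambda$ to define $G_0^\Omega(\cdot,y)$ as a monotone limit dominated by $G_0^{\R^3_+}(\cdot,y)=O(|x|^{-2})$, and then write $G_0^\Omega(\cdot,y)=u_z+G_z^y$ with charge $q=1$ in \eqref{eq:operator-domain-3d}. The one place where you improve on the paper is Step 2: the paper only asserts $u_z\in\Dom(A_D)$ from the distributional equation, whereas your limit in $G_{-\lambda}^y-G_z^y=(A_D-z)^{-1}\bigl((-\lambda-z)G_{-\lambda}^y\bigr)$, with $L^2$ convergence by dominated convergence, actually proves it. Combined with the graph-norm continuity of $\tau_y$, the same limit also gives $u_z(y)=\alpha_c^\Omega(y)-M_y^\Omega(z)$ and makes your Step 1 remarks on harmonicity and zero trace unnecessary.
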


\begin{proof}
After a rigid
motion, assume that
\begin{equation*}
\Omega\subset\mathbb R^3_+:=\{x_3>0\},
\qquad
y\in\Omega,
\qquad
\bar y:=(y_1,y_2,-y_3).
\end{equation*}
For \(\lambda>0\), let \(G_{-\lambda}^\Omega(\cdot,y)\) denote the Dirichlet
Green kernel of \(A_D^\Omega+\lambda\), where \(A_D^\Omega=-\Delta_D^\Omega\).

First, if \(0<\lambda_1<\lambda_2\), the resolvent identity gives
\begin{equation*}
(A_D^\Omega+\lambda_1)^{-1}
-
(A_D^\Omega+\lambda_2)^{-1}
=
(\lambda_2-\lambda_1)
(A_D^\Omega+\lambda_1)^{-1}
(A_D^\Omega+\lambda_2)^{-1}.
\end{equation*}
Since the Dirichlet resolvent is positivity preserving, the right-hand side is
positivity preserving. Hence, at the level of kernels,
\begin{equation*}
G_{-\lambda_1}^\Omega(x,y)
\geq
G_{-\lambda_2}^\Omega(x,y),
\qquad x\neq y.
\end{equation*}
Thus \(G_{-\lambda}^\Omega(x,y)\) is monotone increasing as
\(\lambda\downarrow0\).

Second, Dirichlet domain monotonicity yields, for every \(\lambda>0\),
\begin{equation*}
0<G_{-\lambda}^\Omega(x,y)
\leq
G_{-\lambda}^{\mathbb R^3_+}(x,y),
\qquad x\in\Omega\setminus\{y\}.
\end{equation*}
Indeed, the two kernels have the same Coulomb singularity at \(y\). Hence the
difference
\begin{equation*}
w(x):=
G_{-\lambda}^{\mathbb R^3_+}(x,y)-G_{-\lambda}^{\Omega}(x,y)
\end{equation*}
has a removable singularity at \(y\) and satisfies
\begin{equation*}
(-\Delta+\lambda)w=0
\qquad\text{in }\Omega.
\end{equation*}
Moreover, the Dirichlet trace of \(G_{-\lambda}^{\Omega}(\cdot,y)\) vanishes on
\(\partial\Omega\), while
$
G_{-\lambda}^{\mathbb R^3_+}(\cdot,y)\geq0\ \ \text{in }\mathbb R^3_+.
$
Therefore
\begin{equation*}
w|_{\partial\Omega}
=
G_{-\lambda}^{\mathbb R^3_+}(\cdot,y)\big|_{\partial\Omega}
\geq0.
\end{equation*}
Finally, since \(\lambda>0\), both kernels decay exponentially at infinity. Now we use exhaustion: applying
the maximum principle to \(w\) on bounded truncations of \(\Omega\) and letting the
truncation radius tend to infinity gives
\begin{equation*}
w\geq0\quad\text{in }\Omega.
\end{equation*}
Thus
\begin{equation*}
G_{-\lambda}^{\Omega}(x,y)\leq
G_{-\lambda}^{\mathbb R^3_+}(x,y).
\end{equation*}
Consequently the monotone limit
\begin{equation*}
G_0^\Omega(x,y):=
\lim_{\lambda\downarrow0}G_{-\lambda}^\Omega(x,y)
\end{equation*}
exists and satisfies
\begin{equation*}
0<G_0^\Omega(x,y)
\leq
\lim_{\lambda\downarrow0}G_{-\lambda}^{\mathbb R^3_+}(x,y)
=
G_0^{\mathbb R^3_+}(x,y).
\end{equation*}
By the image formula in the half-space,
\begin{equation*}
G_0^{\mathbb R^3_+}(x,y)
=
\frac{1}{4\pi|x-y|}
-
\frac{1}{4\pi|x-\bar y|}.
\end{equation*}
Therefore
\begin{equation*}
0<G_0^\Omega(x,y)
\leq
\frac{1}{4\pi|x-y|}
-
\frac{1}{4\pi|x-\bar y|}.
\end{equation*}
Now, the right hand-side is square integrable. For \(|x|\geq2|y|\), the mean-value
theorem applied to the Coulomb kernel gives
\begin{equation*}
\left|
\frac{1}{|x-y|}
-
\frac{1}{|x-\bar y|}
\right|
\leq
|y-\bar y|
\sup_{\eta\in[y,\bar y]}
\left|\nabla_\eta\frac{1}{|x-\eta|}\right|
\leq
C_y |x|^{-2}.
\end{equation*}
Hence
\begin{equation*}
G_0^\Omega(x,y)=O(|x|^{-2}),
\qquad |x|\to\infty.
\end{equation*}
This decay is square-integrable at infinity in dimension three. Near the pole,
\(G_0^\Omega(x,y)\) has the usual Coulomb singularity
\begin{equation*}
G_0^\Omega(x,y)=\frac{1}{4\pi|x-y|}+O(1),
\qquad x\to y,
\end{equation*}
which is locally square-integrable in dimension three. Thus
$
G_0^\Omega(\cdot,y)\in L^2(\Omega).
$
We next verify the operator-domain condition. Fix $z<0$ and set
\begin{equation*}
u_z:=G_0^\Omega(\cdot,y)-G_z^\Omega(\cdot,y).
\end{equation*}
The singularities at $y$ cancel, and in the distributional sense
\begin{equation*}
(A_D-z)u_z=-zG_0^\Omega(\cdot,y)\in L^2(\Omega).
\end{equation*}
Therefore
\begin{equation*}
u_z=(A_D-z)^{-1}\bigl(-zG_0^\Omega(\cdot,y)\bigr)
\in\operatorname{dom}(A_D).
\end{equation*}
Taking the regular part of $G_0^\Omega(\cdot,y)-G_z^\Omega(\cdot,y)$ at the pole gives
\begin{equation*}
u_z(y)=M_y^\Omega(0)-M_y^\Omega(z)
=
\alpha_c^\Omega(y)-M_y^\Omega(z).
\end{equation*}
The domain formula in Theorem~\ref{thm:krein}, with charge $q=1$, now shows
that $G_0^\Omega(\cdot,y)$ lies in the domain of
$A_{\alpha_c^\Omega(y),y}^\Omega$ and that
\begin{equation*}
A_{\alpha_c^\Omega(y),y}^\Omega G_0^\Omega(\cdot,y)=0.
\end{equation*}
Taking \(\alpha=\alpha_c^\Omega(y)\), the zero eigenvalue found above lies at
the bottom of the essential spectrum. Therefore it is a threshold eigenvalue.
Its simplicity follows from the one-centre extension theory.
\end{proof}

The half-space assumption is used only to obtain the comparison estimate
\begin{equation*}
G_0^\Omega(x,y)\leq G_0^{\R^3_+}(x,y),
\end{equation*}
which gives the square-integrable tail of the critical state. Thus the same
conclusion would hold in any class of domains for which the zero-energy Green
function has an $L^2$ tail at infinity, for instance an $O(|x|^{-2})$ decay in
three dimensions, and for which the bottom of the essential spectrum of the
Dirichlet Laplacian is zero. For ends asymptotic to a half-space such behavior
is natural, but it requires an independent zero-energy asymptotic analysis
of the Green function, given in the next result.

\begin{theorem}
\label{thm:threshold-asymptotic-halfspace}
Let \(\Omega\subset\mathbb R^3\) be a special Lipschitz domain. Assume that,
after a rigid motion,
\begin{equation*}
\Omega=\{(x',x_3)\in\mathbb R^2\times\mathbb R:x_3>\varphi(x')\},
\end{equation*}
where \(\varphi:\mathbb R^2\to\mathbb R\) is globally Lipschitz and
\begin{equation*}
\varphi(x')\longrightarrow0
\qquad\text{as } |x'|\to\infty .
\end{equation*}
Let \(y\in\Omega\). Then
\begin{equation*}
G_0^\Omega(\cdot,y)\in L^2(\Omega),
\end{equation*}
and
\begin{equation*}
\ker A_{\alpha_c^\Omega(y),y}^\Omega
=
\operatorname{span}\{G_0^\Omega(\cdot,y)\}.
\end{equation*}
In particular, zero is a threshold eigenvalue of
\(A_{\alpha_c^\Omega(y),y}^\Omega\).
\end{theorem}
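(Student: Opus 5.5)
The plan is to reduce this statement to the argument of Theorem~\ref{thm:threshold-asymptotic-halfspace}'s predecessor, Theorem~\ref{thm:threshold-halfspace-subdomain}, by replacing the half-space $\R^3_+$ with a \emph{shifted} half-space that contains $\Omega$. Since $\varphi$ is continuous and $\varphi(x')\to0$ as $|x'|\to\infty$, it is bounded. Set $M:=\sup_{\R^2}|\varphi|<\infty$ and $H_M:=\{x_3>-M\}$, so that $\Omega\subset H_M$. With $\bar y_M:=(y',-y_3-2M)$, the method of images gives
\begin{equation*}
G_0^{H_M}(x,y)=\frac{1}{4\pi|x-y|}-\frac{1}{4\pi|x-\bar y_M|}.
\end{equation*}
The mean-value estimate used in the proof of Theorem~\ref{thm:threshold-halfspace-subdomain} then gives $G_0^{H_M}(x,y)\le C_y|x|^{-2}$ for $|x|\ge 2(|y|+M)$. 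Note that we only use $\varphi$ bounded below; the decay $\varphi\to0$ is not needed for the $L^2$ tail.

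The steps, in order, are as follows.
\begin{enumerate}
\item[(i)] Show that $\lambda\mapsto G_{-\lambda}^\Omega(x,y)$ increases as $\lambda\downarrow0$, using positivity of the resolvent difference exactly as before.
\item[(ii)] Show that $0<G_{-\lambda}^\Omega(\cdot,y)\le G_{-\lambda}^{H_M}(\cdot,y)$. This follows from Theorem~\ref{thm:domain_monotonicity_eigenvalue}(i) applied with $\Omega_1=\Omega$ and $\Omega_2=H_M$, which is a special Lipschitz domain with constant $\varphi\equiv -M$.
\item[(iii)] Define $G_0^\Omega(\cdot,y)$ as the monotone limit. It satisfies $0<G_0^\Omega\le G_0^{H_M}$, so it has the Coulomb singularity at $y$ and an $O(|x|^{-2})$ tail. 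Hence $G_0^\Omega(\cdot,y)\in L^2(\Omega)$ in dimension three.
\item[(iv)] Verify the operator-domain condition. For fixed $z<0$, set $u_z:=G_0^\Omega(\cdot,y)-G_z^\Omega(\cdot,y)$; then $u_z=(A_D-z)^{-1}(-zG_0^\Omega(\cdot,y))\in\Dom(A_D)$. The regular part at the pole gives $u_z(y)=\alpha_c^\Omega(y)-M_y^\Omega(z)$. The domain description \eqref{eq:operator-domain-3d} with $q=1$ then yields $A^\Omega_{\alpha_c^\Omega(y),y}G_0^\Omega(\cdot,y)=0$.
\item[(v)] Conclude simplicity. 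The kernel is spanned by $G_0^\Omega(\cdot,y)$: any element $u=u_z+qG_z^y$ of the kernel forces $(A_D-z)u_z=zqG_z^y$, and subtracting $qG_0^\Omega$ leaves an element of $\Ker A_D$, which is trivial because $A_D$ has no zero eigenvalue. Finally, zero is a threshold point because $\sigma_{\mathrm{ess}}=[0,\infty)$ by Proposition~\ref{prop:essential-spectrum}.
\end{enumerate}

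The step requiring the most care is (iii)--(iv), namely identifying the limit. We must check three things: that the monotone limit solves $-\Delta G_0^\Omega=\delta_y$ with zero Dirichlet trace; that $u_z$ satisfies the weak identity $(-\Delta-z)u_z=-zG_0^\Omega$ with $u_z\in H^1_0$; and that its regularized diagonal value equals $\lim_{\lambda\downarrow0}M_y^\Omega(-\lambda)=\alpha_c^\Omega(y)$ from Theorem~\ref{thm:critical-coupling-3d}(iii). For the diagonal value, I would write $h_{-\lambda}^\Omega(\cdot,y)=G_{-\lambda}^0-G_{-\lambda}^\Omega$. These functions are solutions of $(-\Delta+\lambda)h=0$ that are locally uniformly bounded near $y$, thanks to the upper bound by $G_{-\lambda}^{H_M}$. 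Interior elliptic estimates give local uniform convergence of $h_{-\lambda}$ near $y$, so the value at $y$ passes to the limit. For membership in $\Dom(A_D)$, I would use $u_z=\lim_{\lambda\downarrow0}(G_{-\lambda}^\Omega-G_z^\Omega)=\lim_{\lambda\downarrow0}(z+\lambda)(A_D-z)^{-1}G_{-\lambda}^\Omega(\cdot,y)$. Dominated convergence in $L^2$, with majorant $G_0^{H_M}\in L^2$, gives $G_{-\lambda}^\Omega(\cdot,y)\to G_0^\Omega(\cdot,y)$ in $L^2$. Boundedness of $(A_D-z)^{-1}$ then gives $u_z=-z(A_D-z)^{-1}G_0^\Omega(\cdot,y)$ directly, which settles the domain issue cleanly.
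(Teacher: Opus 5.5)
Your argument is correct, and it takes a shorter route than the paper. You note that $\varphi$ is continuous with $\varphi\to0$ at infinity, hence bounded. So $\Omega$ lies \emph{globally} in the translated half-space $H_M=\{x_3>-M\}$. Theorem~\ref{thm:domain_monotonicity_eigenvalue}(i) with $\Omega_2=H_M$ then gives $G_{-\lambda}^\Omega(\cdot,y)\le G_{-\lambda}^{H_M}(\cdot,y)$ for all $\lambda>0$. The statement thus becomes a vertical translate of Theorem~\ref{thm:threshold-halfspace-subdomain}.

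The paper instead uses $\varphi\to0$ only near infinity. It picks $R_0$ with $\Omega\cap\{|x|>R_0\}\subset\{x_3>-\varepsilon/2\}$ and runs a barrier argument for $w_\lambda=C\,G_0^{H_\varepsilon}(\cdot,y)-G_{-\lambda}^\Omega(\cdot,y)$ on the exterior region. The constant $C$ is fixed independently of $\lambda$ on the artificial boundary $\overline{\Omega\cap\partial B_{R_0}}$. The boundary at infinity is controlled by the exponential decay of $G^\Omega_{-\lambda}(\cdot,y)$ against the lower bound $G_0^{H_\varepsilon}(x,y)\ge c|x|^{-3}$.

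That localized comparison only sees the end of the domain, so it is the natural template when global containment in a half-space is not available. For a continuous graph, however, a lower bound on $\varphi$ near infinity already forces a global one. Your route therefore also shows that the hypothesis $\varphi\to0$ can be weakened to $\inf\varphi>-\infty$.

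Your operator-domain step is also tighter than the paper's. The $L^2$ convergence $G_{-\lambda}^\Omega(\cdot,y)\to G_0^\Omega(\cdot,y)$, with majorant $G_0^{H_M}(\cdot,y)$, gives $u_z=-z(A_D-z)^{-1}G_0^\Omega(\cdot,y)\in\Dom(A_D)$ directly. This includes the $H^1_0$ membership that the paper's distributional argument leaves implicit. Your simplicity argument via $\Ker A_D=\{0\}$ spells out what the paper delegates to the extension theory.

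There are three small slips, none of which affects the conclusion.
\begin{itemize}
\item The resolvent identity reads $G_{-\lambda}^\Omega(\cdot,y)-G_z^\Omega(\cdot,y)=-(z+\lambda)(A_D-z)^{-1}G_{-\lambda}^\Omega(\cdot,y)$; your final formula for $u_z$ is right.
\item In step (v) the relation is $A_Du_z=-zq\,G_z^y$.
\item The local uniform bound on $h_{-\lambda}^\Omega(\cdot,y)$ near $y$ comes from $G_{-\lambda}^\Omega\ge0$ together with the maximum principle. Positivity gives $h_{-\lambda}^\Omega\le G_{-\lambda}^0\le(4\pi r)^{-1}$ on $\partial B_r(y)$. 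The comparison with $G_{-\lambda}^{H_M}$ only yields the lower bound $h_{-\lambda}^\Omega\ge h_{-\lambda}^{H_M}\ge0$.
\end{itemize}
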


\begin{proof}
We prove the only non immediate point, namely the square integrability of the zero-energy Green
function at infinity.
Choose \(\varepsilon>0\) so large that
\begin{equation*}
y\in H_\varepsilon:=\{x_3>-\varepsilon\}.
\end{equation*}
Since \(\varphi(x')\to0\) as \(|x'|\to\infty\), there exists \(R_0>0\) such that
\begin{equation*}
\varphi(x')>-\frac{\varepsilon}{2}
\qquad\text{for } |x'|>R_0 .
\end{equation*}
Increasing \(R_0\), if necessary, we may assume that \(y\in B_{R_0}\) and
\begin{equation*}
\Omega\cap\{|x|>R_0\}
\subset
\{x_3>-\varepsilon/2\}
\Subset H_\varepsilon .
\end{equation*}

For \(\lambda>0\), let
\begin{equation*}
u_\lambda(x):=G_{-\lambda}^\Omega(x,y),
\end{equation*}
and let
\begin{equation*}
v(x):=G_0^{H_\varepsilon}(x,y)
\end{equation*}
be the zero-energy Dirichlet Green function of the half-space
\(H_\varepsilon\). Then \(v>0\) in \(H_\varepsilon\), and \(v\) is harmonic in
\(\Omega\cap\{|x|>R_0\}\), because \(y\in B_{R_0}\). Moreover,
\(u_\lambda\) solves
\begin{equation*}
(-\Delta+\lambda)u_\lambda=0
\qquad\text{in } \Omega\cap\{|x|>R_0\}.
\end{equation*}

We claim that there is a constant \(C>0\), independent of \(\lambda\), such that
\begin{equation*}
u_\lambda(x)\leq C v(x),
\qquad x\in \Omega\cap\{|x|>R_0\}.
\end{equation*}
Indeed, by comparison with the free-space Yukawa kernel,
\begin{equation*}
0<u_\lambda(x)
\leq
\frac{e^{-\sqrt{\lambda}|x-y|}}{4\pi |x-y|}
\leq
\frac{1}{4\pi |x-y|}.
\end{equation*}
On the compact set
\begin{equation*}
\Gamma_{R_0}:=\overline{\Omega\cap\partial B_{R_0}},
\end{equation*}
the function \(v\) is strictly positive. Hence, choosing \(C>0\) large enough,
we have
\begin{equation*}
C v\geq u_\lambda
\qquad\text{on } \Gamma_{R_0},
\end{equation*}
uniformly for all \(\lambda>0\).

Set
\begin{equation*}
w_\lambda:=C v-u_\lambda .
\end{equation*}
In \(\Omega\cap\{|x|>R_0\}\) one has
\begin{equation*}
(-\Delta+\lambda)w_\lambda
=
C\lambda v
\geq0.
\end{equation*}
On the genuine boundary part
\(\partial\Omega\cap\{|x|>R_0\}\), the Dirichlet trace of \(u_\lambda\)
vanishes and \(v\geq0\). Hence
\begin{equation*}
w_\lambda\geq0
\qquad\text{on } \partial\Omega\cap\{|x|>R_0\}.
\end{equation*}
On the artificial boundary \(\Gamma_{R_0}\), \(w_\lambda\geq0\) by the choice
of \(C\).

It remains only to control the boundary at infinity when applying the maximum
principle on bounded truncations. Since
\begin{equation*}
\Omega\cap\{|x|>R_0\}\subset\{x_3>-\varepsilon/2\},
\end{equation*}
the explicit half-space formula gives, after increasing \(R_0\) if necessary,
\begin{equation*}
v(x)\geq c |x|^{-3}
\qquad
\text{for } x\in\Omega,\ |x|>R_0,
\end{equation*}
with some \(c>0\). On the other hand,
\begin{equation*}
u_\lambda(x)
\leq
\frac{e^{-\sqrt{\lambda}(|x|-|y|)}}{4\pi(|x|-|y|)}.
\end{equation*}
Thus, for each fixed \(\lambda>0\), \(w_\lambda\geq0\) on
\(\Omega\cap\partial B_R\) for all sufficiently large \(R\). Applying the maximum
principle to \(w_\lambda\) on
\begin{equation*}
\Omega\cap\{R_0<|x|<R\}
\end{equation*}
and then letting \(R\to\infty\), we obtain
\begin{equation*}
u_\lambda(x)\leq C v(x),
\qquad x\in\Omega\cap\{|x|>R_0\}.
\end{equation*}
Letting \(\lambda\downarrow0\) yields
\begin{equation*}
G_0^\Omega(x,y)\leq C G_0^{H_\varepsilon}(x,y),
\qquad x\in\Omega\cap\{|x|>R_0\}.
\end{equation*}

The half-space Green function satisfies
\begin{equation*}
G_0^{H_\varepsilon}(x,y)=O(|x|^{-2})
\qquad\text{as } |x|\to\infty .
\end{equation*}
Therefore
\begin{equation*}
G_0^\Omega(x,y)=O(|x|^{-2})
\qquad\text{as } |x|\to\infty .
\end{equation*}
This decay is square-integrable at infinity in dimension three. Near the pole,
\(G_0^\Omega(\cdot,y)\) has the usual Coulomb singularity
\begin{equation*}
G_0^\Omega(x,y)
=
\frac{1}{4\pi|x-y|}
+
O(1),
\qquad x\to y,
\end{equation*}
which is locally square-integrable in dimension three. Hence
$
G_0^\Omega(\cdot,y)\in L^2(\Omega).
$
Verification that this \(L^2\)-function is in the domain and is the critical eigenfunction is exactly as in Theorem \ref{thm:threshold-halfspace-subdomain}\(z<0\) and it is not repeated.
\end{proof}

\begin{remark}[Dependence on the boundary condition]\label{nbc-halfspace}
The preceding comparison is specific to Dirichlet boundary conditions. For the
Neumann half-space the image charge has the same sign,
\begin{equation*}
G_0^{\R^3_+,N}(x,y)
=
\frac{1}{4\pi|x-y|}
+
\frac{1}{4\pi|x-\bar y|},
\end{equation*}
so the monopole tail survives. In addition, Neumann Green functions do not enjoy
the same pointwise monotonicity under domain inclusion.
\end{remark}

\subsubsection{Planar wedges}

For $0<\beta<2\pi$, set
\begin{equation*}
\mathcal S_\beta
:=
\{z=re^{i\theta}:r>0,\ 0<\theta<\beta\}\subset\C,
\qquad
q:=\frac{\pi}{\beta}.
\end{equation*}
After a rigid motion, $\mathcal S_\beta$ is a special Lipschitz domain, and the branch
of $z\mapsto z^q$ determined by $0<\arg z<\beta$ maps it conformally onto
the upper half-plane.

\begin{theorem}
\label{thm:planar-wedge-threshold}
Let $y=\rho e^{i\varphi}\in \mathcal S_\beta$. The critical coupling is the
quantity in \eqref{eq:sector-critical-coupling-exact-model}. Moreover, uniformly
for $0\leq\theta\leq\beta$,
\begin{equation}
\label{eq:wedge-threshold-tail}
G_0^{\mathcal S_\beta}(re^{i\theta},y)
=
\frac{\rho^q\sin(q\varphi)}{\pi}
 r^{-q}\sin(q\theta)
+
O(r^{-2q})
\qquad\text{as }r\to\infty.
\end{equation}
Consequently:
\begin{enumerate}
\item[(i)] if $0<\beta<\pi$, then
$G_0^{\mathcal S_\beta}(\cdot,y)\in L^2(\mathcal S_\beta)$, and zero is a simple
threshold eigenvalue;

\item[(ii)] if $\beta=\pi$, then
\begin{equation*}
G_0^{\mathcal S_\pi}(\cdot,y)
\in
\bigcap_{s>0}L^2_{-s}(\mathcal S_\pi)\setminus L^2(\mathcal S_\pi),
\end{equation*}
and the critical state is the $p$-wave resonance of the half-plane;

\item[(iii)] if $\pi<\beta<2\pi$, then
\begin{equation*}
G_0^{\mathcal S_\beta}(\cdot,y)
\in
\bigcap_{s>1-\pi/\beta}L^2_{-s}(\mathcal S_\beta)\setminus L^2(\mathcal S_\beta),
\end{equation*}
and the critical state is a threshold resonance with decay
$|x|^{-\pi/\beta}$. It is neither of standard $s$-wave nor of standard
$p$-wave type.
\end{enumerate}
\end{theorem}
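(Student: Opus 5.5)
The plan starts from the exact conformal formula \eqref{eq:sector-zero-green-exact-model}. The critical coupling is already identified there, via the local expansion at $z=y$, so it remains to derive the tail \eqref{eq:wedge-threshold-tail}, classify integrability, and check the domain condition in case (i). Write $w:=z^q=r^qe^{iq\theta}$ and $\eta:=y^q=\rho^qe^{iq\varphi}$, with $\operatorname{Im}\eta=\rho^q\sin(q\varphi)>0$. Then
\begin{equation*}
G_0^{\mathcal S_\beta}(z,y)=\frac{1}{4\pi}\log\frac{|w-\bar\eta|^2}{|w-\eta|^2}
=\frac{1}{4\pi}\log\frac{1-2\operatorname{Re}(\eta/w)+|\eta/w|^2}{1-2\operatorname{Re}(\bar\eta/w)+|\eta/w|^2}.
\end{equation*}
For $|w|=r^q\to\infty$ we set $t:=\eta/w$, so $|t|=O(r^{-q})$ uniformly in $\theta\in[0,\beta]$. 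Expanding $\log(1+s)=s+O(s^2)$ then gives
\begin{equation*}
\frac{1}{4\pi}\cdot 2\operatorname{Re}\bigl((\eta-\bar\eta)/w\bigr)+O(|t|^2)=\frac{1}{\pi}\operatorname{Im}\eta\,\operatorname{Im}(1/\bar w)\cdot(\ldots).
\end{equation*}
More precisely, $\operatorname{Re}\bigl((\eta-\bar\eta)\bar w\bigr)/|w|^2=2\operatorname{Im}\eta\,\operatorname{Im}w/|w|^2$, which yields the leading term $\pi^{-1}\rho^q\sin(q\varphi)\,r^{-q}\sin(q\theta)$ with remainder $O(r^{-2q})$ uniformly in $\theta$. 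The only care needed is the sign bookkeeping and the uniformity, which follows since the $O(s^2)$ constant is uniform for $|s|\le 1/2$.

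For the integrability claims, the Coulomb-free logarithmic singularity at $y$ is locally $L^2$. The vertex is harmless because $G_0$ is bounded near $0$: $w\to0$ there and $G_0\to0$. At infinity, polar coordinates give
\begin{equation*}
\int_{r>R}\langle x\rangle^{-2s}|G_0|^2\,\dd x\asymp\int_R^\infty r^{-2s-2q+1}\,\dd r\int_0^\beta\sin^2(q\theta)\,\dd\theta,
\end{equation*}
since the remainder is of strictly lower order. This integral is finite if and only if $s>1-q$. With $s=0$ this gives $L^2$ if and only if $q>1$, i.e.\ $\beta<\pi$. For $\beta=\pi$ we get membership for all $s>0$ but not $s=0$, matching Proposition~\ref{prop:halfplane-dirichlet}. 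For $\beta>\pi$ we get the threshold $s>1-\pi/\beta\in(0,1/2)$. In case (iii) the decay $r^{-q}$ with $q\in(1/2,1)$ is neither a constant ($s$-wave) nor $|x|^{-1}$ ($p$-wave), so it fits neither standard type.

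In case (i) it remains to show that $G_0^{\mathcal S_\beta}(\cdot,y)$ is a genuine eigenfunction of $A_{\alpha_c,y}^{(2),\mathcal S_\beta}$. We repeat the argument of Theorem~\ref{thm:threshold-halfspace-subdomain}. Fix $z<0$ and set $u_z:=G_0-G_z$. This function has no singularity at $y$ and satisfies $(A_D-z)u_z=-zG_0\in L^2$, so $u_z=(A_D-z)^{-1}(-zG_0)\in\Dom(A_D)$. Its value is $u_z(y)=\alpha_c-M_y(z)$, and \eqref{eq:operator-domain-2d} with $q=1$ then gives the eigenfunction property. Simplicity follows from the one-centre structure, and zero lies in the essential spectrum by Proposition~\ref{prop:essential-spectrum}. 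I expect the main obstacle to be justifying $u_z\in\Dom(A_D)$, in particular $u_z\in H^1_0$ near the vertex for the Friedrichs realization, and the identification of $G_0$ as the limit of $G_{-\lambda}$. My approach is to obtain $G_0$ as the monotone limit of $G_{-\lambda}$, as in that theorem, and to check that this limit coincides with the conformal formula by uniqueness of the positive Dirichlet Green function bounded away from $y$. Then $u_z=\lim_{\lambda\downarrow0}(G_{-\lambda}-G_z)=\lim(z+\lambda)(A_D-z)^{-1}G_{-\lambda}$, where the convergence $G_{-\lambda}\to G_0$ in $L^2$ is dominated by $G_0$ itself. This sidesteps any direct analysis at the vertex.
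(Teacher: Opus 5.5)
Your proposal is correct and follows the paper's proof. You expand the conformal formula \eqref{eq:sector-zero-green-exact-model} at infinity to get \eqref{eq:wedge-threshold-tail}, read off the weighted $L^2$ thresholds $s>1-q$ in polar coordinates, and for $0<\beta<\pi$ reuse the domain argument of Theorem~\ref{thm:threshold-halfspace-subdomain}; the paper omits that last step, and your justification (monotone limit of $G_{-\lambda}^y$ plus $L^2$ convergence of the resolvent applied to $G_{-\lambda}^y$) fills it in cleanly.

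Two sign slips need fixing. First, in your first display the numerator should carry $\bar\eta$ and the denominator $\eta$, since $|w-\bar\eta|^2=|w|^2|1-\bar\eta/w|^2$. Second, the resolvent identity is $G_{-\lambda}^y-G_z^y=-(\lambda+z)(A_D-z)^{-1}G_{-\lambda}^y$, which is consistent with your correct relation $(A_D-z)u_z=-zG_0$.
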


\begin{proof}

The zero-energy Green function and the critical coupling were computed in
Subsection~\ref{subsec:planar-sector-models}. We only extract here the
large-distance behavior and the corresponding threshold classification.

Let
\begin{equation*}
z=re^{i\theta},
\qquad
y=\rho e^{i\varphi},
\qquad
Z=z^q,
\qquad
Y=y^q.
\end{equation*}
Then, by \eqref{eq:sector-zero-green-exact-model},
\begin{equation*}
G_0^{\mathcal S_\beta}(z,y)
=
\frac{1}{2\pi}
\log\left|
\frac{1-\overline Y/Z}{1-Y/Z}
\right|.
\end{equation*}
Since
\begin{equation*}
\log|1-\zeta|
=
-\operatorname{Re}\zeta
+
O(|\zeta|^2),
\qquad
\zeta\to0,
\end{equation*}
uniformly for \(\zeta\) in bounded angular sectors, we obtain
\begin{equation*}
G_0^{\mathcal S_\beta}(z,y)
=
\frac{1}{2\pi}
\operatorname{Re}
\left(
\frac{Y-\overline Y}{Z}
\right)
+
O(|Z|^{-2}).
\end{equation*}
Now
\begin{equation*}
Y-\overline Y
=
2i\rho^q\sin(q\varphi),
\qquad
Z=r^q e^{iq\theta}.
\end{equation*}
Hence
\begin{equation*}
\operatorname{Re}
\left(
\frac{Y-\overline Y}{Z}
\right)
=
2\rho^q r^{-q}\sin(q\varphi)\sin(q\theta).
\end{equation*}
This gives
\begin{equation*}
G_0^{\mathcal S_\beta}(re^{i\theta},y)
=
\frac{\rho^q\sin(q\varphi)}{\pi}
r^{-q}\sin(q\theta)
+
O(r^{-2q}),
\end{equation*}
uniformly for \(0\leq\theta\leq\beta\), which is
\eqref{eq:wedge-threshold-tail}.

The logarithmic singularity at \(y\) is locally square-integrable. At infinity,
the leading angular profile is nonzero, because \(0<\varphi<\beta\) implies
\(\sin(q\varphi)>0\), and
\begin{equation*}
\int_0^\beta \sin^2(q\theta)\,\dd\theta
=
\frac{\beta}{2}.
\end{equation*}
Therefore the tail is square-integrable precisely when
\begin{equation*}
\int_R^\infty r^{1-2q}\,\dd r<\infty,
\end{equation*}
that is, precisely when \(q>1\), equivalently \(0<\beta<\pi\).

More generally, for the weighted spaces,
\begin{equation*}
\int_{\mathcal S_\beta\cap\{|x|>R\}}
\langle x\rangle^{-2s}
|G_0^{\mathcal S_\beta}(x,y)|^2\,\dd x
<\infty
\end{equation*}
holds precisely when
\begin{equation*}
\int_R^\infty r^{1-2q-2s}\,\dd r<\infty,
\end{equation*}
that is, precisely when
$
s>1-q.
$
This proves the stated \(L^2\) and weighted \(L^2\) alternatives. In particular,
for \(\beta=\pi\) one has \(q=1\), and the leading term is the usual dipole
tail \(r^{-1}\sin\theta\); this is the \(p\)-wave resonance of the half-plane.
For \(\pi<\beta<2\pi\), one has \(q<1\), so the critical state decays as
\(|x|^{-q}=|x|^{-\pi/\beta}\). This is a threshold resonance of conical type,
neither of the standard planar \(s\)-wave nor of the standard planar \(p\)-wave
type.

It remains to verify that when \(0<\beta<\pi\) the critical state is in the operator domain and solves the eigenvalue equation. This is checked again as in the proof of \ref{thm:threshold-halfspace-subdomain} and it is omitted.

\end{proof}

\begin{remark}
The three alternatives are visible in the following examples.
For $\beta=\pi/2$,
\begin{equation*}
\mathcal S_{\pi/2}=\{(x_1,x_2):x_2>|x_1|\},
\end{equation*}
and the critical state decays as $|x|^{-2}$, hence is an $L^2$ eigenfunction. For $\beta=\pi$, the domain is a half-plane and
the $|x|^{-1}$ dipole tail gives a $p$-wave resonance. For
$\beta=3\pi/2$,
\begin{equation*}
\mathcal S_{3\pi/2}=\{(x_1,x_2):x_2>-|x_1|\},
\end{equation*}
and the critical state decays as $|x|^{-2/3}$, giving a conical
resonance.
\end{remark}

Figure~\ref{fig:wedge-threshold-regimes} summarizes these three representative
apertures.

\begin{figure}[H]
\centering
\resizebox{\linewidth}{!}{%
\begin{tikzpicture}[
  x=0.92cm,y=0.92cm,
  every node/.style={font=\small},
  ray/.style={very thick},
  axis/.style={thin,gray!65},
  anglelabel/.style={fill=white,inner sep=1.4pt}
]
  \def\L{1.70}
  \path[use as bounding box] (-1.90,-2.80) rectangle (11.40,2.10);

  \begin{scope}[xshift=0cm]
    \fill[gray!16] (0,0) -- (45:\L) arc[start angle=45,end angle=135,radius=\L] -- cycle;
    \draw[ray] (0,0) -- (45:\L);
    \draw[ray] (0,0) -- (135:\L);
    \draw[axis] (-1.82,0)--(1.82,0);
    \draw[axis] (0,-0.32)--(0,1.88);
    \draw[->] (0.43,0.43) arc[start angle=45,end angle=135,radius=0.61];
    \node[anglelabel] at (0,1.14) {$\beta=\pi/2$};
    \node[align=center] at (0,-2.02)
      {$G_0(r,\theta;y)\asymp r^{-2}$\\
       threshold eigenvalue};
    \node at (0,-2.62) {\textnormal{(a)}};
  \end{scope}

  \begin{scope}[xshift=4.75cm]
    \fill[gray!16] (-\L,0) rectangle (\L,1.70);
    \draw[ray] (-\L,0)--(\L,0);
    \draw[axis] (0,-0.32)--(0,1.88);
    \draw[->] (0.61,0) arc[start angle=0,end angle=180,radius=0.61];
    \node[anglelabel] at (0,1.14) {$\beta=\pi$};
    \node[align=center] at (0,-2.02)
      {$G_0(r,\theta;y)\asymp r^{-1}$\\
       $p$-wave resonance};
    \node at (0,-2.62) {\textnormal{(b)}};
  \end{scope}

  \begin{scope}[xshift=9.50cm]
    \fill[gray!16] (0,0) -- (-45:\L) arc[start angle=-45,end angle=225,radius=\L] -- cycle;
    \draw[ray] (0,0)--(-45:\L);
    \draw[ray] (0,0)--(225:\L);
    \draw[axis] (-1.82,0)--(1.82,0);
    \draw[axis] (0,-1.40)--(0,1.88);
    \draw[->] (0.43,-0.43) arc[start angle=-45,end angle=225,radius=0.61];
    \node[anglelabel] at (0,1.42) {$\beta=3\pi/2$};
    \node[align=center] at (0,-2.02)
      {$G_0(r,\theta;y)\asymp r^{-2/3}$\\
       conical resonance};
    \node at (0,-2.62) {\textnormal{(c)}};
  \end{scope}
\end{tikzpicture}
}
\caption{Threshold behavior for three representative planar wedges. The decay exponent is $\pi/\beta$: a narrow wedge gives an eigenvalue, the half-plane gives the borderline $p$-wave resonance and a wide wedge gives a conical resonance.}
\label{fig:wedge-threshold-regimes}
\end{figure}

\begin{remark}
The corner is not responsible for the classification. Let $\Omega\subset\R^2$
be a $C^2$ special Lipschitz domain that agrees with $\mathcal S_\beta$ outside a
ball. After increasing $R$ if necessary, assume that $y\in\Omega\cap B_R$
and that
\begin{equation*}
\Omega\cap\{r>R\}=\mathcal S_\beta\cap\{r>R\}.
\end{equation*}
In the conical end, $u(r,\theta):=G_0^\Omega(re^{i\theta},y)$ after separation of variables has the expansion
\begin{equation*}
u(r,\theta)
=
\sum_{n=1}^\infty b_n(y)
\left(\frac{R}{r}\right)^{n\pi/\beta}
\sin\!\left(\frac{n\pi\theta}{\beta}\right),
\end{equation*}
where
\begin{equation*}
b_n(y)
=
\frac{2}{\beta}
\int_0^\beta
u(R,\vartheta)
\sin\!\left(\frac{n\pi\vartheta}{\beta}\right)
\,d\vartheta.
\end{equation*}
Positivity gives $b_1(y)>0$, and therefore
\begin{equation*}
G_0^\Omega(re^{i\theta},y)
=
c_\Omega(y)r^{-\pi/\beta}
\sin\!\left(\frac{\pi\theta}{\beta}\right)
+
O(r^{-2\pi/\beta}),
\qquad c_\Omega(y)>0.
\end{equation*}
Thus compact smoothing preserves the leading exponent and the threshold
classification.
\end{remark}

\subsubsection{The limiting extension \(\alpha=\infty\) at threshold}
The unperturbed Dirichlet Laplacian is the limiting member of the one-centre
extension family:
\begin{equation*}
A_{\infty,y}^\Omega=A_D^\Omega .
\end{equation*}
Equivalently, in the decomposition
$
u=u_z+qG_z^\Omega(\cdot,y),
\quad
u_z\in\operatorname{dom}(A_D^\Omega),
$
the condition \(\alpha=\infty\) forces the charge \(q\) to vanish. Hence the
threshold states of \(A_D^\Omega\), if any, must be regular harmonic functions
in \(\Omega\), with zero Dirichlet trace, and with the appropriate 
behavior at infinity. 
In the model geometries considered above, the threshold behavior of
\(A_D^\Omega\) is as follows.
\vskip10pt
\begin{center}
\begin{tabular}{c|c|c}
\(\Omega\) & threshold state for \(A_D^\Omega\) & threshold type \\
\hline
\(\mathbb R^3_+\) & none & regular threshold \\
\(\mathbb R^2_+\) & none & regular threshold \\
three-dimensional exterior domain & none in the usual \(3d\) class &
regular threshold \\
two-dimensional exterior domain & logarithmic harmonic state &
logarithmic resonance \\
planar sector \(\mathcal S_\beta\) & none &
regular threshold
\end{tabular}
\end{center}
\vskip10pt

Let us explain the entries in the table. First, zero is not an
\(L^2\)-eigenvalue of the unperturbed Dirichlet Laplacian in any of these
unbounded model domains. Indeed, if
\begin{equation*}
A_D^\Omega u=0,
\qquad
u\in\operatorname{dom}(A_D^\Omega),
\end{equation*}
then
\begin{equation*}
0=(A_D^\Omega u,u)_{L^2(\Omega)}
=\int_\Omega |\nabla u|^2\,dx,
\end{equation*}
and hence \(u\) is constant. The Dirichlet condition and square integrability
force \(u=0\).

In the half-space and half-plane, there is no threshold resonance either. In
these models, a threshold state would be a harmonic
function with zero Dirichlet trace on the boundary plane and with the relevant behavior at infinity. 
The standard $L^2$-weighted spaces exclude all non-zero harmonic polynomials compatible
with the odd (Dirichlet b.c.) symmetry. Thus zero is a regular threshold for the unperturbed Dirichlet
Laplacian in both \(\mathbb R^3_+\) and \(\mathbb R^2_+\). 
For the exterior sphere
$
\Omega=\mathbb R^3\setminus\overline{B_R},
$
the regular radial harmonic function with zero boundary trace is
\begin{equation*}
h_\Omega(x)=1-\frac{R}{|x|}.
\end{equation*}
It cannot be a resonance in the usual three-dimensional threshold class
$
\bigcap_{s>1/2}L^2_{-s}(\Omega),
$
because it tends to a non-zero constant at infinity, and a constant tail belongs
to \(L^2_{-s}(\Omega)\) in dimension three only for \(s>3/2\). Thus zero is a
regular threshold for the unperturbed exterior Dirichlet Laplacian in dimension
three.
The exterior disk is exceptional. For
$
\Omega=\mathbb R^2\setminus\overline{B_R},
$
the regular radial harmonic function with zero boundary trace is
\begin{equation*}
h_\Omega(x)=\log\frac{|x|}{R}.
\end{equation*}
It is not square-integrable, but it satisfies
$
h_\Omega\in L^2_{-s}(\Omega)
\quad\text{for every }s>1.
$
Thus zero is a logarithmic threshold resonance of the unperturbed
Dirichlet Laplacian in the exterior disk. This is the only threshold
singularity which survives in the limiting extension \(\alpha=\infty\) among the
standard models listed above.
The sector case is slightly different. Let
\begin{equation*}
\mathcal S_\beta
=
\{(r,\theta):r>0,\ 0<\theta<\beta\},
\qquad
q=\frac{\pi}{\beta},
\end{equation*}
with \(0<\beta<2\pi\). The separated zero-energy Dirichlet harmonic modes in
the first angular channel are generated by
\begin{equation*}
r^q\sin(q\theta)
\qquad\text{and}\qquad
r^{-q}\sin(q\theta).
\end{equation*}
For the Friedrichs extension of the Dirichlet Laplacian, neither mode gives an admissible
threshold state. The growing mode \(r^q\sin(q\theta)\) is regular at the
vertex, but is too large at infinity; indeed
\begin{equation*}
r^q\sin(q\theta)
\in L^2_{-s}(\mathcal S_\beta\cap\{r>1\})
\quad\Longleftrightarrow\quad
s>1+q.
\end{equation*}
Thus it does not belong to 
\(\bigcap_{s>1}L^2_{-s}\). The decaying mode
\(r^{-q}\sin(q\theta)\) has the correct conical tail at infinity, since
\begin{equation*}
r^{-q}\sin(q\theta)
\in L^2_{-s}(\mathcal S_\beta\cap\{r>1\})
\quad\Longleftrightarrow\quad
s>1-q,
\end{equation*}
but it is singular at the vertex. More precisely,
$
\nabla\bigl(r^{-q}\sin(q\theta)\bigr)
\sim r^{-q-1}
\quad\text{as } r\downarrow0,
$
and hence it is not in the Friedrichs form domain near the vertex. This remains
true in the non-convex case \(\beta>\pi\): then \(q<1\), so
\(r^{-q}\sin(q\theta)\) is locally square-integrable at the vertex, but its
gradient is still not locally square-integrable. In conclusion, for every
\(0<\beta<2\pi\), zero is a regular threshold for the
unperturbed Dirichlet Laplacian on \(\mathcal S_\beta\).

After the above classification we can conclude that the singular behavior of the threshold in nontrivial point interactions is associated to the presence of the singularity.

\subsection{Low-energy resolvent expansions}
\label{subsec:low-energy-resolvent}

The Kre\u{\i}n formula gives a direct operator-theoretic counterpart of
the classification in the preceding subsection.  We use the negative-energy
wave number $\kappa>0$, so that the spectral parameter is $-\kappa^2$.
This avoids branch conventions; the outgoing expansion is recovered by the
substitution $\kappa=-ik$, $\Im k>0$.

In this subsection $A_{\alpha,y}$ denotes the one-centre point-interaction
Hamiltonian in the relevant dimension, namely $A_{\alpha,y}^{\Omega}$ in
dimension three and $A_{\alpha,y}^{(2),\Omega}$ in dimension two.  Similarly,
$\alpha_c(y)$ denotes the corresponding critical coupling:
$\alpha_c^\Omega(y)$ in dimension three and
$\alpha_c^{(2),\Omega}(y)$ in dimension two.  We set
\begin{equation*}
R_D(z):=(A_D-z)^{-1},
\qquad
G_z^y:=G_z^\Omega(\cdot,y),
\qquad
\psi_0:=G_0^\Omega(\cdot,y),
\end{equation*}
and write
\begin{equation*}
A_{c,y}:=A_{\alpha_c(y),y},
\qquad
D_y(\kappa):=\alpha_c(y)-M_y^\Omega(-\kappa^2).
\end{equation*}
For $\kappa>0$, the Kre\u{\i}n formula is the exact identity
\begin{equation}
\label{eq:critical-krein-negative-energy}
(A_{c,y}+\kappa^2)^{-1}-R_D(-\kappa^2)
=
\frac{1}{D_y(\kappa)}
|G_{-\kappa^2}^y\rangle\langle G_{-\kappa^2}^y|.
\end{equation}
Thus the additional threshold singularity created by the point interaction is
entirely determined by the denominator $D_y(\kappa)$. Throughout this
subsection the displayed expansions concern the resolvent difference in
\eqref{eq:critical-krein-negative-energy}; the background Dirichlet resolvent
may have its own threshold behavior, which is not included in the displayed
rank-one singular terms.

\begin{proposition}
\label{prop:rank-one-threshold-reduction}
Let $d\in\{2,3\}$, and assume that the finite critical coupling and the
zero-energy Green function are defined.  Then, for every $\kappa>0$,
\begin{equation}
\label{eq:threshold-denominator-weyl-identity}
D_y(\kappa)
=
\kappa^2
\int_\Omega
G_{-\kappa^2}^\Omega(x,y)G_0^\Omega(x,y)\,dx
>0.
\end{equation}
Moreover,
\begin{equation*}
0<G_{-\kappa^2}^\Omega(x,y)\leq G_0^\Omega(x,y),
\qquad x\neq y,
\end{equation*}
and, whenever $\psi_0\in L_{-s}^2(\Omega)$,
\begin{equation}
\label{eq:green-vector-weighted-convergence}
G_{-\kappa^2}^y\longrightarrow\psi_0
\quad\text{in }L_{-s}^2(\Omega)
\qquad(\kappa\downarrow0).
\end{equation}
Consequently, if
\begin{equation}
\label{eq:abstract-denominator-asymptotic}
D_y(\kappa)=d_y(\kappa)(1+o(1)),
\qquad d_y(\kappa)>0,\qquad d_y(\kappa)\longrightarrow0,
\end{equation}
then
\begin{equation}
\label{eq:abstract-rank-one-threshold-expansion}
(A_{c,y}+\kappa^2)^{-1}-R_D(-\kappa^2)
=
\frac{1}{d_y(\kappa)}
|\psi_0\rangle\langle\psi_0|
+o\bigl(d_y(\kappa)^{-1}\bigr)
\end{equation}
in $\mathcal B(L_s^2(\Omega),L_{-s}^2(\Omega))$.

If $\psi_0\in L^2(\Omega)$, then
\begin{equation}
\label{eq:eigenvalue-denominator-general}
D_y(\kappa)
=
\kappa^2\|\psi_0\|_{L^2(\Omega)}^2+o(\kappa^2),
\end{equation}
and
\begin{equation}
\label{eq:eigenvalue-pole-general}
(A_{c,y}+\kappa^2)^{-1}-R_D(-\kappa^2)
=
\frac{1}{\kappa^2}P_0+o(\kappa^{-2}),
\qquad
P_0:=
\frac{|\psi_0\rangle\langle\psi_0|}{\|\psi_0\|_{L^2(\Omega)}^2}.
\end{equation}
The expansion \eqref{eq:eigenvalue-pole-general} holds in
$\mathcal B(L^2(\Omega))$.
\end{proposition}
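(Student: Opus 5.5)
The plan is to derive \eqref{eq:threshold-denominator-weyl-identity} from the Weyl identity of Lemma~\ref{lem:weyl-identity-3d} (resp.\ Lemma~\ref{lem:weyl-identity-2d}), passing to the limit $w=-\mu^2\uparrow0$. For $0<\mu<\kappa$ we have
\begin{equation*}
M_y^\Omega(-\mu^2)-M_y^\Omega(-\kappa^2)=(\kappa^2-\mu^2)\int_\Omega G_{-\kappa^2}^\Omega(x,y)G_{-\mu^2}^\Omega(x,y)\,dx ,
\end{equation*}
where all kernels are real and positive. As $\mu\downarrow0$, the left side tends to $\alpha_c(y)-M_y^\Omega(-\kappa^2)=D_y(\kappa)$ by Theorems~\ref{thm:critical-coupling-3d} and \ref{thm:critical-coupling-2d}. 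On the right, the integrand increases monotonically to $G_{-\kappa^2}^\Omega G_0^\Omega$: monotonicity in the spectral parameter follows from the resolvent identity and positivity preservation, as in the proof of Theorem~\ref{thm:threshold-halfspace-subdomain}, and $G_0^\Omega$ is defined as this monotone limit. Monotone convergence then gives \eqref{eq:threshold-denominator-weyl-identity}, with the integral finite because the left side is. Positivity of the integral is immediate from $G>0$. The same monotonicity yields $0<G_{-\kappa^2}^\Omega\le G_0^\Omega$.

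For \eqref{eq:green-vector-weighted-convergence}, I would use pointwise convergence $G_{-\kappa^2}^\Omega(x,y)\uparrow\psi_0(x)$ together with the domination $|\psi_0-G_{-\kappa^2}^y|^2\langle x\rangle^{-2s}\le\psi_0^2\langle x\rangle^{-2s}\in L^1$, so dominated convergence applies. The rank-one expansion \eqref{eq:abstract-rank-one-threshold-expansion} then follows by writing $D_y^{-1}=d_y^{-1}(1+o(1))$. Using the norm bound $\|(\cdot,g)h\|_{\mathcal B(L^2_s,L^2_{-s})}\le\|g\|_{L^2_{-s}}\|h\|_{L^2_{-s}}$ from the proof of Theorem~\ref{thm:abstract_transfer_ac}, we get
\begin{equation*}
\bigl\||G_{-\kappa^2}^y\rangle\langle G_{-\kappa^2}^y|-|\psi_0\rangle\langle\psi_0|\bigr\|_{\mathcal B(L^2_s,L^2_{-s})}\to0 .
\end{equation*}
Combining these gives the claim. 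Real-valuedness of the negative-energy Green vectors, from Theorem~\ref{thm:dirichlet-kernel}(v), lets us write the conjugate vector as $G_{-\kappa^2}^y$ itself.

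If $\psi_0\in L^2$, the same domination holds in $L^2$ with weight $1$. Hence $G_{-\kappa^2}^y\to\psi_0$ in $L^2$, and therefore $\int G_{-\kappa^2}^\Omega\psi_0\,dx\to\|\psi_0\|^2$. Inserting this into \eqref{eq:threshold-denominator-weyl-identity} gives \eqref{eq:eigenvalue-denominator-general}, that is, $d_y(\kappa)=\kappa^2\|\psi_0\|^2$. The rank-one estimate in $\mathcal B(L^2)$, with $\|(\cdot,g)h\|\le\|g\|\|h\|$, then yields \eqref{eq:eigenvalue-pole-general}, since $\kappa^{-2}\|\psi_0\|^{-2}|\psi_0\rangle\langle\psi_0|=\kappa^{-2}P_0$.

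The main obstacle I expect is the justification of the limit $\mu\downarrow0$ in the Weyl identity in dimension two, and more generally of the fact that the pointwise monotone limit $G_0^\Omega$ coincides with the zero-energy Green function whose regularized diagonal value is $\alpha_c(y)$. I will take this identification as part of the hypothesis ``the zero-energy Green function is defined''. Granting it, only monotone convergence is needed. I will not require $\psi_0\in L^2_{-s}$ for the identity itself, since finiteness comes for free from the left-hand side.
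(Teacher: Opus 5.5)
Your proposal is correct and follows essentially the same route as the paper. It uses the Weyl identity with $\mu\downarrow0$ plus monotone convergence for \eqref{eq:threshold-denominator-weyl-identity}, pointwise monotonicity and dominated convergence for the weighted convergence, and the rank-one norm estimate combined with the Kre\u{\i}n formula for both expansions.

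The identification of the monotone limit with $G_0^\Omega(\cdot,y)$, which you flag and absorb into the hypothesis, is taken for granted in the paper's proof as well. The $\mu\downarrow0$ limit of the Weyl function in dimension two is not an obstacle, since Theorem~\ref{thm:critical-coupling-2d}\textup{(iii)} already supplies it.
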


\begin{proof}
For $\kappa,\mu>0$, the identity on the Weyl functions gives
\begin{equation*}
M_y^\Omega(-\kappa^2)-M_y^\Omega(-\mu^2)
=
-(\kappa^2-\mu^2)
\bigl(G_{-\kappa^2}^y,G_{-\mu^2}^y\bigr)_{L^2(\Omega)}.
\end{equation*}
Letting $\mu\downarrow0$ yields
\eqref{eq:threshold-denominator-weyl-identity}.  The integral is understood
as a monotone limit; its finiteness follows from the identity itself.

The resolvent identity and positivity of the Dirichlet kernels imply that
$G_{-\kappa^2}^\Omega(x,y)$ increases pointwise to
$G_0^\Omega(x,y)$ as $\kappa\downarrow0$.  Hence
\eqref{eq:green-vector-weighted-convergence} follows from dominated
convergence.  The rank-one estimate
\begin{equation*}
\bigl\||u\rangle\langle u|-|v\rangle\langle v|\bigr\|
\leq
(\|u\|+\|v\|)\|u-v\|
\end{equation*}
in the corresponding weighted spaces, together with
\eqref{eq:critical-krein-negative-energy}, proves
\eqref{eq:abstract-rank-one-threshold-expansion}.

If $\psi_0\in L^2(\Omega)$, the same monotone convergence holds in
$L^2(\Omega)$.  Formula \eqref{eq:threshold-denominator-weyl-identity}
then gives \eqref{eq:eigenvalue-denominator-general}, and
\eqref{eq:eigenvalue-pole-general} follows from the Kre\u{\i}n formula.
\end{proof}

We now compute the denominator in the model geometries of this section.

\begin{proposition}
\label{prop:low-energy-three-dimensional-models}
Let $s>1/2$.
\begin{enumerate}
\item[(i)] For the Dirichlet half-space $\Omega=\R^3_+$, write
$y=(y',y_3)$.  Then
\begin{align}
D_y(\kappa)
&=
-\frac{1}{8\pi y_3}
+\frac{\kappa}{4\pi}
+\frac{e^{-2\kappa y_3}}{8\pi y_3}
\notag\\
&=
\frac{y_3}{4\pi}\kappa^2+O(\kappa^3).
\label{eq:half-space-critical-denominator-negative}
\end{align}
In particular,
\begin{equation*}
\|G_0^{\R^3_+}(\cdot,y)\|_{L^2(\R^3_+)}^2
=\frac{y_3}{4\pi},
\end{equation*}
and
\begin{equation}
\label{eq:half-space-resolvent-eigenpole}
(A_{c,y}+\kappa^2)^{-1}-R_D(-\kappa^2)
=
\frac{1}{\kappa^2}P_0+o(\kappa^{-2}).
\end{equation}
Thus the critical half-space state produces an eigenvalue pole.

\item[(ii)] For the exterior of the sphere $\Omega_R=\R^3\setminus\overline{B_R}$, let
$|y|=\rho>R$, and $p:=R/\rho$.  Then
\begin{equation}
\label{eq:exterior-sphere-critical-denominator-negative}
D_y(\kappa)
=
\frac{(1-p)^2}{4\pi}\,\kappa+O(\kappa^2).
\end{equation}
Consequently,
\begin{equation*}
(A_{c,y}+\kappa^2)^{-1}-R_D(-\kappa^2)
=
\frac{4\pi}{(1-p)^2\kappa}
|G_0^{\Omega_R}(\cdot,y)\rangle
\langle G_0^{\Omega_R}(\cdot,y)|
+o(\kappa^{-1})
\end{equation*}
in $\mathcal B(L_s^2(\Omega_R),L_{-s}^2(\Omega_R))$.
Thus the monopole threshold resonance produces a simple pole in the wave
number.
\end{enumerate}
\end{proposition}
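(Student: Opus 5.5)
For part (i) I will compute directly from the explicit Weyl function \eqref{eq:halfspace-weyl-3d}. By Proposition~\ref{prop:halfspace-dirichlet}, $\alpha_c^{\R^3_+}(y)=-1/(8\pi y_3)$. Substituting $\kappa(-\kappa^2)=\kappa$ into $D_y(\kappa)=\alpha_c(y)-M_y^{\R^3_+}(-\kappa^2)$ gives the first line of \eqref{eq:half-space-critical-denominator-negative}. Next I Taylor expand
\begin{equation*}
e^{-2\kappa y_3}=1-2\kappa y_3+2\kappa^2y_3^2+O(\kappa^3).
\end{equation*}
The constant terms cancel against $\alpha_c$, and the linear term $-2\kappa y_3/(8\pi y_3)$ cancels $\kappa/(4\pi)$. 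What remains is $y_3\kappa^2/(4\pi)+O(\kappa^3)$.

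Proposition~\ref{prop:halfspace-dirichlet} also shows that $\psi_0=G_0^{\R^3_+}(\cdot,y)\in L^2(\R^3_+)$. Hence \eqref{eq:eigenvalue-denominator-general} of Proposition~\ref{prop:rank-one-threshold-reduction} applies. Comparing its coefficient of $\kappa^2$ with the one just computed gives $\|\psi_0\|^2=y_3/(4\pi)$ without evaluating any integral. The eigenpole \eqref{eq:half-space-resolvent-eigenpole} is then exactly \eqref{eq:eigenvalue-pole-general}.

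For part (ii) I will start from the series for $h_{-a^2}^{\Omega_R}(y,y)$ in the proof of Proposition~\ref{prop:exterior-sphere}, with $a=\kappa$. The aim is to expand it to first order in $a$:
\begin{equation*}
h_{-a^2}^{\Omega_R}(y,y)=h_0^{\Omega_R}(y,y)+a\,\eta+O(a^2).
\end{equation*}
Then Theorem~\ref{thm:critical-coupling-3d}(iii) gives $\alpha_c=-h_0^{\Omega_R}(y,y)$, so $D_y(\kappa)=\kappa/(4\pi)+\kappa\eta+O(\kappa^2)$.

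I split off the monopole term $n=0$. With the paper's normalization, $i_0(t)=\sinh t/t$ and $k_0(t)=e^{-t}/t$. The $n=0$ term is therefore
\begin{equation*}
\frac{1}{8\pi a\rho^2}\bigl(e^{2aR}-1\bigr)e^{-2a\rho}
=\frac{R}{4\pi\rho^2}+a\,\frac{R^2-2R\rho}{4\pi\rho^2}+O(a^2).
\end{equation*}
For $n\geq1$ I use the small-argument structure of the modified spherical Bessel functions: $i_n(t)=t^n\cdot(\text{even series in } t)$, and $k_n(t)=t^{-n-1}(c_0+c_2t^2+\cdots)$, where the first odd power appears only at relative order $t^{2n+1}$. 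Consequently
\begin{equation*}
\frac{a}{4\pi}(2n+1)\frac{i_n(aR)}{k_n(aR)}k_n(a\rho)^2
\end{equation*}
equals its $a=0$ limit plus $O(a^2)$, with no linear term. Summing these limits over all $n$ (including the $a=0$ value of the $n=0$ term) reproduces $h_0^{\Omega_R}(y,y)=R/(4\pi(\rho^2-R^2))$. This gives $\eta=(R^2-2R\rho)/(4\pi\rho^2)$, and hence $D_y(\kappa)=\kappa(1-p)^2/(4\pi)+O(\kappa^2)$. As a consistency check, the same constant follows heuristically from \eqref{eq:threshold-denominator-weyl-identity}: the tail $G_0\sim(1-p)/(4\pi|x|)$ of Proposition~\ref{prop:exterior-sphere} gives $\kappa^2\int c^2e^{-\kappa r}r^{-2}\,4\pi r^2\,dr=4\pi c^2\kappa$ with $c=(1-p)/(4\pi)$.

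The main obstacle is justifying the $O(a^2)$ remainder uniformly in $n\geq1$, so that it can be summed. I would first try to bound each remainder by $C a^2 p^{2n}$ for $a$ in a small interval, using the uniform large-order asymptotics \cite[(10.41.1)--(10.41.2)]{DLMF}, as was done for the exterior disk; the factor $p^{2n}$ makes the sum converge. Once the asymptotic for $D_y(\kappa)$ is established, I take $d_y(\kappa)=(1-p)^2\kappa/(4\pi)$ in Proposition~\ref{prop:rank-one-threshold-reduction}. Since $\psi_0\in L^2_{-s}(\Omega_R)$ for $s>1/2$ by Proposition~\ref{prop:exterior-sphere}, \eqref{eq:abstract-rank-one-threshold-expansion} yields the stated $\kappa^{-1}$ expansion in $\mathcal B(L_s^2(\Omega_R),L_{-s}^2(\Omega_R))$.
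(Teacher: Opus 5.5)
Your proposal is correct and follows the paper's proof. Part (i) is the explicit expansion of \eqref{eq:halfspace-weyl-3d} combined with Proposition~\ref{prop:rank-one-threshold-reduction}, and part (ii) is the first-order expansion $h_{-\kappa^2}^{\Omega_R}(y,y)=h_0^{\Omega_R}(y,y)-\frac{2p-p^2}{4\pi}\kappa+O(\kappa^2)$ of the partial-wave series followed by the same reduction; the paper only asserts that expansion, while you correctly derive its linear coefficient from the exact monopole term and the absence of linear terms in the modes $n\geq1$, and the uniform $O(a^2)$ remainder you flag follows directly from the power series of $i_n$ and $k_n$, which give relative corrections $1+O(a^2)$ uniformly in $n\geq1$ (the leading large-order forms alone would only give a summable majorant, not the rate).
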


\begin{proof}
For the half-space, formula~\eqref{eq:halfspace-weyl-3d} at
$z=-\kappa^2$ gives the first line of
\eqref{eq:half-space-critical-denominator-negative}; expansion of the
exponential gives the second.  Proposition~\ref{prop:rank-one-threshold-reduction}
then yields the norm and \eqref{eq:half-space-resolvent-eigenpole}.

For the exterior of the sphere, the partial-wave computation in
Subsection~4.2.1 gives
\begin{equation*}
h_{-\kappa^2}^{\Omega_R}(y,y)
=
h_0^{\Omega_R}(y,y)
-\frac{2p-p^2}{4\pi}\kappa+O(\kappa^2).
\end{equation*}
Since
\begin{equation*}
M_y^{\Omega_R}(-\kappa^2)
=-\frac{\kappa}{4\pi}-h_{-\kappa^2}^{\Omega_R}(y,y),
\end{equation*}
formula \eqref{eq:exterior-sphere-critical-denominator-negative} follows.
The resolvent expansion is then an immediate consequence of
Proposition~\ref{prop:rank-one-threshold-reduction}.
\end{proof}

\begin{proposition}
\label{prop:low-energy-two-dimensional-models}
Let $s>1$.
\begin{enumerate}
\item[(i)] Let $H=\R^2_+$, $y=(y_1,y_2)\in H$, and set
$\psi_0=G_0^{H,D}(\cdot,y)$.  Then
\begin{equation}
\label{eq:half-plane-critical-denominator-negative}
D_y(\kappa)
=
\frac{y_2^2\kappa^2}{2\pi}
\bigl(1-\gamma_E-\log(\kappa y_2)\bigr)
+O(\kappa^4|\log\kappa|).
\end{equation}
Consequently,
\begin{align}
&(A_{c,y}+\kappa^2)^{-1}-R_D(-\kappa^2)
\notag\\
&\qquad=
\frac{2\pi}{y_2^2\kappa^2
\bigl(1-\gamma_E-\log(\kappa y_2)\bigr)}
|\psi_0\rangle\langle\psi_0|
+o\bigl((\kappa^2|\log\kappa|)^{-1}\bigr).
\label{eq:half-plane-p-wave-resolvent}
\end{align}
This is the characteristic $p$-wave singularity.

\item[(ii)] Let $\Omega_R=\R^2\setminus\overline{B_R}$,
$|y|=\rho>R$, and set
\begin{equation*}
c:=\log\frac{\rho}{R},
\qquad
L_R(\kappa):=-\log\frac{\kappa R}{2}-\gamma_E.
\end{equation*}
Then
\begin{equation}
\label{eq:exterior-disk-critical-denominator-negative}
D_y(\kappa)
=
\frac{c^2}{2\pi L_R(\kappa)}(1+o(1)).
\end{equation}
Consequently, with $\psi_0=G_0^{\Omega_R}(\cdot,y)$,
\begin{equation}
\label{eq:exterior-disk-s-wave-resolvent}
(A_{c,y}+\kappa^2)^{-1}-R_D(-\kappa^2)
=
\frac{2\pi L_R(\kappa)}{c^2}
|\psi_0\rangle\langle\psi_0|
+o(|\log\kappa|).
\end{equation}
This is the characteristic logarithmic $s$-wave singularity.  If
$\phi_s:=(2\pi/c)\psi_0$, so that $\phi_s(x)\to1$ as
$|x|\to\infty$, then the leading term in
\eqref{eq:exterior-disk-s-wave-resolvent} is
\begin{equation*}
\frac{L_R(\kappa)}{2\pi}|\phi_s\rangle\langle\phi_s|.
\end{equation*}
\end{enumerate}
\end{proposition}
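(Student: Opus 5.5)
For part (i) the plan is to compute the Kre\u{\i}n denominator in closed form and then apply Proposition~\ref{prop:rank-one-threshold-reduction}. For part (ii) I will push the partial-wave computation of Proposition~\ref{prop:exterior-disk} one order beyond the threshold limit. In both cases the statement about the resolvent then reduces to identifying $d_y(\kappa)$ in \eqref{eq:abstract-denominator-asymptotic}.

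\emph{Half-plane.} By \eqref{eq:halfplane-weyl-2d} and Proposition~\ref{prop:halfplane-dirichlet},
\begin{equation*}
D_y(\kappa)=\frac{1}{2\pi}\Bigl(\log(2y_2)+\log\frac{\kappa}{2}+\gamma_{\mathrm E}+K_0(2\kappa y_2)\Bigr).
\end{equation*}
I will insert the two-term small-argument expansion obtained from the power series \cite[(10.31.2)]{DLMF},
\begin{equation*}
K_0(t)=-\log\frac t2-\gamma_{\mathrm E}+\frac{t^2}{4}\Bigl(1-\gamma_{\mathrm E}-\log\frac t2\Bigr)+O(t^4|\log t|),
\end{equation*}
with $t=2\kappa y_2$. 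The constant and logarithmic terms cancel exactly, namely $\log(2y_2)+\log(\kappa/2)-\log(\kappa y_2)=0$ and the two $\gamma_{\mathrm E}$ cancel. What remains is $\frac{y_2^2\kappa^2}{2\pi}(1-\gamma_{\mathrm E}-\log(\kappa y_2))+O(\kappa^4|\log\kappa|)$, which is \eqref{eq:half-plane-critical-denominator-negative}.

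Since $\psi_0\in\bigcap_{s>0}L^2_{-s}$ by Proposition~\ref{prop:halfplane-dirichlet}, Proposition~\ref{prop:rank-one-threshold-reduction} applies with $d_y(\kappa)$ equal to the leading term. It gives \eqref{eq:half-plane-p-wave-resolvent}, because $o(d_y^{-1})=o((\kappa^2|\log\kappa|)^{-1})$.

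\emph{Exterior disk.} I start from the exact series for $h_{-a^2}^{\Omega_R}(y,y)$ in Proposition~\ref{prop:exterior-disk}, with $a=\kappa$, and split off the zero mode.
\begin{itemize}
\item[\emph{Zero mode.}] Write $K_0(\kappa t)=L_R(\kappa)-\log(t/R)+O(\kappa^2|\log\kappa|)$ and $I_0=1+O(\kappa^2)$. Then
\begin{equation*}
\frac{I_0(\kappa R)}{K_0(\kappa R)}K_0(\kappa\rho)^2=\frac{(L_R-c)^2}{L_R}\bigl(1+O(\kappa^2|\log\kappa|)\bigr)=L_R-2c+\frac{c^2}{L_R}+O(\kappa^2\log^2\kappa).
\end{equation*}
\item[\emph{Free term.}] It equals $L_\kappa/(2\pi)=(L_R+\log R)/(2\pi)$, so the divergent $L_R$ cancels.
\item[\emph{Collecting.}] The limit $\alpha_c^{(2),\Omega_R}(y)=\frac1{2\pi}\log\frac{\rho^2-R^2}{R}$ absorbs the constants $\log R+2c+\log(1-p^2)$. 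This leaves
\begin{equation*}
D_y(\kappa)=\frac{1}{2\pi}\Bigl(\frac{c^2}{L_R(\kappa)}+E(\kappa)\Bigr)+O(\kappa^2\log^2\kappa),
\end{equation*}
where $E(\kappa)$ is the difference between the nonzero-mode sum at $\kappa$ and its limit $-\log(1-p^2)$.
\end{itemize}

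\emph{Main obstacle.} The crux is to show $E(\kappa)=o(1/|\log\kappa|)$. I would aim for $O(\kappa^2|\log\kappa|)$, which comes from the $n=1$ mode: $K_1$ carries a $t\log t$ correction. For each fixed $n\geq1$ the series \cite[(10.25.2), (10.31.1)]{DLMF} give relative corrections $O(\kappa^2|\log\kappa|)$. To sum over $n$ I need these relative errors uniformly in $n$ together with the majorant $Cp^{2n}/n$. I would get them from Debye's uniform large-order expansions \cite[(10.41.1)--(10.41.2)]{DLMF}, or more simply from the integral representations of $I_n$ and $K_n$, which yield $|I_n(t)K_n(s)/\text{(limit form)}-1|\leq C(t^2+s^2)$ uniformly for $n\geq2$.

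Granting this, $D_y(\kappa)=\frac{c^2}{2\pi L_R(\kappa)}(1+o(1))$, which is \eqref{eq:exterior-disk-critical-denominator-negative}. Proposition~\ref{prop:rank-one-threshold-reduction} with $s>1$ (admissible because $\psi_0\in\bigcap_{s>1}L^2_{-s}$ by Proposition~\ref{prop:exterior-disk}) then gives \eqref{eq:exterior-disk-s-wave-resolvent}.

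The renormalized form follows by substituting $\psi_0=(c/2\pi)\phi_s$. The normalization $\phi_s\to1$ holds because $b_y=c/(2\pi)$ is the constant term of $\psi_0$ at infinity.
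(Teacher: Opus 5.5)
Your proposal is correct and follows essentially the same route as the paper. For the half-plane, both you and the paper insert the two-term small-argument expansion of $K_0$ into the closed-form Weyl function and then apply Proposition~\ref{prop:rank-one-threshold-reduction}; for the exterior disk, both extract the zero angular mode $L_R-2c+c^2/L_R$ from the partial-wave series. The one difference is that you supply the uniform-in-$n$ control showing that the nonzero-mode error is $o(L_R(\kappa)^{-1})$, for instance through the integral representations of $I_n$ and $K_n$, whereas the paper only asserts this bound.
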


\begin{proof}
For the half-plane, formula~\eqref{eq:halfplane-weyl-2d} and the
small-argument expansion of the Macdonald function $K_0$, see
\cite[(10.31.2)]{DLMF},
\begin{equation*}
K_0(t)
=
-\left(\log\frac{t}{2}+\gamma_E\right)
+\frac{t^2}{4}
\left(1-\gamma_E-\log\frac{t}{2}\right)
+O(t^4|\log t|)
\end{equation*}
give \eqref{eq:half-plane-critical-denominator-negative}.  Formula
\eqref{eq:half-plane-p-wave-resolvent} follows from
Proposition~\ref{prop:rank-one-threshold-reduction}.

For the exterior disk, only the zero angular mode contributes at order
$L_R(\kappa)^{-1}$.  Indeed,
\begin{equation*}
K_0(\kappa\rho)
=L_R(\kappa)-c+o(1),
\qquad
\frac{I_0(\kappa R)}{K_0(\kappa R)}
=\frac{1}{L_R(\kappa)}(1+o(1)),
\end{equation*}
and therefore
\begin{align*}
\frac{I_0(\kappa R)}{K_0(\kappa R)}K_0(\kappa\rho)^2
&=
\frac{(L_R(\kappa)-c)^2}{L_R(\kappa)}
+o(L_R(\kappa)^{-1})
\\
&=
L_R(\kappa)-2c+
\frac{c^2}{L_R(\kappa)}
+o(L_R(\kappa)^{-1}).
\end{align*}
The nonzero angular modes have finite limits and errors
$o(L_R(\kappa)^{-1})$.  Combining this with the free logarithmic term in
the Weyl function and with the value of $\alpha_c^{(2),\Omega_R}(y)$
gives \eqref{eq:exterior-disk-critical-denominator-negative}.  The
resolvent expansion follows again from
Proposition~\ref{prop:rank-one-threshold-reduction}.
\end{proof}

\begin{remark}
The preceding two propositions give the standard orders in
three and two dimensions; compare \cite{JK79,CorneanMichelangeliYajima2019}.
Planar sectors show that conical ends produce further aperture-dependent
orders, which are computed next.
\end{remark}
\begin{proposition}
\label{prop:low-energy-planar-wedges}
Let $\mathcal S_\beta$, $y=\rho e^{i\varphi}$, and
$q=\pi/\beta$ be as in
Theorem~\ref{thm:planar-wedge-threshold}.  Set
\begin{equation*}
D_{\beta,y}(\kappa)
:=
\alpha_c^{(2),\mathcal S_\beta}(y)
-M_y^{\mathcal S_\beta}(-\kappa^2).
\end{equation*}
Then:
\begin{enumerate}
\item[(i)] if $q>1$,
\begin{equation*}
D_{\beta,y}(\kappa)
=
\kappa^2\|G_0^{\mathcal S_\beta}(\cdot,y)\|_{L^2(\mathcal S_\beta)}^2
+o(\kappa^2),
\end{equation*}
and the critical resolvent has the eigenvalue term
$\kappa^{-2}P_0$;

\item[(ii)] if $q=1$, one recovers the half-plane expansion
\eqref{eq:half-plane-critical-denominator-negative} and the
$p$-wave singularity $(\kappa^2|\log\kappa|)^{-1}$;

\item[(iii)] if $0<q<1$,
\begin{equation}
\label{eq:wide-wedge-denominator}
D_{\beta,y}(\kappa)
=
C_{\beta,y}\kappa^{2q}(1+o(1)),
\end{equation}
where
\begin{equation*}
C_{\beta,y}
=
\frac{\rho^{2q}\sin^2(q\varphi)}
{\pi 2^{2q}}
\frac{\Gamma(1-q)}{\Gamma(1+q)}
>0.
\end{equation*}
Consequently, for every $s>1-q$,
\begin{align}
(A_{c,y}+\kappa^2)^{-1}-R_D(-\kappa^2)=
\frac{1}{C_{\beta,y}\kappa^{2q}}
|G_0^{\mathcal S_\beta}(\cdot,y)\rangle
\langle G_0^{\mathcal S_\beta}(\cdot,y)|
+o(\kappa^{-2q})
\label{eq:wide-wedge-resolvent-singularity}
\end{align}
in $\mathcal B(L_s^2(\mathcal S_\beta),L_{-s}^2(\mathcal S_\beta))$.
\end{enumerate}
\end{proposition}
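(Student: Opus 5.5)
Everything rests on the identity \eqref{eq:threshold-denominator-weyl-identity} of Proposition~\ref{prop:rank-one-threshold-reduction},
\begin{equation*}
D_{\beta,y}(\kappa)=\kappa^2\int_{\mathcal S_\beta}G_{-\kappa^2}^{\mathcal S_\beta}(x,y)\,G_0^{\mathcal S_\beta}(x,y)\,\dd x,
\end{equation*}
combined with the far-field tail \eqref{eq:wedge-threshold-tail} of Theorem~\ref{thm:planar-wedge-threshold}.

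\emph{Case (i), $q>1$.} Here $\psi_0\in L^2(\mathcal S_\beta)$ by Theorem~\ref{thm:planar-wedge-threshold}(i). Formulas \eqref{eq:eigenvalue-denominator-general} and \eqref{eq:eigenvalue-pole-general} then give both claims directly.

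\emph{Case (ii), $q=1$.} The wedge is the half-plane $\mathbb H$ after a rotation. The Weyl function is rotation invariant, and $\rho\sin\varphi$ is the distance of $y$ to the boundary. So the claim is exactly Proposition~\ref{prop:low-energy-two-dimensional-models}(i).

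\emph{Case (iii), $0<q<1$.} The mode expansion \eqref{eq:sector-negative-green-expansion} is the natural tool, but only the first angular channel can contribute at order $\kappa^{2q}$.

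1. For $|x|=r>\rho$, each term in \eqref{eq:sector-negative-green-expansion} has the form
\begin{equation*}
\tfrac{2}{\beta}\sin(nq\theta)\sin(nq\varphi)\,I_{nq}(\kappa\rho)\,K_{nq}(\kappa r).
\end{equation*}
By orthogonality of $\sin(nq\theta)$ on $(0,\beta)$, only the $n=1$ mode of $G_0$ pairs with the $n$-th mode of $G_{-\kappa^2}$.

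2. The exterior region $r>\rho$ is the source of the leading term. There the $n=1$ mode of $G_0$ is exactly $\frac{\rho^q\sin(q\varphi)}{\pi}r^{-q}\sin(q\theta)$: it is the first term of the separated zero-energy expansion, which agrees with \eqref{eq:wedge-threshold-tail}. The angular integral of $\sin^2$ gives $\beta/2$, and the contribution becomes
\begin{equation*}
\kappa^2\,\frac{\rho^q\sin^2(q\varphi)}{\pi}\,I_q(\kappa\rho)\int_\rho^\infty r^{1-q}K_q(\kappa r)\,\dd r .
\end{equation*}

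3. Substitute $t=\kappa r$ and use $I_q(\kappa\rho)\sim(\kappa\rho/2)^q/\Gamma(1+q)$. Then use the standard Mellin integral
\begin{equation*}
\int_0^\infty t^{1-q}K_q(t)\,\dd t=2^{-q}\Gamma(1-q),
\end{equation*}
which converges at $t=0$ precisely because $q<1$, since the integrand behaves like $t^{1-2q}$. The powers of $\kappa$ multiply to $\kappa^2\cdot\kappa^q\cdot\kappa^{q-2}=\kappa^{2q}$, and the constants give exactly $C_{\beta,y}$.

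4. All remaining contributions must be shown to be $o(\kappa^{2q})$:
   - the bounded region $r<\rho$, which gives $O(\kappa^2)$ after the logarithmic singularity is integrated;
   - the replacement of the lower limit $\kappa\rho$ by $0$, which costs $O(\kappa^2\kappa^{q}\rho^{2-2q}\kappa^{-q})=O(\kappa^{2})$;
   - the higher modes $n\ge2$ of $G_0$ in $r>\rho$, of size $r^{-nq}$. These pair with $I_{nq}(\kappa\rho)K_{nq}(\kappa r)$ and give $O(\kappa^{2nq})$ whenever $nq<1$, and $O(\kappa^2|\log\kappa|)$ or $O(\kappa^2)$ otherwise. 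All of these are $o(\kappa^{2q})$.

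Summing over modes requires a dominated bound uniform in $n$. I expect this tail control, together with the uniformity of the small-argument Bessel asymptotics, to be the main technical obstacle. I would handle it with the inequality $I_\nu(a)K_\nu(b)\le\frac{1}{2\nu}(a/b)^\nu$ for $a<b$, which makes the modes $n\ge 2$ summable geometrically.

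Finally, \eqref{eq:wide-wedge-resolvent-singularity} follows from \eqref{eq:abstract-rank-one-threshold-expansion} with $d_y(\kappa)=C_{\beta,y}\kappa^{2q}$. This needs $\psi_0\in L^2_{-s}(\mathcal S_\beta)$ for $s>1-q$, which Theorem~\ref{thm:planar-wedge-threshold}(iii) provides.
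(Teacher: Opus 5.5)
Your proposal is correct and follows the paper's proof essentially step by step. Parts (i) and (ii) are read off from Proposition~\ref{prop:rank-one-threshold-reduction} and the half-plane computation. Part (iii) isolates the first angular mode on $r>\rho$ by orthogonality, evaluates it with the small-argument form of $I_q$ and the integral $\int_0^\infty t^{1-q}K_q(t)\,\dd t=2^{-q}\Gamma(1-q)$, and bounds the bounded region and the higher modes by $O(\kappa^2)$, $O(\kappa^2|\log\kappa|)$ or $O(\kappa^{2nq})$.

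Your extra bound $I_\nu(a)K_\nu(b)\le(2\nu)^{-1}(a/b)^\nu$ supplies the uniform control over the infinitely many modes with $nq>1$, which the paper leaves implicit. In step~1, what you mean is that the $n$-th mode of $G_0$ pairs only with the $n$-th mode of $G_{-\kappa^2}$.
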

\begin{remark}
Notice that the resolvent singularity varies continuously with the opening angle of the wedge:
for $\beta>\pi$ its order is $\kappa^{-2\pi/\beta}$.
\end{remark}
\begin{proof}
Part~\textup{(i)} follows from
Proposition~\ref{prop:rank-one-threshold-reduction}, and part~\textup{(ii)}
is the half-plane case.  Assume $0<q<1$.  The negative-energy Green
kernel in the wedge has the separation-of-variables expansion
\begin{equation*}
G_{-\kappa^2}^{\mathcal S_\beta}(r,\theta;\rho,\varphi)
=
\frac{2}{\beta}
\sum_{n=1}^\infty
\sin(nq\theta)\sin(nq\varphi)
I_{nq}(\kappa r_<)K_{nq}(\kappa r_>).
\end{equation*}
For $r>\rho$, the first angular mode of the zero-energy Green function is
\begin{equation*}
\frac{\rho^q\sin(q\varphi)}{\pi}
\,r^{-q}\sin(q\theta).
\end{equation*}
By angular orthogonality, the divergent part of the integral in
\eqref{eq:threshold-denominator-weyl-identity} is therefore
\begin{equation*}
\frac{\rho^q\sin^2(q\varphi)}{\pi}
I_q(\kappa\rho)
\int_\rho^\infty K_q(\kappa r)r^{1-q}\,dr.
\end{equation*}
Using the small-argument expansion of the modified Bessel function
$I_q$ and the standard integral of the modified Bessel function $K_q$, see
\cite[(10.30.1), (10.43.19)]{DLMF},
\begin{equation*}
I_q(t)=\frac{(t/2)^q}{\Gamma(1+q)}(1+o(1)),
\end{equation*}
and
\begin{equation*}
\int_0^\infty t^{1-q}K_q(t)\,dt
=
2^{-q}\Gamma(1-q),
\qquad 0<q<1,
\end{equation*}
one obtains
\begin{equation*}
\int_{\mathcal S_\beta}
G_{-\kappa^2}^{\mathcal S_\beta}(x,y)G_0^{\mathcal S_\beta}(x,y)\,dx
=
C_{\beta,y}\kappa^{2q-2}(1+o(1)).
\end{equation*}
The contribution of bounded radial regions is $O(1)$ in the integral in
\eqref{eq:threshold-denominator-weyl-identity}, hence it gives
$O(\kappa^2)$ after multiplication by $\kappa^2$, which is lower order than
$\kappa^{2q}$ because $q<1$. The $n$-th angular mode gives
$O(\kappa^{2nq})$ when $nq<1$, $O(\kappa^2|\log\kappa|)$ when
$nq=1$, and $O(\kappa^2)$ when $nq>1$. In all cases, for $n\geq2$, this
is of lower order than the first mode contribution $C_{\beta,y}\kappa^{2q}$.
Formula \eqref{eq:wide-wedge-denominator} now follows from
\eqref{eq:threshold-denominator-weyl-identity}, and
\eqref{eq:wide-wedge-resolvent-singularity} follows from
Proposition~\ref{prop:rank-one-threshold-reduction}.
\end{proof}

\begin{table}[H]
\centering
\small
\resizebox{\textwidth}{!}{%
\begin{tabular}{c|c|c|c}
model or end & critical state & $D_y(\kappa)=\alpha_c-M_y(-\kappa^2)$ & threshold type\\
\hline\\
3D exterior &
$|x|^{-1}$ tail &
$c_y\kappa$ &
monopole resonance\\[0.6ex]
3D half-space type &
$|x|^{-2}$ tail, $L^2$ &
$C_y\kappa^2$ &
threshold eigenvalue\\[0.6ex]
2D exterior disk &
nonzero constant tail &
$C_y/\log(1/\kappa)$ &
$s$-wave resonance, disappearing\\[0.6ex]
2D half-plane &
dipole $|x|^{-1}$ &
$C_y\kappa^2\log(1/\kappa)$ &
$p$-wave resonance, persistent\\[0.6ex]
2D sector, $0<\beta<\pi$ &
$r^{-q}$ tail, $L^2$ &
$C_y\kappa^2$ &
threshold eigenvalue\\[0.6ex]
2D sector, $\pi<\beta<2\pi$ &
$r^{-q}$ tail, not $L^2$ &
$C_{\beta,y}\kappa^{2q}$ &
conical resonance
\end{tabular}}
\caption{Threshold type and leading denominator in the model geometries. Here $q=\pi/\beta$, and the constants in the third column are positive.}
\label{tab:model-threshold-denominators}
\end{table}

\subsection{Persistence and disappearance of eigenvalues in two dimensions}
\label{subsec:persistence-disappearance-models}

Christiansen, Datchev, and Griffin in \cite{ChristiansenDatchevGriffin} distinguish negative eigenvalues that
approach zero and continue across the threshold as resonances from those
that disappear at the threshold; the former are called \emph{persistent}
and the latter \emph{disappearing} \cite{ChristiansenDatchevGriffin}.
Their results concern smooth families of compactly supported potentials on
$\R^2$, and therefore do not apply directly to the present family of
singular self-adjoint extensions on domains with boundary.  Nevertheless,
the same distinction can be formulated exactly in the two model geometries
above, exploiting the fact that the poles of the resolvent are the zeros of a denominator in the Kre\u{\i}n formula.

Fix one of the two-dimensional model domains below and set
\begin{equation*}
\varepsilon:=\alpha-\alpha_c^{(2),\Omega}(y).
\end{equation*}
For $\varepsilon<0$, let
\begin{equation*}
E_\varepsilon=-\kappa_\varepsilon^2<0
\end{equation*}
be the unique eigenvalue.  We use the same symbol
$M_y^\Omega(k^2)$ for the outgoing continuation of the Weyl function to
the logarithmic sheet
\begin{equation*}
-\frac{\pi}{2}\leq \arg k<\frac{3\pi}{2}.
\end{equation*}
The pole equation is
\begin{equation}
\label{eq:continued-pole-equation}
\alpha-M_y^\Omega(k^2)
=
\varepsilon+\alpha_c^{(2),\Omega}(y)-M_y^\Omega(k^2)=0.
\end{equation}
For $\varepsilon<0$, the zero $k=i\kappa_\varepsilon$ of
\eqref{eq:continued-pole-equation} is the negative eigenvalue.  In the
present rank-one setting, we say that this eigenvalue branch is persistent
if it extends continuously through $\varepsilon=0$ to zeros
$k_\varepsilon\to0$ of \eqref{eq:continued-pole-equation} for
$\varepsilon>0$; the corresponding outgoing Green vectors are then
resonant states.  Otherwise the branch is called disappearing.

In the following proposition we use the Lambert function in the discussion of persistence of
planar eigenvalue branches. The Lambert function is the multivalued inverse of
\begin{equation*}
w\longmapsto we^w;
\end{equation*}
its branches are denoted by $W_m$, $m\in\mathbb Z$, and satisfy
\begin{equation*}
W_m(z)e^{W_m(z)}=z.
\end{equation*}
The principal branch is $W_0$. The branch $W_{-1}$ is real-valued on
$(-e^{-1},0)$, with values in $(-\infty,-1]$. In the continuation formulas
below, $W_{-1}$ is analytically continued around the origin through the lower
half-plane. Thus, for $t>0$ small, this continued branch has the asymptotic
form
\begin{equation*}
W_{-1}(t)
=
\log t-2\pi i
-
\log(\log t-2\pi i)
+o(1),
\qquad t\downarrow0.
\end{equation*}
This is the branch relevant to the outgoing logarithmic sheet used in the
two-dimensional half-plane model; see \cite[Section~4.13]{DLMF}.

\begin{proposition}
\label{prop:persistence-disappearance-models}
The critical eigenvalue branch of the exterior disk is disappearing, while
the critical eigenvalue branch of the half-plane is persistent.  More
precisely, the following assertions hold.

\begin{enumerate}
\item[(i)] Let
$\Omega_R=\R^2\setminus\overline{B_R}$, let $|y|=\rho>R$, and set
\begin{equation*}
c:=\log\frac{\rho}{R}.
\end{equation*}
As $\varepsilon\to0^-$,
\begin{equation}
\label{eq:disk-eigenvalue-disappearance-asymptotic}
|\varepsilon|\log\frac{1}{|E_\varepsilon|}
\longrightarrow \frac{c^2}{\pi}.
\end{equation}
For $\varepsilon>0$ sufficiently small,
\eqref{eq:continued-pole-equation} has no root tending to zero on the
chosen logarithmic sheet.  Hence the eigenvalue disappears at the
$s$-wave threshold.

\item[(ii)] Let $H=\R^2_+$, let $y=(y_1,y_2)\in H$, and write
$h:=y_2$.  As $\varepsilon\to0^-$,
\begin{equation}
\label{eq:half-plane-eigenvalue-persistence-asymptotic}
|E_\varepsilon|
=
\frac{4\pi|\varepsilon|}
{h^2|\log|\varepsilon||}
\bigl(1+o(1)\bigr).
\end{equation}
Moreover, the zero $k=i\kappa_\varepsilon$ continues through
$\varepsilon=0$ to a resonance pole $k_\varepsilon\to0$ for
$\varepsilon>0$.  At leading order,
\begin{equation}
\label{eq:half-plane-persistent-lambert}
k_\varepsilon^2
=
-\frac{4\pi\varepsilon}
{h^2 W_{-1}\!\left(4\pi e^{2(\gamma_E-1)}\varepsilon\right)}
\bigl(1+o(1)\bigr).
\end{equation}
Here, for \(\varepsilon>0\), \(W_{-1}\) denotes the continuation of the real
branch through the lower half-plane described above. The pole
\(k_\varepsilon\) is a resonance.  Hence the eigenvalue persists through
the $p$-wave threshold.
\end{enumerate}
\end{proposition}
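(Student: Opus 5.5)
For both models I would work only with the Kre\u{\i}n denominator. Write $\kappa=-ik$, so that $\kappa>0$ on the physical half-line $k\in i(0,\infty)$, and set $D_y(k):=\alpha_c^{(2),\Omega}(y)-M_y^\Omega(k^2)$. The pole equation \eqref{eq:continued-pole-equation} then reads
\begin{equation*}
D_y(k)=-\varepsilon .
\end{equation*}
The first step is to extend the expansions of Proposition~\ref{prop:low-energy-two-dimensional-models} from $\kappa>0$ to complex $k$ near $0$ on the logarithmic sheet $-\pi/2\le\arg k<3\pi/2$. The Weyl functions \eqref{eq:halfplane-weyl-2d} and the exterior-disk series are built from $K_\nu(\kappa t)$ and $I_\nu(\kappa t)$, and these are convergent series in $\kappa^2$ and $\kappa^2\log\kappa$. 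Hence the same expansions hold with $\log\kappa=\log(-ik)$ understood on the sheet. The remainders are uniform for $|k|$ small and $\arg k$ in the compact range. For the disk I would also check that the majorant $Cp^{2n}/n$ for the nonzero modes stays uniform for complex $\kappa$ in this region.

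\emph{(i) Exterior disk.} For $\varepsilon<0$, Theorem~\ref{thm:critical-coupling-2d} and \eqref{eq:exterior-disk-critical-denominator-negative} give
\begin{equation*}
\frac{c^2}{2\pi L_R(\kappa_\varepsilon)}\bigl(1+o(1)\bigr)=|\varepsilon| .
\end{equation*}
Since $L_R(\kappa)=\tfrac12\log(1/|E_\varepsilon|)+O(1)$, this is exactly \eqref{eq:disk-eigenvalue-disappearance-asymptotic}. For $\varepsilon>0$, the continued expansion gives $D_y(k)=c^2/(2\pi L_R(-ik))\,(1+o(1))$ with $L_R(-ik)=-\log|k|+O(1)$, the $O(1)$ including the bounded imaginary part coming from $\arg k$. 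Consequently $\operatorname{Re}L_R\to+\infty$ as $k\to0$, and $D_y(k)$ has argument tending to $0$. A root would require $D_y(k)=-\varepsilon<0$, hence $|\arg D_y|=\pi$, which is impossible for $|k|$ small. So no root tends to zero and the branch disappears.

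\emph{(ii) Half-plane.} Let $u:=\kappa^2=-k^2$. By \eqref{eq:half-plane-critical-denominator-negative},
\begin{equation*}
D_y=\frac{h^2u}{4\pi}\bigl(2-2\gamma_E-\log(uh^2)\bigr)+O(u^2|\log u|).
\end{equation*}
For $\varepsilon<0$, solving $D_y=|\varepsilon|$ gives $u\log(1/u)\sim 4\pi|\varepsilon|/h^2$, hence $u\sim 4\pi|\varepsilon|/(h^2|\log|\varepsilon||)$, which is \eqref{eq:half-plane-eigenvalue-persistence-asymptotic}.

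For general $\varepsilon$ I would solve the leading equation exactly. Put $v:=\log(uh^2)-2+2\gamma_E$. The leading equation becomes
\begin{equation*}
ve^v=4\pi e^{2\gamma_E-2}\varepsilon ,
\end{equation*}
so $v=W(4\pi e^{2\gamma_E-2}\varepsilon)$ and $u=-4\pi\varepsilon/(h^2v)$, which is \eqref{eq:half-plane-persistent-lambert}. For $\varepsilon<0$ the real branch $W_{-1}$ reproduces the eigenvalue. As $\varepsilon$ crosses $0$ along the path through the lower half-plane, the same analytic branch continues, and this fixes the continued $W_{-1}$ described before the proposition. The $-2\pi i$ in its asymptotics puts $\log k$ on the outgoing sheet with $k$ off the physical half-axis. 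This yields a resonance, provided I check that $\arg k$ lands in $(-\pi/2,0)$.

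To pass from the leading equation to true roots, I would apply Rouché's theorem in the variable $\log k$ on a small disk around the approximate root. There the leading term has derivative of size $|u\log u|$, which dominates the $O(u^2|\log u|)$ remainder. This gives a unique root $k_\varepsilon\to0$, continuous in $\varepsilon$ through $0$, so the branch is persistent.

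The main obstacle is this last step, together with the branch bookkeeping. One must ensure that the continuation of $W_{-1}$ through the lower half-plane corresponds to the outgoing sheet, so that the root found by Rouché is on the chosen sheet rather than the physical one or a spurious branch. One must also control the remainder uniformly in the complex logarithmic variable.
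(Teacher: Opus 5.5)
Your proposal is correct and is essentially the paper's own proof. You continue the low-energy expansions of the Kre\u{\i}n denominator to the logarithmic sheet. For the disk you exclude roots of $D_y(k)=-\varepsilon<0$ because $\operatorname{Re}L_R(-ik)\to+\infty$ with bounded imaginary part. For the half-plane you solve the truncated equation with Lambert's function and recover the true root by Rouch\'e's theorem.

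There are two minor slips. First, $h^2uv=4\pi\varepsilon$ gives $u=4\pi\varepsilon/(h^2v)$, so what you wrote is the formula for $k^2=-u$. Second, your final check on $\arg k$ is unnecessary. Continuing the real branch through the lower half-plane actually gives $\operatorname{Im}v\approx+\pi$ and hence the mirror pole $-\overline{k_\varepsilon}$, with $\arg k\in(\pi,3\pi/2)$. The $-2\pi i$ asymptotics instead correspond to the upper half-plane path and give $\arg k\in(-\pi/2,0)$. Both are resonances on the chosen sheet, so the conclusion is unaffected.
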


\begin{proof}
For the exterior disk, the Bessel expansion leading to
\eqref{eq:exterior-disk-critical-denominator-negative} continues to the
chosen logarithmic sheet and gives
\begin{equation}
\label{eq:disk-outgoing-denominator}
\alpha_c^{(2),\Omega_R}(y)-M_y^{\Omega_R}(k^2)
=
\frac{c^2}{2\pi L_R(-ik)}\bigl(1+o(1)\bigr),
\qquad
L_R(-ik):=-\log\frac{-ikR}{2}-\gamma_E.
\end{equation}
For $\varepsilon<0$, inserting $k=i\kappa_\varepsilon$ in
\eqref{eq:continued-pole-equation} yields
\begin{equation*}
|\varepsilon|
=
\frac{c^2}{2\pi L_R(\kappa_\varepsilon)}\bigl(1+o(1)\bigr).
\end{equation*}
Since $E_\varepsilon=-\kappa_\varepsilon^2$, this is equivalent to
\eqref{eq:disk-eigenvalue-disappearance-asymptotic}.  On the fixed
logarithmic sheet, as $k\to0$,
\begin{equation*}
\Re L_R(-ik)\longrightarrow+\infty,
\qquad
\Im L_R(-ik)=O(1).
\end{equation*}
Consequently the leading term in \eqref{eq:disk-outgoing-denominator} has
positive real part for $k$ sufficiently small.  It cannot cancel a
positive $\varepsilon$ in \eqref{eq:continued-pole-equation}.  Thus no
resonance pole tends to zero when $\varepsilon\downarrow0$, which proves
part~\textup{(i)}.

For the half-plane, analytic continuation of
\eqref{eq:half-plane-critical-denominator-negative} gives
\begin{equation}
\label{eq:half-plane-outgoing-denominator}
\alpha_c^{(2),H}(y)-M_y^H(k^2)
=
\frac{h^2k^2}{2\pi}
\bigl(\log(-ikh)+\gamma_E-1\bigr)
+O(k^4|\log k|).
\end{equation}
At $k=i\kappa_\varepsilon$, the eigenvalue equation and
\eqref{eq:half-plane-outgoing-denominator} imply
\eqref{eq:half-plane-eigenvalue-persistence-asymptotic}.

To continue the pole, first omit the error in
\eqref{eq:half-plane-outgoing-denominator} and set $w=-ikh$.  The
resulting equation is
\begin{equation*}
w^2\bigl(\log w+\gamma_E-1\bigr)=2\pi\varepsilon.
\end{equation*}
Writing $t=w^2$ reduces it to
\begin{equation*}
t\bigl(\log t+2(\gamma_E-1)\bigr)=4\pi\varepsilon,
\end{equation*}
which is solved by
\begin{equation*}
t=
\frac{4\pi\varepsilon}
{W_{-1}\!\left(4\pi e^{2(\gamma_E-1)}\varepsilon\right)}.
\end{equation*}
Since $k^2=-t/h^2$, this gives the leading term in
\eqref{eq:half-plane-persistent-lambert}.  For $\varepsilon<0$, the
real branch $W_{-1}$ produces $k=i\kappa_\varepsilon$.  Continuing
this branch below the origin gives, for $\varepsilon>0$, a value of \(t\)
such that \(k^2=-t/h^2\) lies in the fourth quadrant; choosing the
corresponding square root gives a pole \(k_\varepsilon\to0\) on the
outgoing sheet.  At this scale
\begin{equation*}
k^4|\log k|=o(|\varepsilon|),
\end{equation*}
and the error term is smaller than the leading denominator on a
small circle around the zero of the truncated equation. Rouché's theorem
therefore shows that the exact denominator has a zero in this circle. This
zero has the asymptotic \eqref{eq:half-plane-persistent-lambert}.
This proves part~\textup{(ii)}.
\end{proof}

Thus the two point-interaction models reproduce the correspondence found
for regular planar potentials in \cite{ChristiansenDatchevGriffin}.
Notice however that the agreement is not an application of the results of
\cite{ChristiansenDatchevGriffin}: here the perturbation parameter changes
a singular self-adjoint extension, and the conclusion follows from the
explicit consideration of the Weyl functions appearing in the Kre\u{\i}n formula.  Extending this analysis to general
exterior or asymptotically flat domains would require a corresponding
outgoing low-energy expansion and is left for future work.

\begin{remark}
The planar wedges show that
the persistent behavior is not exhausted by the $p$-wave
half-plane case.  Set $q=\pi/\beta$.  If $0<\beta<\pi$, then $q>1$ and the critical
state is a threshold eigenfunction.  In this case the denominator has
the  form
\begin{equation*}
\alpha_c^{(2),\mathcal S_\beta}(y)-M_y^{\mathcal S_\beta}(-\kappa^2)
=
\kappa^2\|G_0^{\mathcal S_\beta}(\cdot,y)\|_{L^2(\mathcal S_\beta)}^2
+o(\kappa^2),
\qquad \kappa\downarrow0.
\end{equation*}
Thus, at the level of the leading Kre\u{\i}n denominator, the eigenvalue
branch has the persistent behavior.  If
$\beta=\pi$, one recovers the half-plane and hence the $p$-wave persistent
branch of Proposition~\ref{prop:persistence-disappearance-models}.  If
$\pi<\beta<2\pi$, then $0<q<1$ and Proposition~\ref{prop:low-energy-planar-wedges}
gives the law
\begin{equation*}
\alpha_c^{(2),\mathcal S_\beta}(y)-M_y^{\mathcal S_\beta}(-\kappa^2)
=
C_{\beta,y}\kappa^{2q}(1+o(1)).
\end{equation*}

After outgoing continuation, obtained by setting $\kappa=-ik$ and fixing a
branch of $\log(-ik)$, the continued denominator has the leading form
\begin{equation*}
\alpha-\alpha_c
+
C_{\beta,y}(-ik)^{2q}
+
o(k^{2q}).
\end{equation*}
Thus the corresponding pole branch is governed by the same exponent $2q$.
The branch choice is needed because the sectors produce fractional
powers of the spectral parameter.
\end{remark}
\begin{remark}
In dimension three, the model examples considered here show persistence of
eigenvalues. In the exterior-sphere model, after outgoing continuation, a
persistent virtual state, also called an anti-bound state, appears and is
associated with the monopole threshold resonance; see
\eqref{eq:exterior-sphere-critical-denominator-negative}. In the half-space
model, a persistent complex resonance pair occurs and is associated with a
threshold eigenvalue; see Remark~3.8 in \cite{NojaRasoStoia25}. The
disappearing mechanism observed in the two-dimensional exterior disk is tied
to the logarithmic low-energy structure of dimension two and seems to have no
direct analogue in these three-dimensional one-centre models.
\end{remark}


\appendix

\section{Auxiliary analytic results}
\label{app:auxiliary}
The material in this appendix is standard, or follows by straightforward adaptations of standard arguments. Since we do not know a reference treating exactly the two classes of domains considered here, we include the details for completeness.

\subsection{Proof of the Dirichlet resolvent kernel theorem}
\label{app:dirichlet-kernel-proof}
\begin{proof}[Proof of Theorem~\ref{thm:dirichlet-kernel}]
\medskip
Here we adapt the proof of Lemma 3.1 in \cite{BehrndtRohleder}.
Fix $y\in\Omega$. Since $y$ is an interior point, there exists $\varepsilon_y>0$ such that $\dist(y,\pa\Omega)>2\varepsilon_y$. Hence $G^0_\la(\cdot,y)$ is smooth on the $\varepsilon_y$-neighborhood of $\pa\Omega$.
If $\Omega$ is an exterior Lipschitz domain, then $\pa\Omega$ is compact. Therefore $\tr G^0_\la(\cdot,y)\in H^s(\pa\Omega)$ for every $s\in\R$, in particular $\tr G^0_\la(\cdot,y)\in H^{1/2}(\pa\Omega)$; compare the standard trace theory in \cite[Chapter~7]{AdamsFournier}.
Assume next that $\Omega$ is a special Lipschitz domain,
\begin{equation*}
\Omega=\{(x',x_d): x_d>\varphi(x')\},
\qquad \varphi\in W^{1,\infty}(\R^{d-1}).
\end{equation*}
Set $g_y(x'):=G^0_\la((x',\varphi(x')),y)$ for $x'\in\R^{d-1}$. Because $\Re\kap(\la)>0$ and $y$ has positive distance from the boundary, the explicit free kernels in \eqref{eq:free-resolvent-kernel}, together with their first derivatives, have exponential decay along the boundary graph. Since $\varphi$ is globally Lipschitz, the chain rule gives
\begin{equation*}
g_y\in H^1(\R^{d-1}).
\end{equation*}
By the standard chart definition of Sobolev spaces on Lipschitz hypersurfaces, this implies $\tr G^0_\la(\cdot,y)\in H^{1/2}(\pa\Omega)$. This is also consistent with the sharp trace theorem on special Lipschitz domains proved in \cite[Theorem~4.19]{Gaudin}. Thus, in both cases, the boundary datum in \eqref{eq:dirichlet-correction-problem} belongs to $H^{1/2}(\pa\Omega)$.
\medskip
Since $\la\in\rho(A_D)$, the Dirichlet boundary value problem
\begin{equation*}
(-\Delta-\la)u=0\text{ in }\Omega,
\qquad
\tr u=\phi\text{ on }\pa\Omega,
\end{equation*}
has a unique solution $u\in H^1(\Omega)$ for every $\phi\in H^{1/2}(\pa\Omega)$; this is precisely \cite[Lemma~3.1]{BehrndtRohleder}, and the associated Dirichlet solution map is encoded by the Poisson operator from \cite[Definition~3.2 and Lemma~3.3]{BehrndtRohleder}. Applying that result with $\phi=\tr G^0_\la(\cdot,y)$ produces the unique $h_\la^\Omega(\cdot,y)\in H^1(\Omega)$ solving \eqref{eq:dirichlet-correction-problem}. This proves part \textup{(1)}.
\medskip
Define now $G_\la^\Omega$ by \eqref{eq:dirichlet-kernel-domain}. Since $h_\la^\Omega(\cdot,y)$ satisfies the homogeneous equation, we have
\begin{equation*}
(-\Delta_x-\la)G_\la^\Omega(x,y)=(-\Delta_x-\la)G^0_\la(x,y)-(-\Delta_x-\la)h_\la^\Omega(x,y)=\delta_y.
\end{equation*}
Moreover, $\tr_x G_\la^\Omega(x,y)=\tr G^0_\la(\cdot,y)-\tr h_\la^\Omega(\cdot,y)=0$. Hence part \textup{(2)} follows.
\medskip
The correction term $h_\la^\Omega(\cdot,y)$ solves the homogeneous equation $(-\Delta-\la)h_\la^\Omega(\cdot,y)=0$ in all of $\Omega$. Hence, by standard interior elliptic regularity, see for instance \cite[Chapter~8]{GilbargTrudinger}, one has $h_\la^\Omega(\cdot,y)\in C^\infty_{\mathrm{loc}}(\Omega)$. Thus the only diagonal singularity of $G_\la^\Omega(x,y)=G^0_\la(x,y)-h_\la^\Omega(x,y)$ is the free one: in dimension three it is Coulombic and in dimension two it is logarithmic.
\medskip
Finally, let $f\in C_c^\infty(\Omega)$ and extend it by zero to $\widetilde f\in C_c^\infty(\R^d)$. Define
\begin{equation*}
u_0(x):=(-\Delta_{\R^d}-\la)^{-1}\widetilde f(x)
=
\int_\Omega G^0_\la(x,\xi)f(\xi)\,\dd \xi.
\end{equation*}
Then $u_0\in H^2_{\mathrm loc}(\R^d)$ and $(-\Delta-\la)u_0=f$ in $\Omega$. Because $\Re\kap(\la)>0$ and $f$ is compactly supported in the interior of $\Omega$, the function $u_0$ and its first derivatives decay exponentially at infinity. Therefore $\tr u_0\in H^{1/2}(\pa\Omega)$ both in the exterior and in the special Lipschitz case. Let $w\in H^1(\Omega)$ be the unique solution of $(-\Delta-\la)w=0$ in $\Omega$ with $\tr w=\tr u_0$ on $\pa\Omega$. Then $u:=u_0-w$ satisfies
\begin{equation*}
(-\Delta-\la)u=f\quad\text{in }\Omega,\qquad \tr u=0\quad\text{on }\pa\Omega,
\end{equation*}
hence $u=(A_D-\la)^{-1}f$ by uniqueness in the resolvent set. It remains to identify $w$ as the integral of the correction terms. Set $\phi_\xi:=\tr G^0_\la(\cdot,\xi)\in H^{1/2}(\pa\Omega)$. Since $\supp f$ is compact and contained in the interior of $\Omega$, the map $\xi\mapsto \phi_\xi$ is continuous from $\supp f$ into $H^{1/2}(\pa\Omega)$ and therefore Bochner integrable. By linearity and boundedness of the Dirichlet solution operator,
\begin{equation*}
w=\int_\Omega h_\la^\Omega(\cdot,\xi)\,f(\xi)\,\dd \xi.
\end{equation*}
Consequently,
\begin{equation*}
u(x)
=
\int_\Omega \bigl(G^0_\la(x,\xi)-h_\la^\Omega(x,\xi)\bigr)f(\xi)\,\dd \xi
=
\int_\Omega G_\la^\Omega(x,\xi)f(\xi)\,\dd \xi,
\end{equation*}
which is exactly \eqref{eq:dirichlet-kernel-representation}.

Finally, the Dirichlet Laplacian commutes with complex conjugation. Hence
\begin{equation*}
\overline{(A_D-\lambda)^{-1}f}
=
(A_D-\overline\lambda)^{-1}\overline f,
\end{equation*}
which gives, at the level of kernels,
\begin{equation*}
G_{\overline\lambda}^\Omega(x,y)
=
\overline{G_\lambda^\Omega(x,y)}.
\end{equation*}
On the other hand, self-adjointness gives
\begin{equation*}
(A_D-\lambda)^{-1*}
=
(A_D-\overline\lambda)^{-1}.
\end{equation*}
Using the kernel representation, this implies
\begin{equation*}
G_{\overline\lambda}^\Omega(x,y)
=
\overline{G_\lambda^\Omega(y,x)}.
\end{equation*}
Combining the two identities yields
\begin{equation*}
G_\lambda^\Omega(x,y)=G_\lambda^\Omega(y,x).
\end{equation*}
The equalities first hold in the distributional kernel sense and then pointwise away from the diagonal by interior smoothness. The theorem is proved.
\end{proof}

\subsection{Positivity of the regular part of the Green kernel at negative energies}
\label{app:sign-corrector}

The following observation is the only sign property of the regular part of the Green kernel needed in the paper.

\begin{lemma}\label{lem:sign-corrector}
Let $d\in\{2,3\}$, let $\Omega\subset\R^d$ be either an exterior Lipschitz domain or a special Lipschitz domain, let $y\in\Omega$, and let $\lambda>0$. Let $h_{-\lambda}^\Omega(\cdot,y)$ be the regular part of the Dirichlet Green function, that is,
\begin{equation*}
(-\Delta+\lambda)h_{-\lambda}^\Omega(\cdot,y)=0\quad\text{in }\Omega,
\qquad
\tr h_{-\lambda}^\Omega(\cdot,y)=\tr G^0_{-\lambda}(\cdot,y)\quad\text{on }\pa\Omega.
\end{equation*}
Then
\begin{equation*}
h_{-\lambda}^\Omega(x,y)\ge0\quad\text{for all }x\in\Omega,
\qquad
h_{-\lambda}^\Omega(y,y)\ge0.
\end{equation*}
\end{lemma}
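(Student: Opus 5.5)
The plan is a weak maximum principle for the operator $-\Delta+\lambda$, $\lambda>0$, applied to the variational solution $h:=h_{-\lambda}^\Omega(\cdot,y)\in H^1(\Omega)$ given by Theorem~\ref{thm:dirichlet-kernel}(i). The key input is the sign of the boundary datum: the free kernel $G^0_{-\lambda}(x,y)$ is strictly positive for $x\neq y$. For $d=3$ it equals $e^{-\sqrt\lambda|x-y|}/(4\pi|x-y|)$, and for $d=2$ it equals $K_0(\sqrt\lambda|x-y|)/(2\pi)$, which is positive since $K_0>0$ on $(0,\infty)$. Because $\dist(y,\partial\Omega)>0$, this datum is smooth near $\partial\Omega$, so
\begin{equation*}
\tr h=\tr G^0_{-\lambda}(\cdot,y)\ge0\quad\text{on }\partial\Omega .
\end{equation*}

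Next I would test with the negative part. Write $h^-:=\max\{-h,0\}$. Since $h\in H^1(\Omega)$, the lattice property of $H^1$ gives $h^-\in H^1(\Omega)$ with $\nabla h^-=-\nabla h\,\mathbf 1_{\{h<0\}}$. Its trace is $(\tr h)^-=0$, because the trace commutes with $u\mapsto u^-$ on Lipschitz domains. Hence $h^-\in H_0^1(\Omega)$.

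The step needing the most care is justifying that $h^-$ is an admissible test function in both geometries. The weak formulation of \eqref{eq:dirichlet-correction-problem} holds for test functions in $C_c^\infty(\Omega)$. Since $h\in H^1(\Omega)$, the form $\int_\Omega(\nabla h\cdot\nabla\varphi+\lambda h\varphi)\,\dd x$ is continuous in $\varphi\in H^1(\Omega)$, so it extends by density to all of $H_0^1(\Omega)$. Here $H_0^1(\Omega)$ is by definition the $H^1$-closure of $C_c^\infty(\Omega)$, and the identification with the trace-zero subspace is standard for Lipschitz domains, including the unbounded special Lipschitz case (cf.\ \cite{Gaudin}). Note that $h$ is real-valued, since its datum is real and the solution is unique. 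Testing with $\varphi=h^-$ then gives
\begin{equation*}
0=\int_\Omega\bigl(\nabla h\cdot\nabla h^-+\lambda h h^-\bigr)\dd x
=-\int_\Omega\bigl(|\nabla h^-|^2+\lambda|h^-|^2\bigr)\dd x .
\end{equation*}
Because $\lambda>0$, this forces $h^-=0$ almost everywhere, so $h\ge0$ a.e.\ in $\Omega$. This argument is the same one already used in the proof of Theorem~\ref{thm:domain_monotonicity_eigenvalue}(i).

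Finally, interior elliptic regularity makes $h$ smooth in $\Omega$, as in the appendix proof of Theorem~\ref{thm:dirichlet-kernel}(iii). The a.e.\ inequality therefore holds pointwise, and in particular $h(y,y)=h_{-\lambda}^\Omega(y,y)\ge0$. If strict positivity were needed, the strong maximum principle would upgrade $h\ge0$ to $h>0$, since the datum is not identically zero; however, the lemma only asks for $h\ge0$.
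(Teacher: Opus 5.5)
Your proposal is correct and follows essentially the same route as the paper: real-valuedness of $h$ by uniqueness, testing the weak equation with the negative part $h^-\in H_0^1(\Omega)$ to get $h\ge0$ almost everywhere, and interior elliptic regularity to pass to pointwise values and evaluate at $x=y$. Your additional justifications (positivity of $K_0$, extending the weak formulation from $C_c^\infty(\Omega)$ to $H_0^1(\Omega)$ by density, and the trace identity for $h^-$) spell out steps the paper's proof leaves implicit.
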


\begin{proof}
The boundary datum $G^0_{-\lambda}(\cdot,y)|_{\partial\Omega}$ is real and nonnegative. By uniqueness of the Dirichlet problem, the corresponding solution $h_{-\lambda}^\Omega(\cdot,y)$ is real-valued. Set
\begin{equation*}
w:=\bigl(h_{-\lambda}^\Omega(\cdot,y)\bigr)^-.
\end{equation*}
Since $h_{-\lambda}^\Omega(\cdot,y)\in H^1(\Omega)$ and its trace is nonnegative, one has
$
w\in H_0^1(\Omega).
$
Testing the weak equation for $h_{-\lambda}^\Omega(\cdot,y)$ against $w$ gives
\begin{equation*}
0
=
\int_\Omega \nabla h_{-\lambda}^\Omega(x,y)\cdot \nabla w(x)\,dx
+
\lambda\int_\Omega h_{-\lambda}^\Omega(x,y)\,w(x)\,dx.
\end{equation*}
On the set $\{w>0\}$ one has $w=-h_{-\lambda}^\Omega(\cdot,y)$ and $\nabla w=-\nabla h_{-\lambda}^\Omega(\cdot,y)$, hence
\begin{equation*}
0
=
-\int_\Omega |\nabla w(x)|^2\,dx
-
\lambda\int_\Omega |w(x)|^2\,dx.
\end{equation*}
Therefore $w=0$, and thus $h_{-\lambda}^\Omega(x,y)\ge0$ almost everywhere in $\Omega$. Since $h_{-\lambda}^\Omega(\cdot,y)$ is a weak solution of
\begin{equation*}
(-\Delta+\lambda)u=0 \quad \text{in }\Omega,
\end{equation*}
standard interior elliptic regularity yields
\begin{equation*}
h_{-\lambda}^\Omega(\cdot,y)\in C^\infty(\Omega).
\end{equation*}
Hence the almost everywhere inequality upgrades to
$
h_{-\lambda}^\Omega(x,y)\ge 0 \quad \text{for all }x\in\Omega.
$
Evaluating at $x=y$ gives
\begin{equation*}
h_{-\lambda}^\Omega(y,y)\ge 0.
\end{equation*}
\end{proof}

\section*{Acknowledgments.}
\noindent D.N acknowledges the support of the Next Generation EU - Prin 2022 project "Singular Interactions and Effective Models in Mathematical Physics- 2022CHELC7" and of the INdAM-GNFM.  The first author is also grateful to Prof.Tanya Christiansen for elucidation about results in \cite{ChristiansenDatchevGriffin}, giving origin to the results in Subsection~\ref{subsec:persistence-disappearance-models}.

\end{document}